\newcommand{\R}{{\Bbb R}}
\newcommand{\C}{{\Bbb C}}
\newcommand{\diag}{\text{\upshape diag\,}}
\newtheorem{RH problem}{RH problem}
\newtheorem{theorem}{Theorem}[section]
\newtheorem{proposition}[theorem]{Proposition}
\newtheorem{lemma}[theorem]{Lemma}
\newtheorem{remark}[theorem]{Remark}
\numberwithin{equation}{section}
\begin{document}
	
	\title[Miura transformations and large-time asymptotic behaviors]
	{Miura transformations and large-time behaviors of the Hirota-Satsuma equation}
	
	\author{Deng-Shan Wang, Cheng Zhu, Xiaodong Zhu}
	\address{Laboratory of Mathematics and Complex Systems (Ministry of Education), School of Mathematical Sciences, Beijing Normal University, Beijing 100875, China}
	\email{dswang@bnu.edu.cn}

	\subjclass[2010]{Primary 37K40, 35Q15, 37K10}
	
	\date{March 26, 2024.}
	
	
	\keywords{Lax pair, Riemann-Hilbert problem, Hirota-Satsuma equation, Miura transformation}
	
\begin{abstract}

The good Boussinesq equation has several modified versions such as the modified Boussinesq equation, Mikhailov-Lenells equation and Hirota-Satsuma equation. This work builds the full relations among these equations by Miura transformation and invertible linear transformations and draws a pyramid diagram to demonstrate such relations. The direct and inverse spectral analysis shows that the solution of Riemann-Hilbert problem for Hirota-Satsuma equation has simple pole at origin, the solution of Riemann-Hilbert problem for the good Boussinesq equation has double pole at origin, while the solution of Riemann-Hilbert problem for the modified Boussinesq equation and Mikhailov-Lenells equation doesn't have singularity at origin. Further, the large-time asymptotic behaviors of the Hirota-Satsuma equation with Schwartz class initial value is studied by Deift-Zhou nonlinear steepest descent analysis. In such initial condition, the asymptotic expressions of the Hirota-Satsuma equation and good Boussinesq equation \textcolor{blue}{away from the origin are derived and it is shown that the leading term of asymptotic formulas matches well the direct numerical simulations.}

\end{abstract}
	
	\maketitle
	
	
	

	\section{Introduction}\label{sec1}
	
Starting from the B\"{a}cklund transformation of the Boussinesq equation \cite{Boussinesq1872,Zakharov1973,Deift1982,Kaup-1980}, Hirota and Satsuma \cite{Hirota-Satsuma-1977} initially proposed the Hirota-Satsuma equation
\begin{equation}\label{modified Boussinesq equation}
		u_{tt}+\frac{1}{3}(u_{xxx}-\frac{2}{3}u^2u_x-2u_x\partial_x^{-1}u_t)_x=0,
\end{equation}
where $\partial_x^{-1}$ is the inverse operator of operator $\partial=\partial/\partial x$,  which is defined as $(\partial_x^{-1}g)(x)=\int_{-\infty}^{x}g(y)dy$ with $\partial \partial_x^{-1}=\partial_x^{-1}\partial=1$. Introducing special constrain to $u_t$ in \cite{Geng-2003,Geng-2007}, the Hirota-Satsuma equation (\ref{modified Boussinesq equation}) is equivalent to
\begin{equation}\label{HS eq}
\begin{aligned}
			&u_{t}=-u_{xx}+\frac{2}{3}uu_x+2v_x,\\
			&v_{t}=-\frac{2}{3}u_{xxx}+\frac{2}{3}uu_{xx}+\frac{2}{3}u_xv-\frac{2}{3}uv_x+v_{xx},
\end{aligned}
\end{equation}
which is a completely integrable system with Lax pair
\begin{equation}\label{lax pair}
		\begin{cases}
			\psi_x=\tilde{L}\psi,\\
			\psi_t=\tilde{Z}\psi,
		\end{cases}
	\end{equation}
where
	\begin{equation*}
		\tilde{L}=\begin{pmatrix}
			-\frac{1}{3}u & -1 & 0 \\
			-v & \frac{2}{3}u & -\lambda\\
			-1 & 0 & -\frac{1}{3}u
		\end{pmatrix},\quad
		\tilde{Z}=\frac{1}{3}\begin{pmatrix}
			v-u_x-\frac{1}{3}u^2 & -u & 3\lambda \\
			2u_{xx}-3v_x-uv+3\lambda & \frac{2}{3}u^2+v & -\lambda u\\
			2u & 3 & u_x-2v-\frac{1}{3}u^2
		\end{pmatrix},
	\end{equation*}
in which $\lambda=k^3$ is the spectral parameter.
\par
The Hirota-Satsuma equation (\ref{HS eq}) is also called the Hirota-Satsuma type modified Boussinesq equation connecting with the good Boussinesq equation   \cite{Charlier-Lenells-2021,Charlier-Lenells-Wang-2021,Wang-Zhu-2022,Charlier-Lenells-2024}
\begin{equation}\label{good boussinesq}
		w_{tt}+\frac{1}{3}(w_{xx}+4w^2)_{xx}=0,
	\end{equation}
also written as
	\begin{equation}\label{gb eq}
		\begin{aligned}
			&w_{t}=h_{x},\\
			&h_{t}+\frac{1}{3}w_{xxx}+\frac{4}{3}(w^{2})_{x}=0,
		\end{aligned}
	\end{equation}
by a Miura transformation that will be listed below. In fact, the authentic modified equation of the good Boussinesq equation (\ref{gb eq}) is the following modified Boussinesq equation \cite{Wang-Zhu-2022}
\begin{equation}\label{mb eq}
		\begin{aligned}
			&p_{t}=2(pq)_{x}+q_{xx},\\
			&q_{t}=-\frac{1}{3}p_{xx}+\frac{2}{3}pp_{x}-2qq_{x},
		\end{aligned}
	\end{equation}
which was firstly given by Fordy and Gibbons \cite{Fordy-Gibbons-1981} and is equivalent to the Mikhailov-Lenells equation
\begin{equation}\label{L eq}
		\begin{aligned}
			&\mathrm{i}z_{t}+\frac{\sqrt{3}}{3}z_{xx}+2\mathrm{i}rr_{x}=0,\\
			&\mathrm{i}r_{t}-\frac{\sqrt{3}}{3}r_{xx}+2\mathrm{i}zz_{x}=0,
		\end{aligned}
\end{equation}
which was firstly proposed by Mikhailov \cite{Mikhailov-1981} and then by Lenells himself \cite{Lenells-2012}. Actually, the modified Boussinesq equation (\ref{mb eq}) is related with Mikhailov-Lenells equation (\ref{L eq}) by invertible linear transformations
\begin{equation}\label{linear-transformations-1}
		\begin{cases}		p=\frac{1}{2}(-\frac{3}{2}+\frac{\sqrt{3}\mathrm{i}}{2})z-\frac{1}{2}(\frac{3}{2}+\frac{\sqrt{3}\mathrm{i}}{2})r,\\			q=-\frac{1}{2}(\frac{1}{2}+\frac{\sqrt{3}\mathrm{i}}{2})z+\frac{1}{2}(-\frac{1}{2}+\frac{\sqrt{3}\mathrm{i}}{2})r,
		\end{cases}
	\end{equation}
and
	\begin{equation}\label{linear-transformations-2}
		\begin{cases}
			z=-\frac{1}{3}(\frac{3}{2}+\frac{\sqrt{3}\mathrm{i}}{2})p+(-\frac{1}{2}+\frac{\sqrt{3}\mathrm{i}}{2})q,\\
			r=\frac{1}{3}(-\frac{3}{2}+\frac{\sqrt{3}\mathrm{i}}{2})p-(\frac{1}{2}+\frac{\sqrt{3}\mathrm{i}}{2})q.
		\end{cases}
	\end{equation}
That is to say, the modified Boussinesq equation (\ref{mb eq}) is equivalent to Mikhailov-Lenells equation (\ref{L eq}). See Figure \ref{pyramid}.
\par
In the past years, much work \cite{Zakharov1973,Deift1982,Kaup-1980,Charlier-Lenells-2021,Charlier-Lenells-Wang-2021,
Wang-Zhu-2022,Charlier-Lenells-2024} has been done to explore the initial-value problem of the good Boussinesq equation (\ref{gb eq}) because it is an important nonlinear equation describing rich physical phenomena in elastic beam \cite{Falkovich-1983} and nonlinear dielectric materials \cite{Turitsyn-1993}.
Especially, it is worth pointing out that Charlier and Lenells \cite{Charlier-Lenells-2024} opened a way to build the correspondence of Riemann-Hilbert problems for a nonlinear integrable equation and its modified version and then establish Miura-type transformation between them, which provides an approach to derive Miura transformations. Moreover, it has been demonstrated that the solution of  Riemann-Hilbert problem for good Boussinesq equation (\ref{gb eq}) has double pole at origin $k=0$, which brings lots of trouble in nonlinear steepest descent analysis. However, recent work indicates that the solution of Riemann-Hilbert problem for the modified Boussinesq equation (\ref{mb eq}) \cite{Wang-Zhu-2022} and Mikhailov-Lenells equation (\ref{L eq}) \cite{Charlier-2021} doesn't have singularity at $k=0$. In this direction, it is meaningful to investigate the direct and inverse spectral problem of the Hirota-Satsuma equation (\ref{HS eq}) and it will be shown below that the solution of Riemann-Hilbert problem for equation (\ref{HS eq}) has simple pole at origin $k=0$. See Table 1 for details.
\par
After the Hirota-Satsuma equation was proposed, people carried out series of studies to explore its integrability and exact solutions. Quispel, Nijhoff, Capel \cite{Quispel-1982} and Clarkson \cite{Clarkson-1989} respectively investigated the similarity reductions of Hirota-Satsuma equation by introducing proper similarity variables. Geng \cite{Geng-1988} proposed the Lax pair of equation (\ref{HS eq}) and gave its exact solutions by Darboux transformation. Geng and his coauthors \cite{Geng-2003,Geng-2007} presented the Hirota-Satsuma hierarchy by nonlinearization approach of Lax pair and derived the algebraic-geometric solution of Hirota-Satsuma hierarchy by introducing a trigonal curve.
As far as we know, the full relations among Hirota-Satsuma equation (\ref{HS eq}), good Boussinesq equation (\ref{gb eq}) and the modified equations (\ref{mb eq})-(\ref{L eq}) are not given, moreover, the large-time asymptotics of Hirota-Satsuma equation (\ref{HS eq}) with initial-value condition remains unsolved.
\par
\par
	\begin{figure}[htp]\label{pyramid}
		\centering
		\begin{tikzpicture}
			\node at (0,0){H-S};
			\draw[blue,very thick] (0,0) ellipse (0.4 and 0.3);
			\node at (0,2.4) {G-B};
			\draw[blue,very thick] (0,2.4) ellipse (0.4 and 0.3);
			\node at (-2.53,-1) {M-B};
			\draw[blue,very thick] (-2.53,-1) ellipse (0.4 and 0.3);
			\node at (2.53,-1) {M-L};
			\draw[blue,very thick] (2.53,-1) ellipse (0.4 and 0.3);
			\draw[->,blue,very thick] (0,0.4)--(0,2.1);
			\draw[<-,blue,very thick] (-0.5,0)--(-2.1,-0.8);
			\draw[<-,blue,very thick] (0.5,0)--(2.1,-0.8);
			\draw[<->,very thick] (-2,-1)--(2,-1) node[midway, above] {(\ref{linear-transformations-2})} node[midway,below] {(\ref{linear-transformations-1})};
			\draw[<->,blue,very thick] (-2,-1)--(2,-1);
			\draw[->,blue,very thick] (-2.4,-0.6)--(-0.2,2.1);
			\draw[->,blue,very thick] (2.4,-0.6)--(0.2,2.1);
            \draw (-0.25,0.5) node[above] {\rotatebox{90}{(\ref{HS-to-GB})}};
			\draw (1.45,-0.68) node[above] {\rotatebox{330}{(\ref{ML-to-HS})}};
			\draw (-1.4,-0.68) node[above] {\rotatebox{30}{(\ref{mGB-to-HS})}};
			\draw (1.6,0.3) node[above] {\rotatebox{310}{(\ref{ML-to-GB})}};
			\draw (-1.6,0.3) node[above] {\rotatebox{50}{(\ref{mGB-to-GB})}};
		\end{tikzpicture}
		\caption{Relations among Hirota-Satsuma equation (\ref{HS eq}), good Boussinesq equation (\ref{gb eq}), modified Boussinesq equation (\ref{mb eq}) and Mikhailov-Lenells equation (\ref{L eq}), which are abbreviated as H-S, G-B, M-B and M-L, respectively.}
	\end{figure}
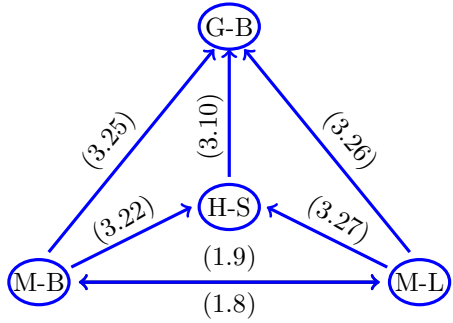

\par
The Riemann-Hilbert approach is a powerful tool to study the initial-value problems of integrable systems \cite{A-Constantin-2006,A-Constantin-2007,A-Constantin-2010} from real physics, especially the issues above.
The present work aims to solve these problems by Riemann-Hilbert formulation and nonlinear steepest descent analysis proposed by Deift and Zhou \cite{Deift-Zhou} and then developed by many scholars \cite{Bertola-2013}-\cite{Girotti-2023}. To to specific, Section \ref{Section-2} first analyzes the inverse spectral problem via the Lax pair (\ref{lax pair}) to formulate the Riemann-Hilbert problem for function $\mathcal{M}(x,t,k)$ of Hirota-Satsuma equation (\ref{HS eq}), then Section \ref{Section-3} builds the relationship between $\mathcal{M}(x,t,k)$ and the solutions to Riemann-Hilbert problems of equations (\ref{gb eq})-(\ref{L eq}) to derive the Miura transformations among these equations (see Figure \ref{pyramid}), and finally, after deforming the Riemann-Hilbert problem for function $\mathcal{M}(x,t,k)$, Section \ref{Section-4} formulates the large-time asymptotics of Hirota-Satsuma equation (\ref{HS eq}) in term of reflection coefficient.
\par
In what follows, we list the main results of this paper. Summarizing Theorems
\ref{RHP between gb and hs}-\ref{RHP between gb and mb}, a graph of relations among Hirota-Satsuma equation (\ref{HS eq}), good Boussinesq equation (\ref{gb eq}) and the modified equations (\ref{mb eq})-(\ref{L eq}) is plotted in Figure \ref{pyramid}, which demonstrates the pyramid relationship of Hirota-Satsuma equation (\ref{HS eq}) with equations (\ref{gb eq})-(\ref{L eq}) by Miura transformation and invertible linear transformations. It is observed that the top of pyramid is the good Boussinesq equation (\ref{gb eq}), the bottoms are the modified equations (\ref{mb eq})-(\ref{L eq}), while the center of pyramid is Hirota-Satsuma equation (\ref{HS eq}). It is observed that the good Boussinesq equation (\ref{gb eq}) can be derived from the Hirota-Satsuma equation (\ref{HS eq}), the modified Boussinesq equation (\ref{mb eq}) and Mikhailov-Lenells equation (\ref{L eq}) \cite{Charlier-Lenells-2024} by the Miura transformations, respectively.
\par
\textcolor{blue}{Introduce the following $3\times3$-matrix valued eigenfunctions (see (\ref{Volterra-x}) and (\ref{xa-integral})) by
$$
	\begin{aligned}
		Y_{1}(x, k) &= I - \int_{x}^{\infty} e^{(x-s) \widehat{\mathcal{L}(k)}}(U Y_{1})(s, k) \, ds, \\
		Y^{A}_{1}(x, k) &= I + \int_{x}^{\infty} e^{-(x-s) \widehat{\mathcal{L}(k)}} (U^{T} Y_{1}^{A})(s, k) \, ds
	\end{aligned}
$$
with
$$
U(x,t;k)=\frac{u(x,t)}{3}\begin{pmatrix}
	0 & \omega^2 & \omega\\
	\omega & 0 & \omega^2\\
	\omega^2 & \omega & 0
\end{pmatrix}-\frac{v(x,t)}{3k}\begin{pmatrix}
\omega^2 & 1 & \omega\\
1 & \omega & \omega^2\\
\omega & \omega^2 & 1
\end{pmatrix},
$$
where $e^{\widetilde{\mathcal{L}}}$ is an operator defined by $e^{\widetilde{\mathcal{L}}}U:=e^{{\mathcal{L}}}Ue^{{-\mathcal{L}}}$, and $U^T$ denotes the transpose of matrix $U$. Moreover, the scattering matrices $s(k)$ and $s^{A}(k)$, as given in (\ref{s(k)-integral}) and (\ref{sa-integral}), can be expressed as 
$$
\begin{aligned}
s(k)&=I-\int_{-\infty}^{\infty} e^{-x \widehat{\mathcal{L}(x)}}\left(U Y_{1}\right)\left(x, k\right) dx,\\
s^{A}(k)&=I+\int_{-\infty}^{\infty} e^{x \widehat{\mathcal{L}(k)}}\left(U^{T} Y_{1}^{A}\right)\left(x, k\right) dx.
\end{aligned}
$$
As a result, one can define reflection coefficients $\rho_{1}(k)$ and $\rho_{2}(k)$ in (\ref{r1r2}),i.e.,
$$
	\begin{cases}
		\rho_{1}(k)=\frac{s_{12}(k)}{s_{11}(k)},\qquad & k\in(-\infty,0),\\
		\rho_{2}(k)=\frac{s_{12}^{A}(k)}{s_{11}^{A}(k)},\qquad & k\in(0,\infty).
	\end{cases}
$$}
\par
For Schwartz class initial value $\{u_{0},v_{0}\}$ of the Hirota-Satsuma equation (\ref{HS eq}), assume the solitons under this initial condition are absent, that is the elements $s_{11}(k)$ and $s^A_{11}(k)$ are nonzero for \(k \in \bar{D}_4 \setminus \{0\}\) and \(k \in \bar{D}_1 \setminus \{0\}\), respectively. Then the large-time asymptotics for the Hirota-Satsuma equation (\ref{HS eq}) and good Boussinesq equation (\ref{gb eq}) with Schwartz class initial values can be formulated in the theorems as follow.
\par

\begin{figure}\label{rho1-rho2-HS}
	\centering
	\includegraphics[width=8cm]{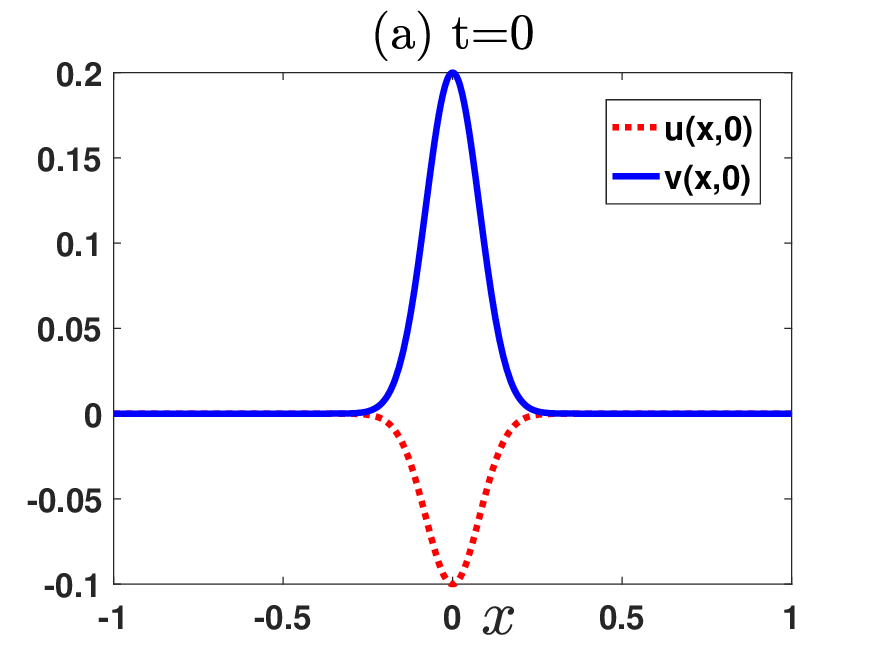}
    \includegraphics[width=7cm]{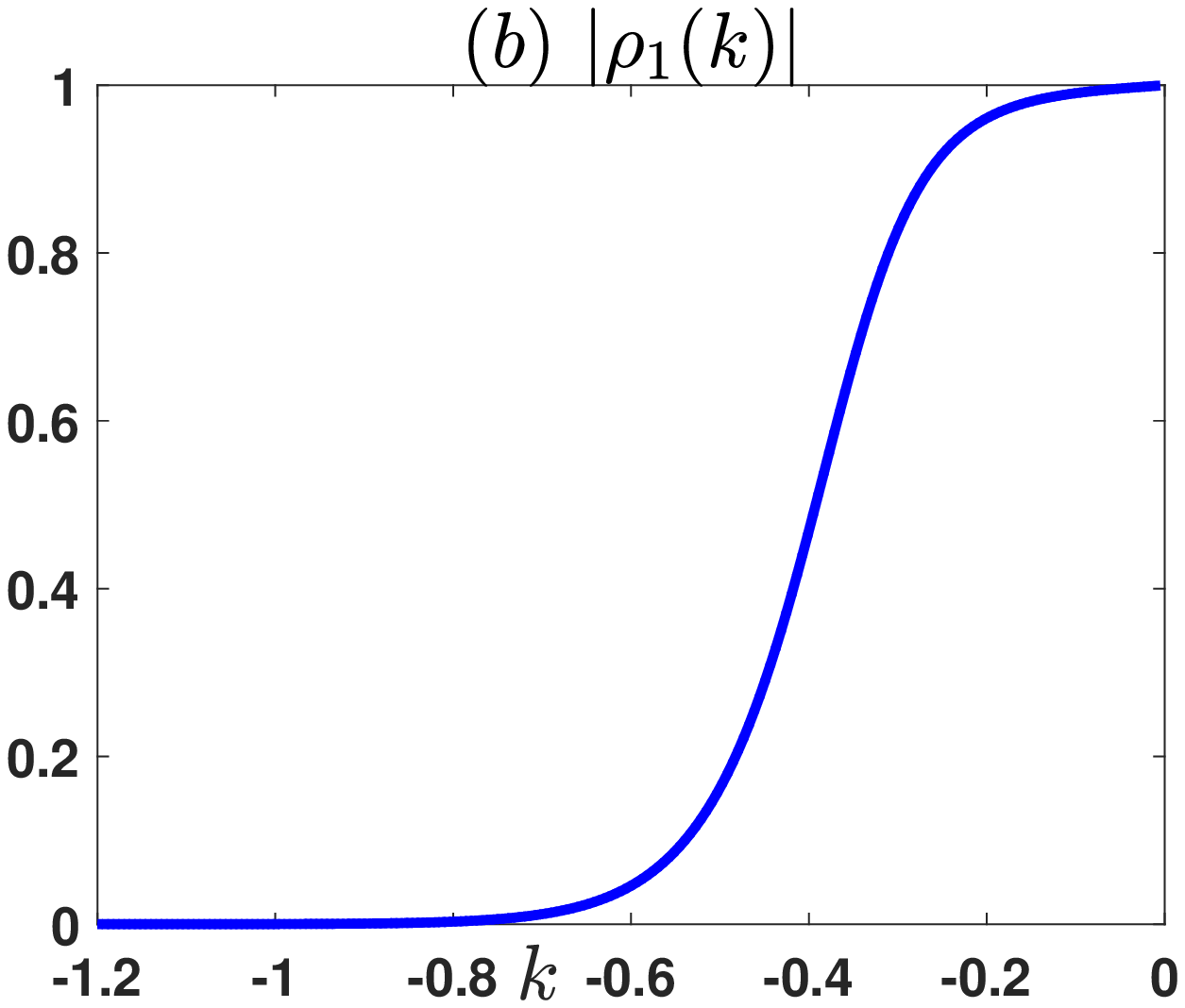}
    \includegraphics[width=7cm]{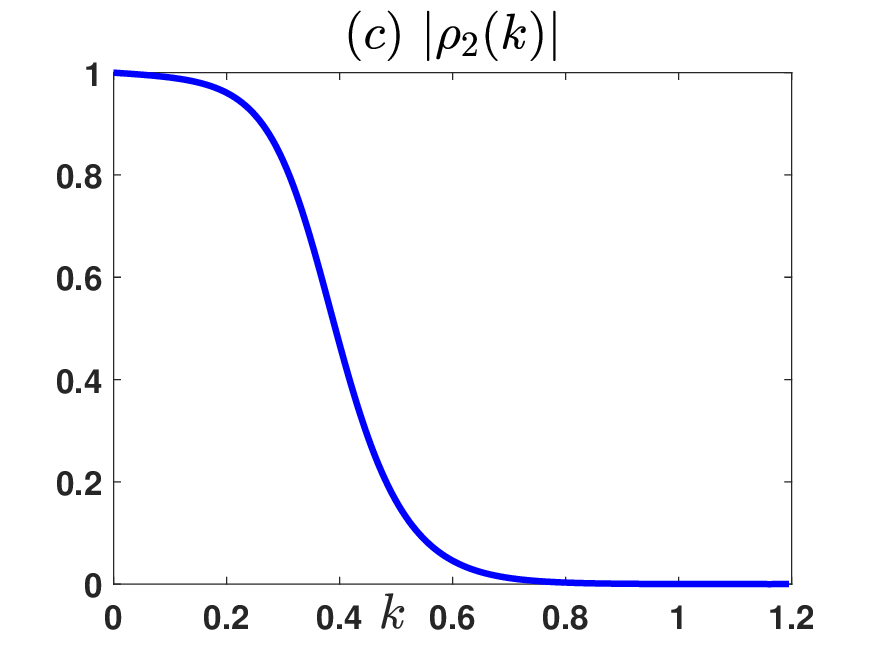}
	\caption{{\protect\small (a) The images of initial condition $u(x, 0)$ and $v(x, 0)$ in (\ref{initial-value}). (b) The absolute value of reflection coefficient $\rho_1(k)$ for $k<0$. (c) The absolute value of reflection coefficient $\rho_2(k)$ for $k>0$. }}
\end{figure}

\begin{figure}\label{utvt=100duibi}
	\centering
	\includegraphics[width=13cm]{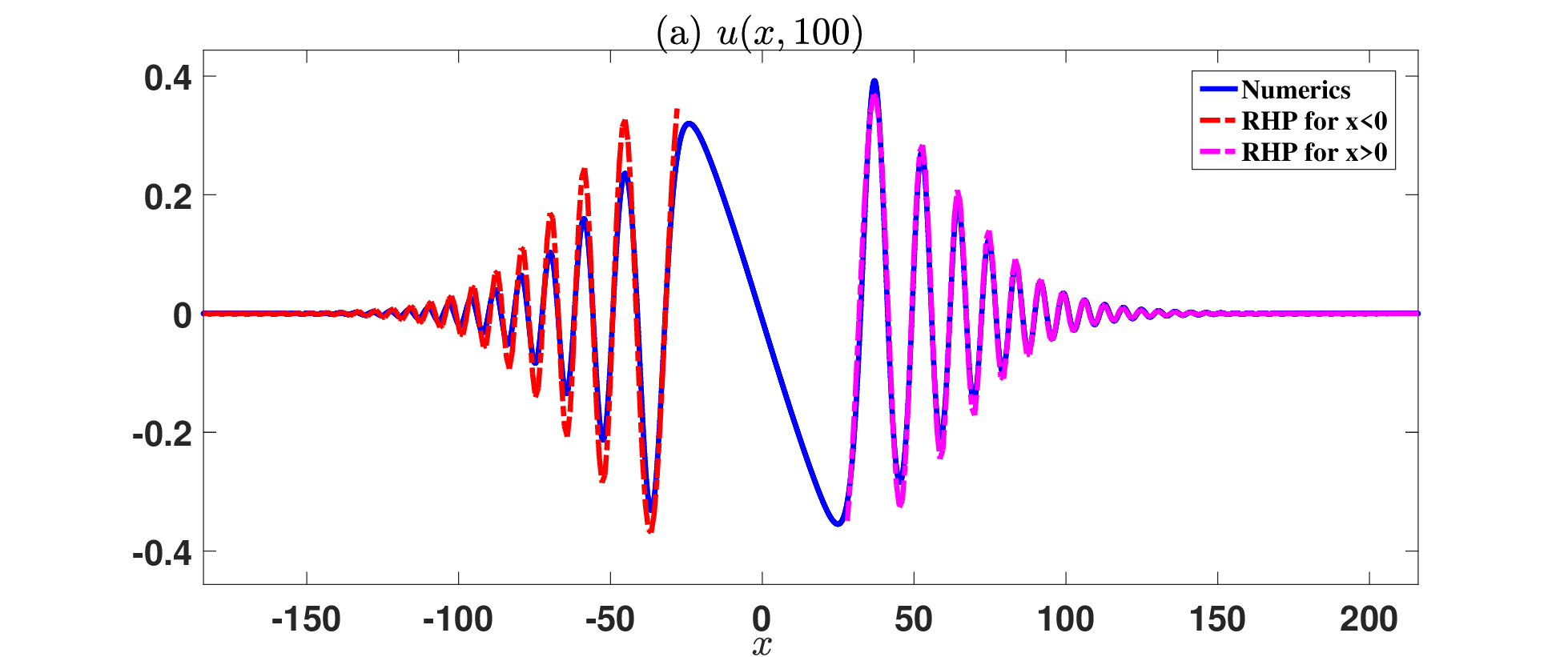}
    \includegraphics[width=13cm]{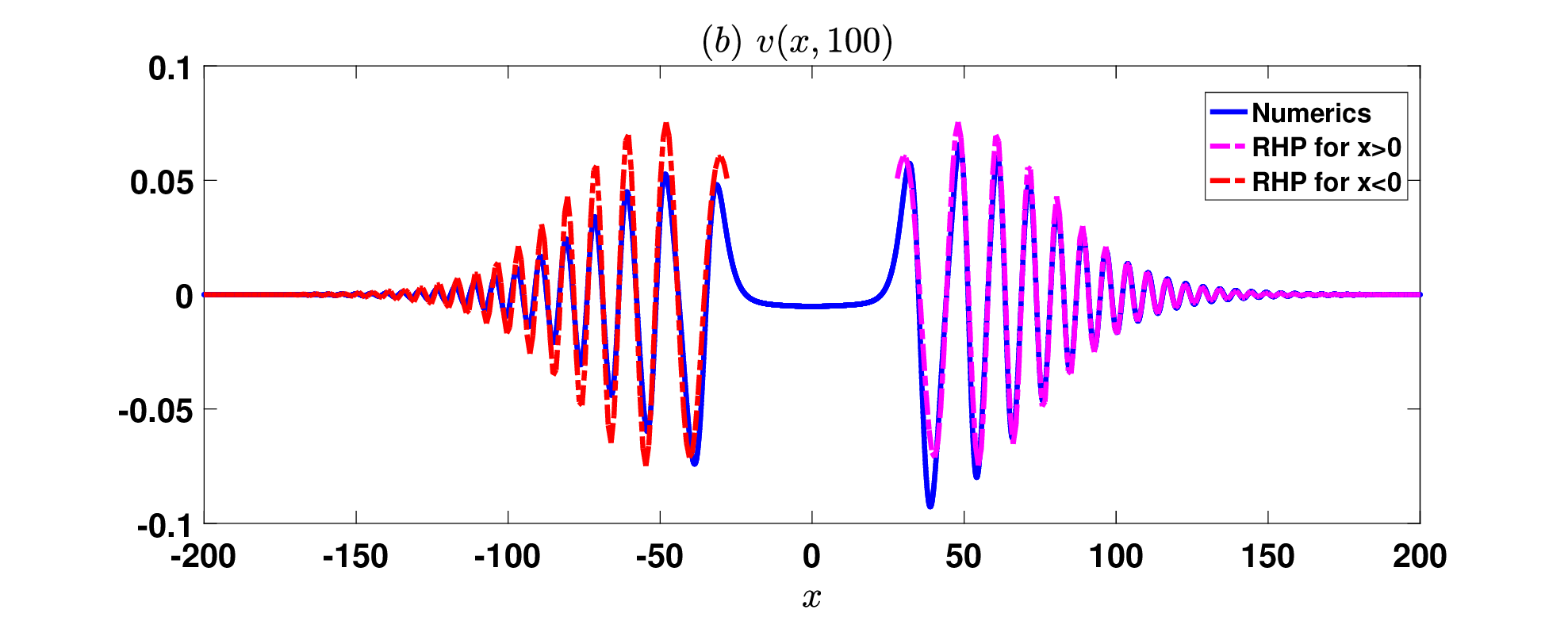}
	\caption{{\protect\small The comparisons of asymptotic expressions given in Theorem \ref{large-time theorem-HS} and full numerical simulations of the Hirota-Satsuma equation (\ref{HS eq}) with initial condition (\ref{initial-value}) at time $t=100$. }}
\end{figure}

\begin{theorem}\label{large-time theorem-HS}
		Let $\{u(x,t),v(x,t)\}$ be Schwartz class solution of the Hirota-Satsuma equation (\ref{HS eq}) with initial value $u_{0}=u(x,0),v_{0}=v(x,0)\in\mathcal{S}(\mathbb{R})$ and assume the elements $s_{11}(k)\neq 0$ and $s^A_{11}(k)\neq 0$ for \(k \in \bar{D}_4 \setminus \{0\}\) and \(k \in \bar{D}_1 \setminus \{0\}\), respectively, and
		\begin{equation*}
			\lim\limits_{k\to 0}ks_{11}(k)\neq 0,\quad 	\lim\limits_{k\to 0}ks_{11}^{A}(k)\neq 0,
		\end{equation*}
such that the functions $\rho_{1}(k)$ and $\rho_{2}(k)$ in (\ref{r1r2}) satisfy $\rho_{1}(0)=\rho_{2}(0)=\omega$, $|\rho_{1}(k)|<1$ for $k<0$ and $|\rho_{2}(k)|<1$ for $k>0$, \textcolor{blue}{and denote $\mathcal{K}$ as a compact subset of $(0,\infty )$}, 
then the large-time solution of the Hirota-Satsuma equation (\ref{HS eq}) holds
		\begin{enumerate}[$(1)$]
			\item \textcolor{blue}{For $x/t> 0$ and $x/t\in\mathcal{K}$}, as $t\to\infty$, the asymptotic behaviors of $u(x,t)$ and $v(x,t)$ are
			\begin{equation}\label{asympotics for k more than 0}
				\begin{aligned}
					u(x,t)&=-\frac{3^{5/4}\sqrt{2\mu}}{\sqrt{t}}\cos
					\big(\mu\ln(6\sqrt{3}tk_{0}^{2})+\frac{1}{\pi}\int^{-\infty}_{k_0}\ln{\frac{|s-k_{0}|}{|s-\omega k_{0}|}}d\textcolor{blue}{\ln(1-|\rho_1(s)|^2)}\\
					&-\sqrt{3}tk_0^{2}+\frac{3\pi}{4}-\arg{p}-\arg{\Gamma(i\mu)}\big)+O(\ln{(t)}/t),\\
					v(x,t)&=\frac{3^{5/4}\times \sqrt{2\mu}k_0}{\sqrt{t}}\sin\big(\mu\ln(6\sqrt{3}tk_{0}^{2})+\frac{1}{\pi}\int^{-\infty}_{k_0}\ln{\frac{|s-k_{0}|}{|s-\omega k_{0}|}}d\textcolor{blue}{\ln(1-|\rho_1(s)|^2)}\\
					&-\sqrt{3}tk_0^{2}+\frac{19\pi}{12}-\arg{p}-\arg{\Gamma(i\mu)}\big)+O(\ln{(t)}/t),
				\end{aligned}
			\end{equation}
			where $\mu=-\frac{1}{2\pi}\ln(1-|\rho_1(k_0)|^2)$ and $p=\rho_{1}(k_0)$ with $k_0=-x/(2t).$
		\end{enumerate}
		\begin{enumerate}[$(2)$]
			\item \textcolor{blue}{For $x/t< 0$ and $|x|/t \in \mathcal{K}$,} as $t\to\infty$, the asymptotic behaviors of $u(x,t)$ and $v(x,t)$ are
			\begin{equation}\label{asympotics for k less than 0}
				\begin{aligned}
					u(x,t)&=\frac{3^{5/4}\sqrt{2\tilde{\mu}}}{\sqrt{t}}\cos					\big(\tilde{\mu}\ln(6\sqrt{3}tk_{0}^{2})+\frac{1}{\pi}\int^{\infty}_{k_0}\ln{\frac{|s-k_{0}|}{|s-\omega k_{0}|}}d\textcolor{blue}{\ln(1-|\rho_2(s)|^2)}\\					&-\sqrt{3}tk_0^{2}+\frac{3\pi}{4}-\arg{\tilde{p}}-\arg{\Gamma(i\tilde{\mu})}\big)+O(\ln{(t)}/t),\\
					v(x,t)&=\frac{3^{5/4}\times \sqrt{2\tilde{\mu}}k_0}{\sqrt{t}}\sin\big(\tilde{\mu}\ln(6\sqrt{3}tk_{0}^{2})+\frac{1}{\pi}\int^{\infty}_{k_0}\ln{\frac{|s-k_{0}|}{|s-\omega k_{0}|}}d\textcolor{blue}{\ln(1-|\rho_2(s)|^2)}\\					&-\sqrt{3}tk_0^{2}+\frac{19\pi}{12}-\arg{\tilde{p}}-\arg{\Gamma(i\tilde{\mu})}\big)+O(\ln{(t)}/t).
				\end{aligned}
			\end{equation}
			where $\tilde{\mu}=-\frac{1}{2\pi}\ln(1-|\rho_2(k_0)|^2)$ and $\tilde{p}=\rho_{1}(-k_0)$.
		\end{enumerate}
	\end{theorem}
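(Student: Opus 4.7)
The plan is to apply the Deift--Zhou nonlinear steepest descent method to the Riemann--Hilbert problem for $\mathcal{M}(x,t,k)$ formulated in Section~\ref{Section-2}. Since the spectral parameter enters through $\lambda=k^3$, the $k$-plane is partitioned into six sectors $D_1,\dots,D_6$ separated by the rays $\arg k=j\pi/3$, and the controlling phase $\Phi(x,t,k)$ inherited from $e^{x\widehat{\mathcal{L}(k)}}$ and its time-analogue has stationary points $k_0$, $\omega k_0$, $\omega^2 k_0$, forced by the $\omega$-symmetry of $U(x,t;k)$. The regime $x/t>0$ (controlled by $\rho_1$) and the regime $x/t<0$ (controlled by $\rho_2$) are interchanged by this same symmetry, so it suffices to carry out the analysis for one sign of $x/t$ and transport the result.

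First, I would draw the sign table of $\operatorname{Re}\Phi$, factor the jump on each relevant ray as $J=(I-w_-)(I+w_+)$ according to that sign table, and open lens contours on which every off-diagonal entry decays exponentially in $t$. The required analytic continuation of $\rho_1,\rho_2$ off the real axis is carried out, for Schwartz class data, by a Zhou-type rational approximation together with a small-norm error estimate. I would then introduce a scalar RHP for a function $\delta(k)$ with jump $1-|\rho_j(k)|^2$ on the outgoing piece past $k_0$; its Cauchy-integral representation supplies both the logarithmic term $\tfrac{1}{\pi}\int\ln\tfrac{|s-k_0|}{|s-\omega k_0|}\,d\ln(1-|\rho_j|^2)$ and the factor $(6\sqrt{3}tk_0^2)^{i\mu}$, with $\mu=-\tfrac{1}{2\pi}\ln(1-|\rho_j(k_0)|^2)$, that appear in the leading phase.

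Second, I would construct local parametrices around each of the three saddles $k_0,\omega k_0,\omega^2 k_0$ by rescaling $\zeta=\sqrt{6\sqrt{3}tk_0^2}\,(k-k_0)$ and solving the resulting model RHP explicitly in terms of parabolic cylinder functions $D_{i\mu}(\zeta)$. This produces the prefactor $\sqrt{2\mu}/\sqrt{t}$ and the $\arg\Gamma(i\mu)$ phase contribution. Because $\mathcal{M}$ has only a \emph{simple} pole at $k=0$ (in contrast to the double pole of the good Boussinesq case), the normalization $\rho_1(0)=\rho_2(0)=\omega$ makes the local problem at the origin consistent with the global parametrix, so its contribution is absorbed into the $O(\ln t/t)$ error. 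A standard small-norm argument for the remainder RHP then yields $E(k)=I+O(\ln t/t)$ uniformly off the contour, and the compactness hypothesis $x/t\in\mathcal{K}$ guarantees that $k_0$ stays bounded away from both $0$ and $\infty$ so that the estimates are uniform.

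Finally, the formulas for $u$ and $v$ are extracted from the first two coefficients in the expansion $\mathcal{M}(x,t,k)=I+\mathcal{M}_1/k+\mathcal{M}_2/k^2+\cdots$ as $k\to\infty$ via the reconstruction formulas derived in Section~\ref{Section-2}. Summing the three saddle contributions under the constraint imposed by the $\omega$-symmetry reassembles a single real oscillatory term whose explicit phase shifts $3\pi/4$ for $u$ and $19\pi/12$ for $v$ arise from the linear combinations of cube roots of unity that isolate $u$ and $v$ inside $\mathcal{M}_1$ and $\mathcal{M}_2$. The principal obstacle will be the simultaneous bookkeeping of the $\omega$-symmetry across the three local parametrices and the matching to the simple pole at $k=0$: one must verify that opening the lenses near the origin does not introduce a spurious boundary term, which is precisely the point where the Miura-type linking with the good Boussinesq RHP established in Section~\ref{Section-3} becomes essential.
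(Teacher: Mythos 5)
Your proposal follows essentially the same route as the paper: triangular factorization and lens opening guided by the sign of $\operatorname{Re}\theta_{21}$, analytic-plus-remainder decomposition of $\rho_1,\rho_2$, a scalar $\delta$-type function with jump $1-|\rho_1|^2$ producing the $\mu$ and Cauchy-integral phase, parabolic-cylinder parametrices at the three $\omega$-symmetric saddles, a small-norm remainder, and the $x\to-x$ symmetry to transfer the result to $x/t<0$. The only notable deviation is your final remark: the paper neutralizes the simple pole at $k=0$ not via the Miura link to the good Boussinesq RHP but by passing to the row-vector functions $\mathcal{N}=(\omega,\omega^2,1)\mathcal{M}$ and $\mathcal{Q}=(1,1,1)\mathcal{M}$, whose reconstruction formulas (\ref{fanjie u v}) are what actually yield $u$ and $v$.
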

\begin{proof}
See Section \ref{Section-4}.
\end{proof}
\textcolor{blue}{\begin{remark}
For any positive constant \( M > 1 \) and \( M \leq {|x|}/{t} < \infty \), assume that the spectral functions \(\rho_j\), \(j = 1, 2\), satisfy the conditions of Theorem \ref{large-time theorem-HS}. Then, the large-time solutions of the Hirota-Satsuma equation in (\ref{asympotics for k more than 0}) and (\ref{asympotics for k less than 0}) still hold, except that the error term \(O\left(\frac{\ln t}{t}\right)\) is replaced by \(O\left(\frac{1}{|x|^N} + C_N\left({|x|}/{t}\right)\frac{\ln x}{|x|}\right)\) for any fixed \(N \geq 1\), where \(C_N(\eta)\) decays rapidly as \(\eta \to \infty\) for each \(N\). In particular, since \(\rho_{1}(-k)\) and \(\rho_{2}(k)\) vanish to arbitrary order as \(k \to \infty\), it follows that
\[
\begin{aligned}
	u(x,t) &= O\left(\frac{1}{|x|^N} + \frac{C_N(|x|/t)}{\sqrt{|x|}}\right),\ 
	v(x,t) &= O\left(\frac{1}{|x|^N} + \frac{C_N(|x|/t)}{\sqrt{|x|}}\right),
\end{aligned}
\]
as \(|x| \to \infty\), which is consistent with the fact that \(u\) and \(v\) are Schwartz class solutions.
\end{remark}
\begin{proof}
	The proof of this remark follows a similar procedure as outlined in Section \ref{Section-4}. For detailed steps, refer to \cite{Teschl-2009}.
\end{proof}}
The large-time asymptotic behavior of the equation
(\ref{good boussinesq}) with Schwartz class initial value has been studied in Charlier, Lenells and Wang \cite{Charlier-Lenells-Wang-2021}. However, thanks to the Miura transformation (\ref{HS-to-GB}) below, a similar asymptotic result can be given in the theorem as follows.

\begin{theorem}\label{large-time theorem-GB}
	Let $w(x,t)$ be a solution in the Schwartz class of the good Boussinesq equation (\ref{good boussinesq}) with initial value $w_{0}=w(x,0)\in\mathcal{S}(\mathbb{R})$. Assume the conditions stated in Theorem \ref{large-time theorem-HS} are satisfied.
	 Then the Miura transformation $w=-\frac{1}{6}u^2-\frac{1}{2}v$ and Theorem \ref{large-time theorem-HS} yield the large-time solution of the good Boussinesq equation (\ref{good boussinesq}) for $x/t>0$ and $x/t\in \mathcal{K}$, as follows
	\begin{equation}\label{Asympototics-GB}
		\begin{aligned}
			w(x,t)&=-\frac{3^{5/4}\times \sqrt{\mu}k_0}{\sqrt{2t}}\sin\big(\mu\ln(6\sqrt{3}tk_{0}^{2})+\frac{1}{\pi}
			\int^{-\infty}_{k_0}\ln{\frac{|s-k_{0}|}{|s-\omega k_{0}|}}d\textcolor{blue}{\ln(1-|\rho_1(s)|^2)}\\
			&-\sqrt{3}tk_0^{2}+\frac{19\pi}{12}-\arg{p}-\arg{\Gamma(i\mu)}\big)+O(\ln{(t)}/t),\quad t\to\infty, \\
		\end{aligned}
	\end{equation}
where $\mu(k_0)=-\frac{1}{2\pi}\ln(1-|\rho_1(k_0)|^2)$ and $p=\rho_{1}(k_0)$ with $k_0=-x/(2t).$
\end{theorem}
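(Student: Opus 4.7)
The plan is to obtain Theorem~\ref{large-time theorem-GB} as a direct consequence of Theorem~\ref{large-time theorem-HS}, by composing the Miura transformation $w=-\frac{1}{6}u^{2}-\frac{1}{2}v$ (established earlier in Section~\ref{Section-3} and displayed in Figure~\ref{pyramid} as (\ref{HS-to-GB})) with the asymptotic formulas already derived for $u(x,t)$ and $v(x,t)$. Since the Miura map is purely algebraic and contains no derivatives, the analytic heavy lifting has already been done, and what remains is to identify which piece dominates at large time and to collect the numerical constants.

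First I would observe that, uniformly for $x/t$ in the compact set $\mathcal{K}\subset(0,\infty)$, the formulas (\ref{asympotics for k more than 0}) yield $u(x,t)=O(t^{-1/2})$ and $v(x,t)=O(t^{-1/2})$, so the quadratic contribution satisfies
\begin{equation*}
u^{2}(x,t)=O(t^{-1})=O(\ln t/t),
\end{equation*}
and therefore $-\frac{1}{6}u^{2}$ is absorbed into the error term of size $O(\ln t/t)$. Consequently, to leading order,
\begin{equation*}
w(x,t)=-\frac{1}{2}v(x,t)+O(\ln t/t),
\end{equation*}
and only the $v$-asymptotic contributes to the principal term.

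Next I would substitute the expansion of $v$ from (\ref{asympotics for k more than 0}) and verify the prefactor: the multiplicative $\frac{1}{2}$ combines with $3^{5/4}\sqrt{2\mu}$ via $\frac{1}{2}\cdot 3^{5/4}\sqrt{2\mu}=\frac{3^{5/4}\sqrt{\mu}}{\sqrt{2}}$, producing exactly the coefficient $-\frac{3^{5/4}\sqrt{\mu}\,k_{0}}{\sqrt{2t}}$ appearing in (\ref{Asympototics-GB}). The phase $\frac{19\pi}{12}-\arg p-\arg\Gamma(i\mu)$ and the logarithmic/integral contents of the sine argument remain unchanged, because the Miura substitution acts pointwise in $(x,t)$. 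Uniformity of the error on $\mathcal{K}$ is inherited directly from the uniformity of the nonlinear steepest descent estimates established in Section~\ref{Section-4}.

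The main obstacle is essentially bookkeeping rather than analysis: one must check that $u^{2}$ is genuinely subleading (immediate from the $t^{-1/2}$ rate) and that the algebraic simplification of the constants matches the stated formula. The hardest conceptual step — producing the asymptotics of $u$ and $v$ through the deformation of the Riemann--Hilbert problem for $\mathcal{M}(x,t,k)$ — has already been accomplished in Theorem~\ref{large-time theorem-HS}, so no further steepest descent analysis is needed here.
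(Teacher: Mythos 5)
Your proposal is correct and follows essentially the same route as the paper: the paper likewise obtains (\ref{Asympototics-GB}) by substituting the asymptotics (\ref{asympotics for k more than 0}) into $w=-\frac{1}{6}u^{2}-\frac{1}{2}v$, with the $u^{2}=O(t^{-1})$ term absorbed into the $O(\ln t/t)$ error and the constant $\frac{1}{2}\cdot 3^{5/4}\sqrt{2\mu}=3^{5/4}\sqrt{\mu}/\sqrt{2}$ giving the stated prefactor. Your write-up actually makes explicit the subleading-order bookkeeping that the paper leaves implicit.
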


In order to check the validity of the long-time asymptotics of the Hirota-Satsuma equation (\ref{HS eq}) in Theorem \ref{large-time theorem-HS}, take the initial value of the form
\begin{equation}\label{initial-value}
u(x,0)=-0.1e^{-x^2/20},\quad v(x,0)=0.2e^{-x^2/20},
\end{equation}
which are Gaussian wave packets shown in Figure \ref{rho1-rho2-HS}(a). Under this initial condition,
Figure \ref{rho1-rho2-HS}(b)-(c) demonstrate the modulus of the functions $\rho_1(k)$ and $\rho_2(k)$, which are reflection coefficients. It is seen that $|\rho_1(0)|=1, \lim\limits_{k\to-\infty}\rho_1(k)=0,  |\rho_1(k)|<1$ for $k<0$ and $|\rho_2(0)|=1, \lim\limits_{k\to+\infty}\rho_2(k)=0,  |\rho_2(k)|<1$ for $k>0$. Figure \ref{utvt=100duibi} shows the comparison of asymptotic expressions given in Theorem \ref{large-time theorem-HS} with numerical simulations with initial condition (\ref{initial-value}) at $t=100$, in which the dashed pink and red lines correspond to the asymptotic expressions in (\ref{asympotics for k more than 0}) and (\ref{asympotics for k less than 0}) while the blue solid lines represent the numerical results. Expectedly, the large-time asymptotic solutions can approximate the numerical solutions within certain error range.
\textcolor{blue}{\begin{remark}
For the region near the origin, we observe that when \(\frac{|x|}{\sqrt{t}} \leq M\), where \(M\) is a positive constant, the modified Boussinesq equation \eqref{mb eq} exhibits the following asymptotic behavior as \(t \to \infty\):
\begin{equation}\label{painleve for mb}
	\begin{aligned}
		p(x, t) &= \frac{\sqrt{3}}{4 \sqrt{t}} \frac{\left(p_{\mathrm{IV}}^{\prime}(y) + \frac{2}{3}\right)}{p_{\mathrm{IV}}(y)} + \mathcal{O}(t^{-1}), \\
		q(x, t) &= \frac{\sqrt{3}}{4 \sqrt{t}} \left(p_{\mathrm{IV}}(y) + \frac{2}{3} y\right) + \mathcal{O}(t^{-1}),
	\end{aligned}
\end{equation}
where $y = -\frac{\sqrt{3} x}{2 \sqrt{t}}$ and $p_{\mathrm{IV}}(y)$ satisfies the Painlevé IV equation:
\begin{equation}\label{Painleve}
	\frac{p_{\mathrm{IV}}^{\prime}(y)^2}{2 p_{\mathrm{IV}}(y)} + 4 y p_{\mathrm{IV}}^2(y) + \frac{3 p_{\mathrm{IV}}^3(y)}{2} + 2 y^2 p_{\mathrm{IV}}(y) - \frac{2}{9 p_{\mathrm{IV}}(y)} = p_{\mathrm{IV}}^{\prime\prime}(y).
\end{equation}
Consequently, by applying the Miura transformation from (\ref{mGB-to-HS}), we derive the asymptotics for Hirota-Satsuma equation (\ref{modified Boussinesq equation}) near the origin.Numerical results for \(u\) are shown in Figure \ref{u100vsP4}, with the dashed green line corresponding to the numerical solution of the Painlevé IV equation after applying the Miura transformation \eqref{mGB-to-HS}. This suggests that the region near the origin can be described by the solution in \eqref{Painleve}. A rigorous proof of \eqref{painleve for mb} will be provided in a forthcoming paper.
	 \begin{figure}
	 	\centering
	 	\includegraphics[width=13cm]{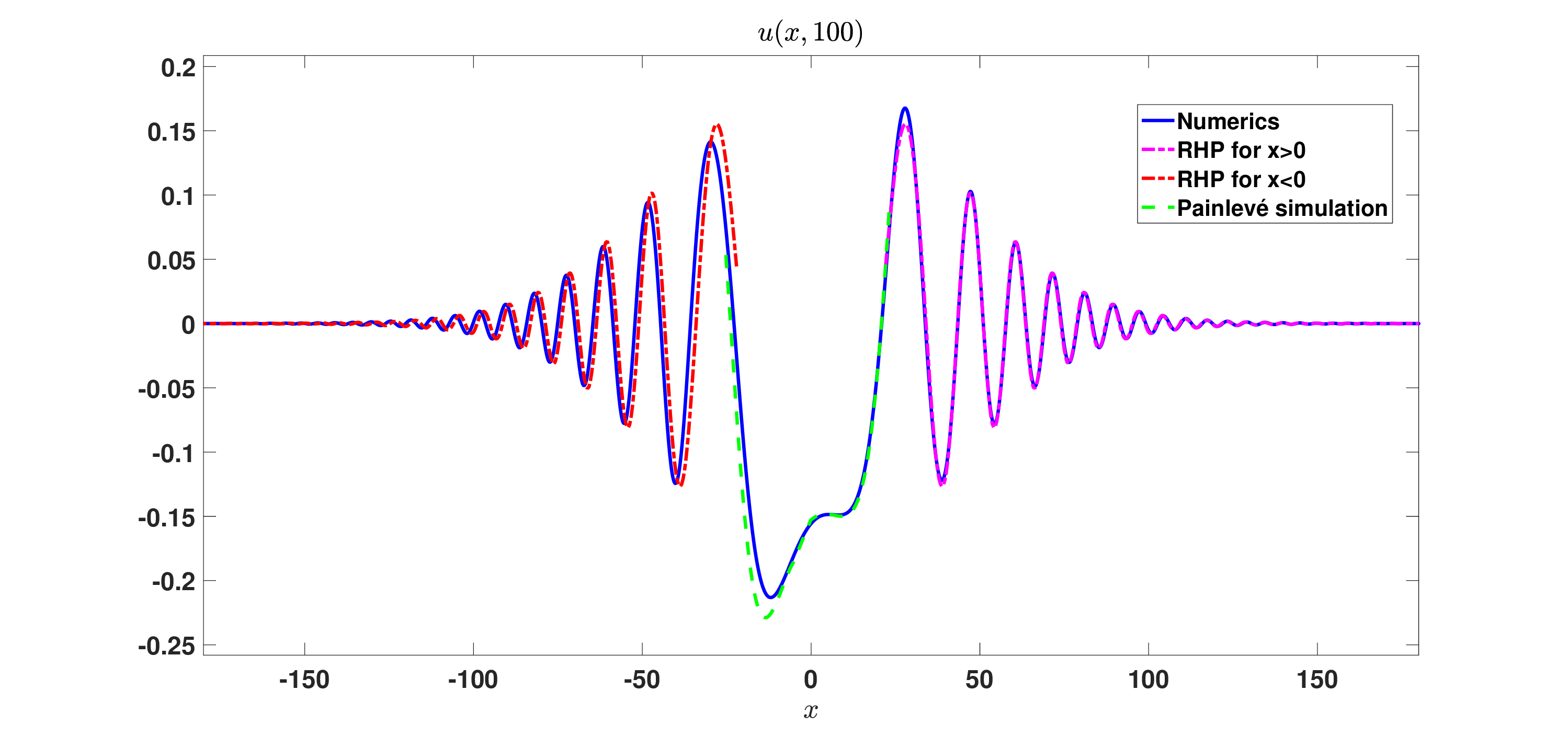}
	 	\caption{{\protect\small The comparisons of RHP reulst and Painlev$\acute{e}$ simulations with full numerical simulations of the Hirota-Satsuma equation (\ref{HS eq}) with initial condition $u(x,0)=-0.6 e^{-\frac{x^2}{20}}, v(x,0)=-0.08e^{-\frac{x^2}{10}}+0.02xe^{-\frac{x^2}{20}}$ at time $t=100$. }}
	 	\label{u100vsP4}
	 \end{figure}	 
\end{remark}}
	
	\section{Direct and inverse spectral problem}\label{Section-2}
	
	This section mainly builds the direct and inverse spectral theory of Lax pair (\ref{lax pair}) and proposes the Riemann-Hilbert problem associated with the solution $u$ and $v$ of the Hirota-Satsuma equation (\ref{HS eq}).
	
	\subsection{The modified Lax pair} For $k\in\mathbb{C}\setminus\{0\}$, define $\psi=P\phi$, where
	
	\begin{equation}\label{gauge-transform}
		P(k)=\begin{pmatrix}
			\omega k & \omega^2k  & k\\
			\omega^2 k^2 & \omega k^2 & k^2 \\
			1 & 1 & 1
		\end{pmatrix},\quad \omega=e^{\frac{2i\pi}{3}},
	\end{equation}
	with $\det(P(k))=3(\omega^2-\omega)k^3$, then the Lax pair (\ref{lax pair}) is converted into
	\begin{equation}\label{lax-phi}
		\begin{cases}
			\phi_x=L\phi,\\
			\phi_t=Z\phi,
		\end{cases}
	\end{equation}
	with
	$$
	L=P^{-1}\tilde{L}P, \quad Z=P^{-1}\tilde{Z}P.
	$$
	Rewrite $L$ and $Z$ as
	\begin{equation}\label{L,Z}
		L=\mathcal{L}+U,
		\quad Z=\mathcal{Z}+V,
	\end{equation}
	with $\mathcal{L}=-kJ$ and $\mathcal{Z}=k^2J^2$, where $J$ is a diagonal matrix defined by
	\begin{equation}\label{J}
		J=\begin{pmatrix}
			\omega & 0 & 0 \\
			0 & \omega^2 & 0\\
			0 & 0 & 1
		\end{pmatrix},
	\end{equation}
	and $U, V$ take the forms
	\begin{subequations}\label{U,V}
		\begin{align}
			& U(x,t,k)=U^{(0)}(x,t)+\frac{1}{k}U^{(-1)}(x,t), \label{UUU}\\
			& V(x,t,k)=kV^{(1)}(x,t)+V^{(0)}(x,t)+\frac{1}{k}V^{(-1)}(x,t),\label{VVV}
		\end{align}
	\end{subequations}
	where
	\begin{align*}
		U^{(0)}&=\frac{u}{3}\begin{pmatrix}
			0 & \omega^2 & \omega\\
			\omega & 0 & \omega^2\\
			\omega^2 & \omega & 0
		\end{pmatrix},\quad
		U^{(-1)}=-\frac{v}{3}\begin{pmatrix}
			\omega^2 & 1 & \omega\\
			1 & \omega & \omega^2\\
			\omega & \omega^2 & 1
		\end{pmatrix},\\
		V^{(1)}&=\frac{u}{3}\begin{pmatrix}
			0 & \omega^2 & 1\\
			\omega & 0 & 1\\
			\omega & \omega^2 & 0
		\end{pmatrix},\quad
		V^{(-1)}=\frac{2u_{xx}-uv-3v_x}{9}\begin{pmatrix}
			\omega^2 & 1 & \omega\\
			1 & \omega & \omega^2\\
			\omega & \omega^2 & 1
		\end{pmatrix},\\
		V^{(0)}&=\frac{1}{9}\begin{pmatrix}
			0 & \omega^2 u^2+(1-\omega)u_x-3v & \omega u^2+(\omega+2)u_x-3v\\
			\omega u^2-3v+(\omega+2)u_x & 0 & \omega^2 u^2+(1-\omega)u_x-3v\\
			\omega^2 u^2-3v+(1-\omega)u_x & \omega u^2-3v+(\omega+2)u_x & 0
		\end{pmatrix}.
	\end{align*}
	\par
	Direct calculations show that $L$ and $Z$ satisfy the $\mathbb{Z}_{2}$ and $\mathbb{Z}_{3}$ symmetries below, respectively
	\begin{equation}\label{symetric-conjugate}
		L(k)=\mathcal{B}\overline{L(\bar{k})}\mathcal{B},\quad {\rm and}\quad
		Z(k)=\mathcal{B}\overline{Z(\bar{k})}\mathcal{B},\quad
		\mathcal{B}=\begin{pmatrix}
			0 & 1 & 0 \\
			1 & 0 & 0 \\
			0 & 0 & 1
		\end{pmatrix},
	\end{equation}	
	\begin{equation}\label{symmetric-ratation}
		L(k)=\mathcal{A}L(\omega k)\mathcal{A}^{-1},\quad {\rm and}\quad
		Z(k)=\mathcal{A}Z(\omega k)\mathcal{A}^{-1},\quad
		\mathcal{A}=\begin{pmatrix}
			0 & 0 & 1 \\
			1 & 0 & 0 \\
			0 & 1 & 0
		\end{pmatrix}.
	\end{equation}
	\par
	The decomposition (\ref{U,V}) and the transformation $\phi=Ye^{\mathcal{L}x+\mathcal{Z}t}$ turn the Lax pair (\ref{lax-phi}) into the modified Lax pair
	\begin{equation}\label{lax-pde}
		\begin{cases}
			Y_x-[\mathcal{L},Y]=UY,\\
			Y_t-[\mathcal{Z},Y]=VY.
		\end{cases}
	\end{equation}

	\subsection{The eigenfunctions $Y_{1}$ and $Y_{2}$}
	Fixing time $t$, we are particularly interested in the $x$-part of the modified Lax pair (\ref{lax-pde}), i.e.,
	\begin{equation}\label{lax-x}
		Y_{x}-[\mathcal{L},Y]=UY.
	\end{equation}
	For convenience, rewrite $\mathcal{L}=\diag\{l_1,l_2,l_3\}$ with $l_j=-k\omega^j$~$(j=1,2,3)$ and $\mathcal{Z}=\diag\{z_1,z_2,z_3\}$ with $z_j=k^2\omega^{2j}$~$(j=1,2,3)$.
	Then the complex $k$-plane is decomposed into six open sets $\{\Omega_n\}_1^6$ in Fig. \ref{Figure-jump}, where we have set $S=\{k\in\mathbb{C}|\arg{k}\in(-\frac{\pi}{3},\frac{\pi}{3})\}$, which denotes the interior of $\bar{\Omega}_{6}\cup\bar{\Omega}_{1}$.
	
	\begin{figure}[htp]
		\centering
		\begin{tikzpicture}[>=latex]
			\draw[<->,very thick] (-2,0)node[below]{${\Gamma_4}$} to (2,0)node[above]{${\Gamma_1}$};
			\draw[very thick] (-4,0) to (4,0)node[right]{$\Gamma$};
			\draw[<->,very thick] (-1,-1.732)node[right=2mm]{${\Gamma_5}$} to (1,1.732)node[left=2mm]{${\Gamma_2}$};
			\draw[very thick] (-2,-1.732*2) to (2,1.732*2);
			\draw[<->,very thick] (-1,1.732)node[left]{${\Gamma_3}$} to (1,-1.732)node[right]{${\Gamma_6}$};
			\draw[very thick] (-2,1.732*2) to (2,-1.732*2);
			\node at (-.4,-.3){$0$};
			\node at (3,1.8){$\Omega_{1}$};
			\node at (0,2.8){$\Omega_{2}$};
			\node at (-3,1.8){$\Omega_{3}$};
			\node at (-3,-1.8){$\Omega_{4}$};
			\node at (0,-2.8){$\Omega_{5}$};
			\node at (3,-1.8){$\Omega_{6}$};
		\end{tikzpicture}
		\caption{The jump contour $\Gamma=\cup_{i=1}^{6}\Gamma_i$ and six open sets $\Omega_{j},j=1,\dots,6$.}
		\label{Figure-jump}
	\end{figure}
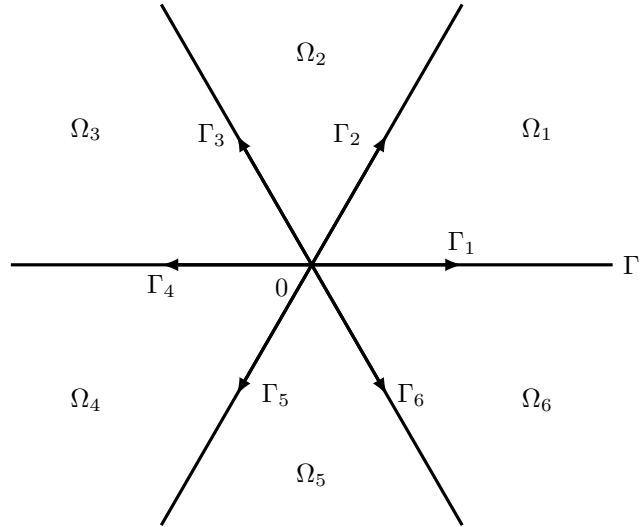
	\par
	Define two three-order matrix-valued eigenfunctions $Y_{1}$ and $Y_{2}$ as the solutions to the Volterra integral equations
	\begin{subequations}\label{Volterra-x}
		\begin{align}
			Y_{1}(x, k)=I-\int_{x}^{\infty} e^{\left(x-s\right) \widehat{\mathcal{L}(k)}}(U(s,k) Y_{1}\left(s, k\right)) ds,\\
			Y_{2}(x, k)=I+\int_{-\infty}^{x} e^{\left(x-s\right) \widehat{\mathcal{L}(k)}}(U(s,k) Y_{2}\left(s, k\right)) ds,
		\end{align}
	\end{subequations}
	where $e^{\left(x-s\right) \widehat{\mathcal{L}(k)}}(UY)=e^{\left(x-s\right)\mathcal{L}(k)}UYe^{-\left(x-s\right)\mathcal{L}(k)}$. Letting $\mathcal{S} \left(\mathbb{R}\right)$ be the Schwartz space, the properties of the eigenfunctions $Y_{1}$ and $Y_{2}$ are given in the next proposition. \\

	\begin{proposition}\label{property of x}
Denote $\Xi=(\omega^2\bar{S},\omega\bar{S},\bar{S})$ and $\hat{\Xi}=(-\omega^2\bar{S},-\omega\bar{S},-\bar{S})$.	If $u_{0},v_{0}\in\mathcal{S} \left(\mathbb{R}\right)$, the matrix-valued solutions $Y_{1}(x,k)$ and $Y_{2}(x,k)$ of (\ref{lax-x}) exhibit the following characteristics:
		\begin{enumerate}[$(1)$]
			\item The function $Y_{1}(x,k)$ is well-defined for $x\in \mathbb{R}$ and for $k$ belonging to the region $\Xi$ excluding zero. Additionally, $Y_{1}(\cdot,k)$ is smooth for any $k$ within this region excluding zero.
		\end{enumerate}
		\begin{enumerate}[$(2)$]
			\item The function $Y_{2}(x,k)$ is well-defined for $x\in \mathbb{R}$ and for $k$ belonging to the region $\hat{\Xi}$ excluding zero. Additionally, $Y_{2}(\cdot,k)$ is smooth for any $k$ within this region excluding zero.
		\end{enumerate}
		\begin{enumerate}[$(3)$]
			\item For given $x\in\mathbb{R}$, $Y_{1}(x,\cdot)$ is analytic for any $k\in\Xi\setminus\{0\}$.
		\end{enumerate}
		\begin{enumerate}[$(4)$]
			\item For given $x\in\mathbb{R}$, $Y_{2}(x,\cdot)$ is analytic for any $k\in\hat{\Xi}\setminus\{0\}$.
		\end{enumerate}
		\begin{enumerate}[$(5)$]
			\item The partial derivative $\frac{\partial^{i}Y_{1}}{\partial k^{i}}(x,\cdot)$ can extend continuously to $k\in \Xi\setminus\{0\}$ for any $x\in\mathbb{R}$ and each integer $i\ge1$.
		\end{enumerate}
		\begin{enumerate}[$(6)$]
			\item The partial derivative $\frac{\partial^{i}Y_{2}}{\partial k^{i}}(x,\cdot)$ can extend continuously to $k\in\hat{\Xi}\setminus\{0\}$ for any $x\in\mathbb{R}$ and each integer $i\ge1$.
		\end{enumerate}
		\begin{enumerate}[$(7)$]
			\item For any integer $n\ge1$, there exist two bounded, smooth, positive functions $g_{1}(x)$ and $g_{2}(x)$ defined on $x\in\mathbb{R}$, which decay rapidly as $x$ approaches $\infty$ and $-\infty$, respectively, such that
			\begin{subequations}
				\begin{align}
					&\left|\frac{\partial^{i}}{\partial k^{i}}(Y_{1}(x,k)-I)\right|\le g_{1}(x),\quad k\in\Xi\setminus\{0\},\quad  &i=0,1,2\cdots,n,\\
					&\left|\frac{\partial^{i}}{\partial k^{i}}(Y_{2}(x,k)-I)\right|\le g_{2}(x),\quad k\in\hat{\Xi}\setminus\{0\},\quad  &i=0,1,2\cdots,n.
				\end{align}
			\end{subequations}
		\end{enumerate}
		\begin{enumerate}[$(8)$]
			\item For any $x\in\mathbb{R}$, the eigenfunctions $Y_{1}$ and $Y_{2}$ also satisfy the $\mathbb{Z}_{2}$ and $\mathbb{Z}_{3}$ symmetries
			\begin{subequations}\label{symmetry-X}
				\begin{align}
					&Y_{1}(x,k)=\mathcal{A}Y_{1}(x,\omega k)\mathcal{A}^{-1}=\mathcal{B}\overline{Y_{1}(x,\bar{k})}\mathcal{B},\quad k\in\Xi\setminus\{0\},\\
					&Y_{2}(x,k)=\mathcal{A}Y_{2}(x,\omega k)\mathcal{A}^{-1}=\mathcal{B}\overline{Y_{2}(x,\bar{k})}\mathcal{B},\quad k\in\hat{\Xi}\setminus\{0\}.
				\end{align}
			\end{subequations}
		\end{enumerate}
		\begin{enumerate}[$(9)$]
			\item Suppose $u_{0},v_{0}$ are supported compactly. For every $x\in\mathbb{R}$ and $k\in\mathbb{C}\setminus \{0\}$, both $Y_{1}(x,k)$ and $Y_{2}(x,k)$ are analytic, and their determinants are equal to $1$: $\det{Y_{1}(x,k)}=\det{Y_{2}(x,k)}=1$.
		\end{enumerate}
	\end{proposition}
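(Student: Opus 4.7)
The plan is to analyze the Volterra integral equations (\ref{Volterra-x}) by Neumann iteration. Writing $Y_1 = \sum_{n\geq 0} Y_1^{(n)}$ with $Y_1^{(0)}=I$ and
\[
Y_1^{(n+1)}(x,k) = -\int_x^{\infty} e^{(x-s)\widehat{\mathcal{L}(k)}}\bigl(U(s,k)Y_1^{(n)}(s,k)\bigr)\,ds,
\]
I would check that the $(i,j)$-entry of the kernel carries the factor $e^{(x-s)(l_i-l_j)}$, which stays bounded on $\{s\geq x\}$ precisely when $\mathrm{Re}(l_i-l_j)\geq 0$. With $l_j=-k\omega^j$, the simultaneous inequalities over $i\neq j$ cut out exactly the $j$-th component of $\Xi$, giving boundedness of the $j$-th column of $Y_1$. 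Using $u_0,v_0\in\mathcal{S}(\mathbb{R})$ to majorize $|U(s,k)|$ by $(1+|k|^{-1})f(s)$ with some $f\in L^1(\mathbb{R})$, the Neumann series converges absolutely and uniformly on compact subsets of $\mathbb{R}\times(\Xi\setminus\{0\})$, proving (1); smoothness in $x$ then follows at once from the ODE $Y_x-[\mathcal{L},Y]=UY$ and the smoothness of $u,v$. Flipping $\int_x^{\infty}\to\int_{-\infty}^{x}$ reverses the sign condition on $\mathrm{Re}(l_i-l_j)$ and yields (2) on $\hat{\Xi}$.

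Analyticity (3), (4) follows because every iterate is holomorphic in $k$ wherever the exponential is integrable, and uniform convergence on compact sets combined with Morera's theorem transfers holomorphy to the sum. For (5), (6) the derivatives $\partial_k^i Y_1$ themselves satisfy Volterra equations of the same form driven by a source involving $\partial_k U$ and the lower-order $\partial_k^j Y_1$ with $j<i$; an induction on $i$ transfers the estimates. The bounds in (7) come from a Schwartz-tail argument: iterating once gives $|Y_1-I|\lesssim \int_x^{\infty}|U(s,k)|\,|Y_1(s,k)|\,ds$, and Schwartz regularity of $u_0,v_0$ makes the majorant $g_1(x)$ decay faster than any polynomial as $x\to+\infty$ while staying uniformly bounded as $x\to-\infty$; higher $k$-derivatives are handled in the same way using the induction just described.

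The symmetries (8) follow from uniqueness of solutions to the Volterra equation. Direct inspection of the explicit expressions for $U^{(0)}$ and $U^{(-1)}$ yields $\mathcal{A}U(x,\omega k)\mathcal{A}^{-1}=U(x,k)$ and $\mathcal{B}\overline{U(x,\bar k)}\mathcal{B}=U(x,k)$; combined with $\mathcal{A}\mathcal{L}(\omega k)\mathcal{A}^{-1}=\mathcal{L}(k)$ and $\mathcal{B}\overline{\mathcal{L}(\bar k)}\mathcal{B}=\mathcal{L}(k)$ already encoded in (\ref{symetric-conjugate})--(\ref{symmetric-ratation}), this shows that $\mathcal{A}Y_1(x,\omega k)\mathcal{A}^{-1}$ and $\mathcal{B}\overline{Y_1(x,\bar k)}\mathcal{B}$ solve the same Volterra equation with the same normalization $I$ at $+\infty$, hence equal $Y_1(x,k)$. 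For (9), compact support of $u_0,v_0$ collapses the $s$-integral to a bounded interval and removes every growth constraint in $k$, so the series extends analytically to $\mathbb{C}\setminus\{0\}$. Since the diagonal of $U^{(0)}$ vanishes and the diagonal entries of $U^{(-1)}$ sum to $-\frac{v}{3}(\omega^2+\omega+1)=0$, one has $\mathrm{tr}\,U=0$; Abel's formula together with $\mathrm{tr}\,\mathcal{L}=0$ then makes $\det Y_1(x,k)$ independent of $x$, and the boundary value $Y_1\to I$ forces $\det Y_1\equiv 1$, with the same argument applying to $Y_2$.

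The hard part will be maintaining uniform control as $k$ simultaneously approaches the boundary rays of $\Xi$ (where some $\mathrm{Re}(l_i-l_j)=0$, so the exponential factor is merely bounded rather than decaying) and the origin (where the $U^{(-1)}/k$ term is singular). On the rays convergence must be supported entirely by $L^1$-integrability of $f$ rather than by exponential damping, which forces one to estimate the iterated integrals by volume factors $(\int_x^{\infty}f)^n/n!$ and carefully track them; near $k=0$ the $|k|^{-1}$-prefactor in the estimate means the majorant blows up, consistent with the exclusion of $\{0\}$ in every assertion of the proposition.
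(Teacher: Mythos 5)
The paper states Proposition \ref{property of x} without any proof, so there is no argument of the authors' to compare against; your Neumann-series treatment of the Volterra equations (\ref{Volterra-x}) is the standard route in this literature and is correct in all essentials — the column-wise sign analysis of $\mathrm{Re}(l_i-l_j)$ recovering $\Xi$ and $\hat{\Xi}$, the $n!$ volume factors on the boundary rays, Morera for (3)–(4), the induction for $k$-derivatives (where the $(x-s)$ powers produced by differentiating the exponential are absorbed by the Schwartz decay of $u_0,v_0$), the identities $\mathcal{A}U(x,\omega k)\mathcal{A}^{-1}=U(x,k)=\mathcal{B}\overline{U(x,\bar k)}\mathcal{B}$ plus Volterra uniqueness for (8), and $\tr U=0$ with Abel's formula for (9). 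The one caveat you sense but should state outright: because of the $U^{(-1)}/k$ term, your majorant is $\exp\bigl((1+|k|^{-1})\int_x^\infty f\bigr)$, which is not uniform on $\Xi\setminus\{0\}$, so your method yields item (7) only for $|k|$ bounded away from $0$. That restriction is in fact unavoidable — Proposition \ref{k to 0} shows $Y_1$ has a genuine simple pole at $k=0$ (with $\alpha_1\not\equiv 0$ whenever $s^{(-1)}\neq 0$), so the $k$-uniform bound as literally written in (7) cannot hold near the origin; the imprecision lies in the paper's statement, not in your argument, and items (5)–(7) should be read with $|k|\ge c>0$ as in Propositions \ref{kto-8} and \ref{k to 0}.
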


	\subsection{Asymptotics of the eigenfunctions $Y_{1}$ and $Y_{2}$ as $k\to\infty$}
	The differential equation (\ref{lax-x}) allows the formal solutions
	\begin{equation*}
		\begin{aligned}
			Y_{1}^{formal}(x,k)=I+\frac{Y_1^{(1)}(x)}{k}+\frac{Y_{1}^{(2)}(x)}{k^2}+\dots,\\
			Y_{2}^{formal}(x,k)=I+\frac{Y_{2}^{(1)}(x)}{k}+\frac{Y_{2}^{(2)}(x)}{k^2}+\dots,
		\end{aligned}
	\end{equation*}
	in which the coefficients satisfy
	\begin{equation*}\label{x-8}
		\lim_{x\to+\infty}Y_{1}^{(j)}(x)=\lim_{x\to-\infty}Y_{2}^{(j)}(x)=0,  \quad j\ge1.
	\end{equation*}
	For convenience, rewrite the matrix $U(x,k)$ as
	\begin{equation*}
		U(x,k)=U_{0}(x)+\frac{U_{1}(x)}{k},
	\end{equation*}
	with
	\begin{equation*}
		U_{0}(x)=U^{(0)}(x,0), \qquad U_{1}(x)=U^{(-1)}(x,0).
	\end{equation*}
	Inserting
	\begin{equation*}
		Y_{1}=I+\frac{Y_{1}^{(1)}(x)}{k}+\frac{Y_{1}^{(2)}(x)}{k^2}+\cdots
	\end{equation*}
	into the equation (\ref{lax-x}) yields
	\begin{equation*}
		\begin{cases}
			[J,Y_{1}^{(j+1)}]=-\partial_{x}(Y_{1}^{(j)})^{(o)}+(U_0Y_{1}^{(j)}+U_{1}Y_{1}^{(j-1)})^{(o)},\\
			\partial_{x}(Y_{1}^{(j+1)})^{(d)}=(U_0Y_{1}^{(j+1)}+U_{1}Y_{1}^{(j)})^{(d)},
		\end{cases}
	\end{equation*}
	in which $Y^{(d)}$ denotes the diagonal parts of the matrix $Y$ and $Y^{(o)}$ denotes the off-diagonal parts. Some calculations give the first coefficient of $Y_{1}$ of the form
	\begin{equation*}
		Y_1^{(1)}(x)=-\int_{\infty}^{x}\frac{u^2_0(x')+3v_0(x')}{9}\begin{pmatrix}
			\omega^2 & 0 &0\\
			0 & \omega  & 0\\
			0 & 0 & 1
		\end{pmatrix}dx'-\frac{u_0}{9}\begin{pmatrix}
			0 & -\omega(\omega+2) & \omega(\omega+2)\\
			\omega+2 &    0  &- (\omega+2)\\
			-\omega^2(\omega+2) & \omega^2(\omega+2) &0
		\end{pmatrix}.
	\end{equation*}

	\begin{proposition}\label{kto-8}
		Let $u_{0},v_{0}\in\mathcal{S}(\mathbb{R})$, then when $k\to{\infty}$, $Y_{1}$ and $Y_{2}$ approximate to $Y_{1}^{formal}$ and $Y_{2}^{formal}$, respectively. Furthermore, assuming $p\ge0$ is an integer, define two functions
		\begin{subequations}\label{X,P}
			\begin{align}
				Y_{1,p}(x, k):&=I+\frac{Y_{1}^{(1)}(x)}{k}+\cdots+\frac{Y_{1}^{(p)}(x)}{k^{p}},\\
				Y_{2,p}(x, k):&=I+\frac{Y_{2}^{(1)}(x)}{k}+\cdots+\frac{Y_{2}^{(p)}(x)}{k^{p}}.
			\end{align}
		\end{subequations}
		Then there exist two bounded, smooth, positive functions $g_{1}(x)$ and $g_{2}(x)$ defined for $x\in\mathbb{R}$, which decay rapidly as $x$ approaches $\infty$ and $-\infty$, respectively, such that
		\begin{subequations}
			\begin{align}
				&\left|\frac{\partial^{i}}{\partial k^{i}}\left(Y_{1}-Y_{1,p}\right)\right| \leq \frac{g_{1}(x)}{|k|^{p+1}}, \quad  k \in \Xi,\quad |k| \geq 2, \\
				&\left|\frac{\partial^{i}}{\partial k^{i}}\left(Y_{2}-Y_{2,p}\right)\right| \leq \frac{g_{2}(x)}{|k|^{p+1}}, \quad  k \in\hat{\Xi},\quad |k| \geq 2,
			\end{align}
		\end{subequations}
		for any $x\in\mathbb{R}$ and $i=0,1,2,\cdots .$
	\end{proposition}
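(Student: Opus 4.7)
The plan is to analyze the residual $R_{1,p}(x,k) := Y_1(x,k) - Y_{1,p}(x,k)$ via the integral equation it inherits from (\ref{Volterra-x}). First I would make the formal recursion preceding the proposition rigorous: at each order in $1/k$, the off-diagonal part of $Y_1^{(j+1)}$ is determined algebraically from the commutator equation with $J$, while the diagonal part of $Y_1^{(j+1)}$ is recovered by integrating the scalar ODE from $x$ to $+\infty$ with the boundary condition $Y_1^{(j+1)}(+\infty)=0$. Induction on $j$, using $u_0,v_0\in\mathcal{S}(\mathbb{R})$, produces smooth coefficients that are bounded on $\mathbb{R}$ and decay rapidly as $x\to+\infty$, with the analogous conclusion for $Y_2^{(j)}$ as $x\to-\infty$.

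Inserting $Y_{1,p}$ into the Lax equation (\ref{lax-x}) and cancelling every order in $1/k$ from $k^0$ through $k^{-(p-1)}$ via the recursion, one finds
\[
F_p(x,k) := \partial_x Y_{1,p} - [\mathcal{L}, Y_{1,p}] - U Y_{1,p}
= -\frac{[J,\,Y_1^{(p+1)}]}{k^{p}} - \frac{U_1\, Y_1^{(p)}}{k^{p+1}},
\]
whose first term is purely off-diagonal (as a commutator with $J$) and whose second term is Schwartz in $x$. Thus $R_{1,p}$ satisfies
\[
R_{1,p}(x,k) = -\int_x^{\infty} e^{(x-s)\widehat{\mathcal{L}(k)}}\bigl(U R_{1,p} + F_p\bigr)(s,k)\,ds,
\]
so the task reduces to bounding the inhomogeneity $G_p(x,k) := -\int_x^{\infty} e^{(x-s)\widehat{\mathcal{L}(k)}} F_p(s,k)\,ds$ by $g_1(x)/|k|^{p+1}$ uniformly for $k\in\Xi$ with $|k|\geq 2$. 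On the diagonal, the second piece of $F_p$ is already of order $1/k^{p+1}$ and the kernel acts as the identity, so integrating a Schwartz function from $x$ to $+\infty$ yields the required bound. For the off-diagonal $1/k^{p}$ piece, the $(i,j)$-entry of the kernel is $e^{-k(x-s)(\omega^i-\omega^j)}$ with non-vanishing phase and non-positive real part on $\Xi$; a single integration by parts in $s$ converts this integral into a boundary contribution at $s=x$ of size $1/|k|^{p+1}$ plus a remainder of the same order whose integrand picks up $s$-derivatives of $Y_1^{(p+1)}$ and of $U$ (both still Schwartz). Hence $|G_p(x,k)|\leq g_1(x)/|k|^{p+1}$.

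With $G_p$ in hand, a Neumann series on the operator $T\colon R\mapsto -\int_x^{\infty} e^{(x-s)\widehat{\mathcal{L}(k)}}(UR)(s,\cdot)\,ds$, whose norm on the weighted $L^\infty$-space is uniformly controlled for $k\in\Xi$, $|k|\geq 2$ by Proposition~\ref{property of x}, inverts $I-T$ and delivers $|R_{1,p}(x,k)|\leq g_1(x)/|k|^{p+1}$. For the $k$-derivative estimates, I would differentiate the integral equation $i$ times; each extra factor of $(x-s)$ landing on the kernel is absorbed against the rapid $s$-decay of $U$ when integrated over $s>x$, so the forcing remains of order $1/|k|^{p+1}$ and the estimate closes by induction on $i$. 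The statement for $Y_2$ follows either from the $\mathbb{Z}_2$ and $\mathbb{Z}_3$ symmetries (\ref{symmetry-X}) or by the mirror argument on $(-\infty,x]$. The main obstacle is the integration-by-parts step on the off-diagonal $1/k^{p}$ piece of $F_p$: one must verify that on the \emph{closed} region $\bar{\Xi}\setminus\{0\}$, and in particular along the boundary rays where certain exponentials are only bounded rather than decaying, the boundary and remainder terms stay uniformly controlled, so that the gain of a factor $1/k$ over the naive estimate is genuine and the constant implicit in $g_1(x)$ does not blow up as $k$ approaches the boundary of $\Xi$.
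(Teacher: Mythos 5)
The paper states Proposition \ref{kto-8} without proof (it is the standard large-$k$ asymptotic expansion, established exactly as in the analogous Boussinesq/Charlier--Lenells analyses), so there is no in-text argument to compare against; judged on its own, your proposal is the correct and standard route. Your computation of the defect $F_p=\partial_xY_{1,p}-[\mathcal{L},Y_{1,p}]-UY_{1,p}=-k^{-p}[J,Y_1^{(p+1)}]-k^{-p-1}U_1Y_1^{(p)}$ is right (the order-$k^{-p}$ residue is purely off-diagonal precisely because the diagonal part of the recursion was enforced at that order), the integration by parts on the off-diagonal kernel $e^{(x-s)(l_i-l_j)}$ with $l_i-l_j=-k(\omega^i-\omega^j)\neq0$ is the correct mechanism for gaining the extra factor $1/|k|$, and the Volterra resolvent plus induction on the number of $k$-derivatives closes the estimate. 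Two small corrections. First, the "main obstacle" you flag is not actually an obstacle: on the boundary rays of $\Xi$ the exponential is unimodular, but the integration by parts only needs $\mathrm{Re}\big((x-s)(l_i-l_j)\big)\le0$ so that the kernel is bounded; the boundary term at $s=+\infty$ vanishes because the integrand ($U_1Y_1^{(p)}$, $[J,Y_1^{(p+1)}]$ and their $s$-derivatives) decays rapidly, not because the exponential does, so the constants are uniform up to $\partial\Xi$. Second, the $\mathbb{Z}_2$ and $\mathbb{Z}_3$ symmetries \eqref{symmetry-X} map each of $Y_1,Y_2$ to itself at rotated or conjugated $k$; they do not interchange $Y_1$ and $Y_2$, so the statement for $Y_2$ must be obtained by the mirror Volterra argument on $(-\infty,x]$ (which you correctly offer as the alternative), not from those symmetries.
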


	\subsection{Asymptotics of the eigenfunctions $Y_{1}$ and $Y_{2}$ as $k\to0$}
	The eigenfunctions $Y_{1}$ and $Y_{2}$ have a first-order pole at $k=0$ because the function $U(x,t,k)$ in (\ref{UUU}) has a first-order pole, which is caused by the gauge transformation (\ref{gauge-transform}). In such case, the following proposition holds near the origin $k=0$.

	\begin{proposition}\label{k to 0}
		Let $u_{0},v_{0}\in\mathcal{S}(\mathbb{R})$ and integer $p\ge0$. There are third-order matrix-valued functions $C_{i}^{(l)}(x)$ for $i=1,2; l=-1,0,\dots ,p$, which satisfy
		\begin{enumerate}[$(1)$]
			\item For each $x\in\mathbb{R}$, there exist two smooth functions $g_{1}(x)>0$ and $g_{2}(x)>0$ which decay rapidly as $x\to{\infty}$ and  $x\to{-\infty}$ respectively, such that
			\begin{subequations}
				\begin{align}
					\left| \frac{\partial^{j}}{\partial k^{j}}(Y_{1}-I-\sum_{l=-1}^{p}C_{1}^{(l)}(x)k^l)\right| \le g_{1}(x)|k|^{p+1-j},\quad k \in \Xi,\quad |k|\le\frac{1}{2},\\
					\left| \frac{\partial^{j}}{\partial k^{j}}(Y_{2}-I-\sum_{l=-1}^{p}C_{2}^{(l)}(x)k^l)\right| \le g_{2}(x)|k|^{p+1-j},\quad k \in \hat{\Xi}, \quad |k|\le\frac{1}{2},
				\end{align}
			\end{subequations}
			where $j\ge0$ is an integer.
		\end{enumerate}
		\begin{enumerate}[$(2)$]
			\item The functions $C_{1}^{(l)}$ and $C_{2}^{(l)}$ are smooth and decay rapidly as $x\to{\infty}$ and  $x\to{-\infty}$ respectively for $l=-1,0,1,2,\cdots$.
		\end{enumerate}
		\begin{enumerate}[$(3)$]
			\item The leading coefficients are given by
			\begin{align}
				C_{i}^{(-1)}(x)=&\alpha_{i}(x)\begin{pmatrix}
					\omega^2 & 1 & \omega \\
					1 & \omega & \omega^{2} \\
					\omega & \omega^2 & 1
				\end{pmatrix},\label{c-1}	\\
				C_{i}^{(0)}(x)=&\beta_{i}(x)\begin{pmatrix}
					1 & \omega^2  & \omega \\
					\omega & 1 & \omega^{2} \\
					\omega^2 & \omega & 1
				\end{pmatrix}+\gamma_{i}(x)\begin{pmatrix}
					1 & \omega &\omega^2  \\
					\omega^2 & 1 &\omega  \\
					\omega & \omega^2 & 1
				\end{pmatrix}+\delta_{i}(x)\begin{pmatrix}
					1  & 1 & 1\\
					1 & 1 & 1 \\
					1 & 1 & 1
				\end{pmatrix}-I,\label{c0}
			\end{align}
			where functions $\alpha_{i}(x), \beta_{i}(x), \gamma_{i}(x),\delta_{i}(x), i=1,2$ are real-valued. Moreover, they satisfy
			\begin{align*}
				&|\alpha_{i}(x)|\le (1+|x|)g_{i}(x),\\
				&|\beta_{i}(x)|+|\gamma_{i}(x)|\le (1+|x|)^{2}g_{i}(x),\\
				&|\delta_{i}(x)-\frac{1}{3}| \le (1+|x|)^{2}g_{i}(x),
			\end{align*}
			for $x\in\mathbb{R}$ and $i=1,2$.
		\end{enumerate}
	\end{proposition}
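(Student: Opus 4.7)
The strategy is to substitute a Laurent ansatz in $k$ into the modified Lax equation, match coefficients to obtain an algebraic--differential hierarchy for the $C_{i}^{(l)}(x)$, and close the argument with a Volterra iteration for the remainder. Inserting $Y_{i}(x,k) = I + \sum_{l=-1}^{p} C_{i}^{(l)}(x) k^{l} + k^{p+1} R_{i}^{(p)}(x,k)$ into $Y_{x} + k[J,Y] = (U^{(0)} + k^{-1}U^{(-1)}) Y$ and collecting powers of $k$, the $k^{-2}$ coefficient gives the algebraic constraint $U^{(-1)} C_{i}^{(-1)} = 0$; the $k^{-1}$ coefficient gives $\partial_{x} C_{i}^{(-1)} = U^{(-1)} + U^{(-1)} C_{i}^{(0)} + U^{(0)} C_{i}^{(-1)}$; and for $l \geq 0$ one obtains $\partial_{x} C_{i}^{(l)} + [J, C_{i}^{(l-1)}] = U^{(0)} C_{i}^{(l)} + U^{(-1)} C_{i}^{(l+1)}$ (plus an inhomogeneity $U^{(0)}$ at $l=0$). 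This recursion determines each $C_{i}^{(l)}$ successively, once the algebraic constraint is consistently solved at each step.

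The structure in items (\ref{c-1})--(\ref{c0}) is extracted from the explicit form of $U^{(-1)}$ together with the $\mathbb{Z}_{3}$ symmetry (\ref{symmetry-X}). A direct check shows that $U^{(-1)} = -\frac{v}{3}\,\xi\,\eta^{T}$ with $\xi = (\omega^{2},1,\omega)^{T}$, $\eta = (1,\omega,\omega^{2})^{T}$ and $\eta^{T}\xi = 0$, so $U^{(-1)}$ is rank-one nilpotent. Hence the constraint $U^{(-1)} C_{i}^{(-1)} = 0$ is automatically solved by any multiple of $\xi\eta^{T}$, which is exactly the matrix in (\ref{c-1}); one sets $C_{i}^{(-1)} = \alpha_{i}(x)\,\xi\eta^{T}$. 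Expanding the $\mathbb{Z}_{3}$ symmetry in $k$ forces $\mathcal{A} C_{i}^{(l)} \mathcal{A}^{-1} = \omega^{-l} C_{i}^{(l)}$; for $l=0$ the fixed subspace is spanned by the four matrices of (\ref{c0}) subject to the relation $M_{1}+M_{2}+M_{3} = 3I$, which yields precisely the parametrization by $(\beta_{i},\gamma_{i},\delta_{i})$. Projecting the $k^{-1}$ and $k^{0}$ equations onto each of these rank-one components converts the hierarchy into a triangular system of scalar ODEs for $\alpha_{i},\beta_{i},\gamma_{i},\delta_{i}$ driven by $u_{0}, v_{0}$ and their derivatives.

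The quantitative bounds in item (3) then follow from integrating this scalar hierarchy from the relevant spatial infinity ($x\to+\infty$ for $i=1$, $x\to-\infty$ for $i=2$), using the Schwartz decay of $u_{0},v_{0}$. A single integration of the $\alpha_{i}$-equation produces the Schwartz factor $g_{i}(x)$ and one power of $|x|$; the coupled $(\beta_{i},\gamma_{i},\delta_{i})$-equations contain $\alpha_{i}$ on their right-hand side, so a second integration produces the $(1+|x|)^{2}$ polynomial factor. The offset $\tfrac{1}{3}$ for $\delta_{i}$ is fixed by the representative of the decomposition (\ref{c0}) compatible with $Y_{i}\to I$ at the relevant infinity, via the redundancy $M_{1}+M_{2}+M_{3}=3I$.

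Finally, item (1) is closed by verifying that the remainder $R_{i}^{(p)}$ satisfies a Volterra integral equation obtained by subtracting the truncated Laurent series from (\ref{Volterra-x}); once the explicit singular part has been removed, the resulting kernel is regular at $k=0$, and a standard Neumann iteration combined with Proposition \ref{property of x}(7) delivers the uniform bound $|\partial_{k}^{j} R_{i}^{(p)}(x,k)| \leq g_{i}(x)$ for $|k|\leq 1/2$. The main obstacle lies precisely in this subtraction step: the naive iteration of (\ref{Volterra-x}) diverges as $k\to 0$ because of the $k^{-1} U^{(-1)}$ pole, so one must verify by a careful order-by-order cancellation---using the nilpotency $(U^{(-1)})^{2}=0$ together with the $\mathbb{Z}_{3}$ symmetry---that the fixed-point equation for $R_{i}^{(p)}$ is indeed regular at the origin, and then control it uniformly on $x\in\mathbb{R}$ using only the Schwartz decay of the data.
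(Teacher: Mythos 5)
Your algebraic observations are correct and genuinely useful: $U^{(-1)}=-\frac{v}{3}\xi\eta^{T}$ with $\xi=(\omega^{2},1,\omega)^{T}$, $\eta=(1,\omega,\omega^{2})^{T}$ and $\eta^{T}\xi=0$ is indeed rank-one nilpotent, the matrix in (\ref{c-1}) is exactly $\xi\eta^{T}$, and expanding the $\mathbb{Z}_{3}$ symmetry (\ref{symmetry-X}) does force $\mathcal{A}C_{i}^{(l)}\mathcal{A}^{-1}=\omega^{-l}C_{i}^{(l)}$, which together with the $\mathbb{Z}_{2}$ symmetry (needed for the reality of $\alpha_{i},\beta_{i},\gamma_{i},\delta_{i}$, and which you do not invoke) accounts for the structure in item (3). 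The gap is at the decisive step, and you flag it yourself without closing it: everything up to your last paragraph is formal power-series matching, which presupposes that $Y_{1}$ admits a Laurent expansion at $k=0$ with at worst a simple pole. Nothing in the proposal establishes this. The naive Neumann iteration of (\ref{Volterra-x}) produces an apparent pole of order $n$ at the $n$-th iterate, and the nilpotency $(U^{(-1)})^{2}=0$ does not by itself remove these, because consecutive factors of $k^{-1}U^{(-1)}$ are separated by the conjugation by $e^{(s_{j}-s_{j+1})\mathcal{L}(k)}=I-k(s_{j}-s_{j+1})J+\cdots$, and $U^{(-1)}JU^{(-1)}=\frac{v^{2}}{9}(\eta^{T}J\xi)\,\xi\eta^{T}\neq0$ since $\eta^{T}J\xi=3$; a residual pole survives each pairing, so the ``careful order-by-order cancellation'' you defer is precisely the theorem to be proved, not a routine verification. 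In addition, the hierarchy you write down is not triangular: the order-$k^{l}$ equation couples $\partial_{x}C^{(l)}$ to $[J,C^{(l-1)}]$ and to $U^{(-1)}C^{(l+1)}$, so determining $C^{(-1)}$ and $C^{(0)}$ requires a priori control of all higher coefficients, and the claim that the recursion ``determines each $C_{i}^{(l)}$ successively'' is not justified as stated.

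The paper closes exactly this gap by a different device. It sets $\tilde{Y}=P(k)Y_{1}$ with $P$ the gauge matrix (\ref{gauge-transform}), observes that the kernel $P(k)e^{(x-s)\mathcal{L}(k)}P(k)^{-1}$ and the potential $PUP^{-1}$ are regular at $k=0$, so that $\tilde{Y}$ is analytic there, and then recovers $Y_{1}=P(k)^{-1}\tilde{Y}$, which has at most a double pole because $P(k)^{-1}=k^{-2}P^{(-2)}+k^{-1}P^{(-1)}+P^{(0)}$; the double pole is then removed by showing that the first two rows of $\tilde{Y}(x,0)$ vanish. This regularization is what your subtraction step would have to reproduce, and the quantitative factors $(1+|x|)$ and $(1+|x|)^{2}$ in item (3) come from the explicit polynomial growth in $(s-x)$ of $P(k)e^{(x-s)\mathcal{L}(k)}P(k)^{-1}$ at $k=0$, rather than from integrating a scalar ODE hierarchy. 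Either adopt such a gauge-transformation argument or supply a complete proof that the fixed-point equation for your remainder $R_{i}^{(p)}$ is regular at the origin; as written, the proposal does not constitute a proof of item (1), on which items (2) and (3) depend.
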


	\begin{proof}
		Let us define $\tilde{Y}(x,k)=P(k)Y_{1}(x,k)$ that satisfies
		\begin{equation}\label{tilde x}
			\tilde{Y}(x, k)=P(k)-\int_{x}^{\infty} P(k) e^{\left(x-s\right) \widehat{\mathcal{L}}}\left(P(k)^{-1} \tilde{\mathrm{U}}\left(s\right) \tilde{Y}\left(s, k\right)\right) ds,
		\end{equation}
		for $x \in \mathbb{R}$ and $ k \in \Xi \backslash\{0\}$, where the matrix $\tilde{U}$ is of the form
		\begin{equation}\label{tilde u}
			\tilde{U}:=PUP^{-1}=\begin{pmatrix}
				-\frac{1}{3}u_{0} & 0 & 0 \\
				-v_{0} & \frac{2}{3}u_{0} & 0\\
				0 & 0 & -\frac{1}{3}u_{0}
			\end{pmatrix}.
		\end{equation}
		\par
		Firstly, we prove that the first two lines in $\tilde{Y}(x, 0)$ are zero. Assume that $\mathcal{P}\left(x, s, k\right):=P(k) e^{\left(x-s\right) \mathcal{L}(k)} P(k)^{-1}$. By computing the behavior at the zero, we can get $k=0$ is a removable singularity of $\mathcal{P}$. In other word, $\mathcal{P}$ is analytic at $k=0$. Now the kernel of Volterra integral function $\tilde{Y}(x, k)$ loses the singularity at $k=0$. Thus following the same procedure in the Proposition \ref{property of x}, $\tilde{Y}(x, k)$ is analytic at $k=0$. Rewrite the Taylor expansion of $\tilde{Y}(x, k)$ at $k=0$ as
		\begin{equation*}
			\tilde{Y}(x, k)=\tilde{Y}(x, 0)+\partial_{k} \tilde{Y}(x, 0) k+\frac{1}{2} \partial_{k}^{2} \tilde{Y}(x, 0) k^{2}+\cdots,\quad k\to0,\quad k \in \Xi,
		\end{equation*}
		in which the coefficients are smooth of $x$ in the real domain. As $x$ approaches $+\infty$, the derivatives $\partial_{k}^{j} \tilde{Y}(x, k)$ converge rapidly to $\partial_{k}^{j} P(k)$ for each $j \geq 0$ and $k \in \Xi$. Let us analyze the behavior of $\partial_{k}^{j} \tilde{Y}(x, 0)~ (j=0,1,2)$, as $x \rightarrow-\infty$. It can be calculated that
		\begin{equation*}
			\mathcal{P}\left(x, s, 0\right)=\begin{pmatrix}
				1 & s-x & 0\\
				0 & 1 & 0\\
				s-x & \frac{(s-x)^2}{2} & 1
			\end{pmatrix}.
		\end{equation*}
		By means of the equation (\ref{tilde u}), it follows
		\begin{equation*}
			\mathcal{P}\left(x, s, 0\right) \tilde{U}\left(s\right)=\begin{pmatrix}
				-\frac{u_0(s)}{3}-v_0(s)(s-x) & \frac{2u_0(s)}{3}(s-x) & 0\\
				-v_0(s) & \frac{2}{3}u_0(s) & 0\\
				-\frac{u_0(s)}{3}(s-x)-\frac{v_0(s)}{2}(s-x)^2 & \frac{u_0(s)}{3}(x-s)^2 & -\frac{u_0(s)}{3}
			\end{pmatrix},
		\end{equation*}
		and further
		\begin{equation*}
			\mathcal{P}\left(x, s, 0\right) \tilde{U}\left(s\right)P(0)=
			\begin{pmatrix}
				0 & 0 & 0\\
				0 & 0 & 0\\
				-\frac{u_0(s)}{3} & -\frac{u_0(s)}{3} & -\frac{u_0(s)}{3}
			\end{pmatrix}.
		\end{equation*}
		\par
		Therefore, we have proved that the first two lines in $\tilde{Y}(x, 0)$ are zero by analyzing the volterra equation (\ref{tilde x}) and found the behavior of $\tilde{Y}(x,0)$ as $x\to -\infty$, i.e.,
		\begin{equation*}
			\tilde{Y}(x,0)=\begin{pmatrix}
				O(x) & O(x) & O(x)\\
				O(1) & O(1) & O(1) \\
				O(x^2) & O(x^2) & O(x^2)
			\end{pmatrix},\qquad \text{as}\quad x\to -\infty.
		\end{equation*}
		\par
		Similarly, it is estimated that as $x\to-\infty$
		\begin{equation*}
			\partial_{k}\tilde{Y}(x,0)=\begin{pmatrix}
				O(x^2) & O(x^2) & O(x^2)\\
				O(x) & O(x) & O(x) \\
				O(x^3) & O(x^3) & O(x^3)
			\end{pmatrix},\quad \partial^2_{k}\tilde{Y}(x,0)=\begin{pmatrix}
				O(x^3) & O(x^3) & O(x^3)\\
				O(x^2) & O(x^2) & O(x^2) \\
				O(x^4) & O(x^4) & O(x^4)
			\end{pmatrix}.
		\end{equation*}
		\par
		Next, we prove that $Y_{1}(x,k)$ has only first-order pole at $k=0$. To do so, it is obvious that $P(k)^{-1}$ has the form
		\begin{equation*}\label{p-1}
			P(k)^{-1}=\frac{P^{(-2)}}{k^2}+\frac{P^{(-1)}}{k}+P^{(0)},
		\end{equation*}
		where
		\begin{equation*}
			P^{(-2)}=\frac{1}{3}\begin{pmatrix}
				0 & \omega & 0\\
				0 & \omega^2 & 0\\
				0 & 1 & 0
			\end{pmatrix},\quad
			P^{(-1)}=\frac{1}{3}\begin{pmatrix}
				\omega^2 & 0 & 0\\
				\omega & 0 & 0\\
				1 & 0 & 0
			\end{pmatrix},\quad
			P^{(0)}=\frac{1}{3}\begin{pmatrix}
				0 & 0 & 1\\
				0 & 0 & 1\\
				0 & 0 & 1\\
			\end{pmatrix},
		\end{equation*}
		which indicates that $Y_{1}(x,k)=P^{-1}(k)\tilde{Y}(x,k)$ possesses at most one double pole located at $k=0$. So set
		\begin{equation*}
			Y_{1}(x,k)=\frac{C_{1}^{(-2)}(x)}{k^2}+\frac{C_{1}^{(-1)}(x)}{k}+I+C_{1}^{(0)}(x)+C_{1}^{(1)}(x) k+\cdots, \quad    k \in \Xi,
		\end{equation*}
		as $k\to 0$. Direct calculations show that
		\begin{equation*}
			\begin{aligned}
				C_{1}^{(-2)}(x)&=P^{(-2)} \tilde{Y}(x, 0)=0,\\
				C_{1}^{(-1)}(x)&=P^{(-2)} \partial_{k} \tilde{Y}(x, 0)+P^{(-1)} \tilde{Y}(x, 0)=P^{(-2)} \partial_{k} \tilde{Y}(x, 0),\\
				C_{1}^{(0)}(x)&=\frac{1}{2}P^{(-2)} \partial_{k}^2\tilde{Y}(x, 0)+P^{(-1)} \partial_{k}\tilde{Y}(x, 0)+P^{(0)} \tilde{Y}(x, 0)-I.
			\end{aligned}
		\end{equation*}
		Assume that
		\begin{equation*}
			\partial_{k} \tilde{Y}(x, 0)=\begin{pmatrix}
				a_{11}(x) & a_{12}(x) & a_{13}(x)\\
				a_{21}(x) & a_{22}(x) & a_{23}(x)\\
				a_{31}(x) & a_{32}(x) & a_{33}(x)
			\end{pmatrix},
		\end{equation*}
		where the functions $a_{ij}(x)~(i,j=1,2,3)$ are complex.
		Due to the special form of matrix $P^{(-2)}$, it follows
		\begin{equation*}
			C_{1}^{(-1)}(x)=\frac{1}{3}\begin{pmatrix}
				\omega a_{21} & \omega a_{22} & \omega a_{23}\\
				\omega^{2} a_{21} & \omega^{2} a_{22} & \omega^{2} a_{23}\\
				a_{21} & a_{22} & a_{23}
			\end{pmatrix}.
		\end{equation*}
		The symmetries in (\ref{symmetry-X}) indicate that $a_{21}=\omega^{2}a_{22}=\omega a_{23}$ and $a_{23}=\bar{a}_{23}$. Thus we have completed the proof of (\ref{c-1}) with $\alpha_{1}(x)=\frac{1}{3}a_{23}$. The proof of $C_{1}^{(0)}$ in (\ref{c0}) and the asymptotics for $Y_2$ can also be given similarly.
		
	\end{proof}

	\subsection{The scaterring data $s(k)$}
	The functions $Y_{1}$ and $Y_{2}$ are linearly dependent as both of them are the solutions of the equation (\ref{lax-x}). Therefore, for $u_{0},v_{0}\in\mathcal{S}(\mathbb{R})$, one can set
	\begin{equation}\label{s(k) definition}
		Y_{1}(x,k)=Y_{2}(x,k)e^{x\widehat{\mathcal{L}(k)}}s(k),
	\end{equation}
	for $x\in\mathbb{R}$ and $k\in\mathbb{C}\setminus \{0\}$.
	From the properties of the Volterra integral equations (\ref{Volterra-x}) at $x=-\infty$, it follows
	\begin{equation}\label{s(k)-integral}
		s(k)=I-\int_{-\infty}^{\infty} e^{-x \widehat{\mathcal{L}(x)}}\left(U Y_{1}\right)\left(x, k\right) dx.
	\end{equation}
	Recalling the Proposition \ref{property of x}, the following proposition can be given at once.
	\begin{proposition}\label{property of sk}
		Amussing $u_{0},v_{0}\in\mathcal{S}(\mathbb{R})$,
		the scaterring matrix $s(k)$ satisfies
		\begin{enumerate}[$(1)$]
			\item The matrix $s(k)$ is well-defined and continuous on
			\begin{equation}\label{s(k)-area}
				k\in\begin{pmatrix}
					\omega^2\bar{S} & \mathbb{R}^- & \omega\mathbb{R}^-\\
					\mathbb{R}^- & \omega\bar{S} & \omega^2\mathbb{R}^-\\
					\omega\mathbb{R}^- & \omega^2\mathbb{R}^- & \bar{S}
				\end{pmatrix} \backslash\{0\}.
			\end{equation}
			For example, the entry $s_{11}$ is well-defined and continuous for $k \in \omega^2\bar{S} \backslash\{0\}$.
		\end{enumerate}
		\begin{enumerate}[$(2)$]
			\item The spectral function $s(k)$ admits the symmetries
			\begin{equation}\label{s(k)-symmetry}
				s(k)=\mathcal{A} s(\omega k) \mathcal{A}^{-1}=\mathcal{B} \overline{s(\bar{k})} \mathcal{B}.
			\end{equation}
		\end{enumerate}
		\begin{enumerate}[$(3)$]
			\item The arbitrary-order derivative $\partial_{k}^{j} s(k)~(j>0)$ is continuous for $k$ in (\ref{s(k)-area}).
		\end{enumerate}
		\begin{enumerate}[$(4)$]
			\item For $k$ in (\ref{s(k)-area}) and  $k$ tends to infinity, $s(k)$ converges to the identity matrix. Moreover, for any integer $N\ge1$ and $k$ in (\ref{s(k)-area}), we have
			\begin{equation}
				\left|\partial_{k}^{j}\left(s(k)-I-\sum_{j=1}^{N} \frac{s_{j}}{k^{j}}\right)\right|=O\left(k^{-N-1}\right),\quad k\to\infty,\quad  j=0,1,2,\cdots,N,
			\end{equation}
			where $\{s_{j}\}_{1}^{\infty}$ are diagonal matrices.
		\end{enumerate}

		\begin{enumerate}[$(5)$]
			\item For $k$ in (\ref{s(k)-area}) and $k\to0$, it is seen that
			\begin{equation}
				s(k)=\frac{s^{(-1)}}{k}+s^{(0)}+s^{(1)} k+\ldots,
			\end{equation}
			where
			\begin{equation}\label{s(-1)}
				s^{(-1)}=\mathrm{s}^{(-1)}\left(\begin{array}{ccc}
					\omega^2  & 1  & \omega \\
					1 & \omega&\omega^2 \\
					\omega& \omega^2 & 1
				\end{array}\right), \quad \mathrm{s}^{(-1)}:=-\int_{\mathbb{R}}(\frac{2}{3}\alpha_1(x)u_0-v_{0}\gamma_{1}(x))dx.
			\end{equation}
		\end{enumerate}
		
		\begin{enumerate}[$(6)$]
			\item If $u_{0}(x),v_{0}(x)$ are compactly supported, the function $s(k)$ is analytic for $k \in \mathbb{C} \backslash\{0\}$ and $\det{s}=1$.
		\end{enumerate}
	\end{proposition}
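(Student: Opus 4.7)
\medskip
\noindent\emph{Proof proposal.} The plan is to derive all six claims directly from the integral representation (\ref{s(k)-integral}), combined with the properties of $Y_1$ and $U$ already established in Propositions \ref{property of x}, \ref{kto-8}, and \ref{k to 0}. I begin with parts (1) and (3): the $(i,j)$-entry of the integrand in (\ref{s(k)-integral}) carries the exponential factor $e^{xk(\omega^{i}-\omega^{j})}$, and one reads off from (\ref{s(k)-area}) the precise sectors on which $\mathrm{Re}[k(\omega^{i}-\omega^{j})]$ has the correct sign in both tails $x\to\pm\infty$. Combined with the Schwartz decay of $u_{0},v_{0}$ propagating through $U(x,k)$ and the uniform bounds on $Y_{1}$ and $\partial_{k}^{j}Y_{1}$ from Proposition \ref{property of x}(5),(7), the integral and its $k$-derivatives converge absolutely and uniformly on compacta, which delivers continuity and smoothness.

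For part (2), I would substitute the $\mathbb{Z}_{2}$ and $\mathbb{Z}_{3}$ symmetries of $Y_{1}$ in (\ref{symmetry-X}) into (\ref{s(k)-integral}), together with the induced symmetries of $U$ and of $\mathcal{L}$ descending from (\ref{symetric-conjugate})--(\ref{symmetric-ratation}), and check that the resulting rearrangement produces the claimed identities; the key algebraic fact is $\mathcal{A} J\mathcal{A}^{-1}=\omega^{-1}J$ (and $\mathcal{B}\bar J\mathcal{B}=J$). Part (6) follows from the observation that compact support of $u_{0},v_{0}$ forces the Volterra kernels (\ref{Volterra-x}) to terminate at finite endpoints, so $Y_{1}(x,\cdot)$ and $Y_{2}(x,\cdot)$ extend meromorphically to $\mathbb{C}\setminus\{0\}$; Proposition \ref{property of x}(9) then gives $\det Y_{1}=\det Y_{2}=1$, and taking determinants in (\ref{s(k) definition}) yields $\det s\equiv 1$ together with analyticity on $\mathbb{C}\setminus\{0\}$.

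The asymptotic claims (4) and (5) come from inserting the expansions of $Y_{1}$ from Propositions \ref{kto-8} and \ref{k to 0} into (\ref{s(k)-integral}). For $k\to\infty$, the diagonal $(i,i)$-entries admit a term-by-term expansion in inverse powers of $k$, while the off-diagonal entries involve a rapidly oscillating exponential $e^{xk(\omega^{i}-\omega^{j})}$ with $i\neq j$ against a Schwartz integrand, so iterated integration by parts in $x$ delivers decay faster than any inverse power of $k$; hence the coefficients $s_{j}$ are forced to be diagonal. For $k\to 0$, I combine $U=U^{(0)}+U^{(-1)}/k$ with the Laurent expansion $Y_{1}=C_{1}^{(-1)}/k+I+C_{1}^{(0)}+\cdots$, noting that $e^{-x\widehat{\mathcal{L}(k)}}$ is analytic at $k=0$ with leading term $I$, and observe that the potential $1/k^{2}$ contribution $U^{(-1)}C_{1}^{(-1)}$ vanishes because the rank-one matrix $M$ appearing in (\ref{c-1}) satisfies $M^{2}=0$. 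The surviving $1/k$ coefficient is $U^{(0)}C_{1}^{(-1)}+U^{(-1)}(I+C_{1}^{(0)})$.

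The main obstacle is evaluating this surviving coefficient to obtain the specific scalar and matrix form in (\ref{s(-1)}). Careful matrix multiplication using $1+\omega+\omega^{2}=0$ shows that among the three shapes contributing to $C_{1}^{(0)}$ in (\ref{c0}), only the $\gamma_{1}$-term survives left-multiplication by $M$ (the $\beta_{1}$-, $\delta_{1}$-, and $-I$-contributions cancel so that $M(I+C_{1}^{(0)})=3\gamma_{1}M$), while the product $U^{(0)}\cdot M$ reduces to $\tfrac{2}{3}u_{0}M$. Reassembling produces the rank-one matrix in (\ref{s(-1)}) with scalar coefficient $-\int_{\mathbb{R}}(\tfrac{2}{3}\alpha_{1}u_{0}-\gamma_{1}v_{0})\,dx$. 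A secondary subtlety is making the ``faster-than-any-power'' decay of off-diagonal entries in part (4) quantitative on each sector, since one must track boundary terms through the iterated integration by parts against the oscillatory phase.
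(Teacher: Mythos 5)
Your proposal is correct, and for parts (1)--(4) and (6) it follows essentially the same route as the paper: boundedness of the exponential factors on the sectors in (\ref{s(k)-area}) plus the bounds of Proposition \ref{property of x} give (1)--(3), insertion of the large-$k$ expansion from Proposition \ref{kto-8} followed by integration by parts on the oscillatory off-diagonal entries gives (4), and compact support plus $\det Y_1=\det Y_2=1$ gives (6). The genuine difference is in part (5). The paper stays in the gauge-transformed frame: it writes $s(k)=I-\int e^{-x\widehat{\mathcal{L}}}\bigl(P^{-1}\tilde U\tilde Y\bigr)dx$ with $\tilde Y=PY_1$, expands $P^{-1}=P^{(-2)}k^{-2}+P^{(-1)}k^{-1}+P^{(0)}$ and $\tilde Y$ in a Taylor series at $k=0$, and kills the $k^{-2}$ term (and part of the $k^{-1}$ term) using the fact that the first two rows of $\tilde Y(x,0)$ vanish, so that $s^{(-1)}=-\int P^{(-2)}\tilde U\,\partial_k\tilde Y(x,0)\,dx$. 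You instead work directly with the Laurent data $C_1^{(-1)},C_1^{(0)}$ of $Y_1$ from Proposition \ref{k to 0} and the splitting $U=U^{(0)}+U^{(-1)}/k$, killing the $k^{-2}$ term via the nilpotency $M^2=0$ of the rank-one matrix $M$ in (\ref{c-1}), and then evaluating $U^{(0)}C_1^{(-1)}+U^{(-1)}(I+C_1^{(0)})$ by the identities $U^{(0)}M=\tfrac{2}{3}u_0M$ and $M(I+C_1^{(0)})=3\gamma_1M$ (the $\beta_1$-, $\delta_1$-matrices in (\ref{c0}) are annihilated by $M$ on the left since each of their columns sums against a row of $M$ to a multiple of $1+\omega+\omega^2=0$). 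I checked these products; they are correct and reproduce $\mathrm{s}^{(-1)}=-\int(\tfrac23\alpha_1u_0-v_0\gamma_1)dx$ exactly as in (\ref{s(-1)}). Your route buys a shorter, purely algebraic identification of the residue without re-deriving the structure of $\tilde Y(x,0)$, at the price of having to justify separately that the conjugation $e^{-x\widehat{\mathcal{L}(k)}}$ contributes no cross term at order $k^{-1}$; you note it is analytic with leading term $I$, which suffices here precisely because the $k^{-2}$ coefficient of $UY_1$ vanishes (the paper makes the analogous point quantitatively via $\bigl|\int(e^{-x(l_1-l_2)}-1)f\,dx\bigr|=O(k)$ for Schwartz $f$). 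Both arguments are sound and give the same constant.
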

	\begin{proof}
		Property (1) can be obtained by direct calculation where it is ensured that the exponential factors are bounded. The properties (2) and (3) are obvious from the properties of $Y_{1}$ in Proposition \ref{property of x}. Recalling that $Y_{1}$ has good properties in Proposition \ref{kto-8}, for $k$ in (\ref{s(k)-area}), we have
		\begin{equation*} s(k)=I-\sum_{j=0}^{p}\frac{1}{k^{j}}\int_{\mathbb{R}}e^{-x\widehat{\mathcal{L}(k)}}\big(U(x,k)Y_{1}^{(p)}(x)\big)dx+O(k^{-p-1}),
		\end{equation*}
		as $k\to\infty$. Note that $U(\cdot,k)\in\mathcal{S}(\mathbb{R})$ and $Y_{1}^{(p)}(x)$ and their derivatives are bounded. Then the integration by parts applied to the off-diagonal elements of $s(k)$ shows that  $\{s_{j}\}_{1}^{\infty}$ are diagonal matrices as $k\to \infty$. With $\tilde{U}$ and $\tilde{Y}$ defined in the proof of Proposition \ref{k to 0}, rewrite $s(k)$ as
		\begin{equation*}
			\begin{aligned}
				s(k)&=I-\int_{-\infty}^{\infty} e^{-x \widehat{\mathcal{L}(k)}}\big(P(k)^{-1}\tilde{U} (x)  \tilde{Y}(x, k)\big) d x\\
				&=I-\int_{-\infty}^{\infty}e^{-x \widehat{\mathcal{L}(k)}}\big[(\frac{P^{(-2)}}{k^2}+\frac{P^{(-1)}}{k}+P^{(0)})\tilde{U}\left(\tilde{Y}(x, 0)+\partial_{k} \tilde{Y}(x, 0) k+\cdots\right)\big] dx\\
				&=I-\frac{1}{k^2}\int_{-\infty}^{\infty}e^{-x \widehat{\mathcal{L}(k)}}P^{(-2)}\tilde{U}\tilde{Y}(x, 0) dx\\
				&-\frac{1}{k}\int_{-\infty}^{\infty}e^{-x \widehat{\mathcal{L}(k)}}[P^{(-2)}\tilde{U}\partial_{k}\tilde{Y}(x, 0)+P^{(-1)}\tilde{U}\tilde{Y}(x, 0)] dx+\cdots\\
				&=I-\frac{1}{k}\int_{-\infty}^{\infty}e^{-x \widehat{\mathcal{L}(k)}}[P^{(-2)}\tilde{U}\partial_{k}\tilde{Y}(x, 0)] dx+\cdots
				\\
				&=\frac{s^{(-1)}}{k}+s^{(0)}+s^{(1)}k+\cdots,
			\end{aligned}
		\end{equation*}
		where we have used the fact that only the last line in the matrix $\tilde{Y}(x,0)$ is not zero, and the coefficient $s^{(-1)}$ is
		\begin{equation}\label{s-1-s-1}
			s^{(-1)}=-\int_{-\infty}^{\infty}e^{-x \widehat{\mathcal{L}(k)}}[P^{(-2)}\tilde{U}\partial_{k}\tilde{Y}(x, 0)] dx.
		\end{equation}
		In fact, the coefficient $s^{(-1)}$ takes the following form
		\begin{equation}\label{s-1}
			s^{(-1)}=-\int_{-\infty}^{\infty}P^{(-2)}\tilde{U}\partial_{k}\tilde{Y}(x, 0) dx.
		\end{equation}
		The diagonal elements in the right-hand side of (\ref{s-1-s-1}) is the same as the right-hand side of (\ref{s-1}). Next it needs to show that off-diagonal elements in (\ref{s-1-s-1}) and (\ref{s-1}) are also equal. For simplicity, only the $(12)$ entry of $s^{(-1)}$ in (\ref{s-1-s-1}) and (\ref{s-1}) is verified to the same (the other off-diagonal elements can be estimated similarly).
		As $l_1(k)-l_2(k)=-\sqrt{3}\mathrm{i}k$, it follows
		\begin{equation*}
			\big|\int_{\mathbb{R}}(e^{-x(l_1-l_2)}-1)f(x)dx\big| \le \int_{\mathbb{R}}|\sqrt{3}\mathrm{i}kxf(x)|dx=O(k), \qquad k\to 0,
		\end{equation*}
		for any $f\in\mathcal{S}(\mathbb{R})$. In this sense, one can replace $e^{-x(l_1-l_2)}$ with $1$. Then the calculation of (\ref{s(-1)}) is direct.
		\par
		Providing $u_0(x),v_0(x)$ are compactly supported, the integral in (\ref{s(k)-integral}) is convergent for $k\in\mathbb{C}\setminus\{0\}$. Therefore, $s(k)$ is well-defined and analytic for $k\in\mathbb{C}\setminus\{0\}$. Furthermore, as $\det{Y_{1}}=\det{Y_{2}}=1$, we have $\det{s}=1$ from (\ref{s(k) definition}).
	\end{proof}

	\subsection{The adjoint eigenfunctions $Y_1^{A}(x,k)$ and $Y_2^{A}(x,k)$}

	For a third order matrix $V$ which has a unit determinant, define its cofactor matrix $V^{A}=(V^{-1})^{T}$ in the following form
	\begin{equation*}
		V^A=\begin{pmatrix}
			m_{11} (V) & -m_{12}(V) & m_{13}(V) \\
			-m_{12}(V)  & m_{22}(V)  & -m_{23}(V)  \\
			m_{31}(V)  & m_{32}(V)  & m_{33}(V)
		\end{pmatrix},
	\end{equation*}
	where $m_{ij}(V)$ represents the $(ij)$th minor of the matrix $V$.
	\par
	Supposing $u_0,v_0\in\mathcal{S}(\mathbb{R})$ have compact support, the entries of matrices $Y_{1}(x,k),Y_{2}(x,k)$ and $s(k)$ will be defined for $k\in\mathbb{C}\setminus\{0\}$. From the identity $(Y_{1}^{A})_{x}=-Y_{1}^{A}(Y_{1}^{T})_{x}Y_{1}^{A}$ and (\ref{lax-x}), it's not difficult to find that
	\begin{equation}\label{xa-ode}
		(Y^{A})_{x}+[\mathcal{L},Y^{A}]=-U^{T}Y^{A}.
	\end{equation}
	Since $Y_{1}^{A}\to I$ as $x\to+\infty$ and $Y_{2}^{A}\to I$ as $x\to-\infty$, it is observed that $Y_{1}^{A}$ and $Y_{2}^{A}$ satisfy
	\begin{subequations}\label{xa-integral}
		\begin{align}
			Y^{A}_{1}(x,k)=I+\int_{x}^{\infty}e^{-(x-s)\widehat{\mathcal{L}(k)}}(U^{T}Y_{1}^{A})(s,k)ds,\\
			Y^{A}_{2}(x,k)=I-\int^{x}_{-\infty}e^{-(x-s)\widehat{\mathcal{L}(k)}}(U^{T}Y_{1}^{A})(s,k)ds.
		\end{align}
	\end{subequations}
	The next proposition on $Y_{1}^{A}(x,k)$ and $Y_{2}^{A}(x,k)$ is similar to Proposition \ref{property of x}.
	
	\begin{proposition}\label{property of xa}
		Assuming $u_{0},v_{0}\in\mathcal{S} \left(\mathbb{R}\right)$, the matrix-valued solutions $Y_{1}^{A}(x,k)$ and $Y_{2}^{A}(x,k)$ of (\ref{xa-ode}) satisfy the characteristics as follow:
		\begin{enumerate}[$(1)$]
			\item The function $Y_{1}^{A}(x,k)$ is well-defined for $x\in \mathbb{R}$ and for $k$ belonging to the region $\hat{\Xi}$ excluding zero. Additionally,
			$Y_{1}^{A}(\cdot,k)$ is smooth for any $k$ within this region excluding zero and satisfies (\ref{xa-ode}).
		\end{enumerate}
		\begin{enumerate}[$(2)$]
			\item The function $Y_{2}^{A}(x,k)$ is well-defined for $x\in \mathbb{R}$ and for $k$ belonging to the region $\Xi$ excluding zero. Additionally, $Y_{2}^{A}(\cdot,k)$ is smooth for any $k$ within this region excluding zero and satisfies (\ref{xa-ode}).
		\end{enumerate}
		\begin{enumerate}[$(3)$]
			\item For given $x\in\mathbb{R}$, $Y_{1}^{A}(x,\cdot)$ is continuous and analytic for any $k\in\hat{\Xi}\setminus\{0\}$.
		\end{enumerate}
		\begin{enumerate}[$(4)$]
			\item For given $x\in\mathbb{R}$, $Y_{2}^{A}(x,\cdot)$ is continuous and analytic for any $k\in\Xi\setminus\{0\}$.
		\end{enumerate}
		\begin{enumerate}[$(5)$]
			\item The partial derivative $\frac{\partial^{i}Y_{1}^{A}}{\partial k^{i}}(x,\cdot)$ can extend continuously to $k\in\hat{\Xi}\setminus\{0\}$ for any $x\in\mathbb{R}$ and each integer $i\ge1$.
		\end{enumerate}
		\begin{enumerate}[$(6)$]
			\item The partial derivative $\frac{\partial^{i}Y_{2}^{A}}{\partial k^{i}}(x,\cdot)$ can extend continuously to $k\in\Xi\setminus\{0\}$ for any $x\in\mathbb{R}$ and each integer $i\ge1$.
		\end{enumerate}
		\begin{enumerate}[$(7)$]
			\item For any integer $n\ge1$, there exist two bounded, smooth, positive functions $g_{1}(x)$ and $g_{2}(x)$ defined on  $x\in\mathbb{R}$, which decay rapidly as $x$ approaches ${\infty}$ and ${-\infty}$, respectively, such that
			\begin{subequations}
				\begin{align}
					&\left|\frac{\partial^{i}}{\partial k^{i}}(Y_{1}^{A}(x,k)-I)\right|\le g_{1}(x),\quad k\in\hat{\Xi}\setminus\{0\},\quad  &i=0,1,2\cdots,n,\\
					&\left|\frac{\partial^{i}}{\partial k^{i}}(Y_{2}^{A}(x,k)-I)\right|\le g_{2}(x),\quad k\in\Xi\setminus\{0\},\quad  &i=0,1,2\cdots,n.
				\end{align}
			\end{subequations}
		\end{enumerate}
		\begin{enumerate}[$(8)$]
			\item For any $x\in\mathbb{R}$, $Y_{1}^{A}$ and $Y_{2}^{A}$ also satisfy the $\mathbb{Z}_{2}$ and $\mathbb{Z}_{3}$ symmetries
			\begin{subequations}\label{symmetry-XA}
				\begin{align}
					&Y_{1}^{A}(x,k)=\mathcal{A}Y_{1}^{A}(x,\omega k)\mathcal{A}^{-1}=\mathcal{B}\overline{Y_{1}^{A}(x,\bar{k})}\mathcal{B},\quad k\in\hat{\Xi}\setminus\{0\},\\
					&Y_{2}^{A}(x,k)=\mathcal{A}Y_{2}^{A}(x,\omega k)\mathcal{A}^{-1}=\mathcal{B}\overline{Y_{2}^{A}(x,\bar{k})}\mathcal{B},\quad k\in\Xi\setminus\{0\}.
				\end{align}
			\end{subequations}
		\end{enumerate}
		\begin{enumerate}[$(9)$]
			\item Suppose $u_{0},v_{0}$ are supported compactly. For every $x\in\mathbb{R}$ and $k\in\mathbb{C}\setminus \{0\}$, both $Y_{1}^{A}(x,k)$ and $Y_{2}^{A}(x,k)$ are analytic, and their determinants are equal to  $1$ : $\det{Y_{1}^{A}(x,k)}=\det{Y_{2}^{A}(x,k)}=1$.
		\end{enumerate}
	\end{proposition}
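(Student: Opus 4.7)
The plan is to transcribe the proof of Proposition \ref{property of x} into the adjoint setting, exploiting the structural parallel between the Volterra equations (\ref{Volterra-x}) and (\ref{xa-integral}). The only differences are that the potential $U$ is replaced by $-U^{T}$ (with the sign absorbed into the integrand), and the kernel $e^{(x-s)\widehat{\mathcal{L}(k)}}$ is replaced by $e^{-(x-s)\widehat{\mathcal{L}(k)}}$. First I would verify that $Y_1^A$ and $Y_2^A$ defined by (\ref{xa-integral}) indeed solve (\ref{xa-ode}) with the boundary data $Y_1^A\to I$ as $x\to+\infty$ and $Y_2^A\to I$ as $x\to-\infty$; this is a direct differentiation under the integral sign, identical in structure to the corresponding verification for $Y_1,Y_2$.

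Next I would prove parts (1)--(7) by the standard Neumann iteration of (\ref{xa-integral}). Writing $\mathcal{L}=\mathrm{diag}\{l_1,l_2,l_3\}$ with $l_j=-k\omega^j$, the $(i,j)$-entry of the kernel is $e^{-(x-s)(l_i-l_j)}$. For $Y_1^A$ the domain of integration is $s\ge x$, so boundedness of the kernel requires $(s-x)\,\mathrm{Re}(l_j-l_i)\le 0$, i.e.\ $\mathrm{Re}(l_i-l_j)\ge 0$. This is precisely the sign-reversed condition from Proposition \ref{property of x}, which selects the region $k\in\hat{\Xi}\setminus\{0\}$ in place of $\Xi\setminus\{0\}$; the analogous argument for $Y_2^A$, integrating over $s\le x$, selects $k\in\Xi\setminus\{0\}$. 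With the kernel uniformly bounded and $u_0,v_0\in\mathcal{S}(\mathbb{R})$, the iteration converges in the same weighted $L^\infty$ norms used for Proposition \ref{property of x}, delivering well-definedness, smoothness in $x$, analyticity in $k$, continuity of $k$-derivatives up to the boundary, and the uniform estimates in part (7) with weights $g_1,g_2$ of the same qualitative type.

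For the symmetries in part (8), I would transpose the identities (\ref{symmetric-conjugate})--(\ref{symmetric-ratation}): since $\mathcal{A}$ and $\mathcal{B}$ are orthogonal permutation matrices with $\mathcal{A}^T=\mathcal{A}^{-1}$ and $\mathcal{B}^T=\mathcal{B}$, transposing $L(k)=\mathcal{A}L(\omega k)\mathcal{A}^{-1}$ yields $U^T(k)=\mathcal{A}U^T(\omega k)\mathcal{A}^{-1}$, and similarly for the $\mathcal{B}$-symmetry. Combined with the fact that $\mathcal{A}$ and $\mathcal{B}$ conjugate $\mathcal{L}(k)$ to $\mathcal{L}(\omega k)$ and $\overline{\mathcal{L}(\bar k)}$ respectively, this makes the right-hand side of (\ref{xa-integral}) covariant under the two symmetry actions, so uniqueness of solutions to the Volterra equation forces the symmetries on $Y_1^A,Y_2^A$. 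Finally, for part (9), compact support of $u_0,v_0$ makes both integrals in (\ref{xa-integral}) absolutely convergent for every $k\in\mathbb{C}\setminus\{0\}$, extending analyticity to the whole punctured plane; and $\det Y_j^A=1$ follows immediately from $Y_j^A=(Y_j^{-1})^T$ together with $\det Y_j=1$ from part (9) of Proposition \ref{property of x}.

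The main obstacle is purely bookkeeping: I need to track, under the sign reversal in the exponential, precisely which pairs $(i,j)$ keep each kernel entry bounded on the relevant half-line, and to confirm that the resulting collections of admissible directions in $k$ coincide with $\hat{\Xi}$ for $Y_1^A$ and $\Xi$ for $Y_2^A$. Once this swap is pinned down and the symmetries of $U^T$ are read off from those of $U$, the remainder is a mechanical transcription of the proof of Proposition \ref{property of x}.
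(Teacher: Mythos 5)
Your proposal matches the paper's own treatment: the paper offers no separate proof of this proposition, stating only that it is ``similar to Proposition \ref{property of x}'', and your transcription of the Volterra/Neumann-series argument with the sign-reversed kernel $e^{-(x-s)\widehat{\mathcal{L}(k)}}$, the transposed symmetries of $U^{T}$, and the identity $\det Y_{j}^{A}=\det\bigl((Y_{j}^{-1})^{T}\bigr)=1$ is exactly the intended argument. One small bookkeeping slip: the $(i,j)$ entry of the adjoint kernel for $Y_{1}^{A}$ is $e^{(s-x)(l_{i}-l_{j})}$ with $s\ge x$, so boundedness requires $\mathrm{Re}(l_{i}-l_{j})\le 0$ (the reverse of the condition for $Y_{1}$), not $\ge 0$ as written --- it is this reversed inequality that selects $\hat{\Xi}$ columnwise, and the regions you ultimately name are correct.
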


	\begin{proposition}\label{XA as k to 8}
		Assuming $u_0,v_0\in\mathcal{S}(\mathbb{R})$ and $Y_{1,p}^{A}$ and $Y_{2,p}^{A}$ are the cofactor matrices of $Y_{1,p}$ and $Y_{2,p}$
		in (\ref{X,P}), $Y_{1}^{A}$ and $Y_{2}^{A}$ will converge to $Y_{1,p}^{A}$ and $Y_{2,p}^{A}$ for any  positive integer $p$. Furthermore, there exist two bounded, smooth, positive functions $g_{1}(x)$ and $g_{2}(x)$ of $x\in\mathbb{R}$ which rapidly as $x$ approaches ${\infty}$ and ${-\infty}$, respectively, such that
		\begin{subequations}
			\begin{align}
				&\left|\frac{\partial^{i}}{\partial k^{i}}\left(Y_{1}^{A}-Y_{1,p}^{A}\right)\right| \leq \frac{g_{1}(x)}{|k|^{p+1}}, \quad  k \in\hat{\Xi},\quad |k| \geq 2, \\
				&\left|\frac{\partial^{i}}{\partial k^{i}}\left(Y_{2}^{A}-Y_{2,p}^{A}\right)\right| \leq \frac{g_{2}(x)}{|k|^{p+1}}, \quad  k \in\Xi,\quad |k| \geq 2.
			\end{align}
		\end{subequations}
	\end{proposition}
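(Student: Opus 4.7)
The plan is to follow the same three-step structure as the proof of Proposition \ref{kto-8}, transported to the adjoint system. The key observation is that the adjoint ODE (\ref{xa-ode}) and the Volterra equations (\ref{xa-integral}) have exactly the same analytic structure as those governing $Y_1, Y_2$, with $U$ replaced by $-U^T$; since $U^T$ admits the same decomposition $U_0^T + U_1^T/k$ with Schwartz-class coefficients in $x$, all estimates carry over with obvious modifications.

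First, I would seek a formal series $Y_1^{A,formal}(x,k) = I + \sum_{j \geq 1} Y_1^{A,(j)}(x)/k^j$ solving (\ref{xa-ode}) order-by-order. Substituting and equating coefficients, the off-diagonal part of $Y_1^{A,(j+1)}$ is determined algebraically from $Y_1^{A,(j)}$ and $Y_1^{A,(j-1)}$ via the commutator $[\mathcal{L},\cdot]$, while the diagonal part is obtained from a first-order scalar ODE integrated from $+\infty$ with vanishing boundary condition. This recursion mirrors that for $Y_1^{(j)}$ and defines the partial sums $Y_1^{A,formal,p} := I + \sum_{j=1}^{p} Y_1^{A,(j)}/k^j$ uniquely, each coefficient being smooth in $x$ and rapidly decaying as $x \to \infty$.

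Second, I need to identify $Y_1^{A,formal,p}$ with the cofactor $Y_{1,p}^A$ defined in the proposition, up to an admissible remainder. Consider the formal product $V(x,k) := Y_1^{formal}(x,k) \cdot (Y_1^{A,formal}(x,k))^T$ as a power series in $1/k$. Using the ODEs satisfied by both factors, one checks formally that $V_x = [\mathcal{L},V]$ with $V \to I$ as $x \to +\infty$, forcing $V \equiv I$ coefficient-by-coefficient. Hence $(Y_1^{A,formal})^T$ is the formal inverse of $Y_1^{formal}$, equivalently $Y_1^{A,formal}$ is its formal cofactor. Because the $3\times 3$ cofactor is a polynomial map in the matrix entries, truncating the series $Y_1^{formal}$ at order $p$ to obtain $Y_{1,p}$ and then taking the cofactor agrees with the formal cofactor up to $O(k^{-p-1})$; the polynomial nature together with the derivative bounds from Propositions \ref{property of x} and \ref{kto-8} controls all $k$-derivatives. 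Therefore $Y_{1,p}^A = Y_1^{A,formal,p} + O(g_1(x)/|k|^{p+1})$ uniformly on the region $\hat{\Xi}$, $|k| \geq 2$.

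Third, I would estimate the residual $R_p(x,k) := Y_1^A(x,k) - Y_1^{A,formal,p}(x,k)$. Substituting $Y_1^{A,formal,p}$ into the Volterra equation (\ref{xa-integral}) yields an inhomogeneous term of size $O(g_1(x)/|k|^{p+1})$, since by construction the formal series cancels the equation to order $p$. The Volterra kernel is the same contraction operator that generated $Y_1^A$ in Proposition \ref{property of xa}, so a Neumann series argument gives $|R_p(x,k)| \leq g_1(x)/|k|^{p+1}$ on $\hat{\Xi}$, $|k| \geq 2$. Differentiating the Volterra equation in $k$ and iterating the same contraction, with factors controlled by Proposition \ref{property of xa}, produces the analogous bound for $\partial_k^i R_p$. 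Combining with the second step yields the stated estimate for $Y_1^A - Y_{1,p}^A$; the argument for $Y_2^A - Y_{2,p}^A$ is identical, with the region $\Xi$ in place of $\hat{\Xi}$ and integration from $-\infty$.

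The main obstacle is the second step: matching the cofactor $Y_{1,p}^A$ of the truncation with the truncation $Y_1^{A,formal,p}$ of the formal cofactor. These two operations (truncation and cofactor) do not commute exactly, and one must track the discrepancy quantitatively, uniformly in $x$. The key leverage is that the $3\times 3$ cofactor is a fixed polynomial operation of low degree, so the discrepancy is automatically $O(k^{-p-1})$ once $Y_1 - Y_{1,p}$ is, and the decay of the coefficients $Y_1^{(j)}$ in $x$ is inherited directly. Once this algebraic step is secured, the analytic part of the proof is a routine adaptation of the estimates already established for $Y_1, Y_2$.
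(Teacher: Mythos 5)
Your proposal is correct and takes essentially the approach the paper intends but never writes down (it gives no proof of this proposition, only the recursion set up before Proposition \ref{kto-8}): a formal power-series solution of the adjoint equation (\ref{xa-ode}), the identification of its truncations with the cofactors of $Y_{j,p}$ via the formal identity $(Y_1^{A,formal})^T Y_1^{formal}=I$ together with the quadratic (hence truncation-compatible) nature of the $3\times3$ cofactor map, and a Neumann-series estimate of the remainder from the Volterra equation (\ref{xa-integral}) — note only that the product should be taken in the order $(Y_1^{A,formal})^T Y_1^{formal}$, which satisfies $W_x=[\mathcal{L},W]$, whereas your order gives $V_x=[L,V]$ (same conclusion, since $V\equiv I$ is still forced). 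The one step to tighten is in your third stage: the truncated series solves the adjoint ODE only up to an \emph{off-diagonal} residual of order $k^{-p}$ (the missing $[J,Y_1^{A,(p+1)}]$ term), so the inhomogeneous term of the Volterra equation is not automatically $O(k^{-p-1})$; the extra factor $k^{-1}$ is recovered by one integration by parts in $s$ against the exponentials $e^{\mp(x-s)(l_i-l_j)}$ on the off-diagonal entries, exactly as is needed (and likewise left implicit) in Proposition \ref{kto-8}.
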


	\begin{proposition}\label{XA as k to 0}
		Let $u_{0},v_{0}\in\mathcal{S}(\mathbb{R})$ and integer $p\ge0$. There are third-order matrix-valued functions $D_{i}^{(l)}(x)~(i=1,2,l=-1,0,\dots ,p)$, which satisfy
		\begin{enumerate}[$(1)$]
			\item For each $x\in\mathbb{R}$, there exist smooth positive functions $g_{1}(x)$ and $g_{2}(x)$  which rapidly as $x$ approaches ${\infty}$ and ${-\infty}$, respectively, such that
			\begin{subequations}
				\begin{align}
					&\left| \frac{\partial^{j}}{\partial k^{j}}(Y_{1}^{A}-I-\sum_{l=-1}^{p}D_{1}^{(l)}(x)k^l)\right| \le g_{1}(x)|k|^{p+1-j},\quad k \in\hat{\Xi},\\
					&\left| \frac{\partial^{j}}{\partial k^{j}}(Y_{2}^{A}-I-\sum_{l=-1}^{p}D_{2}^{(l)}(x)k^l)\right| \le g_{2}(x)|k|^{p+1-j},\quad k \in \Xi,
				\end{align}
			\end{subequations}
			where $j\ge0$ is an integer and $|k|\le\frac{1}{2}$.
		\end{enumerate}
		\begin{enumerate}[$(2)$]
			\item The functions $D_{1}^{(l)}$ and $D_{2}^{(l)}$ are smooth and decay rapidly as $|x|\to{\infty}$ for $l=-1,0,1,2,\cdots$.
		\end{enumerate}
		\begin{enumerate}[$(3)$]
			\item The leading coefficients are given by
			\begin{align}
				D_{i}^{(-1)}(x)=&\tilde{\alpha}_{i}\begin{pmatrix}
					\omega^2 & 1 & \omega \\
					1 & \omega & \omega^2 \\
					\omega & \omega^2 & 1\\
				\end{pmatrix},\label{d-1}	\\
				D_{i}^{(0)}(x)=&\tilde{\delta}_{i}\begin{pmatrix}
					1 & 1 & 1 \\
					1 & 1 & 1 \\
					1 & 1 & 1 \\
				\end{pmatrix}+\tilde{\gamma}_{i}\begin{pmatrix}
					1  & \omega^2 & \omega\\
					\omega & 1 & \omega^2\\
					\omega^2 & \omega & 1\\
				\end{pmatrix}+\tilde{\beta}_{i}\begin{pmatrix}
					1 & \omega & \omega^2\\
					\omega^2 & 1 & \omega \\
					\omega & \omega^2 & 1 \\
				\end{pmatrix}-I,\label{c0}
			\end{align}
			where functions $\tilde\alpha_{i}(x), \tilde\beta_{i}(x), \tilde\gamma_{i}(x),\tilde\delta_{i}(x)~ (i=1,2)$ are real-valued. Moreover, they satisfy
			\begin{align*}
				&|\tilde{\alpha}_{i}(x)|\le (1+|x|)g_{i}(x),\\
				&|\tilde{\gamma}_{i}(x)|+|\tilde{\beta}_{i}(x)|\le (1+|x|)^{2}g_{i}(x),\\
				&|\tilde{\delta}_{i}(x)-\frac{1}{3}| \le (1+|x|)^{2}g_{i}(x),
			\end{align*}
			for $x\in\mathbb{R}$ and $i=1,2$.
		\end{enumerate}
	\end{proposition}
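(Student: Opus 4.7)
The plan is to derive Proposition~\ref{XA as k to 0} by combining Proposition~\ref{k to 0} with the algebraic identity $Y_1^A = (Y_1^{-1})^T$. A preliminary step is to verify $\det Y_1 \equiv 1$: since $Y_x = [\mathcal{L}, Y] + UY$, Abel's formula gives $(\det Y_1)_x = \det(Y_1) \tr(U)$, and one checks $\tr U = 0$ (the diagonal of $U^{(0)}$ vanishes, while the diagonal of $U^{(-1)}$ sums to $\omega^2 + \omega + 1 = 0$); together with $Y_1 \to I$ as $x \to +\infty$ this forces $\det Y_1 = 1$ identically on the domain of definition. Consequently $Y_1^A = \operatorname{adj}(Y_1)^T$, so the Laurent expansion of $Y_1^A$ at $k = 0$ is determined by applying the cofactor formula entry-by-entry to the Laurent expansion of $Y_1$ furnished by Proposition~\ref{k to 0}.

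The decisive feature is the rank-one structure of $C_1^{(-1)}(x) = \alpha_1(x) M$, where $M = u_1 u_2^T$ with $u_1 = (1, \omega, \omega^2)^T$ and $u_2 = (\omega^2, 1, \omega)^T$. Writing $Y_1 = k^{-1} C_1^{(-1)} + (I + C_1^{(0)}) + O(k)$, each entry of $\operatorname{adj}(Y_1)$ is a $2 \times 2$ minor of $Y_1$ whose a priori $k^{-2}$ coefficient equals a $2 \times 2$ minor of the rank-one matrix $C_1^{(-1)}$---and all such minors vanish. Hence $Y_1^A$ has at most a simple pole at $k=0$, and the residue $D_1^{(-1)}(x)$ is obtained by collecting the surviving $k^{-1}$ contributions, namely cross products of one factor from $C_1^{(-1)}$ with one factor from $I + C_1^{(0)}$. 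The uniform pointwise estimates of item (1), together with the smoothness and rapid decay in item (2), descend directly from the analogous statements of Proposition~\ref{k to 0} through this polynomial cofactor identity.

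Finally, the explicit matrix forms (\ref{d-1}) and (\ref{c0}), together with the real-valuedness and pointwise bounds on $\tilde{\alpha}_i, \tilde{\beta}_i, \tilde{\gamma}_i, \tilde{\delta}_i$ in item (3), follow from the symmetries (\ref{symmetry-XA}). Matching Laurent coefficients in $Y_1^A(x,k) = \mathcal{A} Y_1^A(x, \omega k) \mathcal{A}^{-1}$ forces $D_1^{(l)}$ to lie in the $\omega^{-l}$-eigenspace of conjugation by $\mathcal{A}$; combined with the rank-one cofactor identity just derived, this pins $D_1^{(-1)}$ down to a scalar multiple of $M$, while the constant term $I + D_1^{(0)}$ is constrained to the three-dimensional $\mathcal{A}$-invariant subspace spanned by the three matrices appearing in (\ref{c0}). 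The $\mathbb{Z}_2$ symmetry $Y_1^A(x,k) = \mathcal{B}\,\overline{Y_1^A(x,\bar k)}\,\mathcal{B}$ then enforces real-valuedness of the scalar factors, and the estimates $|\tilde{\alpha}_i(x)| \leq (1+|x|) g_i(x)$ and so on are inherited from the corresponding bounds on $\alpha_1, \beta_1, \gamma_1, \delta_1$ through the cofactor formula. The case of $Y_2^A$ is handled identically with $Y_2$ in place of $Y_1$ and decay as $x \to -\infty$. The main subtlety will be ensuring the rank-one collapse is preserved exactly so as to yield the precise scalar coefficient structure, and carefully tracking the polynomial $(1+|x|)^j$ growth factors that differentiation in $k$ introduces through the cofactor computation; this is conceptually the same bookkeeping as in the proof of Proposition~\ref{k to 0}, only applied one cofactor away.
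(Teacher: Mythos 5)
Your route is genuinely different from the paper's. The paper proves Proposition \ref{XA as k to 0} by analyzing $Y_1^A$ through its own Volterra equation, setting $\tilde Y_1^A := k^2 P^A Y_1^A$ so that the integral kernel loses its singularity at $k=0$, in exact parallel with the proof of Proposition \ref{k to 0}; you instead try to pull everything back through the adjugate identity $Y_1^A=\operatorname{adj}(Y_1)^T$. Your central algebraic observation is correct and attractive: $C_1^{(-1)}=\alpha_1(x)\,(1,\omega,\omega^2)^T(\omega^2,1,\omega)$ is indeed rank one, so every $2\times2$ minor of the residue vanishes and the adjugate can have at most a simple pole at $k=0$; and the $\mathcal{A}$-eigenspace/$\mathcal{B}$-reality argument for the shape of $D_i^{(-1)}$ and $D_i^{(0)}$ is the same device the paper uses to identify $C_1^{(-1)}$.

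The gap is the domain on which the adjugate identity is available. For Schwartz data without compact support, $Y_1$ is defined column by column on the three closed sectors $\omega^2\bar S$, $\omega\bar S$, $\bar S$, whose pairwise intersections are single rays; the $(i,j)$ cofactor of $Y_1$ requires two columns simultaneously, so $\operatorname{adj}(Y_1)^T$ is defined only on those rays (for instance, the third column of $Y_1^A$ would be recovered only on $\mathbb{R}^-$, not on the full sector $-\bar S$). The proposition, however, asserts the expansion and the uniform bounds of item $(1)$ on the full sectors $\hat\Xi$ with $|k|\le\tfrac12$. Your argument therefore cannot deliver item $(1)$ without an independent analysis of the Volterra equations (\ref{xa-integral}) on the sectors --- which is precisely the paper's proof and would make the cofactor computation redundant for existence --- or a compact-support approximation with uniform control, which you do not supply; note the paper itself only asserts $\det Y_1=1$ and $Y_1^A=(Y_1^{-1})^T$ under compact support (Proposition \ref{property of x}, item $(9)$). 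A secondary, quantitative issue: entries of $\operatorname{adj}(Y_1)$ are products of two entries of $Y_1$, and naively multiplying the bounds of Proposition \ref{k to 0} produces terms of size $(1+|x|)^3g_1(x)^2$, which in the direction where $g_1$ does not decay exceeds the claimed bound $|\tilde\alpha_i(x)|\le(1+|x|)g_i(x)$; establishing the stated polynomial orders requires cancellation or finer bookkeeping than ``inherited through the cofactor formula.''
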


	\begin{proof}
		Define $\tilde{Y}_{1}^A:=k^2P^AY_{1}^A$ for $x\in\mathbb{R}$ and $k \in\hat{\Xi}$, which satisfies
		\begin{equation*}
			\tilde{Y}_{1}^A(x,k)=k^2P^A(k)+\int_{x}^{\infty}P^Ae^{-(x-s)\widehat{\mathcal{L}(k)}}(P(k)^T\tilde{U}(s)^T\tilde{Y}_{1}^A(s,k))ds.
		\end{equation*}
		The following proof bears similarity to that in Proposition \ref{k to 0}.

	\end{proof}

	Since both $Y_{1}^{A}$ and $Y_{2}^{A}$ are the solutions of (\ref{xa-ode}), there is a spectral function $s^{A}(k)$ such that
	\begin{equation*}
		Y^{A}_{1}(x,k)=Y_{2}^{A}(x,k)e^{-x\widehat{\mathcal{L}(k)}}s^{A}(k),
	\end{equation*}
	for all $x\in \mathbb{R}$.
	Reminding (\ref{xa-integral}) and letting $x\to -\infty$, it follows
	\begin{equation}\label{sa-integral}
		s^{A}(k)=I+\int_{-\infty}^{\infty} e^{x \widehat{\mathcal{L}(k)}}\left(U^{T} Y_{1}^{A}\right)\left(x, k\right) dx.
	\end{equation}

	\begin{proposition}
		Assuming $u_{0},v_{0}\in\mathcal{S}(\mathbb{R})$,
		the scattering matrix $s^{A}(k)$ satisfies
		\begin{enumerate}[$(1)$]
			\item The matrix $s^{A}(k)$ is continuous on
			\begin{equation}\label{sA(k)-area}
				k\in\begin{pmatrix}
					-\omega^2\bar{S} & \mathbb{R}^+ & \omega\mathbb{R}^+\\
					\mathbb{R}^+ & -\omega\bar{S} & \omega^2\mathbb{R}^+\\
					\omega\mathbb{R}^+ & \omega^2\mathbb{R}^+ & -\bar{S}
				\end{pmatrix} \backslash\{0\}.
			\end{equation}
			It means that $s_{11}^{A}$  is continuous for $k \in -\omega^2\bar{S} \backslash\{0\}$, etc.
		\end{enumerate}
		\begin{enumerate}[$(2)$]
			\item The spectral function $s^{A}(k)$ allows the symmetries
			\begin{equation}\label{sA(k)-symmetry}
				s^{A}(k)=\mathcal{A} s^{A}(\omega k) \mathcal{A}^{-1}=\mathcal{B} \overline{s^{A}(\bar{k})} \mathcal{B}.
			\end{equation}
		\end{enumerate}
		\begin{enumerate}[$(3)$]
			\item The arbitrary-order derivative $\partial_{k}^{j} s^{A}(k)~(j>0)$ is well-defined and continuous for $k$ in (\ref{sA(k)-area}).
		\end{enumerate}
		
		\begin{enumerate}[$(4)$]
			\item For $k$ in (\ref{sA(k)-area}) and $k$ tends to infinity, $s^{A}(k)$ converges to the identity matrix. Moreover, for any integer $N\ge1$ and $k$ in (\ref{sA(k)-area}), we have
			\begin{equation}
				\left|\partial_{k}^{j}\left(s^{A}(k)-I-\sum_{j=1}^{N} \frac{s^{A}_{j}}{k^{j}}\right)\right|=O\left(k^{-N-1}\right),\quad k\to\infty,\quad  j=0,1,2,\cdots,N,
			\end{equation}
			where $\{s^{A}_{j}\}_{1}^{\infty}$ are diagonal matrices.
		\end{enumerate}
		
		\begin{enumerate}[$(5)$]
			\item For $k$ in (\ref{sA(k)-area}) and $k\to0$, it is seen that
			\begin{equation}
				s^{A}(k)=\frac{s^{A(-1)}}{k}+s^{A(0)}+s^{A(1)} k+\ldots,
			\end{equation}
			where
			\begin{equation}\label{sA(-1)}
				s^{A(-1)}=\mathrm{s}^{A(-1)}\left(\begin{array}{ccc}
					\omega^2  & 1 &\omega \\
					1 & \omega & \omega^2 \\
					\omega  &\omega^2 &1
				\end{array}\right), \quad \mathrm{s}^{A(-1)}:=-\int_{\mathbb{R}}(v_{0}\tilde\beta_{1}(x)+\frac{1}{3}\tilde\alpha_1(x)u_0)dx.
			\end{equation}
		\end{enumerate}
		
		\begin{enumerate}[$(6)$]
			\item If $u_{0}(x),v_{0}(x)$ are compactly supported, the function $s^{A}(k)$ is analytic for $k \in \mathbb{C} \backslash\{0\}$ and $\det{s^{A}}=1$.
		\end{enumerate}
		
	\end{proposition}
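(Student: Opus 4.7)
The plan is to mirror the proof of Proposition \ref{property of sk} step by step, using $Y_1^A$ in place of $Y_1$ and the integral representation (\ref{sa-integral}) in place of (\ref{s(k)-integral}). The main point to keep track of is that the exponent in (\ref{sa-integral}) carries the opposite sign, $e^{x\widehat{\mathcal{L}(k)}}$, so the ``allowed'' regions for boundedness of each entry are obtained from those of $s(k)$ by the reflection $k\mapsto -k$. This is precisely why the matrix in (\ref{sA(k)-area}) is the negative of the matrix in (\ref{s(k)-area}).

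First, for item (1), I would write out $e^{x\widehat{\mathcal{L}(k)}}$ entrywise in terms of the exponents $x(l_j(k)-l_i(k))$ with $l_j(k)=-\omega^j k$, and check that the $(i,j)$ entry of the integrand in (\ref{sa-integral}) is absolutely integrable exactly on the cell indicated in (\ref{sA(k)-area}); the required decay on $U^T Y_1^A$ comes from Proposition \ref{property of xa}(7) together with $u_0,v_0\in\mathcal{S}(\mathbb{R})$. For (2) the symmetries follow directly from (\ref{symmetry-XA}) and the $\mathbb{Z}_2, \mathbb{Z}_3$ covariance of $U$ established in Section 2.1: conjugating by $\mathcal{A}$ or $\mathcal{B}$ inside the integrand reproduces the claimed relations after a change of variables. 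Item (3) is then immediate from smooth differentiation under the integral sign, justified by Proposition \ref{property of xa}(5)--(7).

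For (4), I would substitute the large-$k$ expansion from Proposition \ref{XA as k to 8} into (\ref{sa-integral}). The resulting off-diagonal coefficients involve oscillatory integrals $\int e^{x(l_i-l_j)(k)}f(x)\,dx$ with $i\ne j$; integration by parts in $x$ against these non-vanishing phases (legal because $l_i\ne l_j$ for $i\ne j$ and because $f,f',\dots\in \mathcal{S}(\mathbb{R})$) produces extra factors of $k^{-1}$ on the off-diagonal, so only the diagonal parts survive at each finite order. This gives the diagonal matrices $\{s^A_j\}$ together with the remainder estimate, exactly as in the proof of (4) for $s(k)$.

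For (5), which I expect to be the most delicate step, I would rewrite (\ref{sa-integral}) using the factorization $Y_1^A=k^{-2}(P^A)^{-1}\tilde{Y}_1^A$ introduced in the proof of Proposition \ref{XA as k to 0}. Expanding $(P^A)^{-1}$ in powers of $k$ and plugging in the small-$k$ expansion of $\tilde{Y}_1^A$ (whose first two columns vanish at $k=0$ by the same vanishing mechanism exhibited in the proof of Proposition \ref{k to 0}, now applied to the adjoint system), the $k^{-2}$ contributions cancel and only a simple pole remains. To identify the coefficient, I would replace each factor $e^{x(l_i-l_j)(k)}$ by $1$ up to an $O(k)$ error, using the elementary bound
\begin{equation*}
\Bigl|\int_{\mathbb{R}}\bigl(e^{x(l_i-l_j)(k)}-1\bigr)f(x)\,dx\Bigr|\le |k|\int_{\mathbb{R}}|\sqrt{3}\,x\,f(x)|\,dx=O(k),\quad k\to 0,
\end{equation*}
valid for $f\in\mathcal{S}(\mathbb{R})$. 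Substituting the leading coefficients $D_1^{(-1)}$ and $D_1^{(0)}$ from (\ref{d-1})--(\ref{c0}) and computing $P^{(-2)}\tilde{U}^T$ explicitly then produces the stated rank-one matrix structure of $s^{A(-1)}$ with the scalar $\mathrm{s}^{A(-1)}=-\int_{\mathbb{R}}\bigl(v_0\tilde\beta_1(x)+\tfrac{1}{3}\tilde\alpha_1(x)u_0\bigr)dx$; verifying that only diagonal entries on the right of (\ref{sa-integral}) contribute at this order (off-diagonals giving $O(1)$ by the same oscillation argument) is the one computation where a careful accounting is required. Finally, item (6) is immediate: for compactly supported $u_0,v_0$ the integral in (\ref{sa-integral}) converges for all $k\in\mathbb{C}\setminus\{0\}$, giving analyticity, and $\det s^A=1$ follows from $\det Y_1^A=\det Y_2^A=1$ (Proposition \ref{property of xa}(9)) together with $Y_1^A=Y_2^A e^{-x\widehat{\mathcal{L}(k)}}s^A(k)$.
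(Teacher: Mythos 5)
Your proposal is correct and follows essentially the same route the paper takes: the paper states this proposition without proof, relying on the evident analogy with Proposition \ref{property of sk}, and your argument is a faithful transcription of that proof to the adjoint system — the same cell-by-cell boundedness check for (1), the same symmetry and differentiation-under-the-integral arguments for (2)--(3), the same integration-by-parts mechanism forcing the large-$k$ coefficients to be diagonal for (4), the same $k^{2}P^{A}$-regularization and replacement of $e^{x(l_i-l_j)}$ by $1$ up to $O(k)$ for (5), and the same compact-support argument for (6). No gaps; the only point deserving a second look in a write-up is your claim that the first two \emph{columns} (rather than rows) of $\tilde{Y}_1^{A}(x,0)$ vanish, which is the correct transposed analogue but should be verified explicitly from the adjoint Volterra equation.
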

	
	With the definitions of scattering matrices $s(k)$ and $s^{A}(k)$ in mind, it's time to define the reflection coefficients $\rho_{1}(k)$ and $\rho_{2}(k)$, which are of the following forms
	\begin{equation}\label{r1r2}
		\begin{cases}
			\rho_{1}(k)=\frac{s_{12}(k)}{s_{11}(k)},\qquad & k\in(-\infty,0),\\
			\rho_{2}(k)=\frac{s_{12}^{A}(k)}{s_{11}^{A}(k)},\qquad & k\in(0,\infty).
		\end{cases}
	\end{equation}
	
	The properties of $\rho_{1}$ and $\rho_{2}$	are stated in the theorem below.
	
	\begin{theorem}[Properties of $\rho_{1}$ and $\rho_{2}$]
		Assume  $u_0,v_0\in \mathcal{S}(\mathbb{R})$. Furthermore, suppose that $s_{11}(k)$ and $s_{11}^{A}(k)$ are nonzero respectively for $k\in\bar{\Omega}_{4}\setminus \{0\}$ and $k\in\bar{\Omega}_{1}\setminus \{0\}$ and
		\begin{equation*}
			\lim\limits_{k\to 0}k \cdot s_{11}(k)\neq 0,\quad 	\lim\limits_{k\to 0}k \cdot s_{11}^{A}(k)\neq 0.
		\end{equation*}
		Therefore, the reflection coefficients $\rho_{1}$ and $\rho_{2}$ have the properties as follow:
		\begin{enumerate}[$(1)$]
			\item $\rho_{1}$ and $\rho_{2}$ are smooth for $k\in(-\infty ,0)$ and $k\in(0,\infty)$, respectively.
		\end{enumerate}
		
		\begin{enumerate}[$(2)$]
			\item There exist power series expansions
			\begin{subequations}
				\begin{align}
					\rho_{1}(k)=\rho_{1}(0)+\rho_{1}^{\prime}(0)k+\frac{1}{2}\rho_{1}^{\prime \prime}(0)k^{2}+\cdots,\quad k\to 0,\quad k<0,\\
					\rho_{2}(k)=\rho_{2}(0)+\rho_{2}^{\prime}(0)k+\frac{1}{2}\rho_{2}^{\prime \prime}(0)k^{2}+\cdots,\quad k\to 0,\quad k>0,
				\end{align}
			\end{subequations}
			where
			\begin{equation*}
				\rho_{1}(0)= \rho_{2}(0)=\omega .
			\end{equation*}

		\end{enumerate}
		
	\end{theorem}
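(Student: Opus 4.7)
The plan is to derive part (1) directly from the regularity of the scattering matrices $s(k)$ and $s^{A}(k)$ established in Proposition \ref{property of sk} (and its adjoint counterpart), and to derive part (2) from the Laurent-type expansions of those matrices at $k=0$ together with the non-degeneracy hypotheses on $s_{11}$ and $s_{11}^{A}$.

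First I would verify smoothness. By Proposition \ref{property of sk}(1) and (3) the entries $s_{11}$ and $s_{12}$ are well-defined and all $k$-derivatives are continuous on $(\omega^{2}\bar S)\setminus\{0\}$ and $\mathbb{R}^{-}$ respectively; since the negative real axis lies inside both of these regions, $s_{11}$ and $s_{12}$ are $C^{\infty}$ on $(-\infty,0)$. The assumption $s_{11}(k)\neq 0$ on $\bar\Omega_{4}\setminus\{0\}$ in particular forces $s_{11}(k)\neq 0$ for every $k<0$, so $\rho_{1}=s_{12}/s_{11}$ is $C^{\infty}$ on $(-\infty,0)$. The same argument, applied to $s^{A}$ on its continuity set \eqref{sA(k)-area} and using the hypothesis $s_{11}^{A}\neq 0$ on $\bar\Omega_{1}\setminus\{0\}$, gives smoothness of $\rho_{2}$ on $(0,\infty)$.

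For part (2), Proposition \ref{property of sk}(5) says that $s(k)=s^{(-1)}/k+s^{(0)}+s^{(1)}k+\cdots$ as $k\to 0$, so each entry of $k\,s(k)$ admits a power series expansion at the origin. Reading off the $(1,1)$ and $(1,2)$ entries of $s^{(-1)}$ from \eqref{s(-1)} one obtains
\[
k\,s_{11}(k)=\omega^{2}\mathrm{s}^{(-1)}+O(k),\qquad k\,s_{12}(k)=\mathrm{s}^{(-1)}+O(k).
\]
The hypothesis $\lim_{k\to 0}k\,s_{11}(k)\neq 0$ is precisely $\mathrm{s}^{(-1)}\neq 0$, and consequently
\[
\rho_{1}(k)=\frac{k\,s_{12}(k)}{k\,s_{11}(k)}
\]
is the ratio of two functions admitting power series expansions at $k=0$ with nonvanishing denominator. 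Hence $\rho_{1}$ itself admits such an expansion, and its value at $0$ is $\mathrm{s}^{(-1)}/(\omega^{2}\mathrm{s}^{(-1)})=\omega^{-2}=\omega$. An entirely parallel computation using the analogue of Proposition \ref{property of sk}(5) for $s^{A}$ and formula \eqref{sA(-1)}, together with $\lim_{k\to 0}k\,s_{11}^{A}(k)\neq 0$, produces the expansion of $\rho_{2}$ and the identity $\rho_{2}(0)=\omega$.

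The argument is mostly bookkeeping; there is no real analytic obstacle. The only subtle point is ensuring that the singular parts of $s_{11}$ and $s_{12}$ (and of $s_{11}^{A}$ and $s_{12}^{A}$) occur in the rigid ratio $\omega^{2}:1$ so that their $1/k$ poles cancel exactly in the quotient. This cancellation is forced by the rank-one structure of $s^{(-1)}$ and $s^{A(-1)}$ displayed in \eqref{s(-1)} and \eqref{sA(-1)}, which in turn is a consequence of the $\mathbb{Z}_{3}$ symmetries \eqref{s(k)-symmetry} and \eqref{sA(k)-symmetry}; once these structural identities are in hand, everything reduces to the elementary expansion of a quotient at the origin.
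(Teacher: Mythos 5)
Your proposal is correct and uses exactly the ingredients the paper sets up for this purpose (the paper states the theorem without writing out a proof, but items (1), (3), and (5) of Proposition \ref{property of sk} and its adjoint analogue, together with the rank-one form of $s^{(-1)}$ and $s^{A(-1)}$ in \eqref{s(-1)} and \eqref{sA(-1)}, are precisely the intended route). The key cancellation you identify — the $1/k$ poles of $s_{12}$ and $s_{11}$ appearing in the fixed ratio $1:\omega^{2}$ so that $\rho_{1}(0)=\omega^{-2}=\omega$ — is the essential point, and your computation of it is right.
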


	\subsection{The Riemann-Hilbert problem}
	This subsection constructs the Riemann-Hilbert problem of the Hirota-Satsuma equation (\ref{HS eq}) based on inverse spectral theory. In the regions $\Omega_{n}~(n=1,2,\dots 6)$, denote $\mathcal{M}_{n}(x,k)$ as $\mathcal{M}(x,k)$. In fact, the eigenfunctions $\mathcal{M}_{n}(x, k)$ are three-order matrix-valued solutions of (\ref{lax-x}) defined by
	\begin{equation}\label{Mn}
		\left(\mathcal{M}_{n}\right)_{i j}(x, k)=\delta_{i j}+\int_{\gamma_{i j}^{n}}\left(e^{\left(x-s\right) \widehat{\mathcal{L}(k)}}\left(U \mathcal{M}_{n}\right)\left(s, k\right)\right)_{i j} d s, \quad i, j=1,2,3,
	\end{equation}
	where
	\begin{equation*}\label{gamma-ij}
		\gamma_{i j}^{n}=\left\{\begin{array}{ll}
			(-\infty, x), & \operatorname{Re} l_{i}(k)<\operatorname{Re} l_{j}(k), \\
			(+\infty, x), & \operatorname{Re} l_{i}(k) \geq \operatorname{Re} l_{j}(k),
		\end{array} \quad \text { for } \quad k \in \Omega_n\right.,\quad n=1,\dots,6.
	\end{equation*}
	We define $	\gamma_{i j}^{n}$ in this way in order to make sure that the integral in  (\ref{Mn}) is bounded for $k\in\Omega_{n}$ and $x^{\prime}\in	\gamma_{i j}^{n}$. From the equality (\ref{Mn}), one can extend $\mathcal{M}_{n}$ to the boundary of $\Omega_{n}$ continuously. Let us record the set of zeros of the Fredholm determinants related to (\ref{Mn}) as $Z$. In the following proposition, it will be shown that $\mathcal{M}_{n}$ are well-defined for $k\in\bar{\Omega}_{n}\setminus \mathcal{Z}$, where $\mathcal{Z}=Z\cup\{0\}$.
	
	\begin{proposition}\label{property of mn}
		Assuming the initial values $u_0,v_0\in\mathcal{S}(\mathbb{R})$, the matrix-valued functions $\{\mathcal{M}_{n}\}_{1}^{6}$ of (\ref{lax-x}) defined in (\ref{Mn}) satisfy:

		\begin{enumerate}[$(1)$]
			\item For  $x\in\mathbb{R}$ and $k\in\bar{\Omega}_{n}\setminus \mathcal{Z}$, $\mathcal{M}_{n}(x,k)$ is well-defined. Furthermore, $\mathcal{M}_{n}(\cdot,k)$ is smooth for any $k\in\bar{\Omega}_{n}\setminus \mathcal{Z}$ and solves the equation (\ref{Mn}).
		\end{enumerate}
		
		\begin{enumerate}[$(2)$]
			\item For every $x\in\mathbb{R}$, $\mathcal{M}_{n}(x,k)$ is analytic for $k\in\bar{\Omega}_{n}\setminus \mathcal{Z}$ and continuous for $k\in\Omega_{n}\setminus \mathcal{Z}$.
		\end{enumerate}
		
		\begin{enumerate}[$(3)$]
			\item For any $\epsilon > 0$ and $\text{dist}(k,\mathcal{Z})\ge \epsilon $, there exist a  constant $C(\epsilon)$ such that
			\begin{equation*}
				|\mathcal{M}_{n}(x,k)|\le C(\epsilon),
			\end{equation*}
			for $x\in\mathbb{R}$ and $k\in\bar{\Omega}_{n}$.
		\end{enumerate}
		
		\begin{enumerate}[$(4)$]
			\item The partial derivatives $\frac{\partial^{i}\mathcal{M}_{n}}{\partial k^{i}}(x,\cdot)~(i\geq1)$ can be extended continuously to $\bar{\Omega}_{n}\setminus \mathcal{Z}$ for any $x\in\mathbb{R}$ and $i=1,2,\dots$.
		\end{enumerate}
		
		\begin{enumerate}[$(5)$]
			\item For  $x\in\mathbb{R}$ and $k\in\bar{\Omega}_{n}\setminus \mathcal{Z}$, there always exists $\det{\mathcal{M}_{n}(x,k)}=1$.
		\end{enumerate}
		
		\begin{enumerate}[$(6)$]
			\item Recalling the definition of $\mathcal{M}(x,k)$ which equals $\mathcal{M}_{n}(x,k)$ for $k\in\Omega_{n}$, the matrix-valued function $\mathcal{M}(x,k)$ is sectionally analytic and satisfies
			\begin{equation}\label{symmetry of M}
				\mathcal{A}\mathcal{M}(x,\omega k)\mathcal{A}^{-1}=\mathcal{M}(x,k)=\mathcal{B}\overline{\mathcal{M}(x,\bar{k})}\mathcal{B}, \quad k\in\mathbb{C}\setminus \mathcal{Z}.
			\end{equation}
		\end{enumerate}

	\end{proposition}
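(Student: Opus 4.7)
The plan is to treat (\ref{Mn}) column by column and to identify $\mathcal{M}_n$ with a right-multiplication of the already-studied eigenfunctions $Y_1, Y_2$ by triangular factors built from $s(k)$. First I would establish existence by Neumann iteration: for $k\in\Omega_n$, the contours $\gamma_{ij}^n$ are chosen precisely so that $\mathrm{Re}\,((l_i(k)-l_j(k))(x-s))\leq 0$ along each integration path, making the exponential kernel bounded. Combined with $u_0, v_0 \in \mathcal{S}(\mathbb{R})$, this gives a uniformly convergent series yielding a smooth-in-$x$ solution, local analyticity in $k$ within the interior of $\Omega_n$, and continuous extension of $\partial_k^i$ for all $i$, in analogy with Propositions \ref{property of x} and \ref{property of xa}.

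The key step is to pass from these local interior statements to the closed region $\bar{\Omega}_n$, and to pinpoint the singular set $\mathcal{Z}$. For each sector, there exists a unique piecewise triangular matrix $T_n(k)$ such that $\mathcal{M}_n(x,k) = Y_1(x,k) T_n^{+}(k) = Y_2(x,k) T_n^{-}(k)$, determined by matching the boundary behaviour at $x=\pm\infty$ column by column. The entries of $T_n^{\pm}$ are rational in the entries of $s(k)$ with denominators equal to principal minors of $s(k)$; by Cramer's rule, these denominators are exactly the Fredholm determinants whose zero set defines $Z$, and the factor $1/k$ coming from the pole of $U$ contributes the point $\{0\}$, so that $\mathcal{Z}=Z\cup\{0\}$. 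Propositions \ref{property of x}, \ref{property of xa}, and \ref{property of sk} then immediately transfer analyticity, continuous boundary values, continuous extension of $k$-derivatives, and uniform bounds on compact subsets of $\bar{\Omega}_n$ away from $\mathcal{Z}$ from $Y_1, Y_2, s(k)$ to $\mathcal{M}_n$, which establishes items $(1)$--$(4)$.

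Item $(5)$ is a Liouville-type computation: from $(\mathcal{M}_n)_x - [\mathcal{L},\mathcal{M}_n] = U\mathcal{M}_n$ one obtains $\partial_x \det \mathcal{M}_n = \mathrm{tr}(\mathcal{L}+U)\det \mathcal{M}_n$, and both $\mathcal{L}$ (since $\omega+\omega^2+1=0$) and $U$ (from the explicit forms of $U^{(0)}, U^{(-1)}$ in (\ref{U,V})) are traceless, so $\det \mathcal{M}_n$ is $x$-independent; the normalization along each $\gamma_{ij}^n$ at its infinite endpoint forces $\det \mathcal{M}_n \equiv 1$. Item $(6)$ follows from the symmetries (\ref{symetric-conjugate})--(\ref{symmetric-ratation}) of $L$: the maps $k\mapsto \omega k$ and $k\mapsto \bar k$ permute the sectors $\{\Omega_n\}$ cyclically and by reflection respectively, and a direct check shows that conjugation by $\mathcal{A}$ (respectively $\mathcal{B}$) rearranges the indices $(i,j)$ and the contours $\gamma_{ij}^n$ correspondingly. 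Hence both sides of (\ref{symmetry of M}) satisfy the same Volterra equation with the same normalization and must coincide by the uniqueness already obtained.

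The main obstacle will be the explicit construction of the triangular factors $T_n^{\pm}(k)$: it requires a sector-by-sector audit of the ordering of $\mathrm{Re}\,l_j(k)$ and careful book-keeping of which columns of $\mathcal{M}_n$ inherit their normalization from $Y_1$ versus $Y_2$ at infinity, together with verification that the prescribed denominators coincide with the Fredholm determinants implicit in (\ref{Mn}). Fortunately, the $\mathbb{Z}_3$ symmetry of the problem reduces this audit to two representative sectors, after which all remaining cases follow by conjugation with $\mathcal{A}$ and $\mathcal{B}$.
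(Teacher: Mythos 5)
The paper states this proposition without proof, but the machinery it sets up immediately afterwards --- the connection matrices $S_n(k),T_n(k)$ with $\mathcal{M}_n=Y_2e^{x\widehat{\mathcal{L}(k)}}S_n=Y_1e^{x\widehat{\mathcal{L}(k)}}T_n$, the explicit $S_1,S_6$ whose denominators are minors of $s(k)$, and the lemma expressing $\mathcal{M}_1$ through entries of $Y_1,Y_2,s,s^A$ --- is exactly the route you propose, so your outline is the intended one and is sound. Three small corrections. First, your matching relation drops the conjugation: since $\mathcal{M}_n$ and $Y_1$ both solve $Y_x-[\mathcal{L},Y]=UY$, the connection matrix must enter as $\mathcal{M}_n=Y_1e^{x\widehat{\mathcal{L}(k)}}T_n(k)$, not $Y_1T_n(k)$; this is not cosmetic, because item $(3)$ rests on checking that the factors $e^{(l_i-l_j)x}$ multiplying the nonzero entries of $T_n$ (resp.\ $S_n$) stay bounded for $k\in\bar\Omega_n$ as $x\to+\infty$ (resp.\ $-\infty$), which is precisely what the triangular structure adapted to the ordering of $\operatorname{Re}l_j$ guarantees. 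Second, item $(3)$ asks for a bound uniform over the \emph{unbounded} set $\{k\in\bar\Omega_n:\operatorname{dist}(k,\mathcal{Z})\ge\epsilon\}$, so "uniform bounds on compact subsets" does not suffice; the large-$|k|$ regime has to be supplied separately by the Neumann-series estimate (which gives $\mathcal{M}_n\to I$ uniformly in $x$, consistent with the asymptotic lemma that follows the proposition). Third, in the Abel computation the commutator contributes nothing, so the correct identity is $\partial_x\det\mathcal{M}_n=\mathrm{tr}(U)\,\det\mathcal{M}_n$ (harmless here since $\mathrm{tr}\,U=0$), and because different columns of $\mathcal{M}_n$ are normalized at different ends of $\mathbb{R}$ one cannot pin the constant by letting $x\to\pm\infty$ in $\mathcal{M}_n$ itself; the clean way is $\det\mathcal{M}_n=\det Y_1\cdot\det T_n=\det T_n=1$, using $\det s=1$ and the explicit triangular factorization.
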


	\begin{lemma}[Asymptotics of $\mathcal{M}(x,k)$ as $k\to \infty$]
		Assume $u_0,v_0\in\mathcal{S}(\mathbb{R})$ and $u_0,v_0 \not\equiv 0$. Reminding the definition of $Y_{1,p}$ appearing in (\ref{X,P}) for an integer $p\ge1$, for $x\in\mathbb{R}$ and $k\in\mathbb{C}\setminus\Gamma$, there are two constants $\delta >0$ and $C>0$ such that
		\begin{equation*}
			|\mathcal{M}(x,k)-Y_{1,p}(x,k)|\le \frac{C}{|k|^{p+1}}, \qquad |k|\ge\delta.
		\end{equation*}
	\end{lemma}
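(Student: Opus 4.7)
The plan is to compare $\mathcal{M}$ with $Y_1$ sector by sector and then invoke Proposition~\ref{kto-8}. In each sector $\Omega_n$, the ordering of $\operatorname{Re} l_1(k)$, $\operatorname{Re} l_2(k)$, $\operatorname{Re} l_3(k)$ is constant, say $\operatorname{Re} l_{\sigma(1)} < \operatorname{Re} l_{\sigma(2)} < \operatorname{Re} l_{\sigma(3)}$ for some permutation $\sigma$. Inspecting the integration paths $\gamma_{ij}^n$ in (\ref{Mn}) shows that column $\sigma(1)$ of $\mathcal{M}_n$ is integrated entirely from $+\infty$ and therefore coincides with column $\sigma(1)$ of $Y_1$; column $\sigma(3)$ is integrated entirely from $-\infty$ and coincides with column $\sigma(3)$ of $Y_2$; while the middle column $\sigma(2)$ has mixed boundary conditions and, by the standard Riemann--Hilbert construction together with the scattering relation $Y_1 = Y_2 e^{x\widehat{\mathcal{L}(k)}} s(k)$, can be written as a linear combination of columns of $Y_1$ with coefficients equal to specific minors of $s(k)$ and $s^{A}(k)$.

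The second step uses Proposition~\ref{property of sk}(4), which asserts that for every $N\ge 1$,
\begin{equation*}
 s(k) = I + \sum_{j=1}^{N}\frac{s_j}{k^j} + O(k^{-N-1}),\qquad k\to\infty,
\end{equation*}
with each $s_j$ diagonal, and analogously for $s^A(k)$. Consequently the off-diagonal entries of $s(k)-I$ and $s^A(k)-I$ vanish faster than any power of $k^{-1}$. Substituting into the mixed-column formula shows that the $Y_2$-contribution to column $\sigma(2)$ enters only through coefficients that are $O(k^{-M})$ for every $M$, and the $Y_2$-column $\sigma(3)$ is itself rewritable through $Y_1$ via the same scattering relation with identical off-diagonal suppression. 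Hence the full formal asymptotic expansion of $\mathcal{M}_n$ at $k=\infty$ coincides to all orders with that of $Y_1$.

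Combining with the bound $|Y_1-Y_{1,p}|\le g_1(x)/|k|^{p+1}$ from Proposition~\ref{kto-8}, and using that $g_1, g_2$ are bounded functions on $\mathbb{R}$, one obtains
\begin{equation*}
 |\mathcal{M}(x,k)-Y_{1,p}(x,k)|\le \frac{C}{|k|^{p+1}},\qquad |k|\ge\delta,
\end{equation*}
where $\delta$ is chosen large enough that the diagonal entries of $s(k)$ and $s^A(k)$ appearing in the denominators of the minor expressions are bounded away from zero, which is possible since they tend to $1$ as $k\to\infty$. The main technical obstacle is the sector-by-sector bookkeeping: the six sectors $\Omega_n$ produce six distinct combinatorial expressions for the middle column and one must verify that each yields the same formal tail $Y_{1,p}$. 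A clean conceptual way to sidestep this repetition is to invoke the uniqueness of the formal asymptotic solution of the $x$-ODE at the irregular singularity $k=\infty$: the diagonal coefficients of the formal series are fixed by normalization to $I$ at one infinity, while the off-diagonal coefficients are determined algebraically by the relation $[J,\,\cdot\,]$; hence any fundamental solution normalized to the identity modulo $O(k^{-\infty})$ necessarily shares the polynomial tail $Y_{1,p}$.
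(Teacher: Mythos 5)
The paper states this lemma without proof, so there is no in-text argument to compare against; the intended proof is the standard one from the Charlier--Lenells framework, and your proposal follows that route: identify each column of $\mathcal{M}_n$ with $Y_1$-, $Y_2$-, or adjoint-eigenfunction data, then use Proposition~\ref{kto-8} together with the fact that $s(k)$ and $s^A(k)$ are diagonal modulo $O(k^{-\infty})$. The structural skeleton is right, and your closing observation about uniqueness of the normalized formal solution of the $x$-ODE is indeed the cleanest way to see why all six sectors produce the same tail $Y_{1,p}$ (off-diagonal coefficients are fixed algebraically by $[J,\cdot]$, and the diagonal coefficients of $\mathcal{M}_n$ satisfy the same first-order $x$-ODE as those of $Y_1$ with the same normalization at $x=+\infty$, since every diagonal path $\gamma_{ii}^n$ is $(+\infty,x)$).

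There are, however, two places where the argument as written does not close. First, the step where you push the $Y_2$-contributions back onto $Y_1$ ``via the same scattering relation with identical off-diagonal suppression'' uses the exact identity $Y_1=Y_2e^{x\widehat{\mathcal{L}}}s$, whose off-diagonal terms carry factors $e^{x(l_i-l_j)}s_{ij}$; the entries $s_{ij}$ are $O(k^{-\infty})$, but the exponentials are unbounded over $x\in\mathbb{R}$ in the closed sector, so this does not yield the bound $C/|k|^{p+1}$ with $C$ independent of $x$. The cancellation has to be performed at the level of the formal expansion coefficients (where the exponentials never enter and one uses $(s^{-1})_{11}=s^A_{11}$ together with $\det s=1$ to see that the diagonal scattering factors exactly absorb the discrepancy between $Y_{2,p}$ and $Y_{1,p}$), or one must work directly with the Fredholm equation (\ref{Mn}) by integration by parts and uniform invertibility of $I-K$ for $|k|\ge\delta$. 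Second, the middle column is not ``a linear combination of columns of $Y_1$ with minor coefficients'': as the paper's explicit formula for $\mathcal{M}_1$ shows, it is built from entries of the adjoint eigenfunctions $Y_1^A,Y_2^A$ divided by $s_{33}$, so you also need the large-$k$ expansions of Proposition~\ref{XA as k to 8} and of $s^A$, and the verification that this combination has formal tail equal to the corresponding column of $Y_{1,p}$ is precisely the nontrivial bookkeeping that is asserted rather than carried out. Neither issue is fatal --- both are repaired by the formal-series argument you sketch at the end --- but as written the uniform-in-$x$ remainder estimate, which is the actual content of the lemma, is not established.
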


	Assume $u_0,v_0\in\mathcal{S}(\mathbb{R})$ are supported compactly. Since $Y_1, Y_2$ and $\mathcal{M}_n$ are all the solutions of the equation (\ref{lax-x}), there are matrices $S_{n}(k),T_{n}(k)~(n=1,2, \dots ,6)$ such that
	\begin{equation*}
		\begin{aligned}
			\mathcal{M}_n(x,k)&=Y_{2}(x,k)e^{x\widehat{\mathcal{L}(k)}}S_{n}(k)\\
			&=Y_{1}(x,k)e^{x\widehat{\mathcal{L}(k)}}T_{n}(k),   \qquad x\in\mathbb{R},k\in\bar{\Omega}_n\setminus\mathcal{Z}.
		\end{aligned}
	\end{equation*}
	Here $T_{n}(k)$ and $S_{n}(k)$ can be accurately expressed as follow:
	\begin{equation*}
		\begin{cases}
			T_n(k)=\lim\limits_{x\to\infty}e^{-x\widehat{\mathcal{L}(k)}}M_{n}(x,k),\\
            S_n(k)=\lim\limits_{x\to-\infty}e^{-x\widehat{\mathcal{L}(k)}}M_{n}(x,k),\\
			s(k)=S_{n}(k)T_{n}(k)^{-1}.
		\end{cases}\quad k\in\bar{\Omega}_{n}\setminus\{0\},
	\end{equation*}
	Now we only provide the matrices $S_1$ and $S_6$
	\begin{equation*}
		S_{1}(k)=\begin{pmatrix}
			\frac{1}{m_{11}(s)} & \frac{m_{21}(s)}{s_{33}} & s_{13}\\
			0 & \frac{m_{11}(s)}{s_{33}} & s_{23}\\
			0 & 0 & s_{33}
		\end{pmatrix},\quad
		S_{6}(k)=\begin{pmatrix}
			\frac{m_{22}(s)}{s_{33}}& 0 & s_{13}\\
			\frac{m_{12}(s)}{s_{33}} & \frac{1}{m_{22}(s)} & s_{23}\\
			0 & 0 & s_{33}
		\end{pmatrix}.
	\end{equation*}
	\par
	
	\begin{lemma}
		Assume $u_0,v_0\in\mathcal{S}(\mathbb{R})$, then the matrices  $\mathcal{M}_1$ and $\mathcal{M}_{1}^{A}$ can be expressed by the entries of $Y_1,Y_2,Y_1^{A},Y_2^A,s$ and $s^A$ as follows:
		\begin{equation}
			\mathcal{M}_1=\begin{pmatrix}
				\frac{[Y_{2}]_{11}}{s_{11}^{A}} & \frac{[Y_1^{A}]_{31}[Y_{2}^{A}]_{23}-[Y_{1}^{A}]_{21}[Y_{2}^{A}]_{33}}{s_{33}} & [Y_1]_{13}\\
				\frac{[Y_{2}]_{21}}{s_{11}^{A}} & \frac{[Y_{1}^{A}]_{11}[Y_{2}^{A}]_{33}-[Y_{1}^{A}]_{31}[Y_{2}^{A}]_{13}}{s_{33}} & [Y_1]_{23}\\
				\frac{[Y_{2}]_{31}}{s_{11}^{A}} & \frac{[Y_{1}^{A}]_{21}[Y_{2}^{A}]_{13}-[Y_{1}^{A}]_{11}[Y_{2}^{A}]_{23}}{s_{33}} & [Y_1]_{33}\\
			\end{pmatrix},\mathcal{M}_{1}^{A}=\begin{pmatrix}
				[Y_1^A]_{11} & \frac{[Y_{2}]_{31}[Y_{1}]_{23}-[Y_{2}]_{21}[Y_{1}]_{33}}{s_{11}^{A}} & \frac{[Y_{2}]_{13}}{s_{33}}\\
				[Y_1^A]_{21} & \frac{[Y_{1}]_{11}[Y_{2}]_{33}-[Y_{2}]_{31}[Y_{1}]_{13}}{s_{11}^{A}} & \frac{[Y_{2}]_{23}}{s_{33}}\\
				[Y_1^A]_{31} & \frac{[Y_{1}]_{21}[Y_{2}]_{13}-[Y_{2}]_{11}[Y_{1}]_{23}}{s_{11}^{A}} & \frac{[Y_{2}]_{33}}{s_{33}}\\
			\end{pmatrix}
		\end{equation}
		for $x\in\mathbb{R}$ and $k\in\Omega_{1}\setminus\mathcal{Z}$, 	where $[Y_{1}]_{11}$ denotes the $(11)$ entry of $Y_{1}$.
	\end{lemma}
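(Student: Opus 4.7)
The plan is to read off the columns of $\mathcal{M}_{1}$ (and then $\mathcal{M}_{1}^{A}$) by using the two factorizations introduced immediately before the statement, namely $\mathcal{M}_{n}=Y_{2}e^{x\widehat{\mathcal{L}(k)}}S_{n}=Y_{1}e^{x\widehat{\mathcal{L}(k)}}T_{n}$ with $T_{n}=s^{-1}S_{n}$, specialized to $n=1$ with the explicit $S_{1}$ displayed just above. In each factorization one column of $S_{1}$ (respectively $T_{1}$) has only one nonzero entry, so the outer columns of $\mathcal{M}_{1}$ drop out immediately, and the middle column is the real obstacle, to be handled by a cofactor computation.

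For the first column, $[S_{1}]_{\cdot 1}=\bigl(1/m_{11}(s),0,0\bigr)^{T}$ gives $[\mathcal{M}_{1}]_{i1}=[Y_{2}]_{i1}/m_{11}(s)$, and since $\det s=1$ by Proposition \ref{property of sk}(6), the cofactor definition yields $m_{11}(s)=s_{11}^{A}$. For the third column, a short computation of $T_{1}=s^{-1}S_{1}$ shows that its third column equals $e_{3}$, since the third column of $S_{1}$ coincides with the third column of $s$, and left-multiplying by $s^{-1}$ gives $e_{3}$. Hence $[\mathcal{M}_{1}]_{i3}=[Y_{1}]_{i3}$. The symmetric arguments applied to $\mathcal{M}_{1}^{A}$ yield its first and third columns.

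The middle column of $\mathcal{M}_{1}$ is the main difficulty. Expanding via $\mathcal{M}_{1}=Y_{2}e^{x\widehat{\mathcal{L}(k)}}S_{1}$ gives $[\mathcal{M}_{1}]_{\cdot 2}=(m_{21}(s)/s_{33})e^{x(l_{1}-l_{2})}[Y_{2}]_{\cdot 1}+(m_{11}(s)/s_{33})[Y_{2}]_{\cdot 2}$, and the goal is to recast this as the $2\times2$ cofactor combinations of entries of $Y_{1}^{A}$ and $Y_{2}^{A}$ displayed in the lemma. To do so I would apply the cofactor identity $(PQ)^{A}=P^{A}Q^{A}$ (valid for $3\times3$ matrices with unit determinant) to $Y_{1}=Y_{2}e^{x\widehat{\mathcal{L}(k)}}s$, obtaining $Y_{1}^{A}=Y_{2}^{A}e^{-x\widehat{\mathcal{L}(k)}}s^{A}$; extracting the first column of this relation rewrites each $[Y_{1}^{A}]_{i1}$ as a linear combination of entries of $Y_{2}^{A}$ weighted by cofactors of $s$ and the exponentials $e^{-x(l_{i}-l_{1})}$. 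Substituting these into the proposed $2\times2$ combinations and simplifying via the Laplace expansions $\sum_{i}(-1)^{i+j}m_{ij}(s)s_{ik}=\delta_{jk}$ reduces the expression to the one coming from $S_{1}$. A more conceptual alternative is a uniqueness argument: both sides satisfy (\ref{lax-x}), are analytic on $\Omega_{1}\setminus\mathcal{Z}$, remain bounded with the correct normalization as $x\to\pm\infty$ in $\Omega_{1}$, and therefore must agree.

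Finally, the formula for $\mathcal{M}_{1}^{A}$ is obtained by the same scheme applied to the adjoint system (\ref{xa-ode}): take cofactors in both factorizations to get $\mathcal{M}_{1}^{A}=Y_{2}^{A}e^{-x\widehat{\mathcal{L}(k)}}S_{1}^{A}=Y_{1}^{A}e^{-x\widehat{\mathcal{L}(k)}}T_{1}^{A}$, compute $S_{1}^{A}$ explicitly from the displayed $S_{1}$ (its outer columns are again very simple), and read off the first and third columns of $\mathcal{M}_{1}^{A}$. The middle column then follows by combining the two factorizations and applying the scattering identity $Y_{2}=Y_{1}e^{x\widehat{\mathcal{L}(k)}}s^{-1}$ to re-express the resulting minors of $s^{A}$ as $2\times 2$ determinants built from entries of $Y_{1}$ and $Y_{2}$, normalized by $s_{11}^{A}$. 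The sole nontrivial step throughout is the $3\times 3$ cofactor bookkeeping for the middle columns; every other step is essentially immediate from the factorizations.
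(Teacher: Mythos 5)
Your proposal is correct and follows exactly the route the paper itself sets up: the lemma is stated without proof, but the factorizations $\mathcal{M}_n=Y_{2}e^{x\widehat{\mathcal{L}(k)}}S_{n}=Y_{1}e^{x\widehat{\mathcal{L}(k)}}T_{n}$ and the explicit $S_{1}$ are displayed immediately beforehand precisely so that the columns can be read off as you do, and your identifications $m_{11}(s)=s_{11}^{A}$, $[T_{1}]_{\cdot 3}=e_{3}$ are right. One simplification worth noting: the middle columns require no substitution bookkeeping at all once the outer columns of both $\mathcal{M}_{1}$ and $\mathcal{M}_{1}^{A}$ are in hand, since unimodularity gives $\mathcal{M}_{1}=\bigl((\mathcal{M}_{1}^{A})^{-1}\bigr)^{T}$ and hence $[\mathcal{M}_{1}]_{\cdot 2}=[\mathcal{M}_{1}^{A}]_{\cdot 3}\times[\mathcal{M}_{1}^{A}]_{\cdot 1}$ (and symmetrically for $\mathcal{M}_{1}^{A}$), which reproduces the displayed $2\times 2$ minors directly and also shows that the third column of $\mathcal{M}_{1}^{A}$ in the statement should read $[Y_{2}^{A}]_{i3}/s_{33}$ rather than $[Y_{2}]_{i3}/s_{33}$.
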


	Thus the jump condition for $\mathcal{M}$ on the cut $\Gamma_1\setminus \{Z\}$ is formulated as
	\begin{equation}
		\mathcal{M}_{1}=\mathcal{M}_{6}\mathsf{V}_{1}, \quad k\in \Gamma_1\setminus \{Z\},
	\end{equation}
	where
	\begin{equation*}
		\mathsf{V}_{1}(x,0,k)=e^{x\widehat{\mathcal{L}(k)}}[S_{6}(k)^{-1}S_{1}(k)]=e^{x\widehat{\mathcal{L}(k)}}\begin{pmatrix}
			1-|\rho_2(k)|^2 & -\rho_2^*(k) & 0\\
			\rho_2(k) & 1    & 0\\
			0 &  0   & 1
		\end{pmatrix}.
	\end{equation*}
	\par	
	\par
	\par	
	Recalling the time evolutions in (\ref{lax-pde}) and the special forms of $\mathcal{L}$ and $\mathcal{Z}$, define
	\begin{equation}\label{theta ij}
		\Theta_{ij}(x,t,k)=(l_{i}-l_{j})x+(z_{i}-z_{j})t,
	\end{equation}
for $i,j=1,2,3.$
	\par
	So one can get the jump condition of the function $\mathcal{M}(x,t,k)$ on the jump contour $\Gamma$ in Figure \ref{Figure-jump}, which is stated in the proposition as follows.
	\begin{proposition}\label{jump proposition}
		Assuming  $u_0,v_0\in\mathcal{S}(\mathbb{R})$, the sectionally analytic matrix-valued function $\mathcal{M}(x,t,k)$ meets the jump condition
		\begin{equation}\label{jump condition for M}
			\mathcal{M}_{+}(x,t,k)=\mathcal{M}_{-}(x,t,k)\mathsf{V}(x,t,k), \quad k\in\Gamma\setminus\{0\},
		\end{equation}
		where the jump matrices $\mathsf{V}(x,t,k)=\mathsf{V}_{j}(x,t,k)$ for $k\in\Gamma_{j}~(j=1,2,\cdots,6)$ are listed below
		\begin{equation}\label{jump matrix v}
			\begin{aligned}
				\mathsf{V}_1(x,t,k)&=\begin{pmatrix}
					1-|\rho_2(k)|^2 & -\rho_2^*(k)e^{-\Theta_{21}} & 0\\
					\rho_2(k)e^{\Theta_{21}}  & 1    & 0\\
					0 &  0   & 1
				\end{pmatrix},
				&\mathsf{V}_2(x,t,k)=\begin{pmatrix}
					1 & 0 & 0\\
					0 & 1    & -\rho_1(\omega k)e^{-\Theta_{32}}\\
					0 &  \rho_1^*(\omega k)e^{\Theta_{32}}   & 1-\rho_1(\omega k)\rho_{1}^{*}(\omega k)
				\end{pmatrix},\\
				\mathsf{V}_3(x,t,k)&=\begin{pmatrix}
					1 & 0 & \rho_2(\omega^2 k)e^{-\Theta_{31}}\\
					0 & 1    & 0\\
					-\rho_2^*(\omega^2k)e^{\Theta_{31}} &  0   & 1-\rho_2(\omega^2 k)\rho_2^{*}(\omega^2 k)
				\end{pmatrix},
				&\mathsf{V}_4(x,t,k)=\begin{pmatrix}
					1 & -\rho_1(k)e^{-\Theta_{21}} & 0\\
					\rho_1^*(k)e^{\Theta_{21}} & 1-|\rho_1(k)|^2    & 0\\
					0 & 0   & 1
				\end{pmatrix},\\
				\mathsf{V}_5(x,t,k)&=\begin{pmatrix}
					1 & 0 & 0\\
					0 & 1-\rho_2(\omega k) \rho_2^{*}(\omega k)  & -\rho_2^*(\omega k)e^{-\Theta_{32}}\\
					0 &  \rho_2(\omega k)e^{\Theta_{32}}   & 1
				\end{pmatrix},
				&\mathsf{V}_6(x,t,k)=\begin{pmatrix}
					1-\rho_1(\omega^2k)\rho_1^{*}(\omega^2k) & 0 & \rho_1^*(\omega^2k)e^{-\Theta_{31}}\\
					0 & 1    & 0\\
					-\rho_1(\omega^2k)e^{\Theta_{31}} &  0   & 1
				\end{pmatrix}.
			\end{aligned}
		\end{equation}
	\end{proposition}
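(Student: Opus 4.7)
The plan is to combine the $t=0$ jump formula on $\Gamma_1$ established in the preceding lemma with (i) the time evolution coming from the $t$-part of the modified Lax pair, (ii) an analogous calculation for $\Gamma_4$ using the adjoint eigenfunctions, and (iii) the $\mathbb{Z}_3$ rotation symmetry of $\mathcal{M}$ to propagate the jumps to the four remaining rays.

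\emph{Step 1 (time dependence on $\Gamma_1$).} The eigenfunctions $\mathcal{M}_n(x,t,k)$ satisfy both equations of (\ref{lax-pde}), and a standard compatibility argument (differentiating the relation $Y_1 = Y_2 e^{x\widehat{\mathcal{L}(k)}} s(k)$ in $t$ and using the asymptotic normalizations of $Y_1, Y_2$) shows that $s(k)$ and $s^A(k)$ are $t$-independent. Consequently the conjugation $e^{x\widehat{\mathcal{L}(k)}}$ appearing in the $t=0$ formula $\mathsf{V}_1(x,0,k)=e^{x\widehat{\mathcal{L}(k)}}[S_6(k)^{-1}S_1(k)]$ is promoted to $e^{x\widehat{\mathcal{L}(k)}+t\widehat{\mathcal{Z}(k)}}$, so each $(i,j)$ entry of $S_6(k)^{-1}S_1(k)$ is multiplied by the exponential $e^{\Theta_{ij}(x,t,k)}$ defined in (\ref{theta ij}). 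This yields $\mathsf{V}_1$ exactly as stated.

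\emph{Step 2 (the ray $\Gamma_4$).} Repeating the derivation of the preceding lemma but on the sectors $\Omega_3,\Omega_4$ and using the adjoint eigenfunctions $Y_1^A, Y_2^A$ together with the adjoint scattering matrix $s^A(k)$, one finds that the natural reflection coefficient on this ray is $\rho_1(k)=s_{12}(k)/s_{11}(k)$, defined for $k<0$. The computation is structurally identical to Step 1 and produces the stated $\mathsf{V}_4$.

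\emph{Step 3 (rays $\Gamma_2,\Gamma_3,\Gamma_5,\Gamma_6$ via $\mathbb{Z}_3$).} Applying the rotation symmetry (\ref{symmetry of M}) to both $\mathcal{M}_+$ and $\mathcal{M}_-$ at the point $\omega k$ and comparing with $\mathcal{M}_+=\mathcal{M}_-\mathsf{V}$ at $k$ gives
\[
\mathsf{V}_{j+2}(x,t,k)=\mathcal{A}\,\mathsf{V}_j(x,t,\omega k)\,\mathcal{A}^{-1},\qquad j\bmod 6,
\]
since multiplication by $\omega$ sends $\Gamma_j$ to $\Gamma_{j+2\bmod 6}$. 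Using $l_j(\omega k)=l_{j+1}(k)$ and $z_j(\omega k)=z_{j+1}(k)$ (indices mod $3$), one has $\Theta_{ij}(x,t,\omega k)=\Theta_{i+1,j+1}(x,t,k)$, while conjugation by $\mathcal{A}$ cyclically permutes rows and columns. A direct entry-by-entry verification then generates $\mathsf{V}_3,\mathsf{V}_5$ from $\mathsf{V}_1$ and $\mathsf{V}_2,\mathsf{V}_6$ from $\mathsf{V}_4$, matching (\ref{jump matrix v}); the $\mathbb{Z}_2$ symmetry in (\ref{symmetry of M}) provides an independent consistency check on the complex-conjugated entries such as $\rho_2^*(k)$ and $\rho_1^*(\omega k)$.

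The principal obstacle is the bookkeeping in Step 3: tracking which entry of $\mathsf{V}_j$ lands in which entry of $\mathsf{V}_{j+2}$ after the $\mathcal{A}$-conjugation and the shift $\Theta_{ij}\mapsto\Theta_{i+1,j+1}$, together with a consistent choice of orientation on each $\Gamma_j$ (i.e., which adjacent sector carries the ``$+$'' label), must be handled carefully in order to reproduce exactly the signs, indices, and conjugates displayed in (\ref{jump matrix v}).
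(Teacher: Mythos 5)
Your proposal follows essentially the same route the paper takes (implicitly --- the paper states this proposition without a displayed proof): form the $t=0$ jump $e^{x\widehat{\mathcal{L}(k)}}[S_{6}(k)^{-1}S_{1}(k)]$ from the connection matrices $S_n$, promote $x\mathcal{L}$ to $x\mathcal{L}+t\mathcal{Z}$ to obtain the $e^{\Theta_{ij}}$ factors, and fill in the remaining rays by the $\mathbb{Z}_3$ symmetry after one further direct computation on a ray in the other $\mathbb{Z}_3$-orbit such as $\Gamma_4$. The one caveat is the index/rotation direction in your symmetry relation: since $\omega\Gamma_j=\Gamma_{j+2}$, applying (\ref{symmetry of M}) at $k\in\Gamma_j$ gives $\mathsf{V}_j(k)=\mathcal{A}\,\mathsf{V}_{j+2}(\omega k)\,\mathcal{A}^{-1}$, i.e.\ $\mathsf{V}_{j+2}(k)=\mathcal{A}^{-1}\mathsf{V}_j(\omega^{2}k)\mathcal{A}$ for $k\in\Gamma_{j+2}$, whereas your formula $\mathsf{V}_{j+2}(k)=\mathcal{A}\,\mathsf{V}_j(\omega k)\,\mathcal{A}^{-1}$ evaluates $\mathsf{V}_j$ at a point of $\Gamma_{j+4}$ (where $\rho_1$, $\rho_2$ are not defined) --- this is precisely the bookkeeping you flagged, and once corrected the entry-by-entry verification reproduces (\ref{jump matrix v}).
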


	Now, it's ready to construct the Riemann-Hilbert problem corresponding to the initial value condition of the Hirota-Satsuma equation (\ref{HS eq}).
	
	\begin{RH problem}[RH problem for the function $\mathcal{M}(x,t,k)$]\label{RH problem for M}
		The matrix-valued function $\mathcal{M}(x,t,k)$ satisfies the properties below
		\begin{enumerate}[$(a)$]
			\item $\mathcal{M}(x,t,k)$ is analytic for $k\in\mathbb{C}\setminus \Gamma$.
		\end{enumerate}
		
		\begin{enumerate}[$(b)$]
			\item For $k\in\Gamma\setminus\{0\}$, $\mathcal{M}(x,t,k)$ meets the jump condition in Proposition \ref{jump proposition}.
		\end{enumerate}
		
		\begin{enumerate}[$(c)$]
			\item If $k\not\in\Gamma$ and $k\to\infty$, the following expansion for function $\mathcal{M}(x,t,k)$ holds
			\begin{equation}\label{hs k to infty}
				\mathcal{M}(x,t,k)=I+\frac{\mathcal{M}_{\infty}^{(1)}(x,t)}{k}+O(\frac{1}{k^{2}}),
			\end{equation}
			where $\mathcal{M}_{\infty}^{(1)}$ is independent of $k$ and satisfies
			\begin{equation}\label{hs k to infty,M1}
				[\mathcal{M}_{\infty}^{(1)}]_{12}+[\mathcal{M}_{\infty}^{(1)}]_{13}=0.
			\end{equation}
		\end{enumerate}
		
		\begin{enumerate}[$(d)$]
			\item There are three-order matrices $\{\mathcal{M}_{1}^{(l)}(x,t)\}_{l=-1}^{\infty}$  independent of $k$ and satisfying
			\begin{equation}\label{hs k to 0}
				\mathcal{M}(x,t,k)=\frac{\mathcal{M}_{1}^{(-1)}(x,t)}{k}+\mathcal{M}_{1}^{(0)}(x,t)+\mathcal{M}_{1}^{(1)}(x,t)k+O(k^2),
			\end{equation}
			as $k\to 0$ and $k\in\Omega_{1} $. Moreover, there is a real-valued function $\alpha_{\mathcal{M}}(x,t)$ such that
			\begin{equation}
				\mathcal{M}_{1}^{(-1)}(x,t)=\alpha_{\mathcal{M}}(x,t)\begin{pmatrix}
					0 & 0 & \omega\\
					0 & 0 & \omega^{2}\\
					0 & 0 & 1
				\end{pmatrix}.
			\end{equation}
		\end{enumerate}		
		
		\begin{enumerate}[$(e)$]
			\item $\mathcal{M}(x,t,k)$ follows the $\mathbb{Z}_{2}$ and $\mathbb{Z}_{3}$ symmetries
			\begin{equation}\label{symmetry of hs}
\mathcal{A}\mathcal{M}(x,t,\omega k)\mathcal{A}^{-1}=\mathcal{M}(x,t,k)=\mathcal{B}\overline{\mathcal{M}(x,t,\bar{k})}\mathcal{B},\qquad  k\in\mathbb{C}\setminus\Gamma.
			\end{equation}
		\end{enumerate}
	\end{RH problem}
	Assuming the matrix function $\mathcal{M}(x,t,k)$ satisfies the RH problem \ref{RH problem for M} and $\{u,v\}$  is a Schwartz class solution of the Hirota-Satsuma equation (\ref{HS eq}), then $\{u,v\}$ can be reconstructed via
	\begin{equation}\label{reconstructed-1}
		\begin{cases}
			u(x,t)=3(1-\omega^{2})\lim\limits_{k\to\infty}\mathcal{M}_{13}(x,t,k),\\
			v(x,t)=-3\frac{\partial}{\partial_{x}}\lim\limits_{k\to\infty}\big(\mathcal{M}_{33}(x,t,k)-1\big)-\frac{u^2(x,t)}{3}.
		\end{cases}
	\end{equation}
	\par
	Because the function $\mathcal{M}(x,t,k)$ has first-order singularity at $k=0$, introduce two new functions $\mathcal{N}(x,t,k)=(\omega,\omega^2,1)\mathcal{M}(x,t,k)$ and $\mathcal{Q}(x,t,k)=(1,1,1)\mathcal{M}(x,t,k)$ to remove this singularity. Then the Riemann-Hilbert problems for $\mathcal{N}(x,t,k)$ and $\mathcal{Q}(x,t,k)$ are listed as follow.
	
	\begin{RH problem}[RH problems for $\mathcal{N}(x,t,k)$ and $\mathcal{Q}(x,t,k)$]\label{RH problem for N,Q}
		$\mathcal{N}(x,t,k)$ and $\mathcal{Q}(x,t,k)$ are two $1\times 3$-row-vector valued functions satisfying
		\begin{enumerate}[$(a)$]
			\item $\mathcal{N}(x,t,k)$ and $\mathcal{Q}(x,t,k)$ are analytic for $k\in\mathbb{C}\setminus\Gamma$
		\end{enumerate}
		
		\begin{enumerate}[$(b)$]
			\item For $k\in\Gamma\setminus\{0\}$, $\mathcal{N}(x,t,k)$ and $\mathcal{Q}(x,t,k)$ satisfy the jump conditions
			\begin{equation}
				\mathcal{N}_{+}(x,t,k)=\mathcal{N}_{-}(x,t,k)\mathsf{V}(x,t,k),\quad \mathcal{Q}_{+}(x,t,k)=\mathcal{Q}_{-}(x,t,k)\mathsf{V}(x,t,k).
			\end{equation}
		\end{enumerate}
		
		\begin{enumerate}[$(c)$]
			\item $\mathcal{N}(x,t,k)=(\omega,\omega^2,1)+O(k^{-1})$  and $\mathcal{Q}(x,t,k)=(1,1,1)+O(k^{-1})$, as $k\to\infty$.
		\end{enumerate}
		
		\begin{enumerate}[$(d)$]
			\item  $\mathcal{N}(x,t,k)=O(1)$  and $\mathcal{Q}(x,t,k)=O(1)$, as $k\to 0$.
		\end{enumerate}
	\end{RH problem}
	Now the RH problems for $\mathcal{N}(x,t,k)$ and $\mathcal{Q}(x,t,k)$ are regular at the origin $k=0$. Furthermore, the reconstruction formula in (\ref{reconstructed-1}) can be rewritten via the equalities
	\begin{equation}\label{fanjie u v}
		\begin{cases}
			u(x,t)=3\lim\limits_{k\to\infty}k\big[\mathcal{N}_{3}(x,t,k)-\mathcal{Q}_{3}(x,t,k)\big],\\
			v(x,t)=3\frac{\partial}{\partial x}\lim\limits_{k\to\infty}k\big[\mathcal{N}_{3}(x,t,k)-1\big]-\frac{u^2(x,t)}{3}.
		\end{cases}
	\end{equation}

	\section{\bf{Miura transformations}}\label{Section-3}
	
	This section builds the Miura transformations among the Hirota-Satsuma equation (\ref{HS eq}), good Boussinesq equation (\ref{gb eq}) and the modified Boussinesq equation (\ref{mb eq}) based on the relationships of their RH problems. Firstly, the correspondence between the Hirota-Satsuma equation (\ref{HS eq}) and the good Boussinesq equation (\ref{gb eq}) is considered in detail, then the similar analysis will be conducted in the correspondence between the Hirota-Satsuma equation (\ref{HS eq}) and the modified Boussinesq equation (\ref{mb eq}).

	\subsection{Relations with the good Boussinesq equation (\ref{gb eq})}
	Here, the RH problem for the good Boussinesq equation (\ref{gb eq}) is given directly by modifying the jump conditions in the former work \cite{Charlier-Lenells-2021,Charlier-Lenells-Wang-2021}.
	
	\begin{RH problem}[RH problem for the good Boussinesq equation]\label{RH problem for gb}
		The matrix-valued function ${M}(x,t,k)$ satisfies the characteristics below
		\begin{enumerate}[$(a)$]
			\item ${M}(x,t,k)$ is analytic for $k\in\mathbb{C}\setminus \Gamma$.
		\end{enumerate}
		
		\begin{enumerate}[$(b)$]
			\item For $k\in\Gamma\setminus\{0\}$, ${M}(x,t,k)$ meets the jump condition \begin{equation}
				M_{+}(x,t,k)=M_{-}(x,t,k)\mathsf{V}(x,t,k),
			\end{equation}
			where $\mathsf{V}(x,t,k)=\mathsf{V}_{j}(x,t,k)~(j=1,2,\cdots,6)$ are defined in (\ref{jump matrix v}).
		\end{enumerate}
		
		\begin{enumerate}[$(c)$]
			\item If $k\not\in\Gamma$ and $k\to\infty$,  ${M}(x,t,k)$ behaves
			\begin{equation}\label{gb k to infty}
				{M}(x,t,k)=I+\frac{{M}_{\infty}^{(1)}(x,t)}{k}+O(\frac{1}{k^{2}}),
			\end{equation}
			where ${M}_{\infty}^{(1)}$ is independent of $k$ and satisfies
			\begin{equation}\label{M1 infty}
				[{M}_{\infty}^{(1)}]_{12}=[{M}_{\infty}^{(1)}]_{13}=0.
			\end{equation}
		\end{enumerate}
		
		\begin{enumerate}[$(d)$]
			\item There are three-order matrices $\{{M}_{1}^{(l)}(x,t)\}_{l=-2}^{\infty}$  independent of $k$ and satisfying
			\begin{equation}\label{gb k to 0}
				{M}(x,t,k)=\frac{{M}_{1}^{(-2)}(x,t)}{k^{2}}+\frac{{M}_{1}^{(-1)}(x,t)}{k}+{M}_{1}^{(0)}(x,t)+{M}_{1}^{(1)}(x,t)k+O(k^2),
			\end{equation}
			as $k\to 0$ and $k\in\Omega_{1} $. Moreover, there is a real-valued function $\alpha_{M}(x,t)$ such that
			\begin{equation}\label{M-alpha-M}
				{M}_{1}^{(-2)}(x,t)=\alpha_{M}(x,t)\begin{pmatrix}
					0 & 0 & 1\\
					0 & 0 & 1\\
					0 & 0 & 1
				\end{pmatrix}.
			\end{equation}
		\end{enumerate}		
		
		\begin{enumerate}[$(e)$]
			\item ${M}(x,t,k)$ follows the $\mathbb{Z}_{2}$ and $\mathbb{Z}_{3}$ symmetries
			\begin{equation}\label{symmetry of gb}
\mathcal{A}{M}(x,t,\omega k)\mathcal{A}^{-1}={M}(x,t,k)=\mathcal{B}\overline{{M}(x,t,\bar{k})}\mathcal{B},\qquad  k\in\mathbb{C}\setminus\Gamma.
			\end{equation}
		\end{enumerate}
	\end{RH problem}
	
	Letting $w$ be a solution of the good Boussinesq equation (\ref{good boussinesq}) or (\ref{gb eq}) in Schwartz space, $w(x,t)$ can be reconstructed via the equality
	\begin{equation}\label{solution w}
		w(x,t)=\frac{3}{2}\frac{\partial}{\partial x}\lim\limits_{k\to\infty}\left(M_{33}(x,t,k)-1\right).
	\end{equation}
	
	The next theorem expounds the correspondences of RH problems and solutions between the Hirota-Satsuma equation (\ref{HS eq}) and the good Boussinesq equation (\ref{gb eq}).

	\begin{theorem}\label{RHP between gb and hs}
		Suppose the function ${M}(x,t,k)$ solves the RH problem \ref{RH problem for gb} and $\mathcal{M}(x,t,k)$ solves the RH problem \ref{RH problem for M} for $x\in\mathbb{R}$ and $t\in[0,\infty)$, then the following correspondences hold:
		\begin{enumerate}[$(1)$]
			\item There is a complex-valued matrix $A_{1}(x,t)$ that connects the RH problem \ref{RH problem for M} with the RH problem \ref{RH problem for gb} in such way
			\begin{equation}\label{M new definition}
				M(x,t,k)=(I+\frac{A_1(x,t)}{k})\mathcal{M}(x,t,k),
			\end{equation}
			where
			\begin{equation}\label{A1}
				A_{1}(x,t)=\frac{u(x,t)}{9}\begin{pmatrix}
					0 & -\omega(\omega+2) & \omega(\omega+2)\\
					\omega+2 & 0 & -(\omega+2)\\
					-\omega^2(\omega+2) & \omega^2(\omega+2) & 0\\
				\end{pmatrix}.
			\end{equation}
		\end{enumerate}
		\begin{enumerate}[$(2)$]
			\item The solution $w$ of the good Boussinesq equation (\ref{gb eq}) is related to the solution $\{u,v\}$ of the Hirota-Satsuma equation (\ref{HS eq}) via the Miura transformation	
	\begin{equation}\label{HS-to-GB}
		\begin{cases}
			w=-\frac{1}{6}u^2-\frac{1}{2}v,\\
			h=\frac{1}{3}u_{xx}-\frac{2}{27}u^{3}-\frac{1}{3}uv-\frac{1}{2}v_{x},
		\end{cases}
	\end{equation}
		\end{enumerate}
	\end{theorem}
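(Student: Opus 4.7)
The plan is to set
\[
\tilde M(x,t,k) \;:=\; \Bigl(I + \tfrac{A_1(x,t)}{k}\Bigr)\,\mathcal{M}(x,t,k),
\]
verify that $\tilde M$ satisfies every clause of RH problem \ref{RH problem for gb}, and then invoke uniqueness of that problem to conclude $\tilde M\equiv M$; the Miura transformation \eqref{HS-to-GB} will drop out by comparing the $k\to\infty$ expansions of $\tilde M=M$ and $\mathcal{M}$.

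Clauses (a) and (b) (analyticity off $\Gamma$ and the jump condition) are inherited from RH problem \ref{RH problem for M}: the left factor $I+A_1/k$ depends only on $k$, so it cancels across every jump. The symmetries \eqref{symmetry of gb} follow from \eqref{symmetry of hs} once one checks the algebraic identities $\mathcal{A}A_1\mathcal{A}^{-1}=\omega A_1$ and $\mathcal{B}\overline{A_1}\mathcal{B}=A_1$ for the specific $A_1$ in \eqref{A1}; both are direct consequences of $1+\omega+\omega^2=0$ applied entrywise.

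The real content of the theorem lies in the simultaneous matching at $k=\infty$ and $k=0$. At infinity,
\[
\tilde M = I+\frac{A_1+\mathcal{M}_\infty^{(1)}}{k}+O(k^{-2}),
\]
and one must upgrade the scalar constraint \eqref{hs k to infty,M1} to the stronger \eqref{M1 infty}, i.e.\ arrange for both $(1,2)$ and $(1,3)$ entries to vanish, not merely their sum. Combining the $\mathbb{Z}_3$ covariance of $\mathcal{M}_\infty^{(1)}$ with the reconstruction \eqref{reconstructed-1} expresses $[\mathcal{M}_\infty^{(1)}]_{13}$ as an explicit multiple of $u(x,t)$, and a short computation using $\omega(\omega+2)^2=-3$ shows that the $A_1$ in \eqref{A1} annihilates both off-diagonal entries at once. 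At the origin, one uses \eqref{hs k to 0} together with the fact that $\mathcal{M}_1^{(-1)}$ has only its third column nonzero (proportional to $(\omega,\omega^2,1)^T$ by part (d) of RH problem \ref{RH problem for M}); then $A_1\mathcal{M}_1^{(-1)}$ automatically takes the shape required by \eqref{M-alpha-M}, with $\alpha_M=-\tfrac{1}{3}u\alpha_\mathcal{M}$, which is real. The main obstacle I foresee is that these two matchings --- one at $\infty$, one at $0$ --- are a priori independent conditions on $A_1$, yet must be solved simultaneously by a \emph{single} $u$-dependent matrix; that this works is the compatibility that encodes the Miura mechanism.

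Once $\tilde M=M$ by uniqueness, the identity $[M_\infty^{(1)}]_{33}=[\mathcal{M}_\infty^{(1)}]_{33}$ (since $[A_1]_{33}=0$) combined with the $w$-reconstruction \eqref{solution w} and the $v$-reconstruction \eqref{reconstructed-1} yields $w=-\tfrac{1}{6}u^2-\tfrac{1}{2}v$ immediately. The companion formula for $h$ follows either by reading off the next coefficient in the large-$k$ expansion of $M=\tilde M$, or more economically by substituting the $w$-formula into the conservation law $w_t=h_x$ from \eqref{gb eq} and integrating once in $x$ using the Hirota-Satsuma system \eqref{HS eq}; the $x$-derivative of the claimed $h$ matches $w_t$ term by term, and the integration constant is fixed by the decay at $|x|\to\infty$.
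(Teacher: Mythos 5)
Your proposal follows essentially the same route as the paper: define $\tilde M=(I+A_1/k)\mathcal{M}$, check analyticity, jumps, symmetries, the expansions at $k=\infty$ and $k=0$, and then read off the Miura transformation from the reconstruction formulas \eqref{solution w} and \eqref{reconstructed-1}. You supply a few details the paper leaves implicit (the explicit uniqueness step identifying $\tilde M$ with $M$, the verification via $\omega(\omega+2)^2=-3$ that both off-diagonal entries of $M_\infty^{(1)}$ vanish, and the derivation of the $h$-component), but the argument is the same in substance and is correct.
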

	\begin{proof}
		Suppose the function $\mathcal{M}(x,t,k)$ satisfies the RH problem \ref{RH problem for M}, then the function ${M}(x,t,k)$ defined in the equation (\ref{M new definition}) will be proved to satisfy the RH problem \ref{RH problem for gb} in the subsequent analysis.
		\par
		The analyticity and  the jump condition of the function ${M}(x,t,k)$ follow from the analyticity and  the jump condition of the function $\mathcal{M}(x,t,k)$ obviously. The facts below
		\begin{equation}
			A_1(x,t)=\omega^2\mathcal{A}A_1(x,t)\mathcal{A}^{-1},\quad A_1(x,t)=\mathcal{B}\overline{A_1(x,t)}\mathcal{B},
		\end{equation}
		and the symmetries in (\ref{symmetry of M}) of the function $\mathcal{M}(x,t,k)$ imply the function ${M}(x,t,k)$ satisfies the symmetries in (\ref{symmetry of gb}).
		\par
		Take the asymptotic expression (\ref{hs k to infty}) of  $\mathcal{M}(x,t,k)$ into the equation (\ref{M new definition}), then ${M}(x,t,k)$ obeys the asymptotics
		\begin{equation}\label{new expression of M k to infty}
			\begin{aligned}
				M(x,t,k)&=\left(I+\frac{A_1(x,t)}{k}\right)\left(I+\frac{\mathcal{M}_{\infty}^{(1)}(x,t)}{k}+O(\frac{1}{k^{2}})\right)\\
				&=I+\frac{A_1(x,t)+\mathcal{M}_{\infty}^{(1)}(x,t)}{k}+O(\frac{1}{k^{2}}),\qquad k\to\infty.
			\end{aligned}
		\end{equation}
		It's easy to find that
		\begin{equation}\label{eq1}
			{M}_{\infty}^{(1)}(x,t)=A_1(x,t)+\mathcal{M}_{\infty}^{(1)}(x,t)
		\end{equation}
		and ${M}_{\infty}^{(1)}(x,t)$ satisfies the condition (\ref{M1 infty}).
		\par
		As $k\to0$ and $k\in\Omega_1$, the function $\mathcal{M}(x,t,k)$ permits the expansion (\ref{hs k to 0}) which leads to
		 \begin{equation}\label{new expression of M k to 0}
		 	\begin{aligned}
		 		M(x,t,k)&=\left(I+\frac{A_1(x,t)}{k}\right)\left(\frac{\mathcal{M}_{1}^{(-1)}(x,t)}{k}+\mathcal{M}_{1}^{(0)}(x,t)+\mathcal{M}_{1}^{(1)}(x,t)k+O(k^2)\right)\\
		 		&=\frac{A_1(x,t)\mathcal{M}_{1}^{(-1)}(x,t)}{k^2}+\frac{A_1(x,t)\mathcal{M}_{1}^{(0)}(x,t)+\mathcal{M}_{1}^{(-1)}(x,t)}{k}+\cdots,\quad k\to0.
		 	\end{aligned}
		 \end{equation}
		This implies that $\alpha_{M}=-\frac{u}{3}\alpha_{\mathcal{M}}$ in (\ref{M-alpha-M}) and
		\begin{equation}\label{eq2}
			{M}_{1}^{(-2)}(x,t)=A_1(x,t)\mathcal{M}_1^{(-1)}(x,t).
		\end{equation}
		\par
		The above analysis has proved that ${M}(x,t,k)$ defined in the equation (\ref{M new definition}) obeys the RH problem \ref{RH problem for gb}. Inserting the equality (\ref{eq1}) into the reconstructed formula (\ref{solution w}) and considering the equations (\ref{hs k to infty}) and (\ref{hs k to infty,M1}),  the Miura transformation (\ref{HS-to-GB}) can be achieved without much effort.
	\end{proof}

	\subsection{Relations with modified Boussinesq equation} Similarly, the correspondences of RH problems and solutions between the Hirota-Satsuma equation (\ref{HS eq}) and the modified Boussinesq equation (\ref{mb eq}) can also be formulated.

	\begin{RH problem}[RH problem for the modified Boussinesq equation (\ref{mb eq})]\label{RH problem for mb}
		The matrix-valued function $m(x,t,k)$ satisfies the properties below
		\begin{enumerate}[$(a)$]
			\item $m(x,t,k)$ is analytic for $k\in\mathbb{C}\setminus \Gamma$.
		\end{enumerate}
		
		\begin{enumerate}[$(b)$]
			\item For $k\in\Gamma$, $m(x,t,k)$ meets the jump condition \begin{equation}
				m_{+}(x,t,k)=m_{-}(x,t,k)\mathsf{V}(x,t,k)
			\end{equation}
			where $\mathsf{V}(x,t,k)=\mathsf{V}_{j}(x,t,k)~(j=1,2,\cdots,6)$ are defined in (\ref{jump matrix v}).
		\end{enumerate}
		
		\begin{enumerate}[$(c)$]
			\item If $k\not\in\Gamma$ and $k\to\infty$, $m(x,t,k)$ behaves
			\begin{equation}
				m(x,t,k)=I+\frac{{m}_{\infty}^{(1)}(x,t)}{k}+O(\frac{1}{k^{2}}).
			\end{equation}
		\end{enumerate}
		
		\begin{enumerate}[$(d)$]
			\item There are three-order matrces $\{m_{1}^{(l)}(x,t)\}_{l=0}^{\infty}$  independent of $k$ and satisfying
			\begin{equation}
				m(x,t,k)=m_{1}^{(0)}(x,t)+m_{1}^{(1)}(x,t)k+O(k^2),
			\end{equation}
			as $k\to 0$ and $k\in\Omega_{1} $.
		\end{enumerate}		
		
		\begin{enumerate}[$(e)$]
			\item $m(x,t,k)$ follows the $\mathbb{Z}_{2}$ and $\mathbb{Z}_{3}$ symmetries
			\begin{equation}\label{symmetry of mb}
\mathcal{A}m(x,t,\omega k)\mathcal{A}^{-1}=m(x,t,k)=\mathcal{B}\overline{m(x,t,\bar{k})}\mathcal{B},\qquad  k\in\mathbb{C}\setminus\Gamma.
			\end{equation}
		\end{enumerate}
	\end{RH problem}
	\begin{theorem}\label{RHP between hs and mgb}
		Suppose the function $\mathcal{M}(x,t,k)$ solves the RH problem \ref{RH problem for M} and ${m}(x,t,k)$ solves the RH problem \ref{RH problem for mb} for $x\in\mathbb{R}$ and $t\in[0,\infty)$, then the following correspondences hold:
		\begin{enumerate}[$(1)$]
			\item There is a complex-valued matrix $A_{2}(x,t)$ that connects the RH problem \ref{RH problem for M} with the RH problem \ref{RH problem for mb} in such way
			\begin{equation}
				\mathcal{M}(x,t,k)=(I+\frac{A_2(x,t)}{k})m(x,t,k),
			\end{equation}
			where
			\begin{equation}
				A_{2}(x,t)=-\frac{2}{3}p(x,t)\begin{pmatrix}
					\omega^2 & 1 & \omega\\
					1 & \omega  & \omega^2\\
					\omega & \omega^2 & 1\\
				\end{pmatrix}.
			\end{equation}
		\end{enumerate}
		\begin{enumerate}[$(2)$]
			\item The solution $\{u,v\}$ of the Hirota-Satsuma equation (\ref{HS eq}) is related to the solution $\{p,q\}$ of the modified Boussinesq equation (\ref{mb eq}) via the Miura transformation
			\begin{equation}\label{mGB-to-HS}
				\begin{cases}
					u=3(p-q),\\
					v=-2p^2+6pq+2p_x.
				\end{cases}
			\end{equation}
		\end{enumerate}
	\end{theorem}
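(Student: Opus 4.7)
The plan is to follow closely the argument used in the proof of Theorem \ref{RHP between gb and hs}. Beginning with a solution $m(x,t,k)$ of RH problem \ref{RH problem for mb}, I would define
$$
\mathcal{M}(x,t,k) := \left(I + \frac{A_2(x,t)}{k}\right) m(x,t,k),
$$
with $A_2$ as announced, and then verify the axioms (a)--(e) of RH problem \ref{RH problem for M} in turn. Analyticity (a) and the jump condition (b) across $\Gamma\setminus\{0\}$ are inherited from $m$, since the prefactor $I+A_2/k$ is meromorphic in $k$ with its only pole at the origin and is continuous across the jump contour; the jump matrix $\mathsf{V}$ on $\Gamma\setminus\{0\}$ is independent of $k$-cuts of the prefactor.

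For the symmetries (e), the idea is to reduce them to the two algebraic identities
$$
A_2 = \omega^2\, \mathcal{A}\, A_2\, \mathcal{A}^{-1}, \qquad A_2 = \mathcal{B}\, \overline{A_2}\, \mathcal{B},
$$
which follow directly from the circulant form of $A_2$ and the fact that $p(x,t)$ is real-valued for any Schwartz-class solution of (\ref{mb eq}); combined with the symmetries (\ref{symmetry of mb}) of $m$, these are exactly what is needed for the product $(I+A_2/k)m$ to inherit (\ref{symmetry of hs}). For the behaviour at infinity (c), expanding $m = I + m_\infty^{(1)}/k + O(k^{-2})$ gives
$$
\mathcal{M}(x,t,k) = I + \frac{A_2 + m_\infty^{(1)}}{k} + O(k^{-2}),
$$
so $\mathcal{M}_\infty^{(1)} = A_2 + m_\infty^{(1)}$; the orthogonality relation (\ref{hs k to infty,M1}) becomes a single scalar identity that, in view of the known reconstruction of the modified Boussinesq potentials from $m_\infty^{(1)}$ (cf.\ \cite{Wang-Zhu-2022}), pins down the scalar coefficient $-\frac{2}{3}p(x,t)$ in $A_2$. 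For the expansion at the origin (d), since $m$ is regular at $k=0$,
$$
\mathcal{M}(x,t,k) = \frac{A_2\, m_1^{(0)}}{k} + \bigl(m_1^{(0)} + A_2\, m_1^{(1)}\bigr) + O(k),
$$
so $\mathcal{M}_1^{(-1)} = A_2\, m_1^{(0)}$. A direct calculation combining the circulant form of $A_2$ with the $\mathbb{Z}_3$ symmetry of $m_1^{(0)}$ shows that this product has only its third column nonzero and is proportional to the matrix stipulated in (d), thereby defining the real-valued function $\alpha_\mathcal{M}(x,t)$.

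For the Miura transformation (\ref{mGB-to-HS}), I would substitute $\mathcal{M}_\infty^{(1)} = A_2 + m_\infty^{(1)}$ into the reconstruction formulas (\ref{reconstructed-1}) (or equivalently the regularised form (\ref{fanjie u v})). Reading off the relevant entries of $A_2$ from its explicit form and using the parallel reconstruction of $\{p,q\}$ from $m_\infty^{(1)}$ for the modified Boussinesq equation yields $u = 3(p-q)$ after elementary algebraic simplification; the expression for $v$ then follows by differentiating the relation for $[\mathcal{M}_\infty^{(1)}]_{33}$ with respect to $x$ and substituting $u = 3(p-q)$ into the $-u^2/3$ term in (\ref{reconstructed-1}), which produces the additional $2p_x$ contribution. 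The main obstacle I foresee is the bookkeeping in step (d): verifying that $A_2\, m_1^{(0)}$ lies in the rank-one subspace spanned by the column $(\omega,\omega^2,1)^T$ requires combining several symmetries of $m_1^{(0)}$ simultaneously, and care must be taken that the resulting $\alpha_\mathcal{M}$ is real, as demanded by RH problem \ref{RH problem for M}(d).
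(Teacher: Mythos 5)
The paper states this theorem without proof, indicating only that it is obtained ``similarly'' to the Hirota-Satsuma/good-Boussinesq case; your proposal reproduces exactly that intended template (define $\mathcal{M}=(I+\frac{A_2}{k})m$, verify axioms $(a)$--$(e)$ of RH problem \ref{RH problem for M} using the algebraic identities $A_2=\omega^{2}\mathcal{A}A_2\mathcal{A}^{-1}=\mathcal{B}\overline{A_2}\mathcal{B}$ and the nilpotency/rank-one structure of $A_2$, then read off the Miura transformation from the reconstruction formulas), so it is essentially the same approach as the paper's proof of Theorem \ref{RHP between gb and hs}. The delicate point you flag --- that showing $\mathcal{M}_1^{(-1)}=A_2\,m_1^{(0)}$ has the prescribed rank-one form $\alpha_{\mathcal{M}}(x,t)\,(\omega,\omega^{2},1)^{T}e_3^{T}$ requires structural information on $m_1^{(0)}$ beyond the bare axioms of RH problem \ref{RH problem for mb} --- is real, but it is equally left unaddressed by the paper, which supplies no proof at all for this statement.
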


\subsection{Relations between the good Boussinesq equation and the modified Boussinesq equation}
The next theorem which corresponds to the function ${M}(x,t,k)$ for RH problem \ref{RH problem for gb} and function ${m}(x,t,k)$ for RH problem \ref{RH problem for mb} is inferred from the Theorem \ref{RHP between gb and hs} and Theorem  \ref{RHP between hs and mgb} directly.
	\begin{theorem}\label{RHP between gb and mb}
	Suppose the function ${M}(x,t,k)$ solves the RH problem \ref{RH problem for gb} and ${m}(x,t,k)$ solves the RH problem \ref{RH problem for mb} for $x\in\mathbb{R}$ and $t\in[0,\infty)$, then the following correspondences hold:
	\begin{enumerate}[$(1)$]
		\item There are complex-valued matrices $A_{3}(x,t)$ and $A_{4}(x,t)$ that connect the RH problem \ref{RH problem for gb} with the RH problem \ref{RH problem for mb} in such way
		\begin{equation}
			{M}(x,t,k)=(I+\frac{A_3(x,t)}{k}+\frac{A_{4}(x,t)}{k^{2}})m(x,t,k),
		\end{equation}
		where
		\begin{equation}
			\begin{aligned}
			A_{3}(x,t)&=-\frac{2}{3}p\begin{pmatrix}
				\omega^2 & 1 & \omega\\
				1 & \omega  & \omega^2\\
				\omega & \omega^2 & 1\\
			\end{pmatrix}+\frac{p-q}{3}\begin{pmatrix}
			0 & -\omega(\omega+2) & \omega(\omega+2)\\
			\omega+2 & 0 & -(\omega+2)\\
			-\omega^2(\omega+2) & \omega^2(\omega+2) & 0\\
			\end{pmatrix},\\
			A_{4}(x,t)&=\frac{2p(p-q)}{3}\begin{pmatrix}
				\omega & \omega^2 & 1\\
				\omega & \omega^2 & 1\\
				\omega & \omega^2 & 1\\
			\end{pmatrix}.
			\end{aligned}
		\end{equation}
	\end{enumerate}
	\begin{enumerate}[$(2)$]
		\item The solution $w$ of the good Boussinesq equation (\ref{good boussinesq}) is related to the solution $\{p,q\}$ of the modified Boussinesq equation (\ref{mb eq}) via the Miura transformation
	\begin{equation}\label{mGB-to-GB}
		\begin{cases}
			w=-p_{x}-\frac{1}{2}p^{2}-\frac{3}{2}q^{2},\\
			h=-q_{xx}-3pq_{x}-qp_{x}-2qp^{2}+2q^{3}.
		\end{cases}
	\end{equation}

	\end{enumerate}
\end{theorem}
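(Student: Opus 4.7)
The plan is to derive Theorem \ref{RHP between gb and mb} as a direct composition of the two preceding theorems. From Theorem \ref{RHP between gb and hs}, we have $M(x,t,k) = (I + A_1(x,t)/k)\mathcal{M}(x,t,k)$, and from Theorem \ref{RHP between hs and mgb}, $\mathcal{M}(x,t,k) = (I + A_2(x,t)/k)\,m(x,t,k)$. Substituting the latter into the former yields
\begin{equation*}
M(x,t,k) = \left(I + \frac{A_1}{k}\right)\left(I + \frac{A_2}{k}\right) m(x,t,k) = \left(I + \frac{A_1 + A_2}{k} + \frac{A_1 A_2}{k^2}\right) m(x,t,k),
\end{equation*}
so it is natural to set $A_3 := A_1 + A_2$ and $A_4 := A_1 A_2$. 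The claim then reduces to matching these with the explicit formulas in the statement. The analyticity, jump condition, symmetries, and the asymptotics at $k \to \infty$ and $k \to 0$ for the new factor are inherited automatically from the corresponding properties already established for $M$ and $\mathcal{M}$.

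To verify $A_3$, I would substitute the Miura relation $u = 3(p-q)$ from (\ref{mGB-to-HS}) into the expression (\ref{A1}) for $A_1$. The factor $u/9$ becomes $(p-q)/3$, reproducing precisely the second summand of the stated $A_3$, while the first summand is just $A_2$. For $A_4 = A_1 A_2$, the $3 \times 3$ product must be carried out entrywise, using $1 + \omega + \omega^2 = 0$ (equivalently $\omega + 2 = 1 - \omega^2$) to collapse the cubic-root-of-unity combinations. The product turns out to have rank one, with three identical rows proportional to $(\omega, \omega^2, 1)$, and after using $u = 3(p-q)$ once more the overall scalar becomes $2p(p-q)/3$, matching the stated formula.

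For part (2), I would simply substitute the Miura transformation (\ref{mGB-to-HS}) into (\ref{HS-to-GB}). Plugging $u = 3(p-q)$ and $v = -2p^2 + 6pq + 2p_x$ into $w = -u^2/6 - v/2$ gives
\begin{equation*}
w = -\tfrac{3}{2}(p-q)^2 + p^2 - 3pq - p_x = -p_x - \tfrac{1}{2}p^2 - \tfrac{3}{2}q^2,
\end{equation*}
which is the first formula in (\ref{mGB-to-GB}). The second formula, for $h$, follows from the same substitution into $h = \tfrac{1}{3}u_{xx} - \tfrac{2}{27}u^3 - \tfrac{1}{3}uv - \tfrac{1}{2}v_x$; here the $p_{xx}$ contributions from $u_{xx}$ and $v_x$ cancel cleanly (no use of the evolution equation (\ref{mb eq}) is required), and expanding $(p-q)^3$ and $(p-q)(-2p^2 + 6pq + 2p_x)$ produces exactly $h = -q_{xx} - 3pq_x - q p_x - 2qp^2 + 2q^3$. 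The main obstacle throughout is purely bookkeeping: careful tracking of sign conventions, cubic-root-of-unity identities, and the cancellation pattern in the matrix product for $A_4$; no new analytic ingredient is needed beyond Theorems \ref{RHP between gb and hs} and \ref{RHP between hs and mgb}.
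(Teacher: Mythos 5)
Your proposal is correct and is exactly the route the paper intends: the paper states that Theorem \ref{RHP between gb and mb} ``is inferred from Theorem \ref{RHP between gb and hs} and Theorem \ref{RHP between hs and mgb} directly,'' i.e., by composing $M=(I+A_1/k)\mathcal{M}$ with $\mathcal{M}=(I+A_2/k)m$ so that $A_3=A_1+A_2$ and $A_4=A_1A_2$, and by substituting (\ref{mGB-to-HS}) into (\ref{HS-to-GB}). Your entrywise verification of $A_4$ (the product collapses to the rank-one matrix with rows $(\omega,\omega^2,1)$ and scalar $2p(p-q)/3$) and the algebraic check of (\ref{mGB-to-GB}) both hold.
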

\par
Recently, Charlier and Lenells \cite{Charlier-2021} investigated the long-time asymptotics of the Mikhailov-Lenells equation (\ref{L eq}) based on its RH problem governed by the matrix-valued function $n(x,t,k).$ Following the procedure of Charlier and Lenells \cite{Charlier-2021}, without much effort,
one can build the relations between the good Boussinesq equation (\ref{good boussinesq}) and Mikhailov-Lenells equation (\ref{L eq}) equation, i.e., the connection between the functions $M(x,t,k)$ and $n(x,t,k)$ and the Miura transformation
\begin{equation}\label{ML-to-GB}
		\begin{cases}
			w=\frac{3+\sqrt{3}\mathrm{i}}{4}r_x+\frac{3-\sqrt{3}\mathrm{i}}{4}z_{x}-\frac{3}{2}zr,\\
			h=\frac{1-\sqrt{3}\mathrm{i}}{4}r_{xx}+\frac{1+\sqrt{3}\mathrm{i}}{4}z_{xx}
+\frac{\sqrt{3}\mathrm{i}(z+r)-3r}{2}r_{x}-\frac{\sqrt{3}\mathrm{i}(z+r)+3z}{2}z_{x}+r^{3}+z^{3}.
		\end{cases}
	\end{equation}
Moreover, one can also built the connection between the functions $\mathcal{M}(x,t,k)$ and $n(x,t,k)$ and the Miura transformation between the modified Boussinesq equation (\ref{mb eq}) and Mikhailov-Lenells equation (\ref{L eq}), that is
\begin{equation}\label{ML-to-HS}
		\begin{cases}		u=-(\frac{3}{2}+\frac{3\sqrt{3}\mathrm{i}}{2})z+(-\frac{3}{2}+\frac{3\sqrt{3}\mathrm{i}}{2})r,\\
			v=(\frac{3}{2}-\frac{3\sqrt{3}\mathrm{i}}{2})z^{2}+(\frac{3}{2}+\frac{3\sqrt{3}\mathrm{i}}{2})
r^{2}-3zr-\sqrt{3}\mathrm{i}z_{x}+\sqrt{3}\mathrm{i}r_{x}.
		\end{cases}
\end{equation}
\par
Summarizing the results of inverse spectral theory in Section \ref{Section-2} and Miura transformation in Section \ref{Section-3}, the main differences among Hirota-Satsuma equation (\ref{HS eq}), good Boussinesq equation (\ref{gb eq}), modified Boussinesq equation (\ref{mb eq}) and Mikhailov-Lenells equation (\ref{L eq}) are listed in Table 1.

\begin{table}\label{Table11}
		\begin{center}
			\caption{The differences among H-S equation (\ref{HS eq}), G-B equation (\ref{gb eq}), M-B equation (\ref{mb eq}) and M-L equation (\ref{L eq}).  }
			\begin{tabular}{|c|c|c|c|}
				\hline
				Equations & Properties of eigenfunctions & Reflection coefficients at $k=0$ & RH problems \\
				\hline
		H-S (\ref{HS eq}) & Simple pole at $k=0$ & $\rho_{1}(0)=\rho_{2}(0)=\omega$& $\mathcal{M}(x,t,k)$ \\
				\hline
	G-B (\ref{gb eq}) & Double pole at $k=0$ & $\rho_{1}(0)=\omega,\rho_{2}(0)=1$& $M(x,t,k)$			\\
				\hline
				M-B (\ref{mb eq}) & No singularity at $k=0$ & $|\rho_{1}(0)|<1,|\rho_{2}(0)|<1$& $m(x,t,k)$\\
				\hline
				M-L (\ref{L eq}) & No singularity at $k=0$ & $|\rho_{1}(0)|<1,|\rho_{2}(0)|<1$& $n(x,t,k)$\\
				\hline
			\end{tabular}
		\end{center}
	\end{table}

	\section{\bf{Large-time asymptotics}}\label{Section-4}
	
	This section explores the large-time asymptotics of the initial value problem of the Hirota-Satsuma equation (\ref{HS eq}) by deforming the RH problem \ref{RH problem for M} based on Deift-Zhou nonlinear steepest-descent strategy \cite{Deift-Zhou}.
	\par
	Recall the definition of $\Theta_{ij}$ in (\ref{theta ij}) and set
	\begin{equation*}\label{phi ij}
		\theta_{ij}=\Theta_{ij}(x,t,k)/{t}=(l_{i}-l_{j})\eta+(z_{i}-z_{j}), \quad 1\le j\le i\le 3,
	\end{equation*}
	where $\eta=\frac{x}{t}$. In what follows, we first assume $x>0$ and the case for $x<0$ can also be studied similarly. Taking $\frac{\partial\theta_{ij}}{\partial k}(\eta,k)=0$, three critical points are gotten
	\begin{equation*}
		k_{21}=-\frac{\eta}{2},\quad k_{31}=-\frac{\omega\eta}{2},\quad k_{32}=-\frac{\omega^2 \eta}{2}.
	\end{equation*}
	Set $k_{0}\equiv k_{21}<0$, then $k_{31}=\omega k_{0},k_{32}=\omega^{2}k_{0}$, which are displayed in Figure \ref{critical-points}. Furthermore, the sign signature of $\mathrm{Re}(\theta_{21})$ is shown in Figure \ref{signature-points}, where $\mathrm{Re}(\theta_{21})>0$ in pink shadowed region, while $\mathrm{Re}(\theta_{21})<0$ in white region.
\par

\begin{figure}[htp]
		\centering
		\begin{tikzpicture}[>=latex]
			\draw[very thick] (-4,0) to (4,0) node[right]{$\mathbb{R}$};
			\draw[very thick] (-2,-1.732*2) to (2,1.732*2);
			\draw[very thick] (-2,1.732*2) to (2,-1.732*2);
			\filldraw[black] (-1.6,0) node[below=1mm]{$k_{0}$} circle (1.5pt);
			\filldraw[black] (.8,1.732*0.8) node[right=1mm]{$\omega^{2}k_{0}$} circle (1.5pt);
			\filldraw[black] (.8,-1.732*0.8) node[right=1mm]{$\omega k_{0}$} circle (1.5pt);
		\end{tikzpicture}
		\caption{Three critical points $k_{0},\omega k_{0},\omega^{2}k_{0}$ in the complex $k$-plane $\mathbb{C}$.}\label{critical-points}
	\end{figure}
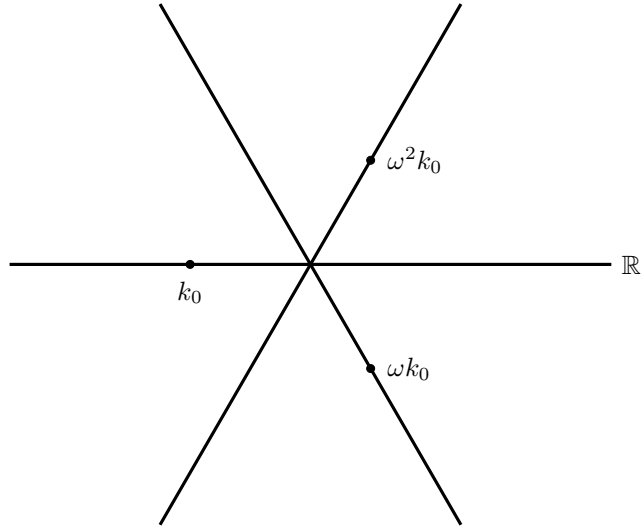

	\par
	In what follows, the RH problem \ref{RH problem for M} for $\mathcal{M}$ is deformed into RH problem for $\mathcal{M}^{(1)}$ firstly. Then the RH problem for $\mathcal{M}^{(1)}$ is deformed into RH problems for $\mathcal{M}^{(2)}$ and $\mathcal{M}^{(3)}$, respectively. Finally, a solvable model RH problem is obtained. One can write the process into the following form.
	
	\begin{RH problem}[RH problem for $\mathcal{M}^{(j)}$]\label{RHP for M}
		In the following deformations, we want to seek some three-order  matrices  $\mathcal{M}^{(j)}(x,t,k)~(j=1,2,3)$ satisfying the jump conditions
		\begin{equation*}
			\mathcal{M}^{(j)}_{+}(x,t,k)=\mathcal{M}_{-}^{(j)}(x,t,k)\mathsf{V}^{(j)}(x,t,k), \qquad k\in \Gamma^{(j)}~{\rm and}~j=1,2,3.
		\end{equation*}
	\end{RH problem}

	\begin{figure}[htp]
		\centering
		\begin{tikzpicture}[>=latex]
			\fill [purple!10!white] (-1.6,0) rectangle (4,1.732*2);
			\fill [purple!10!white] (-1.6,0) rectangle (-4,-1.732*2);
			\draw[very thick] (-4,0) to (4,0) node[right]{$\mathbb{R}$};
			\draw[very thick] (-2,-1.732*2) to (2,1.732*2);
			\draw[very thick] (-2,1.732*2) to (2,-1.732*2);
			\filldraw[black] (-1.6,0) node[below=1mm]{$k_{0}$} circle (1.5pt);
			\filldraw[black] (.8,1.732*0.8) node[right=1mm]{$\omega^{2}k_{0}$} circle (1.5pt);
			\filldraw[black] (.8,-1.732*0.8) node[right=1mm]{$\omega k_{0}$} circle (1.5pt);
			\node at (-3,-2){$\mathrm{Re}\theta_{21}>0$};
			\node at (-3,2){$\mathrm{Re}\theta_{21}<0$};
		\end{tikzpicture}
		\caption{The sign signature of $\mathrm{Re}(\theta_{21})$: $\mathrm{Re}(\theta_{21})>0$ (pink shadowed region) and $\mathrm{Re}(\theta_{21})<0$ (white region).}\label{signature-points}
	\end{figure}
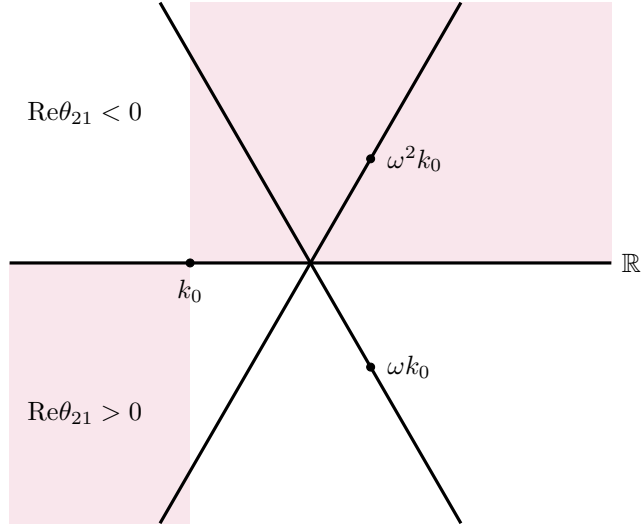

	\subsection{The first transformation: $\mathcal{M}\to \mathcal{M}^{(1)}$}
	
	The first transformation is aimed to eliminate the jumps across the sub-contours $\mathbb{R}^{-},\omega \mathbb{R}^{-}$ and $\omega^{2}\mathbb{R}^{-}$ of $\Gamma$ in Figure \ref{Figure-jump} aside from a small residue. Let us define
	\begin{equation*}\label{rho-hat}
		\hat{\rho}_{1}(k)=\frac{\rho_{1}(k)}{1-|\rho_{1}(k)|^{2}}, \qquad k<0.
	\end{equation*}
	Further, introduce four open sets $D_{j}~(j=1,2,3,4)$ shown in Figure \ref{figure D1,2,3,4} such that
	\begin{equation*}
		D_{1}\cup D_{3}=\{k|\mathrm{Re}(\theta_{21})>0\},\quad 	D_{2}\cup D_{4}=\{k|\mathrm{Re}(\theta_{21})<0\}.
	\end{equation*}

	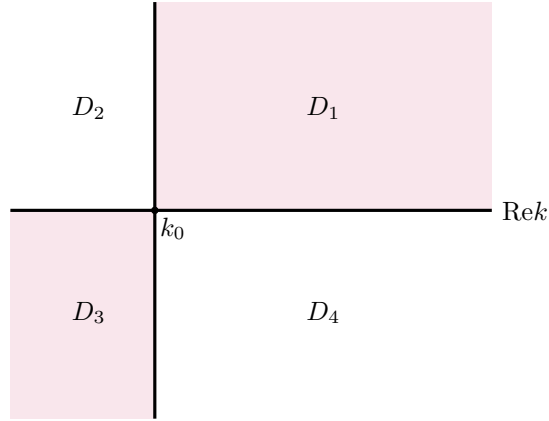
\begin{figure}[htp]
		\centering
		\begin{tikzpicture}[>=latex,scale=0.8]
			\fill [purple!10!white] (-1.6,0) rectangle (4,1.732*2);
			\fill [purple!10!white] (-1.6,0) rectangle (-4,-1.732*2);
			\draw[very thick] (-4,0) to (4,0) node[right]{$\mathrm{Re} {k}$};
			\draw[very thick] (-1.6,-1.732*2)--(-1.6,1.732*2);
			\filldraw[black] (-1.6,0) circle (1.5pt);
			\node at (-1.3,-.3){$k_{0}$};
			\node at (-2.7,-1.7){$D_{3}$};
			\node at (1.2,1.7){$D_{1}$};
			\node at (1.2,-1.7){$D_{4}$};
			\node at (-2.7,1.7){$D_{2}$};
		\end{tikzpicture}
		\caption{Four open sets $D_1, D_2, D_3, D_4$ in the complex $k$-plane $\mathbb{C}$.}
		\label{figure D1,2,3,4}
	\end{figure}
	\begin{lemma} Let $\mathcal{K}$ represent a compact subset of $(0,\infty )$ which is fixed.
		Factorize the scattering data $\rho_1,\rho_{2}$ and $\hat{\rho}_{1}(k)$ as follow:
		\begin{equation}
			\begin{aligned}
				\rho_{1}(k)&=\rho_{1,a}(x,t,k)+\rho_{1,r}(x,t,k), \quad k\in[k_{0},0],\\
				\rho_{2}(k)&=\rho_{2,a}(x,t,k)+\rho_{2,r}(x,t,k), \quad k\in[0,\infty),\\
				\hat{\rho}_{1}(k)&=\hat{\rho}_{1,a}(x,t,k)+\hat{\rho}_{1,r}(x,t,k), \quad k\in(-\infty,k_{0}),
			\end{aligned}		
		\end{equation}
		where the functions $\rho_{1,a},\rho_{1,r},\rho_{2,a},\rho_{2,r},\hat{\rho}_{1,a}$ and $\hat{\rho}_{1,r}$ satisfy the following characteristics:
		\begin{enumerate}[$(1)$]
			\item Assume $\eta\in\mathcal{K}$ and $t>0$. Then $\rho_{1,a}$ and $\rho_{2,a}$ are well-defined and continuous for $k\in\bar{D}_{4}$ and analytic for $k\in D_{4}$, and $\hat{\rho}_{1,a}$ is well-defined and continuous for $k\in \bar{D}_{3}$ and analytic for $k\in D_{3}$.
		\end{enumerate}
		\begin{enumerate}[$(2)$]
			\item For $\eta\in\mathcal{K}$ and $t>0$, $\rho_{1,a}, \rho_{2,a}$ and $\hat{\rho}_{1,a}$ have the properties below:\\
			\begin{subequations}
				\begin{align}
					&|\rho_{2,a}|\le\frac{C|k-\omega k_{0}|}{1+k^{2}}e^{\frac{t}{4}|\mathrm{Re}\theta_{21}(\eta,k)|},  &k\in\bar{D}_{4}, \\
					&|\partial_{x}^{j}(\rho_{2,a}(x,t,k)-\rho_{2}(0))|\le C|k|e^{\frac{t}{4}|\mathrm{Re}\theta_{21}(\eta,k)|},  &k\in\bar{D}_{4},\\
					&|\partial_{x}^{j}(\rho_{1,a}(x,t,k)-\rho_{1}(0))|\le C|k|e^{\frac{t}{4}|\mathrm{Re}\theta_{21}(\eta,k)|},  &k\in\bar{D}_{4},\\
					&|\partial_{x}^{j}(\rho_{1,a}(x,t,k)-\rho_{1}(k_0))|\le C|k-k_{0}|e^{\frac{t}{4}|\mathrm{Re}\theta_{21}(\eta,k)|},  &k\in\bar{D}_{4},\\
					&|\partial_{x}^{j}(\hat{\rho}_{1,a}(x,t,k)-\hat{\rho}_{1}(k_0))|\le C|k-k_{0}|e^{\frac{t}{4}|\mathrm{Re}\theta_{21}(\eta,k)|},  &k\in\bar{D}_{3},\\
					&|\partial_{x}^{j}(\hat{\rho}_{1,a}(x,t,k))|\le \frac{C}{1+|k|}e^{\frac{t}{4}|\mathrm{Re}\theta_{21}(\eta,k)|}, &k\in\bar{D}_{3},
				\end{align}
			\end{subequations}
			where $j=0,1$ and $C$ is a constant independent of $\eta,t$ and $k$.
		\end{enumerate}

	\end{lemma}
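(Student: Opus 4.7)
The plan is to follow the standard analytic approximation scheme of Deift--Zhou adapted to the $\mathbb{Z}_{3}$-symmetric setting of the present Lax problem. The guiding principle is to replace each scattering coefficient by a function that admits analytic continuation into the region where the relevant exponential factor decays, paying a price only in a small remainder on the original real sub-contour; the weight $e^{\frac{t}{4}|\mathrm{Re}\,\theta_{21}(\eta,k)|}$ encodes exactly how far one is allowed to deform.

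First, I would fix a large integer $N$ (whose size is dictated by the order of accuracy needed when opening lenses around $k_{0}$, $\omega k_{0}$ and $\omega^{2}k_{0}$) and, for each of $\rho_{1}$, $\rho_{2}$ and $\hat{\rho}_{1}$, construct a rational interpolant that matches the function and a prescribed number of its derivatives at the special points appearing in the desired bounds: at $k_{0}$ and at $0$ for $\rho_{1,a}$, at $0$ and at $\omega k_{0}$ for $\rho_{2,a}$ (the zero at $\omega k_{0}$ being what produces the factor $|k-\omega k_{0}|$), and at $k_{0}$ for $\hat{\rho}_{1,a}$. The differences $\rho_{j}-(\text{interpolant})$ then vanish to the correct order at the matching points, which is precisely the content of the pointwise bounds~(2)(a)--(f).

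Next, I would split each difference on the real line into pieces that extend analytically into $D_{3}$ or $D_{4}$. This is carried out in the usual way by multiplying the difference by a smooth cutoff supported near the critical point and then taking the Fourier transform of the result: the frequencies of the appropriate sign extend holomorphically into the desired region of the complex plane. Since $u_{0},v_{0}\in\mathcal{S}(\mathbb{R})$ implies that $\rho_{1}$ and $\rho_{2}$ inherit Schwartz-type regularity on their respective half-lines (through the properties of $s(k)$ and $s^{A}(k)$ established in Section~\ref{Section-2}), these Fourier transforms have rapid decay, and a standard integration-by-parts argument paired with $e^{t\theta_{21}(\eta,k)/2}$ yields the weight $e^{\frac{t}{4}|\mathrm{Re}\,\theta_{21}(\eta,k)|}$. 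The factor $\tfrac{1}{4}$ in place of $\tfrac{1}{2}$ leaves room to absorb the polynomial prefactors produced at the interpolation step.

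The main technical obstacle will be the behavior at the origin: since $\rho_{1}(0)=\rho_{2}(0)=\omega$ sits on the unit circle and $\mathcal{M}(x,t,k)$ has a genuine singularity at $k=0$, one has to ensure that the interpolants inherit this boundary value and that the $1/(1+k^{2})$ factor in bound~(2)(a) is produced by the correct decay at $k=\infty$. I would handle this by using rational rather than polynomial interpolants, so that both the value at $0$ and the decay at infinity are built into the construction, in the spirit of~\cite{Charlier-Lenells-Wang-2021}. The estimates on $\partial_{x}^{j}$ for $j=0,1$ then follow by observing that $x$ enters the construction only through $\eta=x/t$ inside the exponential factor; differentiation produces at worst a bounded polynomial in $k$ that is harmless relative to the decay of the weight, so the same bounds survive.
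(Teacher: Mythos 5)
Your proposal is correct and is exactly the standard Deift--Zhou analytic-approximation argument (Taylor/rational interpolation at $k_{0}$, $0$, $\omega k_{0}$ and at infinity, followed by a Fourier splitting of the remainder along the level sets of $\mathrm{Re}\,\theta_{21}$, with the factor $\tfrac14$ absorbing the prefactors) that the paper itself relies on; the paper states this lemma without proof, implicitly deferring to \cite{Deift-Zhou} and \cite{Charlier-Lenells-Wang-2021}, where the construction you describe is carried out verbatim. The only minor imprecision is that the difficulty you single out at $k=0$ (namely $|\rho_{1}(0)|=|\rho_{2}(0)|=1$) is irrelevant to this decomposition lemma --- it only matters for the local analysis near the origin --- but this does not affect the validity of your argument.
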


	Since $\rho_2=\rho_{2,a}+\rho_{2,r}$, there are decompositions
	\begin{equation*}
		\mathsf{V}_1=\mathsf{V}_{1,a}^U\mathsf{V}_{1,r}\mathsf{V}_{1,a}^L,\quad \mathsf{V}_3=\mathsf{V}_{3,a}^L\mathsf{V}_{3,r}\mathsf{V}_{3,a}^U,\quad \mathsf{V}_5=\mathsf{V}_{5,a}^U\mathsf{V}_{5,r}\mathsf{V}_{5,a}^L,
	\end{equation*}
	where
	\begin{equation*}
		\begin{aligned}
			\mathsf{V}_{1,a}^U&=\begin{pmatrix}
				1 &  -\rho_{2,a}^{*}(k)e^{-t\theta_{21}} & 0 \\
				0 &1 & 0\\
				0 & 0 &1
			\end{pmatrix}, \qquad
			\mathsf{V}_{1,a}^L=\begin{pmatrix}
				1 & 0& 0\\
				\rho_{2,a}(k)e^{t\theta_{21}} & 1 & 0\\
				0 &  0  & 1
			\end{pmatrix},\\
			\mathsf{V}_{3,a}^L&=\begin{pmatrix}
				1 & 0 & 0 \\
				0 & 1 & 0\\
				-\rho_{2,a}^*(\omega^2k)e^{t\theta_{31}} & 0 & 1
			\end{pmatrix},\qquad
			\mathsf{V}_{3,a}^U=\begin{pmatrix}
				1 & 0 & \rho_{2,a}(\omega^2k)e^{-t\theta_{31}}\\
				0 & 1 & 0\\
				0 &  0  & 1
			\end{pmatrix},\\
			\mathsf{V}_{5,a}^U&=\begin{pmatrix}
				1 &  0 & 0 \\
				0 & 1 & -\rho_{2,a}^*(\omega k)e^{-t\theta_{32}}\\
				0 & 0 &1
			\end{pmatrix},\qquad
			\mathsf{V}_{5,a}^L=\begin{pmatrix}
				1 & 0 & 0 \\
				0 & 1 & 0\\
				0 &  \rho_{2,a}(\omega k)e^{t\theta_{32}}  & 1
			\end{pmatrix},
		\end{aligned}
	\end{equation*}
	and
	\begin{equation*}
		\begin{aligned}
			\mathsf{V}_{1,r}&=\begin{pmatrix}
				1-\rho_{2,r}(k)\rho_{2,r}^*(k) & -\rho_{2,r}(k)^*e^{-t\theta_{21}} &0\\
				\rho_{2,r}(k)e^{t\theta_{21}} & 1 & 0\\
				0 & 0 & 1
			\end{pmatrix},\\
			\mathsf{V}_{3,r}&=\begin{pmatrix}
				1 & 0 & \rho_{2,r}(\omega^2k)e^{-t\theta_{31}} \\
				0 & 1 & 0\\
				-\rho_{2,r}^*(\omega^2k)e^{t\theta_{31}} & 0 & 1-\rho_{2,r}(\omega^2k)\rho_{2,r}^{*}(\omega^2k)
			\end{pmatrix},\\
			\mathsf{V}_{5,r}&=\begin{pmatrix}
				1 & 0 & 0\\
				0 & 1-\rho_{2,r}(\omega k)\rho_{2,r}^{*}(\omega k) & -\rho_{2,r}^*(\omega k)e^{-t\theta_{32}}\\
				0 & \rho_{2,r}(\omega k)e^{t\theta_{32}} & 1
			\end{pmatrix}.
		\end{aligned}
	\end{equation*}
	\par	
	Then define a sectionally analytic function $\mathcal{M}^{(1)}$ by the transformation
	\begin{equation}\label{H(x,t,k)}
		\mathcal{M}^{(1)}(x,t,k)=\mathcal{M}(x,t,k)H(x,t,k),
	\end{equation}
	where
	\begin{equation*}
		H(x,t,k)=\begin{cases}
			(\mathsf{V}_{1,a}^L)^{-1},\qquad  &k\in \Omega_{1},\\
			\mathsf{V}_{3,a}^L& k\in \Omega_{2},\\
			(\mathsf{V}_{3,a}^U)^{-1},\qquad  &k\in \Omega_{3},\\
			\mathsf{V}_{5,a}^U & k\in \Omega_{4},\\
			(\mathsf{V}_{5,a}^L)^{-1},\qquad  &k\in \Omega_{5},\\
			\mathsf{V}_{1,a}^U & k\in \Omega_{6}.\\
		\end{cases}
	\end{equation*}

	\begin{lemma}\label{H}
		For $k\in\mathbb{C}\setminus \Gamma$, $t>0$ and $\eta\in \mathcal{K}$, the functions $H(x,t,k)$ and $H(x,t,k)^{-1}$ are uniformly bounded . Additionally, $H(x,t,k)=I+O(k^{-1})$ as $k\to\infty$.
	\end{lemma}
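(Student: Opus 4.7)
The plan is to verify both assertions by inspecting $H$ in each of the six sectors $\Omega_n$ separately, using the $\mathbb{Z}_2$ and $\mathbb{Z}_3$ symmetries inherent in the RH problem to reduce the work to one representative sector. In every $\Omega_n$, the piecewise definition of $H$ is a unipotent triangular matrix of the form $I + E$ where $E$ has exactly one nontrivial entry; its inverse $I - E$ has the same structure with the off-diagonal entry negated. Hence the boundedness of $H$ and of $H^{-1}$ are equivalent, as are their $I + O(k^{-1})$ asymptotics, and in each sector both claims collapse to a single scalar estimate of the form
\[
\bigl|\, \pm\, \rho_{2,a}(\omega^m k)\,e^{\pm t\theta_{ij}(\eta,k)}\,\bigr| \quad \text{or} \quad \bigl|\, \pm\, \rho_{2,a}^{*}(\omega^m k)\,e^{\pm t\theta_{ij}(\eta,k)}\,\bigr|,
\]
for a sector-dependent choice of rotation $m \in \{0,1,2\}$, sign, and index pair $(i,j) \in \{(2,1),(3,1),(3,2)\}$.

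For the model case $\Omega_1$, the matrix $H = (\mathsf{V}_{1,a}^L)^{-1}$ has its sole nontrivial entry $-\rho_{2,a}(k)\, e^{t\theta_{21}(\eta,k)}$. Substituting the factorization bound from the preceding lemma,
\[
|\rho_{2,a}(k)| \le \frac{C\,|k - \omega k_0|}{1 + |k|^2}\, e^{\frac{t}{4}|\mathrm{Re}\,\theta_{21}(\eta,k)|},
\]
and using that the closure of $\Omega_1$ lies in the half-plane where $\mathrm{Re}\,\theta_{21}$ has the sign making $|e^{t\theta_{21}}| = e^{-t|\mathrm{Re}\,\theta_{21}|}$ (a fact determined by the lens placement relative to the sign-signature of Figure \ref{signature-points}), I obtain
\[
|\rho_{2,a}(k)\, e^{t\theta_{21}}| \le \frac{C\,|k-\omega k_0|}{1 + |k|^2}\, e^{-\frac{3t}{4}|\mathrm{Re}\,\theta_{21}(\eta,k)|} \le \frac{C\,|k - \omega k_0|}{1 + |k|^2}.
\]
The rational prefactor is uniformly bounded for $\eta \in \mathcal{K}$ and is $O(|k|^{-1})$ as $|k|\to\infty$, while the exponential factor is $\le 1$. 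This delivers both the uniform bound on $H$ on $\bar{\Omega}_1$ and the asymptotic $H = I + O(k^{-1})$ as $|k|\to\infty$ through $\Omega_1$.

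The remaining sectors follow by symmetry. The $\mathbb{Z}_2$ reflection $k \mapsto \bar k$ together with the Schwarz identity $\rho_{2,a}^{*}(k) = \overline{\rho_{2,a}(\bar k)}$ and the invariance $|\mathrm{Re}\,\theta_{21}(\eta,k)| = |\mathrm{Re}\,\theta_{21}(\eta,\bar k)|$ transports the estimate from $\Omega_1$ to $\Omega_6$, and the $\mathbb{Z}_3$ rotation $k \mapsto \omega^{\pm 1}k$ cyclically permutes the phase triple $(\theta_{21}, \theta_{32}, \theta_{31})$ together with the rotated arguments of $\rho_{2,a}$, transporting the estimate verbatim to the four sectors $\Omega_2, \Omega_3, \Omega_4, \Omega_5$. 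The main (and really only nontrivial) step is the sign-matching in each sector between $\mathrm{Re}\,\theta_{ij}$ and the triangular factor placed there; this is precisely the content of the choice of the sets $D_1,\ldots,D_4$ in the factorization lemma, and once it is read off the sign-signature diagram, each sector estimate reduces to a one-line calculation.
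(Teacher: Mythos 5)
Your argument is correct, and it is the standard one for lemmas of this type; the paper states Lemma \ref{H} without any proof, and your reduction --- each branch of $H$ is $I+E$ with $E$ strictly triangular and carrying a single entry, so $H^{-1}=I-E$ and both claims collapse to a scalar bound obtained by combining $|\rho_{2,a}(k)|\le C|k-\omega k_0|(1+|k|^2)^{-1}e^{\frac{t}{4}|\mathrm{Re}\,\theta_{21}|}$ with $|e^{t\theta_{21}}|=e^{-t|\mathrm{Re}\,\theta_{21}|}$ on $\bar\Omega_1$, the remaining sectors following from the $\mathbb{Z}_2$ and $\mathbb{Z}_3$ symmetries --- is exactly what is needed, including the uniformity in $t$ (the net exponential factor $e^{-\frac{3t}{4}|\mathrm{Re}\,\theta_{21}|}\le 1$) and the $O(k^{-1})$ decay coming from the rational prefactor. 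The one step you should not outsource to Figure \ref{signature-points} is the sign check: the figure's shading is inconsistent with $\theta_{21}=\theta_{21}(\eta,k_0)+\sqrt{3}\,\mathrm{i}(k-k_0)^2$, which gives $\mathrm{Re}\,\theta_{21}=-\sqrt{3}\,\mathrm{Im}\bigl[(k-k_0)^2\bigr]$, negative exactly when $k-k_0$ lies in the open first or third quadrant; since $k_0=-\eta/2<0$ for $\eta\in\mathcal{K}$, all of $\Omega_1$ satisfies this, so your conclusion (and the implicit requirement that $\bar\Omega_1$ lie in the region $\bar D_4$ on which $\rho_{2,a}$ is defined and estimated) does hold, but it should be verified from the formula rather than read off the figure.
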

	
	Lemma \ref{H} indicates that $\mathcal{M}^{(1)}(x,t,k)$ meets RH problem \ref{RHP for M} with $j=1$, where $\Gamma^{(1)}=\Gamma$ and the jump matrix $\mathsf{V}^{(1)}(x,t,k)$ is provided by
	\begin{equation*}
		\mathsf{V}^{(1)}=\begin{cases}
			\mathsf{V}_{1,r} ,\qquad & k\in\Gamma_{1},\\
			\mathsf{V}_{1,a}^L\mathsf{V}_2\mathsf{V}_{3,a}^L,  & k\in\Gamma_{2},\\
			\mathsf{V}_{3,r} ,\qquad & k\in\Gamma_{3},\\
			\mathsf{V}_{3,a}^U\mathsf{V}_4\mathsf{V}_{5,a}^U, & k\in\Gamma_{4},\\
			\mathsf{V}_{5,r} ,\qquad & k\in\Gamma_{5},\\
			\mathsf{V}_{5,a}^L\mathsf{V}_6\mathsf{V}_{1,a}^U, & k\in\Gamma_{6}.\\
		\end{cases}
	\end{equation*}

	\subsection{The second transformation}

	Let $\Gamma^{(2)}=\cup_{j=1}^{9}\Gamma_{j}^{(2)} $ represent the new contour shown in Figure
	\ref{figure of gamma(2)}. Then seek an analytic function $\mathcal{D}_{4}(\eta,k)$ satisfying
	\begin{equation*}
		\mathcal{D}_{4+}(\eta,k)=\mathcal{D}_{4-}(\eta,k)(1-|\rho_1(k)|^2),\quad \eta\in\mathcal{K},\quad k\in\Gamma_4^{(2)},
	\end{equation*}
	and
	\begin{equation*}
		\mathcal{D}_{4}(\eta,k)=1+O(k^{-1}),\quad k\to -\infty .
	\end{equation*}
	It is inferred from the Plemelj formulas that
	\begin{equation}\label{HD}
		\mathcal{D}_{4}(\eta,k)=\exp\left\{\frac{1}{2\pi {\rm i}}\int_{\Gamma_{4}^{(2)}}\frac{\ln(1-|\rho_1(s)|^2)}{s-k}ds\right\},\qquad k\in \mathbb{C}\setminus \Gamma_{4}^{(2)}.
	\end{equation}
	Denote $\ln_{\pi}(k)$ as the logarithm of $k$ with branch cut along $\arg k=\pi $, i.e., $\ln_{\pi}(k)=\ln|k|+{\rm i}\arg_{\pi}k$ where $\arg_{\pi}k\in(-\pi,\pi)$. The properties of the function $\mathcal{D}_{4}(\eta,k)$ are listed in the following lemma.

	\begin{figure}[htp]
		\centering
		\begin{tikzpicture}[>=latex]
			\draw[very thick] (-4,0) to (4,0) node[right]{$\Gamma^{(2)}$};
			\draw[very thick] (-2,-1.732*2) to (2,1.732*2);
			\draw[very thick] (-2,1.732*2) to (2,-1.732*2);
			\draw[<->,very thick] (-3,0)node[below]{$4$} to (2,0)node[below]{$1$};
			\draw[->,very thick] (0,0) to (-1,0) node[below]{$8$};
			\draw[<->,very thick] (-1,-1.732)node[right=1mm]{$5$} to (1.5,1.732*1.5)node[right=1mm]{$2$};
			\draw[->,very thick] (0,0) to (.5,1.732*0.5) node[right=1mm]{$7$};
			\draw[<->,very thick] (-1,1.732)node[right=1mm]{$3$} to (1.5,-1.732*1.5)node[right=1mm]{$6$};
			\draw[->,very thick] (0,0) to (.5,-1.732*0.5) node[right=1mm]{$9$};
			\filldraw[black] (-1.6,0) node[below=1mm]{$k_{0}$} circle (1.5pt);
			\filldraw[black] (.8,1.732*0.8) node[right=1mm]{$\omega^{2}k_{0}$} circle (1.5pt);
			\filldraw[black] (.8,-1.732*0.8) node[right=1mm]{$\omega k_{0}$} circle (1.5pt);
		\end{tikzpicture}
		\caption{The jump contour $\Gamma^{(2)}$ for RH problem $\mathcal{M}^{(2)}(x,t,k)$.}
		\label{figure of gamma(2)}
	\end{figure}

	\begin{lemma}\label{property of HD}
		The function $\mathcal{D}_{4}(\eta,k)$ has properties as follow:
		\begin{enumerate}[$(1)$]
			\item $\mathcal{D}_{4}(\eta,k)$ can be expressed as
			\begin{equation}\label{the expansion of D}
				\mathcal{D}_{4}(\eta,k)=e^{-{\rm i}\mu \ln_{\pi }(k-k_{0})}e^{-\kappa_{4}(\eta,k)},
			\end{equation}
			where
			\begin{equation*}
				\mu(\eta)=-\frac{1}{2\pi}\ln(1-|\rho_1(k_0)|^2),\qquad \eta\in \mathcal{K},
			\end{equation*}
			and
			\begin{equation*}
				\kappa_{4}(\eta,k)=\frac{1}{2\pi {\rm i}}\int^{-\infty}_{k_0}\ln_{\pi}(k-s)d\ln(1-|\rho_1(s)|^2).
			\end{equation*}
		\end{enumerate}
		
		\begin{enumerate}[$(2)$]
			\item The function $\mathcal{D}_{4}(\eta,k)$ allows the symmetry
			\begin{equation*}
				\mathcal{D}_4(\eta,k)=\overline{\mathcal{D}_4(\eta,\bar{k})}^{-1}, \qquad \eta\in \mathcal{K},\quad k\in \mathbb{C}\setminus \Gamma_{4}^{(2)} .
			\end{equation*}
		\end{enumerate}
		
		\begin{enumerate}[$(3)$]
			\item $\mathcal{D}_4(\eta,k)$ and $\mathcal{D}_4(\eta,k)^{-1}$ are analytic on $k\in\mathbb{C}\setminus\Gamma_4^{(2)}$ and
			\begin{equation*}
				\sup_{\eta\in\mathcal{K}}\sup_{k\in\mathbb{C}\setminus\Gamma_4^{(2)}}|\mathcal{D}_4(\eta,k)^{\pm1}|<\infty.
			\end{equation*}
		\end{enumerate}
		
		\begin{enumerate}[$(4)$]
			\item There is a positive constant $C$ such that
			\begin{equation*}
				\begin{aligned}
					&|\kappa_{4}(\eta,k)-\kappa_{4}(\eta,k_0)|\le C|k-k_0|(1+|\ln|k-k_0||),\\
					&|\partial_x(\kappa_{4}(\eta,k)-\kappa_{4}(\eta,k_0))|\le \frac{C}{t}(1+|\ln|k-k_0||),
				\end{aligned}
			\end{equation*}
			as $k$ approaches $k_0$ along a path that is nontangential to $(-\infty,k_0)$. Moreover, there are
			\begin{equation*}
				|\partial_x\kappa_{4}(\eta,k_0)|=\frac{1}{t}\bigg|\partial_u\kappa_{4}(u,v)|_{(u,v)=(\eta,k_0)}+\frac{1}{2}\partial_v\kappa_{4}(u,v)|_{(u,v)=(\eta,k_0)}\bigg|\le \frac{C}{t},
			\end{equation*}
			and
			\begin{equation*}
				\partial_x(\mathcal{D}_4(\eta,k)^{\pm1})=\frac{\pm i\mu}{2t(k-k_0)}\mathcal{D}_4(\eta,k)^{\pm1}.
			\end{equation*}
		\end{enumerate}
		
	\end{lemma}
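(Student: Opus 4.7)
The plan is to establish all four parts of Lemma \ref{property of HD} by direct analysis of the Cauchy-type integral (\ref{HD}) defining $\mathcal{D}_4(\eta,k)$, relying on the smoothness and Schwartz decay of $\rho_1$ on $(-\infty, 0)$ together with the hypothesis $|\rho_1| < 1$, which guarantees that $\ln(1-|\rho_1(s)|^2)$ is a real, smooth, rapidly decaying function on $\Gamma_4^{(2)}$.

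For part (1), I would integrate by parts in the exponent of (\ref{HD}), treating $\ln_\pi(s-k)$ as an antiderivative of $1/(s-k)$. The boundary contribution at $s = -\infty$ vanishes by the rapid decay of $\ln(1-|\rho_1(s)|^2)$, while the contribution at $s = k_0$ produces $\frac{\ln(1-|\rho_1(k_0)|^2)}{2\pi i}\ln_\pi(k_0 - k) = i\mu\, \ln_\pi(k_0-k)$. Choosing branches so that $\ln_\pi(k_0-k) = \ln_\pi(k-k_0) \mp i\pi$ in the upper/lower half plane and absorbing the piecewise-constant discrepancy into the normalization of $\kappa_4$ (consistent with $\mathcal{D}_4 \to 1$ at $-\infty$) then yields (\ref{the expansion of D}) with $\kappa_4$ precisely as stated. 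Parts (2) and (3) follow essentially for free: the symmetry results from the reality of $\ln(1-|\rho_1(s)|^2)$ on the real contour, so complex conjugation combined with $k \mapsto \bar k$ flips the sign of $1/(2\pi i)$; analyticity off $\Gamma_4^{(2)}$ is the standard Sokhotski--Plemelj property of Cauchy integrals; and the uniform bound on $|\mathcal{D}_4^{\pm 1}|$ reduces via (\ref{the expansion of D}) to bounding $|(k-k_0)^{\mp i\mu}| = e^{\pm \mu \arg_\pi(k-k_0)} \le e^{\pi \sup_{\mathcal{K}}\mu}$ together with the continuous remainder $e^{\mp\kappa_4}$, which is uniformly bounded because $\psi(s) := d\ln(1-|\rho_1(s)|^2)/ds$ has controlled $L^1$-norm for $\eta$ in the compact set $\mathcal{K}$.

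The heart of the argument is part (4). Writing
\begin{equation*}
\kappa_4(\eta,k) - \kappa_4(\eta,k_0) = -\frac{1}{2\pi i}\int_{-\infty}^{k_0}\bigl[\ln_\pi(k-s)-\ln_\pi(k_0-s)\bigr]\psi(s)\,ds,
\end{equation*}
I would split the integration into a compact piece $s \in [k_0-1, k_0]$ and a tail $s \in (-\infty, k_0-1)$. On the tail, $|\ln_\pi(k-s) - \ln_\pi(k_0-s)| = O(|k-k_0|/|k_0-s|)$ and the factor $\psi$ is Schwartz, giving $O(|k-k_0|)$. On the compact piece, rescaling $s = k_0 - u$ and using the elementary estimate $|\ln_\pi(k-s) - \ln_\pi(k_0-s)| \le C\min\{|k-k_0|/|k_0-s|, 1 + |\ln|k-k_0||\}$ produces the logarithmic refinement $|k-k_0|(1 + |\ln|k-k_0||)$. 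For the $x$-derivative bounds I would use $\partial_x k_0 = -1/(2t)$ and $\partial_x \rho_1 \equiv 0$, so that $\partial_x$ acts on $\kappa_4$ only through the upper limit $k_0$ and the expansion point $k_0$, each such action bringing an extra factor $1/t$. Finally, the identity $\partial_x \mathcal{D}_4^{\pm 1} = \pm i\mu/(2t(k-k_0))\,\mathcal{D}_4^{\pm 1}$ follows by differentiating (\ref{the expansion of D}) directly: the dominant contribution is $\mp i\mu \cdot \partial_x \ln_\pi(k-k_0) = \pm i\mu/(2t(k-k_0))$, and the subleading contributions from $\partial_x\kappa_4$ and $\partial_x\mu$ are subsumed into the previous $O(1/t)$ bounds.

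The main obstacle will be the branch-cut bookkeeping in part (1), namely verifying that the factorization (\ref{the expansion of D}) exactly reproduces the jump of $\mathcal{D}_4$ across $\Gamma_4^{(2)}$ (the factor $(k-k_0)^{-i\mu}$ accounts for the local Riemann--Hilbert data at the stationary-phase point while $e^{-\kappa_4}$ absorbs the smooth part); and, in part (4), securing uniformity of all constants as $\eta$ ranges over the compact set $\mathcal{K}$ in the presence of the logarithmic singularity of the integrand at $s = k_0$.
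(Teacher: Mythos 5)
The paper states this lemma without proof, so there is nothing to compare against line by line; judged on its own, your outline follows the standard route (integration by parts for (1), reality of $\ln(1-|\rho_1|^2)$ for (2), boundedness of the log-Cauchy transform for (3), splitting the integral near and away from $k_0$ for (4)) and is essentially sound. Three places, however, need tightening. First, in part (1) you cannot ``absorb the piecewise-constant discrepancy into the normalization of $\kappa_4$'': $\kappa_4$ is pinned down by the explicit formula in the lemma, so the half-plane-dependent constants $\mp\pi\mu$ coming from $\ln_\pi(k_0-k)=\ln_\pi(k-k_0)\mp i\pi$ must actually cancel. They do, but only because converting $\ln_\pi(s-k)$ to $\ln_\pi(k-s)$ \emph{inside} the remaining integral contributes the compensating constant $\pm\frac{i\pi}{2\pi i}\int d\ln(1-|\rho_1|^2)=\pm\pi\mu$; this cancellation has to be checked, not postulated. (It is cleaner to integrate by parts using $\ln_\pi(k-s)$ directly, which is already an antiderivative of $1/(s-k)$ in $s$ and single-valued along the contour for $k\notin(-\infty,k_0]$; then no branch conversion is needed at all.)

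Second, the pointwise bound $|\ln_\pi(k-s)-\ln_\pi(k_0-s)|\le C(1+|\ln|k-k_0||)$ is false for $s$ extremely close to $k_0$ (the difference behaves like $|\ln(|k-k_0|/|k_0-s|)|$, which is unbounded as $s\to k_0$); what saves the estimate is that this logarithmic singularity is integrable and $\int_{0}^{\delta}\bigl(1+\ln(\delta/u)\bigr)du=O(\delta)$, so the stated conclusion survives but via an integrated, not pointwise, bound. Third, the identity $\partial_x\mathcal{D}_4^{\pm1}=\frac{\pm i\mu}{2t(k-k_0)}\mathcal{D}_4^{\pm1}$ is asserted in the lemma as an exact equality, and your derivation from the factored form (\ref{the expansion of D}) only yields it up to the $\partial_x\mu$ and $\partial_x\kappa_4$ remainders you propose to ``subsume'' into $O(1/t)$. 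The clean argument is to differentiate the defining Cauchy integral (\ref{HD}) directly: $x$ enters only through the endpoint $k_0=-x/(2t)$ of the integration contour, so
\begin{equation*}
\partial_x\ln\mathcal{D}_4(\eta,k)=\partial_xk_0\cdot\frac{\mp 1}{2\pi i}\,\frac{\ln(1-|\rho_1(k_0)|^2)}{k_0-k},
\end{equation*}
which gives the exact one-term formula (and, incidentally, fixes the sign unambiguously once the orientation of $\Gamma_4^{(2)}$ is fixed; note that parts (1) and (4) of the lemma as printed are consistent only for one choice of orientation, so the sign here deserves an explicit check against your convention in part (1)).
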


	In the same way, define $\mathcal{D}_{6}$ and $\mathcal{D}_{2}$ by
	\begin{equation*}
		\begin{aligned}
			\mathcal{D}_6(\eta,k)&=\mathcal{D}_4(\eta,\omega^2k),\qquad & k\in\mathbb{C}\setminus\Gamma_6^{(2)},\\
			\mathcal{D}_2(\eta,k)&=\mathcal{D}_4(\eta,\omega k),\qquad & k\in\mathbb{C}\setminus\Gamma_2^{(2)},
		\end{aligned}
	\end{equation*}	
	which satisfy
	\begin{equation*}
		\begin{aligned}
			\mathcal{D}_{6+}(\eta,k)&=\mathcal{D}_{6-}(\eta,k)(1-|\rho_1(\omega^2k)|^2),\quad &k\in\Gamma_6^{(2)},\\
			\mathcal{D}_{2+}(\eta,k)&=\mathcal{D}_{2-}(\eta,k)(1-|\rho_1(\omega k)|^2),\quad &k\in\Gamma_2^{(2)}.
		\end{aligned}
	\end{equation*}

	In the second transformation, take
	\begin{equation}\label{M2}
		\mathcal{M}^{(2)}(x,t,k)=\mathcal{M}^{(1)}(x,t,k)\mathcal{D}(\eta,k),
	\end{equation}
	where
	\begin{equation*}
		\mathcal{D}(\eta,k)=\begin{pmatrix}
			\frac{\mathcal{D}_4(\eta,k)}{\mathcal{D}_6(\eta,k)} & 0 & 0 \\
			0 & \frac{\mathcal{D}_2(\eta,k)}{\mathcal{D}_4(\eta,k)} & 0 \\
			0 & 0 & \frac{\mathcal{D}_6(\eta,k)}{\mathcal{D}_2(\eta,k)}
		\end{pmatrix}.
	\end{equation*}
	\par
	It's infered from Lemma \ref{property of HD} that $\mathcal{D}(\eta,k)$ and $\mathcal{D}^{-1}(\eta,k)$ are uniformly bounded for $\eta\in \mathcal{K}$ and $k\in \mathbb{C}\setminus (\Gamma_{}^{(2)}\cup\Gamma_{4}^{(2)}\cup\Gamma_{6}^{(2)})$ and
	\begin{equation*}
		\mathcal{D}(\eta,k)=I+O(k^{-1})  \qquad  {\rm as } \quad k\to \infty.
	\end{equation*}

	After simple computation, the transformation (\ref{M2}) indicates that the jump matrix for
	$\mathcal{M}^{(2)}(x,t,k)$ is $\mathsf{V}^{(2)}=\mathcal{D}_{-}^{-1}\mathsf{V}^{(1)}\mathcal{D}_{+}$, i.e.,
	\begin{equation*}
		\begin{aligned}
			\mathsf{V}_1^{(2)}&=\begin{pmatrix}
				1-\rho_{2,r}(k)\rho_{2,r}^*(k) & -\frac{\mathcal{D}_{6}\mathcal{D}_{2}}{\mathcal{D}_{4}^2}\rho_{2,r}^*(k)e^{-t\theta_{21}} &0\\
				\frac{\mathcal{D}_{4}^2}{\mathcal{D}_{6}\mathcal{D}_{2}}\rho_{2,r}(k)e^{t\theta_{21}} & 1 & 0\\
				0 & 0 & 1
			\end{pmatrix},\\
			\mathsf{V}_2^{(2)}&=\begin{pmatrix}
				1 & 0 & 0\\
				\frac{\mathcal{D}_{4}^{2}}{\mathcal{D}_{2-}\mathcal{D}_{6}}\beta(\omega k)e^{-t\theta_{21}} & 1 -\rho_1(\omega k)\rho_1^*(\omega k) & -\frac{\mathcal{D}_{4}\mathcal{D}_{6}}{\mathcal{D}_{2-}^2}\frac{\rho_1(\omega k)e^{-t\theta_{32}}}{1-\rho_1(\omega k)\rho_1^*(\omega k)}\\
				-\frac{\mathcal{D}_{4}\mathcal{D}_{2+}}{\mathcal{D}_{6}^{2}}\rho_{2,a}^*(\omega^2 k)e^{t\theta_{31}} & \frac{\mathcal{D}_{2+}^2}{\mathcal{D}_{4}\mathcal{D}_{6}}\frac{\rho_1^*(\omega k)e^{t\theta_{32}}}{1-\rho_1(\omega k)\rho_1^*(\omega k)} & 1
			\end{pmatrix},\\
			\mathsf{V}_3^{(2)}&=\begin{pmatrix}
				1 & 0 & \frac{\mathcal{D}_{6}^{2}}{\mathcal{D}_{2}\mathcal{D}_4}\rho_{2,r}(\omega^2k)e^{-t\theta_{31}} \\
				0 & 1 & 0\\
				-\frac{\mathcal{D}_{2}\mathcal{D}_4}{\mathcal{D}_{6}^{2}}\rho_{2,r}^*(\omega^2k)e^{t\theta_{31}} & 0 & 1-\rho_{2,r}(\omega^2k)\rho_{2,r}^*(\omega^2k)
			\end{pmatrix},
		\end{aligned}
	\end{equation*}
	\begin{equation*}
		\begin{aligned}
			\mathsf{V}_4^{(2)}&=\begin{pmatrix}
				1-\rho_1(k)\rho_1^*(k)  & -\frac{\mathcal{D}_{2}\mathcal{D}_{6}}{\mathcal{D}_{4-}^2}\frac{\rho_1(k)e^{-t\theta_{21}}}{1-\rho_1(k)\rho_1^*(k)} & \frac{\mathcal{D}_{6}^{2}}{\mathcal{D}_{4-}\mathcal{D}_{2}}\beta(k)e^{-t\theta_{31}}\\
				\frac{\mathcal{D}_{4+}^2}{\mathcal{D}_{6}\mathcal{D}_{2}}\frac{\rho_1^*(k)e^{t\theta_{21}}}{1-\rho_1(k)\rho_1^*(k)} & 1 & -\frac{\mathcal{D}_{4+}\mathcal{D}_{6}}{\mathcal{D}_{2}^2}\rho_{2,a}^*(\omega k)e^{-t\theta_{32}}\\
				0 & 0 & 1
			\end{pmatrix},\\
			\mathsf{V}_5^{(2)}&=\begin{pmatrix}
				1 & 0 & 0\\
				0 & 1-\rho_{2,r}(\omega k)\rho_{2,r}^*(\omega k) & -\frac{\mathcal{D}_{4}\mathcal{D}_{6}}{\mathcal{D}_{2}^{2}}\rho_{2,r}^*(\omega k)e^{-t\theta_{32}}\\
				0 & \frac{\mathcal{D}_{2}^{2}}{\mathcal{D}_{4}\mathcal{D}_{6}}\rho_{2,r}(\omega k)e^{t\theta_{32}} & 1
			\end{pmatrix},\\
			\mathsf{V}_6^{(2)}&=\begin{pmatrix}
				1 & -\frac{\mathcal{D}_{2}\mathcal{D}_{6+}}{\mathcal{D}_{4}^{2}}\rho_{2,a}^*(k)e^{-t\theta_{21}} & \frac{\mathcal{D}_{6+}^{2}}{\mathcal{D}_{2}\mathcal{D}_{4}}\frac{\rho_1^*(\omega^2k)e^{-t\theta_{31}}}{1-\rho_1(\omega^2k)\rho_1^*(\omega^2k)} \\
				0 & 1 & 0\\
				-\frac{\mathcal{D}_{2}\mathcal{D}_{4}}{\mathcal{D}_{6-}^{2}}\frac{\rho_1(\omega^2k)e^{t\theta_{31}}}{1-\rho_1(\omega^2k)\rho_1^*(\omega^2k)} & \frac{\mathcal{D}_2^2}{\mathcal{D}_4\mathcal{D}_{6-}}\beta(\omega^2k)e^{t\theta_{32}} & 1-\rho_1(\omega^2k)\rho_1^*(\omega^2k)
			\end{pmatrix},
		\end{aligned}
	\end{equation*}
	\begin{equation*}
		\begin{aligned}
			\mathsf{V}_7^{(2)}&=\begin{pmatrix}
				1 & 0 & 0\\
				\frac{\mathcal{D}_{4}^{2}}{\mathcal{D}_{2}\mathcal{D}_{6}}\beta(\omega k)e^{t\theta_{21}} & 1 & -\frac{\mathcal{D}_{4}\mathcal{D}_{6}}{\mathcal{D}_{2}^2}\rho_1(\omega k)e^{-t\theta_{32}}\\
				\frac{\mathcal{D}_{4}\mathcal{D}_{2}}{\mathcal{D}_{6}^2}\alpha(\omega k)e^{t\theta_{31}} & \frac{\mathcal{D}_{2}^{2}}{\mathcal{D}_{4}\mathcal{D}_{6}}\rho_1^*(\omega k)e^{t\theta_{32}} & 1-\rho_1(\omega k)\rho_1^*(\omega k)
			\end{pmatrix},\\
			\mathsf{V}_8^{(2)}&=\begin{pmatrix}
				1 & -\frac{\mathcal{D}_{6}\mathcal{D}_{2}}{\mathcal{D}_{4}^2}\rho_1(k)e^{-t\theta_{21}} & \frac{\mathcal{D}_{6}^{2}}{\mathcal{D}_{2}\mathcal{D}_4}\beta(k)e^{-t\theta_{31}}\\
				\frac{\mathcal{D}_{4}^{2}}{\mathcal{D}_{2}\mathcal{D}_{6}}\rho_1^*(k)e^{\theta_{21}} & 1-\rho_1(k)\rho_1^*(k) & \frac{\mathcal{D}_{4}\mathcal{D}_{6}}{\mathcal{D}_{2}^2}\alpha(k)e^{-t\theta_{32}}\\
				0 & 0 & 1
			\end{pmatrix},\\
			\mathsf{V}_9^{(2)}&=\begin{pmatrix}
				1-\rho_1(\omega^2k)\rho_1^*(\omega^2k) & \frac{\mathcal{D}_{6}\mathcal{D}_{2}}{\mathcal{D}_{4}^2}\alpha(\omega^2 k)e^{-t\theta_{21}} & \frac{\mathcal{D}_{6}^{2}}{\mathcal{D}_{2}\mathcal{D}_{4}}\rho_1^*(\omega^2k)e^{-t\theta_{31}} \\
				0 & 1 & 0\\
				-\frac{\mathcal{D}_{4}\mathcal{D}_{2}}{\mathcal{D}_{6}^2}\rho_1(\omega^2k)e^{t\theta_{31}} & \frac{\mathcal{D}_{2}^{2}}{\mathcal{D}_{4}\mathcal{D}_{6}}\beta(\omega^2 k)e^{t\theta_{32}}  & 1
			\end{pmatrix},
		\end{aligned}
	\end{equation*}
	where
	\begin{equation*}
		\begin{aligned}
			\alpha(k)&=\rho_{2,a}^*(\omega k)(\rho_1(k)\rho_1^*(k)-1),\qquad &k\in\mathbb{R}^{-},\\
			\beta(k)&=\rho_{2,a}(\omega^2 k)+\rho_{2,a}^*(\omega k)\rho_1(k),\quad &k\in\mathbb{R}^{-}.
		\end{aligned}
	\end{equation*}

	\subsection{The third transformation} Following the procedure of Deift-Zhou nonlinear steepest-descent strategy \cite{Deift-Zhou}, open lenses of the contour $\Gamma^{(2)}$ in Figure \ref{figure of gamma(2)} by introducing an appropriate transformation.
	\par
	Note that the $(11)$-entry of $	\mathsf{V}_4^{(2)}$ can be rewritten as
	\begin{equation*} (\mathsf{V}_4^{(2)})_{11}=1-|\rho_1(k)|^2=1-\frac{\mathcal{D}_{4+}}
		{\mathcal{D}_{4-}}\hat{\rho_1}(k)\frac{\mathcal{D}_{4+}}{\mathcal{D}_{4-}}\hat{\rho_1^*}(k),
	\end{equation*}
	then for $k\in\Gamma_{4}^{(2)}$, the jump matrix $\mathsf{V}_4^{(2)}$ can be factorized as
	\begin{equation*}
		\mathsf{V}_4^{(2)}=\mathsf{V}_4^{(2)U}\mathsf{V}_{4,r}^{(2)}\mathsf{V}_4^{(2)L},
	\end{equation*}
	where
	\begin{equation*}
		\begin{aligned}
			\mathsf{V}_4^{(2)U}&=\begin{pmatrix}
				1 &  -\frac{\mathcal{D}_{2}\mathcal{D}_{6}}{\mathcal{D}_{4-}^2}\hat{\rho}_{1,a}(k)e^{-t\theta_{21}} & \frac{\mathcal{D}_{6}^{2}}{\mathcal{D}_{4-}\mathcal{D}_{2}}\rho_{2,a}(\omega^2 k)e^{-t\theta_{31}}\\
				0 & 1 & 0\\
				0 & 0 & 1
			\end{pmatrix},\\
			\mathsf{V}_{4,r}^{(2)}&=\begin{pmatrix}
				1-\frac{\mathcal{D}_{4+}^2}{\mathcal{D}_{4-}^{2}} \hat{\rho}_{1,r}(k)\hat{\rho}_{1,r}^*(k)  & -\frac{\mathcal{D}_{2}\mathcal{D}_{6}}{\mathcal{D}_{4-}^2}\hat{\rho}_{1,r}(k)e^{-t\theta_{21}} & 0\\
				\frac{\mathcal{D}_{4+}^2}{\mathcal{D}_{6}\mathcal{D}_{2}}\hat{\rho}_{1,r}^*(k)e^{t\theta_{21}} & 1 & 0\\
				0 & 0 & 1
			\end{pmatrix},\\
			\mathsf{V}_4^{(2)L}&=\begin{pmatrix}
				1 &0 & 0\\
				\frac{\mathcal{D}_{4+}^2}{\mathcal{D}_{2}\mathcal{D}_{6}}\hat{\rho}_{1,a}^{*}(k)e^{t\theta_{21}} & 1 & -\frac{\mathcal{D}_{6}\mathcal{D}_{4+}}{\mathcal{D}_{2}^2}\rho_{2,a}^*(\omega k)e^{-t\theta_{32}}\\
				0 & 0 & 1
			\end{pmatrix}.
		\end{aligned}
	\end{equation*}

	For $k\in\Gamma_{8}^{(2)}$, $\mathsf{V}_8^{(2)}$ can be factorized in a similar way as
	\begin{equation*}
		\mathsf{V}_8^{(2)}=\mathsf{V}_8^{(2)L}\mathsf{V}_{8,r}^{(2)}\mathsf{V}_8^{(2)U},
	\end{equation*}
	where
	\begin{equation*}
		\begin{aligned}
			&\mathsf{V}_8^{(2)L}=\begin{pmatrix}
				1 & 0 & 0\\
				\frac{\mathcal{D}_{4}^{2}}{\mathcal{D}_{2}\mathcal{D}_{6}}\rho_{1,a}^*(k)e^{t\theta_{21}} & 1 & -\frac{\mathcal{D}_{4}\mathcal{D}_{6}}{\mathcal{D}_{2}^2}(\rho_{2,a}(\omega k)+\rho_{1}^{*}(k)\rho_{2,a}(\omega^2k))e^{-t\theta_{32}}\\
				0 & 0 & 1
			\end{pmatrix},\\
			&\mathsf{V}_{8,r}^{(2)}=\begin{pmatrix}
				1 & -\frac{\mathcal{D}_{6}\mathcal{D}_{2}}{\mathcal{D}_{4}^2}\rho_{1,r}(k)e^{-t\theta_{21}} & \frac{\mathcal{D}_{6}^{2}}{\mathcal{D}_{2}\mathcal{D}_4}\beta_{r}(k)e^{-t\theta_{31}}\\
				\frac{\mathcal{D}_{4}^{2}}{\mathcal{D}_{2}\mathcal{D}_{6}}\rho_{1,r}^*(k)e^{t\theta_{21}} & 1-|\rho_{1,r}(k)|^2 & \frac{\mathcal{D}_{4}\mathcal{D}_{6}}{\mathcal{D}_{2}^2}\rho_{1,r}^*(k)(\rho_{1,r}\rho_{2,a}^{*}(\omega k)-\rho_{2,a}(\omega^2k))e^{-t\theta_{32}}\\
				0 & 0 & 1
			\end{pmatrix},\\
			&\mathsf{V}_8^{(2)U}=\begin{pmatrix}
				1 & -\frac{\mathcal{D}_{6}\mathcal{D}_{2}}{\mathcal{D}_{4}^2}\rho_{1,a}(k)e^{-t\theta_{21}} & \frac{\mathcal{D}_{6}^{2}}{\mathcal{D}_{2}\mathcal{D}_4}\beta_{a}(k)e^{-t\theta_{31}}\\
				0 & 1 & 0\\
				0 & 0 & 1
			\end{pmatrix},
		\end{aligned}
	\end{equation*}
	and
	\begin{equation*}
		\beta_{r}(k):=\rho_{1,r}\rho_{2,a}^{*}(\omega k), \qquad \beta_{a}(k):=\rho_{2,a}(\omega^2k)+\rho_{1,a}(k)\rho_{2,a}^{*}(\omega k).
	\end{equation*}
	
	Let $A_{j}~(j=1,2,3,4)$ denote the region shown in Figure \ref{gamma (3)} that is the jump contour $\Gamma^{(3)}$ for RH problem $\mathcal{M}^{(3)}(x,t,k)$. Introduce the transformation
	\begin{equation}\label{G(x,t,k)}
		\mathcal{M}^{(3)}(x,t,k)=\mathcal{M}^{(2)}(x,t,k)G(x,t,k),
	\end{equation}
	where
	\begin{equation*}
		G(x,t,k)=\begin{cases}
			\mathsf{V}_{8}^{(2)L},\quad &k\in A_{1},\\
			\mathsf{V}_{4}^{(2)U},\quad &k\in A_{2},\\
			(\mathsf{V}_{4}^{(2)L})^{-1}, &k\in A_{3},\\
			(\mathsf{V}_{8}^{(2)U})^{-1}, &k\in A_{4},\\
			I,  & {\rm elsewhere} \quad in \quad \Omega_{3}\cup\Omega_{4}.
		\end{cases}
	\end{equation*}
	
	\begin{figure}[htp]
		\centering
		\begin{tikzpicture}[>=latex]
			\filldraw[fill=yellow!15!white,draw=white] (-2,0)--(-0.8,1.732*0.8)--(0,0)--(-2,0);
			\draw[fill=yellow!15!white,draw=white] (-2,0)--(-0.8,-1.732*0.8)--(0,0)--(-2,0);
			\draw[fill=yellow!15!white,draw=white] (-2,0)--(-3.2,1.732*0.8)--(-3.8,0)--(-2,0);
			\draw[fill=yellow!15!white,draw=white] (-2,0)--(-3.2,-1.732*0.8)--(-3.8,0)--(-2,0);
			\draw[very thick] (-4,0) to (4,0) node[right]{$\Gamma^{(3)}$};
			\draw[very thick] (-2,-1.732*2) to (2,1.732*2);
			\draw[very thick] (-2,1.732*2) to (2,-1.732*2);
			\draw[<->,very thick] (-3.2,0)node[below]{$8$} to (2.5,0);
			\draw[->,very thick] (0,0) to (-1,0) node[below]{$7$};
			\draw[<->,very thick] (-1.5,-1.732*1.5)node[right=1mm]{$5$} to (1.5,1.732*1.5);
			\draw[->,very thick] (0,0) to (.5,1.732*0.5);
			\draw[->,very thick] (0,0) to (-.5,-1.732*0.5)node[right]{$6$};
			\draw[<->,very thick] (-1.5,1.732*1.5) to (1.5,-1.732*1.5);
			\draw[->,very thick] (0,0) to (.5,-1.732*0.5) ;
			\draw[->,very thick] (0,0) to (-.5,1.732*0.5) ;
			\draw[->,very thick] (0,0) to (1,0);
\draw[very thick] (-3.2,-1.732*0.8)--(-.8,1.732*0.8);
			\draw[<->,very thick] (-2.6,-1.732*0.4)node[right=1mm]{$3$}--(-1.4,1.732*0.4)node[left=1mm]{$1$};
			\draw[very thick] (-3.2,1.732*0.8)--(-.8,-1.732*0.8);
			\draw[<->,very thick] (-2.6,1.732*0.4)node[right=1mm]{$2$}--(-1.4,-1.732*0.4)node[left=1mm]{$4$};
			\draw[very thick] (-.8,-1.732*0.8)--(2.8,-1.2*1.732);
			\draw[<->,very thick] (.1,-1.732*0.9)--(1.9,-1.1*1.732);
			\draw[very thick] (-.8,1.732*0.8)--(2.8,1.2*1.732);
			\draw[<->,very thick] (.1,1.732*0.9)--(1.9,1.1*1.732);
			\draw[very thick] (.4,1.732*2)--(1.6,0);
			\draw[<->,very thick] (.7,1.732*1.5)--(1.3,0.5*1.732);
			\draw[very thick] (.4,-1.732*2)--(1.6,0);
			\draw[<->,very thick] (.7,-1.732*1.5)--(1.3,-0.5*1.732);
			\filldraw[black] (-2,0) node[below=1mm]{$k_{0}$} circle (1.5pt);
			\filldraw[black] (1,1.732)  circle (1.5pt);
			\filldraw[black] (1,-1.732)  circle (1.5pt);
			\node at (1.5,1.6){$\omega^{2}k_{0}$};
			\node at (1.5,-1.6){$\omega k_{0}$};
			\node[blue] at (-.8,0.7){$A_{1}$};
			\node[blue] at (-3.2,0.7){$A_{2}$};
			\node[blue] at (-3.2,-0.7){$A_{3}$};
			\node[blue] at (-.8,-0.7){$A_{4}$};
		\end{tikzpicture}
		\caption{The jump contour $\Gamma^{(3)}$ for RH problem $\mathcal{M}^{(3)}(x,t,k)$.}
		\label{gamma (3)}
	\end{figure}

	\begin{lemma}\label{property of G}
		For $k\in\mathbb{C}\setminus \Gamma^{(3)}$, $t>0$ and $\eta\in \mathcal{K}$, $G(x,t,k)$ is uniformly bounded. Moreover, $G(x,t,k)=I+O(k^{-1})$ as $k\to\infty$.
	\end{lemma}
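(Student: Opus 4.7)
The plan is to mirror the proof of Lemma \ref{H}: verify uniform boundedness of $G$ on each of the four lens regions $A_1,A_2,A_3,A_4$ separately (on the remaining portion of $\Omega_3\cup\Omega_4$ one has $G\equiv I$), and then read off the decay at infinity from the explicit formulas for the triangular factors. Three ingredients will be combined: (i) the analytic-continuation bounds of the form $|\rho_{1,a}|,|\rho_{2,a}|,|\hat\rho_{1,a}|\le C\,e^{\frac{t}{4}|\mathrm{Re}\,\theta_{21}|}$ established in the factorization lemma; (ii) the uniform bounds $\sup_{\eta\in\mathcal{K},\,k}|\mathcal{D}_j^{\pm 1}|<\infty$ from Lemma \ref{property of HD}(3); and (iii) the sign of $\mathrm{Re}\,\theta_{ij}$ on each lens, dictated by the way the lenses were opened.

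To illustrate on $A_1$, one has $G=\mathsf{V}_8^{(2)L}$, whose only nontrivial off-diagonal entries carry the exponentials $e^{t\theta_{21}}$ and $e^{-t\theta_{32}}$. The lens $A_1$ lies in the region where $\mathrm{Re}\,\theta_{21}\le 0$ (cf.\ Figure \ref{signature-points}), and by the $\mathbb{Z}_3$-symmetry of the phase together with the relation between $\theta_{32}$ and $\theta_{21}$ under the rotation $k\mapsto\omega k$, the exponent $-t\,\mathrm{Re}\,\theta_{32}$ is also non-positive on $A_1$. Combining $|\rho_{1,a}^{*}(k)|\le C\,e^{\frac{t}{4}|\mathrm{Re}\,\theta_{21}|}$ with the factor $e^{t\mathrm{Re}\,\theta_{21}}$ yields the bound $C\,e^{-\frac{3t}{4}|\mathrm{Re}\,\theta_{21}|}\le C$, and analogously for the $(23)$-entry. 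Hence $G$ is uniformly bounded on $A_1$. The three remaining regions $A_2,A_3,A_4$ are handled identically, exploiting the $\mathbb{Z}_2$ and $\mathbb{Z}_3$ symmetries of $\theta$ and of the reflection coefficients to reduce each case to the same estimate.

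For the large-$k$ behaviour, Lemma \ref{property of HD}(1) gives $\mathcal{D}_j=1+O(k^{-1})$ as $k\to\infty$. Since $\rho_1,\rho_2$ are Schwartz and their analytic extensions $\rho_{1,a},\rho_{2,a},\hat\rho_{1,a}$ decay at least like $k^{-1}$ in the relevant sectors, every off-diagonal entry of each triangular factor $\mathsf{V}_4^{(2)U},\mathsf{V}_4^{(2)L},\mathsf{V}_8^{(2)L},\mathsf{V}_8^{(2)U}$ is $O(k^{-1})$ uniformly in $x,t$. Combined with the expansion of the $\mathcal{D}_j$'s, this gives $G(x,t,k)=I+O(k^{-1})$ as $k\to\infty$.

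The main obstacle is the bookkeeping in step (iii): one must keep track of the sign of $\mathrm{Re}\,\theta_{ij}$ in each of the four lens regions and check, entry by entry, that the $e^{\frac{t}{4}|\mathrm{Re}\,\theta_{21}|}$ growth of every relevant $\rho_{1,a}/\rho_{2,a}/\hat\rho_{1,a}$ factor is dominated by the oscillatory $e^{\pm t\theta_{ij}}$ factor appearing beside it. This is forced by the way the lenses were drawn in Figure \ref{gamma (3)}, but verifying it for all the off-diagonal entries---including composite ones such as $\rho_{2,a}(\omega k)+\rho_1^{*}(k)\rho_{2,a}(\omega^2 k)$ appearing in $\mathsf{V}_8^{(2)L}$---is the only part of the argument that is not purely cosmetic.
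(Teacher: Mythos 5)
The paper states this lemma without proof, so there is nothing to compare against line by line; your argument is the standard one and, as far as I can check against the explicit formulas for $\mathsf{V}_4^{(2)U},\mathsf{V}_4^{(2)L},\mathsf{V}_8^{(2)L},\mathsf{V}_8^{(2)U}$, it is correct: on each lens the uniform bounds on $\mathcal{D}_j^{\pm1}$ from Lemma \ref{property of HD}, the growth bounds $|\rho_{j,a}|,|\hat\rho_{1,a}|\le C e^{\frac{t}{4}|\mathrm{Re}\,\theta_{21}|}$, and the decay of the accompanying exponential combine into $Ce^{-\frac{3t}{4}|\mathrm{Re}\,\theta_{21}|}\le C$, while the $O(k^{-1})$ decay on the unbounded lenses $A_2,A_3$ comes from the $\frac{C}{1+|k|}$-type bounds on $\hat\rho_{1,a}$ and $\rho_{2,a}$ together with $\mathcal{D}_j=1+O(k^{-1})$. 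One caution: do not lean on Figure \ref{signature-points} for the sign of $\mathrm{Re}\,\theta_{21}$ --- as printed, its shading is opposite to what the identity $\theta_{21}=\sqrt{3}\,\mathrm{i}\,(k-k_0)^2-\sqrt{3}\,\mathrm{i}\,k_0^2$, i.e.\ $\mathrm{Re}\,\theta_{21}=-2\sqrt{3}\,\mathrm{Im}(k)\,(\mathrm{Re}\,k-k_0)$, actually gives; the sign you assert on $A_1$ (namely $\mathrm{Re}\,\theta_{21}\le 0$ there) is the one coming from this formula and the one needed for the estimate to close, so your argument stands, but it should be justified from the formula rather than the figure. Beyond that, the only remaining work is the entry-by-entry check you already flag (including verifying that $\omega A_1\subset\bar{D}_4$, $\omega^2 A_4\subset\bar{D}_1$, etc., so that the rotated arguments of $\rho_{2,a}$ lie in the sectors where its bounds hold), which is routine.
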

	
	From Lemma \ref{property of G}, it is inferred that the function $\mathcal{M}^{(3)}(x,t,k)$ solves RH problem \ref{RHP for M} with $j=3$, where for $k\in\Omega_{3}\cup\bar{\Omega}_{4}$ the jump matrix $\mathsf{V}^{(3)}$ is listed as follows
	\begin{equation*}
		\begin{aligned}
			\mathsf{V}_{1}^{(3)}&=(\mathsf{V}_8^{(2)L})^{-1}=\begin{pmatrix}
				1 & 0 & 0\\
				-\frac{\mathcal{D}_{4}^{2}}{\mathcal{D}_{2}\mathcal{D}_{6}}\rho_{1,a}^*(k)e^{t\theta_{21}} & 1 & \frac{\mathcal{D}_{4}\mathcal{D}_{6}}{\mathcal{D}_{2}^2}(\rho_{2,a}(\omega k)+\rho_{1}^{*}(k)\rho_{2,a}(\omega^2k))e^{-t\theta_{32}}\\
				0 & 0 & 1
			\end{pmatrix},\\
			\mathsf{V}_{2}^{(3)}&=\mathsf{V}_4^{(2)U}=\begin{pmatrix}
				1 &  -\frac{\mathcal{D}_{2}\mathcal{D}_{6}}{\mathcal{D}_{4}^2}\hat{\rho}_{1,a}(k)e^{-t\theta_{21}} & \frac{\mathcal{D}_{6}^{2}}{\mathcal{D}_{4}\mathcal{D}_{2}}\rho_{2,a}(\omega^2 k)e^{-t\theta_{31}}\\
				0 & 1 & 0\\
				0 & 0 & 1
			\end{pmatrix},\\
			\mathsf{V}_{3}^{(3)}&=\mathsf{V}_4^{(2)L}=\begin{pmatrix}
				1 & 0 & 0\\
				\frac{\mathcal{D}_{4}^2}{\mathcal{D}_{2}\mathcal{D}_{6}}\hat{\rho}_{1,a}^{*}(k)e^{t\theta_{21}} & 1 & -\frac{\mathcal{D}_{6}\mathcal{D}_{4}}{\mathcal{D}_{2}^2}\rho_{2,a}^*(\omega k)e^{-t\theta_{32}}\\
				0 & 0 & 1
			\end{pmatrix},
		\end{aligned}
	\end{equation*}
	\begin{equation*}
		\begin{aligned}
			&\mathsf{V}_{4}^{(3)}=(\mathsf{V}_8^{(2)U})^{-1}=\begin{pmatrix}
				1 & \frac{\mathcal{D}_{6}\mathcal{D}_{2}}{\mathcal{D}_{4}^2}\rho_{1,a}(k)e^{-t\theta_{21}} & -\frac{\mathcal{D}_{6}^{2}}{\mathcal{D}_{2}\mathcal{D}_4}\beta_{a}(k)e^{-t\theta_{31}}\\
				0 & 1 & 0\\
				0 & 0 & 1
			\end{pmatrix},\\
			&\mathsf{V}_{5}^{(3)}=\mathsf{V}_5^{(2)}=\begin{pmatrix}
				1 & 0 & 0\\
				0 & 1-\rho_{2,r}(\omega k)\rho_{2,r}^*(\omega k) & -\frac{\mathcal{D}_{4}\mathcal{D}_{6}}{\mathcal{D}_{2}^{2}}\rho_{2,r}^*(\omega k)e^{-t\theta_{32}}\\
				0 & \frac{\mathcal{D}_{2}^{2}}{\mathcal{D}_{4}\mathcal{D}_{6}}\rho_{2,r}(\omega k)e^{t\theta_{32}} & 1
			\end{pmatrix},\\ &\mathsf{V}_{6}^{(3)}=\mathsf{V}_{8}^{(2)U}\mathsf{V}_{5}^{(2)}\mathcal{A}^{-1}\mathsf{V}_{8}^{(2)L}\mathcal{A}=\begin{pmatrix}
				1 & -\frac{\mathcal{D}_{2}\mathcal{D}_{6}}{\mathcal{D}_{4}^{2}}g(k)e^{-t\theta_{21}} & \frac{\mathcal{D}_{6}^{2}}{\mathcal{D}_{2}\mathcal{D}_{4}}f(k)e^{-t\theta_{31}}\\
				0 & 1-\rho_{2,r}(\omega k)\rho_{2,r}^{*}(\omega k) & -\frac{\mathcal{D}_{4}\mathcal{D}_{6}}{\mathcal{D}_{2}^{2}}\rho_{2,r}^{*}(\omega k)e^{-t\theta_{32}}\\
				0 & \frac{\mathcal{D}_{2}^{2}}{\mathcal{D}_{4}\mathcal{D}_{6}}\rho_{2,r}(\omega k)e^{t\theta_{32}} & 1
			\end{pmatrix},
		\end{aligned}
	\end{equation*}
	\begin{equation*}
		\begin{aligned}
			\mathsf{V}_{7}^{(3)}&=\mathsf{V}_{8,r}^{(2)}=\begin{pmatrix}
				1 & -\frac{\mathcal{D}_{6}\mathcal{D}_{2}}{\mathcal{D}_{4}^2}\rho_{1,r}(k)e^{-t\theta_{21}} & \frac{\mathcal{D}_{6}^{2}}{\mathcal{D}_{2}\mathcal{D}_4}\beta_{r}(k)e^{-t\theta_{31}}\\
				\frac{\mathcal{D}_{4}^{2}}{\mathcal{D}_{2}\mathcal{D}_{6}}\rho_{1,r}^*(k)e^{t\theta_{21}} & 1-\rho_{2,r}(k)\rho_{2,r}^*(k) & \frac{\mathcal{D}_{4}\mathcal{D}_{6}}{\mathcal{D}_{2}^2}\rho_{1,r}^*(k)(\rho_{1,r}\hat{\rho}_{2,a}^{*}(\omega k)-\rho_{2,a}(\omega^2k))e^{-t\theta_{32}}\\
				0 & 0 & 1
			\end{pmatrix},\\
			\mathsf{V}_{8}^{(3)}&=\mathsf{V}_{4,r}^{(2)}=\begin{pmatrix}
				1-\frac{\mathcal{D}_{4+}^2}{\mathcal{D}_{4-}^{2}} \hat{\rho}_{1,r}(k)\hat{\rho}_{1,r}^*(k)  & -\frac{\mathcal{D}_{2}\mathcal{D}_{6}}{\mathcal{D}_{4-}^2}\hat{\rho}_{1,r}(k)e^{-t\theta_{21}} & 0\\
				\frac{\mathcal{D}_{4+}^2}{\mathcal{D}_{6}\mathcal{D}_{2}}\hat{\rho}_{1,r}^*(k)e^{t\theta_{21}} & 1 & 0\\
				0 & 0 & 1
			\end{pmatrix},
		\end{aligned}
	\end{equation*}
	where
	\begin{equation*}
		\begin{aligned}
			f(k)&=\rho_{2,a}(\omega^2k)+\rho_{1,a}(k)\rho_{2}^{*}(\omega k)+\rho_{1,a}^{*}(\omega^2k),\\
			g(k)&=\rho_{2,r}(\omega k)(\rho_{2,a}(\omega^2k)+\rho_{1,a}(k)\rho_{2,a}^{*}(\omega k))-\rho_{1,a}(1-\rho_{2,r}(\omega k)\rho_{2,r}^{*}(\omega k))\\
			&\quad -\rho_{2,a}^{*}(k)-\rho_{1,a}^{*}(\omega^2k)\rho_{2,a}(\omega k).
		\end{aligned}
	\end{equation*}
	\par
	The properties of $f(k)$ and $g(k)$ are proposed in the following lemma.
	
	\begin{lemma}\label{property of f,g}
		There is a positive constant $C$ such that
		\begin{equation}
			|\partial_{x}^{l}f(k)|\le C|k|e^{\frac{t}{4}|{\rm Re}{\theta_{21}}(\eta,k)|},\qquad |\partial_{x}^{l}g(k)|\le C|k|e^{\frac{t}{4}|{\rm Re}{\theta_{21}}(\eta,k)|},
		\end{equation}
		for $k\in\Gamma^{(3)}$ and $l=0,1$.
	\end{lemma}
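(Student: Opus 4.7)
The plan is to exploit the fact that both $f$ and $g$ vanish identically at $k=0$ --- a consequence of the identity $1+\omega+\omega^{2}=0$ combined with $\rho_{1}(0)=\rho_{2}(0)=\omega$ --- and then extract the $|k|$ factor from the Lipschitz-type bounds already established for $\rho_{1,a}$, $\rho_{2,a}$ (and, by difference, for $\rho_{2,r}$) in the decomposition lemma preceding the second transformation. First I would read off from the decomposition lemma that $\rho_{j,a}(x,t,0)=\rho_{j}(0)=\omega$ for $j=1,2$, so that (since $\rho_{2}=\rho_{2,a}+\rho_{2,r}$ is continuous at $0$ with value $\omega$) $\rho_{2,r}(x,t,0)=0$ and, by subtracting the two Lipschitz-type bounds, $|\rho_{2,r}(x,t,k)|=O(|k|)$ near the origin. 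A direct substitution then yields $f(0)=\omega+\omega\cdot\omega^{2}+\omega^{2}=0$ and $g(0)=0\cdot(\cdots)-\omega(1-0)-\omega^{2}-\omega^{2}\cdot\omega=-(1+\omega+\omega^{2})=0$.

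For the $l=0$ estimate I would rewrite
\[
f(k)=[\rho_{2,a}(\omega^{2}k)-\omega]+[\rho_{1,a}(k)-\omega]\rho_{2}^{*}(\omega k)+\omega[\rho_{2}^{*}(\omega k)-\omega^{2}]+[\rho_{1,a}^{*}(\omega^{2}k)-\omega^{2}],
\]
so that $f(k)=f(k)-f(0)$ is displayed as a sum of "increments", each multiplied by a bounded companion. Every increment is then controlled by the Lipschitz bound $|\rho_{j,a}(\,\cdot\,)-\rho_{j}(0)|\le C|k|\,e^{(t/4)|\mathrm{Re}\,\theta_{21}|}$ from the decomposition lemma. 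The function $g$ admits an analogous rearrangement into a finite sum of at-most-bilinear products built from the factors $\rho_{j,a}(\omega^{s}k)-\omega$, $\rho_{j,a}^{*}(\omega^{s}k)-\omega^{2}$, $\rho_{2,r}(\omega k)$, each of which is $O(|k|)$ with the required exponential weight, while the accompanying factor stays uniformly bounded on $\Gamma^{(3)}$.

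The $l=1$ bound follows the same template: since $\rho_{j}(0)$ is independent of $x$, $\partial_{x}(\rho_{j,a}-\rho_{j}(0))=\partial_{x}\rho_{j,a}$, and the $j=1$ instance of the decomposition lemma provides the $|k|$ factor for the derivative directly; applying the product rule to $f$ and $g$ produces terms in each of which at least one factor is a derivative of an "increment" supplying the $|k|$, while the remaining factors are bounded.

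The main obstacle is reconciling the exponential weights. Substituting $k\mapsto\omega^{s}k$ into the decomposition lemma yields $e^{(t/4)|\mathrm{Re}\,\theta_{21}(\eta,\omega^{s}k)|}$, which by the $\mathbb{Z}_{3}$ symmetry $\theta_{ij}(\eta,\omega^{s}k)=\theta_{i+s,j+s}(\eta,k)$ (indices $\mathrm{mod}\,3$) is a \emph{sibling} exponential rather than the $\theta_{21}$ exponential demanded by the statement. I would verify on the specific arc $\Gamma_{6}^{(3)}$ on which $f,g$ enter that each sibling weight is dominated by $e^{(t/4)|\mathrm{Re}\,\theta_{21}(\eta,k)|}$ (up to an adjustable multiplicative constant that is absorbed into $C$), shrinking the opening angle of the deformed lenses if necessary, so that all individual estimates combine into the single uniform bound stated in the lemma.
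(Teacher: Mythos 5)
The paper states this lemma without proof, so there is no in-paper argument to compare you against; judged on its own terms, your proposal identifies the correct mechanism and is essentially complete. The identities $f(0)=\omega+\omega^{3}+\omega^{2}=0$ and $g(0)=-(1+\omega+\omega^{2})=0$ do hold (using $\rho_{1}(0)=\rho_{2}(0)=\omega$ and $\rho_{2,r}(x,t,0)=\rho_{2}(0)-\rho_{2,a}(x,t,0)=0$, both read off from the decomposition lemma), and telescoping against the bounds $|\partial_{x}^{j}(\rho_{i,a}-\rho_{i}(0))|\le C|k|e^{\frac{t}{4}|\mathrm{Re}\,\theta_{21}|}$ is exactly how the analogous estimate is obtained for the good Boussinesq equation in \cite{Charlier-Lenells-Wang-2021}. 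The one step you defer --- reconciling the sibling weights $e^{\frac{t}{4}|\mathrm{Re}\,\theta_{21}(\eta,\omega^{s}k)|}$ with $e^{\frac{t}{4}|\mathrm{Re}\,\theta_{21}(\eta,k)|}$ --- resolves more cleanly than you anticipate and requires no shrinking of lenses: $f$ and $g$ enter only the jump $\mathsf{V}_{6}^{(3)}$, and $\Gamma_{6}^{(3)}$ lies on the line $e^{\mathrm{i}\pi/3}\mathbb{R}$, on which $\omega k\in\mathbb{R}$ (so $\mathrm{Re}\,\theta_{21}(\eta,\omega k)=0$ and that weight is trivially $1$) and $\omega^{2}k=\bar{k}$, whence $|\mathrm{Re}\,\theta_{21}(\eta,\omega^{2}k)|=|\mathrm{Re}\,\theta_{21}(\eta,k)|$ because $\theta_{21}(\eta,k)=\sqrt{3}\mathrm{i}k(\eta+k)$ gives $\mathrm{Re}\,\theta_{21}(\eta,\bar{k})=-\mathrm{Re}\,\theta_{21}(\eta,k)$. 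Two small points should still be written out: (i) each rotated or conjugated argument must be checked to land in the closure of the region ($\bar{D}_{4}$ or $\bar{D}_{3}$) on which the corresponding inequality of the decomposition lemma is actually asserted; and (ii) the bilinear terms of $g$, and the product rule for $l=1$, occasionally produce two factors of size $O(|k|)$, i.e.\ $|k|^{2}$ rather than $|k|$ --- harmless since $|k|$ is bounded on $\Gamma_{6}^{(3)}$ uniformly for $\eta\in\mathcal{K}$, but it needs saying. Finally, be aware that on the real subarcs of $\Gamma^{(3)}$ one has $\mathrm{Re}\,\theta_{21}\equiv 0$ while $\mathrm{Re}\,\theta_{21}(\eta,\omega^{s}k)\not\equiv 0$, so the stated bound would not follow from the decomposition lemma there; the phrase ``for $k\in\Gamma^{(3)}$'' must be read as referring to the arc where $f$ and $g$ are defined and used, which is precisely where your argument applies.
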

	
	\begin{lemma}
		For $k\in\Gamma^{(3)}$ and $\eta\in\mathcal{K}$, except the critical points $\{k_0,\omega k_0,\omega^2k_0\}$, the jump matrix $\mathsf{V}^{(3)}$ uniformly converges to $I$. Furthermore, the matrices $\mathsf{V}^{(3)}_{j}~(j=5,6,7,8)$ possess the following estimations
		\begin{subequations}
			\begin{align}
				&||(1+|\cdot|)\partial_{x}^{i}(\mathsf{V}^{(3)}-I)||_{(L^1\cap L^{\infty })(\Gamma_{5}^{(3)})}\le Ct^{-3/2},\\
				&||(1+|\cdot|)\partial_{x}^{i}(\mathsf{V}^{(3)}-I)||_{L^1(\Gamma_{6}^{(3)})}\le Ct^{-3/2},\\
				&||(1+|\cdot|)\partial_{x}^{i}(\mathsf{V}^{(3)}-I)||_{L^{\infty }(\Gamma_{6}^{(3)})}\le Ct^{-1},\\
				&||(1+|\cdot|)\partial_{x}^{i}(\mathsf{V}^{(3)}-I)||_{(L^1\cap L^{\infty })(\Gamma_{7}^{(3)}\cap \Gamma_{8}^{(3)})}\le Ct^{-3/2},
			\end{align}
		\end{subequations}
		uniformly for $\eta\in\mathcal{K}$ and $i=0,1$.
	\end{lemma}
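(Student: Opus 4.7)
The plan is to bound $\mathsf{V}^{(3)}-I$ entrywise on each listed contour by combining three ingredients: (i) the exponential decay of the phases $e^{\pm t\theta_{ij}}$ away from the critical points $\{k_0,\omega k_0,\omega^2 k_0\}$, read off from the sign signature of $\mathrm{Re}\,\theta_{21}$ in Figure \ref{signature-points}; (ii) the uniform boundedness of the scalar factors $\mathcal{D}_j(\eta,k)^{\pm 1}$ from Lemma \ref{property of HD}(3); and (iii) the tailored $(x,t,k)$-decay of the remainder terms $\rho_{j,r}$ and $\hat{\rho}_{1,r}$ supplied by the decomposition lemma. Pointwise convergence to $I$ follows at once: on $\Gamma_5^{(3)},\Gamma_7^{(3)},\Gamma_8^{(3)}$ every non-trivial entry carries a factor $\rho_{j,r}$ or $\hat{\rho}_{1,r}$ (the phase is purely oscillatory there, and the chosen decomposition forces these remainders to be rapidly decaying in $t$), while on $\Gamma_6^{(3)}$ the phases themselves provide exponential decay away from $k_0$.

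For the quantitative bounds I would proceed contour by contour. On $\Gamma_5^{(3)},\Gamma_7^{(3)},\Gamma_8^{(3)}$, a standard integration-by-parts argument against the non-stationary phase in the oscillatory representation of the remainders yields estimates of the shape
\[
(1+|k|)^2\,|\partial_x^i \rho_{j,r}(x,t,k)| \le C t^{-3/2}, \qquad i=0,1,
\]
and similarly for $\hat{\rho}_{1,r}$. Multiplying by the uniformly bounded $\mathcal{D}$-factors and integrating over the rays against the weight $(1+|k|)$ delivers the $L^1\cap L^\infty$ bound $Ct^{-3/2}$.

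On $\Gamma_6^{(3)}$ the dominant $(1,2)$- and $(1,3)$-entries involve $f(k)$ and $g(k)$; Lemma \ref{property of f,g} gives $|\partial_x^l f|,|\partial_x^l g|\le C|k|\,e^{\frac{t}{4}|\mathrm{Re}\,\theta_{21}|}$, which combined with the phase $e^{-t\theta_{21}}$ yields an effective Gaussian $e^{-\frac{3t}{4}|\mathrm{Re}\,\theta_{21}|}$. Parametrising by arclength $r=|k-k_0|$ gives $|\mathrm{Re}\,\theta_{21}|\asymp r^2$, so the $L^\infty$ norm (attained at $k_0$, where only $|k|\le C$ remains in front) is $O(t^{-1})$, whereas integrating $e^{-ctr^2}$ contributes an extra $t^{-1/2}$ to produce the $L^1$ bound $O(t^{-3/2})$. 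The remaining entries of $\mathsf{V}_6^{(3)}$ contain $\rho_{2,r}(\omega k)$ and decay at the faster rate $t^{-3/2}$, so they do not alter the leading $L^\infty$ order.

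The principal obstacle is the derivative case $i=1$: differentiating $e^{\pm t\theta_{ij}}$ in $x$ brings down a factor $t(l_i-l_j)=O(t(k-k_0))$ near $k_0$, which threatens to destroy the decay. The remedy is the observation that $(k-k_0)\,e^{-ct(k-k_0)^2}$ is $O(t^{-1/2})$ in $L^\infty$ and $O(t^{-1})$ in $L^1$, combined with the decomposition lemma's estimates $|\partial_x(\rho_{j,a}(x,t,k)-\rho_{j,a}(k_0))|\le C|k-k_0|e^{\frac{t}{4}|\mathrm{Re}\,\theta_{21}|}$, which furnish an extra factor $|k-k_0|$ where it is needed to cancel the $t$. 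A careful bookkeeping of these cancellations reproduces the same orders for $i=1$ as for $i=0$, and the $(1+|k|)$ weight is absorbed throughout by the built-in $k$-decay of the $(a)$- and $(r)$-pieces. Uniform convergence to $I$ on compact subsets of $\Gamma^{(3)}\setminus\{k_0,\omega k_0,\omega^2 k_0\}$ then follows from the pointwise exponential estimates together with the boundedness of all remaining factors.
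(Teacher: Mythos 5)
The paper states this lemma without proof, so there is nothing to compare against line by line; judging your proposal on its own terms, the overall strategy is the right one and is surely what the authors intend: the entries of $\mathsf{V}^{(3)}-I$ on $\Gamma_5^{(3)},\Gamma_7^{(3)},\Gamma_8^{(3)}$ carry only remainder factors $\rho_{j,r}$, $\hat{\rho}_{1,r}$ (whose $O(t^{-3/2})$ decay you correctly supply by non-stationary-phase/integration-by-parts, since the paper's decomposition lemma omits these bounds), the $\mathcal{D}_j^{\pm 1}$ are uniformly bounded by Lemma \ref{property of HD}, and the exponentials are controlled by the sign of $\mathrm{Re}\,\theta_{ij}$. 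That disposes of three of the four estimates.

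The genuine gap is in your treatment of $\Gamma_6^{(3)}$, which is precisely the one estimate with distinct $L^1$ and $L^\infty$ rates. You locate the $L^\infty$ supremum ``at $k_0$'' and argue that a bounded prefactor $|k|\le C$ times the Gaussian $e^{-ct|k-k_0|^2}$ gives $O(t^{-1})$; but a bounded prefactor times a Gaussian centered on the contour has supremum $O(1)$, not $O(t^{-1})$, so as written the computation proves nothing. Moreover $\Gamma_6^{(3)}$ does not pass through $k_0$ at all: it is the portion of the ray $\omega^2\mathbb{R}$ emanating from the \emph{origin} (the lens contours near $k_0$ are $\Gamma_{1,\dots,4}^{(3)}$ and are excluded from this lemma). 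The actual mechanism is that $\mathrm{Re}\,\theta_{21}$ and $\mathrm{Re}\,\theta_{31}$ vanish to first order at $k=0$ (the origin lies on the boundary of the sign regions, though it is not a stationary point), while Lemma \ref{property of f,g} gives the compensating linear vanishing $|\partial_x^l f|,|\partial_x^l g|\le C|k|e^{\frac{t}{4}|\mathrm{Re}\theta_{21}|}$; combining with $e^{-t\theta_{21}}$, $e^{-t\theta_{31}}$ one gets entries bounded by $C|k|e^{-ct|k|}$ near $k=0$, whence $\sup_k|k|e^{-ct|k|}=O(t^{-1})$ for the $L^\infty$ bound and $\int_0^\infty |k|e^{-ct|k|}\,dk=O(t^{-2})$ for that contribution to $L^1$; the stated $L^1$ rate $t^{-3/2}$ is then dictated by the $\rho_{2,r}(\omega k)$ entries of $\mathsf{V}_6^{(3)}$, not by a ``Gaussian width $t^{-1/2}$'' argument. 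You should redo this paragraph with the origin, rather than $k_0$, as the degenerate point; the $i=1$ discussion for this contour likewise needs the factor $t(k-0)$ from $\partial_x e^{-t\theta_{ij}}$ absorbed by $|k|e^{-ct|k|}$ at the origin rather than by $|k-k_0|e^{-ct(k-k_0)^2}$ at $k_0$.
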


	\subsection{Local parametrix at the critical point $k_0$}
	Based on analysis above, a RH problem for $\mathcal{M}^{(3)}(x,t,k)$ is achieved, where the jump matrix $\mathsf{V}^{(3)}$ decays to $I$ as $t$ tends to infty everywhere except the region close to three points $\{k_{0}, \omega k_{0}, \omega^2k_{0}\}$. Therefore, as $t$ tends to infty, we only need to consider the neighborhoods of these critical points. So replace the RH problem for $\mathcal{M}^{(3)}(x,t,k)$ with a RH problem for $\mathcal{M}^{k_{0}}(x,t,k)$ that approximates $\mathcal{M}^{(3)}$ near $k_{0}$ as $t\to\infty$ and can be solved exactly.
	
	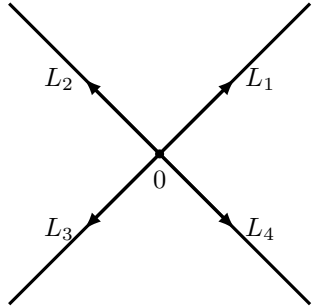
\begin{figure}[htp]
		\centering
		\begin{tikzpicture}[>=latex]
			\draw[very thick] (-2,-2)--(2,2);
			\draw[very thick] (-2,2)--(2,-2);
			\filldraw[black] (0,0) node[below=1mm]{${0}$} circle (1.5pt);
			\draw[<->,very thick] (-1,-1)node[left]{$L_{3}$}--(1,1)node[right]{$L_{1}$};
			\draw[<->,very thick] (-1,1)node[left]{$L_{2}$}--(1,-1)node[right]{$L_{4}$};
		\end{tikzpicture}
		\caption{The contour $\mathbb{L}=L_{1}\cup L_{2}\cup L_{3}\cup L_{4}$.}
		\label{L figure}
	\end{figure}
	
	\par
	Take $\epsilon=\epsilon(\eta)=k_0/2$ and denote $B_{\epsilon}(k_0)$ as an open disk with center $k_0$ and radius $\epsilon$. Set $\mathbb{B}=B_{\epsilon}(k_0)\cup \omega B_{\epsilon}(k_0)\cup\omega^2B_{\epsilon}(k_0)$ and let $L=L_{1}\cup L_{2}\cup L_{3}\cup L_{4}$ be the contour defined in Figure \ref{L figure}. Further, set $\mathbf{L}=k_{0}+L$, $\mathbf{L}^{\epsilon}=\mathbf{L}\cap B_{\epsilon}(k_0)$ and $\mathbf{L}^{\epsilon}_{j}=(k_0+L_{j})\cap B_{\epsilon}(k_0)$ for $j=1,2,3,4$.
	
	\par
	Firstly, let us introduce a new variable $z=z(\eta,k)$ by
	\begin{equation}\label{z}
		z=3^{1/4}\sqrt{2t}(k-k_{0}).
	\end{equation}
	Then the function $\theta_{21}$ can be rewritten as
	\begin{equation*}
		\theta_{21}(\eta,k)=\theta_{21}(\eta,k_0)+\sqrt{3}i(k-k_0)^{2},
	\end{equation*}
	where $\theta_{21}(\eta,k_0)=-\sqrt{3}ik_0^{2}$. It's easy to find
	\begin{equation*}
		t(\theta_{21}(\eta,k)-\theta_{21}(\eta,k_{0}))=\frac{\mathrm{i}}{2}z^{2}.
	\end{equation*}
	For $\eta\in\mathcal{K}$ and $k\in B_{\epsilon}(k_{0})\setminus(-\infty,k_{0}]$, it follows from (\ref{the expansion of D}) and (\ref{z}) that
	\begin{equation*}	\frac{\mathcal{D}_{2}\mathcal{D}_{6}}{\mathcal{D}_{4}^{2}}=e^{2\mathrm{i}\mu\ln_{\pi}(z)}(2\sqrt{3}t)^{-\mathrm{i}\mu}e^{2\kappa_{4}(\eta,k)}\mathcal{D}_{2}\mathcal{D}_{6}=e^{2\mathrm{i}\mu\ln_{\pi}(z)}d_{0}(\eta,t)d_{1}(\eta,k),
	\end{equation*}
	where
	\begin{equation*}
		\begin{aligned} d_{0}(\eta,t)&=(2\sqrt{3}t)^{-\mathrm{i}\mu}e^{2\kappa_{4}(\eta,k_0)}\mathcal{D}_{2}(\eta,k_{0})\mathcal{D}_{6}(\eta,k_0),\\
			d_{1}(\eta,k)&=e^{2\kappa_{4}(\eta,k)-2\kappa_{4}(\eta,k_{0})}\frac{\mathcal{D}_{2}(\eta,k)\mathcal{D}_{6}(\eta,k)}{\mathcal{D}_{2}(\eta,k_{0})\mathcal{D}_{6}(\eta,k_0)}.
		\end{aligned}
	\end{equation*}
	\par
	Introduce the transformation for $\mathcal{M}^{(3)}(x,t,k)$ with $k$ near $k_{0}$ by
	\begin{equation*}
		\widehat{\mathcal{M}}(x,t,k)=\mathcal{M}^{(3)}(x,t,k)T(\eta,t),\qquad k\in B_{\epsilon}(k_{0}),
	\end{equation*}
	where
	\begin{equation*}
		T(\eta,k)=\begin{pmatrix}
			d_0^{1/2}(\eta,t)e^{-\frac{t}{2}\theta_{21}(\eta,k_0)} & 0 & 0\\
			0 & d_0^{-1/2}(\eta,t)e^{\frac{t}{2}\theta_{21}(\eta,k_0)} & 0\\
			0 & 0 & 1
		\end{pmatrix}.
	\end{equation*}
	Then the jump matrix $\widehat{\mathsf{V}}(x,t,k)$ for the function $\widehat{\mathcal{M}}(x,t,k)$ across the contour $\mathbf{L}^{\epsilon}$ is given by
	\begin{equation*}
		\begin{aligned}
			\widehat{\mathsf{V}}_{1}&=\begin{pmatrix}
				1 & 0 & 0\\
				-e^{-2i\nu\ln_{\pi}(z)}d_{1}^{-1}\rho_{1,a}^*(k)e^{\frac{iz^{2}}{2}} & 1 & \frac{\mathcal{D}_{4}\mathcal{D}_{6}}{\mathcal{D}_{2}^2}d_0^{1/2}(\rho_{2,a}(\omega k)+\rho_{1}^{*}(k)\rho_{2,a}(\omega^2k))e^{-\frac{t}{2}\theta_{21}(\eta,k_0)}e^{-t\theta_{32}}\\
				0 & 0 & 1
			\end{pmatrix},\\
			\widehat{\mathsf{V}}_{2}&=\begin{pmatrix}
				1 &  -e^{2i\mu\ln_{\pi}(z)}d_{1}\hat{\rho}_{1,a}(k)e^{-\frac{iz^{2}}{2}} & \frac{\mathcal{D}_{6}^{2}}{\mathcal{D}_{4}\mathcal{D}_{2}}d_0^{-1/2}\rho_{2,a}(\omega^2 k)e^{\frac{t}{2}\theta_{21}(\eta,k_0)}e^{-t\theta_{31}}\\
				0 & 1 & 0\\
				0 & 0 & 1
			\end{pmatrix},\\
			\widehat{\mathsf{V}}_{3}&=\begin{pmatrix}
				1 & 0 & 0\\
				e^{-2i\mu\ln_{\pi}(z)}d_{1}^{-1}\hat{\rho}_{1,a}^{*}(k)e^{\frac{iz^{2}}{2}} & 1 & -\frac{\mathcal{D}_{6}\mathcal{D}_{4}}{\mathcal{D}_{2}^2}d_0^{1/2}\rho_{2,a}^*(\omega k)e^{-\frac{t}{2}\theta_{21}(\eta,k_0)}e^{-t\theta_{32}}\\
				0 & 0 & 1
			\end{pmatrix},\\
			\widehat{\mathsf{V}}_{4}&=\begin{pmatrix}
				1 & e^{2i\mu\ln_{\pi}(z)}d_{1}\rho_{1,a}(k)e^{-\frac{iz^{2}}{2}} & -\frac{\mathcal{D}_{6}^{2}}{\mathcal{D}_{2}\mathcal{D}_4}d_0^{-1/2}\beta_{a}(k)e^{\frac{t}{2}\theta_{21}(\eta,k_0)}e^{-t\theta_{31}}\\
				0 & 1 & 0\\
				0 & 0 & 1
			\end{pmatrix},
		\end{aligned}
	\end{equation*}
	where $\widehat{\mathsf{V}}_{j}$ represents the restriction of $\widehat{\mathsf{V}}$ to $\mathbf{L}^{\epsilon}_{j}$ for $j=1,2,3,4$.
	\par
	Set $p(\eta)=\rho_1(k_0)$.
	Fixing the parameter $z$, we have $\rho_{1,a}(k)\to p$, $\hat{\rho}_{1,a}^{*}\to \frac{\bar{p}}{1-|p|^2}$ and $d_1\to 1$ as $t\to\infty$. Further, define $\mu:~\mathbb{D}\to(0,\infty )$ by $\mu(p)=-\frac{1}{2\pi}\ln(1-|p|^{2})$, where $\mathbb{D}$ denotes an open unit disk on $\mathbb{C}$.
	Then the jump matrix $\widehat{\mathsf{V}}(x,t,k)$ for the function $\widehat{\mathcal{M}}(x,t,k)$ across $\mathbf{L}^{\epsilon}$ decays to $\mathsf{V}^{L}(p,z)$, which is expressed by
	\begin{equation}\label{Jump-for-ML}
		\begin{aligned}
			\mathsf{V}^{L}_{1}&=\begin{pmatrix}
				1 & 0 & 0\\
				-\bar{p}z^{-2i\mu(p)}e^{\frac{iz^{2}}{2}} & 1 & 0\\
				0 & 0 & 1
			\end{pmatrix}, \quad
			\mathsf{V}^{L}_{2}=\begin{pmatrix}
				1 &  -\frac{p}{1-|p|^2}z^{2i\mu(p)}e^{-\frac{iz^{2}}{2}} & 0\\
				0 & 1 & 0\\
				0 & 0 & 1
			\end{pmatrix},\\
			\mathsf{V}^{L}_{3}&=\begin{pmatrix}
				1 & 0 & 0\\
				\frac{\bar{p}}{1-|p|^2}z^{-2i\mu(p)}e^{\frac{iz^{2}}{2}} & 1 & 0\\
				0 & 0 & 1
			\end{pmatrix},\quad
			\mathsf{V}^{L}_{4}=\begin{pmatrix}
				1 & pz^{2i\mu(p)}e^{-\frac{iz^{2}}{2}} & 0\\
				0 & 1 & 0\\
				0 & 0 & 1
			\end{pmatrix},
		\end{aligned}
	\end{equation}
	with $z^{2i\mu(p)}=e^{2i\mu(p){\ln_{\pi}(z)}}$ for choosing the branch cut running along $\R_-$, as $t\to\infty$.
	\par
	In what follow, a RH problem for $\mathcal{M}^{L}(p,z)$ is constructed, which can be exactly solved using the parabolic cylinder functions.
	\begin{RH problem}[RH problem for $\mathcal{M}^{L}(p,z)$]\label{RH problem for ML}
		The $3 \times 3$ matrix-valued function $\mathcal{M}^{L}(p,z)$ satisfies the following properties:
		\begin{enumerate}[$(1)$]
			\item $\mathcal{M}^{L}(p, \cdot)$ is analytic for $z \in\C\setminus \mathbf{L}$.
		\end{enumerate}
		
		\begin{enumerate}[$(2)$]
			\item $\mathcal{M}^{L}(p,z)$ is continuous to $\mathbf{L}\setminus\{0\}$ and meets the jump condition below:
			\begin{equation*}
				\mathcal{M}^{L}_{+}(p,z)=\mathcal{M}^{L}_{-}(p,z)\mathsf{V}^{L}_j(p,z),\quad z\in\mathbf{L}\setminus \{0\},
			\end{equation*}
			where the jump functions $\mathsf{V}_j^{L}(p,z)~(j=1,2,3,4)$ are defined in (\ref{Jump-for-ML}).
		\end{enumerate}
		
		\begin{enumerate}[$(3)$]
			\item $\mathcal{M}^{L}(p,z)\to I$  as $z\to\infty$.
		\end{enumerate}
		
		\begin{enumerate}[$(4)$]
			\item $\mathcal{M}^{L}(p,z)\to O(1)$ as $z\to 0$.
		\end{enumerate}
		
	\end{RH problem}

	The following theorem studies the RH problem \ref{RH problem for ML} in a standard way.
	
	\begin{theorem}\label{solution to ML}
		There is a unique solution $\mathcal{M}^{L}(p,z)$ of the RH problem \ref{RH problem for ML} for each $p\in\mathbb{D}$. Moreover, the solution
		$\mathcal{M}^{L}(p,z)$ satisfies
		\begin{equation*}
			\mathcal{M}^{L}(p, z)=I+\frac{\mathcal{M}_{1}^{L}(p)}{z}+O(\frac{1}{z^{2}}),\qquad z\to\infty,\quad p\in\mathbb{D},
		\end{equation*}
		and the matrix $\mathcal{M}_{1}^{L}(p)$ is defined below
		\begin{equation*}
			\mathcal{M}_{1}^{L}(p)=\begin{pmatrix}
				0 & \alpha_{12} &  0\\
				\alpha_{21} & 0 & 0\\
				0 & 0 & 0
			\end{pmatrix},\qquad p\in\mathbb{D},
		\end{equation*}
		where the functions $\alpha_{12}$ and $\alpha_{21}$ are given by
		\begin{equation*}
			\alpha_{12}=\frac{\sqrt{2\pi}e^{-\frac{\pi i}{4}}e^{-\frac{\pi\mu}{2}}}{\bar{p}\Gamma(-i\mu)},\quad \alpha_{21}=\frac{\sqrt{2\pi}e^{\frac{\pi i}{4}}e^{\frac{-\pi\mu}{2}}}{p\Gamma(i\mu)}.
		\end{equation*}

	\end{theorem}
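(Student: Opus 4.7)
The plan is to reduce this to a classical parabolic cylinder model problem and then solve it by the standard Deift--Zhou technique. Inspecting the jump matrices $\mathsf{V}^L_j$ in (\ref{Jump-for-ML}), every one of them acts as the identity on the third row and third column: all nontrivial entries live in the upper-left $2\times 2$ block. Hence I would first write
\begin{equation*}
\mathcal{M}^L(p,z)=\begin{pmatrix}\widetilde{\mathcal{M}}^L(p,z) & 0\\ 0 & 1\end{pmatrix},
\end{equation*}
and observe that $\widetilde{\mathcal{M}}^L$ must solve a $2\times 2$ RH problem on $\mathbf{L}$ with jumps that are precisely the upper-left $2\times 2$ blocks of the $\mathsf{V}^L_j$'s. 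This is exactly the model problem that appears in the long-time analysis of NLS, mKdV, and the good Boussinesq/modified Boussinesq equations, so I would follow the same route.

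Next I would carry out a conjugation that turns the oscillatory, $z$-dependent jumps into piecewise constant ones. Setting
\begin{equation*}
\Psi(z):=\widetilde{\mathcal{M}}^L(p,z)\,z^{-i\mu\sigma_3}e^{\tfrac{i}{4}z^{2}\sigma_3},
\end{equation*}
with $\sigma_3=\mathrm{diag}(1,-1)$ and the branch cut of $z^{-i\mu}$ running along $\mathbb{R}_-$, a direct computation shows that on each ray $k_0+L_j$ the new jumps become constant, matrix polynomial in $p,\bar p$. Because $\Psi$ grows only like $e^{iz^2/4}z^{-i\mu\sigma_3}$ at infinity and is bounded near $0$, the logarithmic derivative $\bigl(\tfrac{d\Psi}{dz}+\tfrac{iz}{2}\sigma_3\Psi\bigr)\Psi^{-1}$ has no jumps across $\mathbf{L}$, is entire in $z$, and is $O(1)$ as $z\to\infty$. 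Liouville's theorem therefore forces the identity
\begin{equation*}
\frac{d\Psi}{dz}+\frac{iz}{2}\sigma_3\Psi=B(p)\Psi,
\end{equation*}
for some constant off-diagonal $2\times 2$ matrix $B(p)$. Eliminating one component from this first-order linear system yields Weber's parabolic cylinder equation $\partial_z^2 f+\bigl(\tfrac12+i\mu-\tfrac{z^2}{4}\bigr)f=0$ for the diagonal entries and a companion equation for the off-diagonal entries, whose solutions $D_{i\mu}(\pm e^{\pm 3i\pi/4}z)$ are uniquely selected in each sector by the normalization $\widetilde{\mathcal{M}}^L\to I$ as $z\to\infty$.

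With the explicit parabolic-cylinder-function realization in each sector in hand, I would compute the sub-leading coefficient $\mathcal{M}^L_1(p)$ by expanding $D_{i\mu}(z)\sim z^{i\mu}e^{-z^2/4}$ in the appropriate Stokes regions, exactly as in Deift--Zhou \cite{Deift-Zhou}. The off-diagonal entries $\alpha_{12},\alpha_{21}$ are then determined by matching the $z^{-1}$ term after the conjugation is undone; the constants $\sqrt{2\pi}$, $e^{\mp i\pi/4}$, $e^{-\pi\mu/2}$, and $\Gamma(\mp i\mu)$ in the stated formulas come out of the reflection and asymptotic formulas for parabolic cylinder functions, together with the relation $B_{12}(p)B_{21}(p)=i\mu$ extracted from the constant ODE. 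The diagonal of $\mathcal{M}^L_1(p)$ vanishes because each $B_{jj}=0$, and the third row and column of $\mathcal{M}^L_1(p)$ vanish by the block reduction of the first paragraph.

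Finally, uniqueness follows from the standard vanishing lemma: any difference $\Delta(p,z)$ of two solutions satisfies a homogeneous RH problem with $\Delta\to 0$ at infinity, and its off-diagonal block structure together with the Schwarz reflection symmetry of the jumps across the rays $L_1\cup L_3$ and $L_2\cup L_4$ allows one to apply Cauchy's theorem to $\Delta(z)\Delta(\bar z)^{*}$ and conclude $\Delta\equiv 0$. I expect the main technical obstacle to be the bookkeeping in the third step: getting the precise arguments and signs of $\pi$ in the $e^{\mp i\pi/4}$ and $e^{-\pi\mu/2}$ factors requires careful tracking of which ray's asymptotic sector one is in and of the branch of $z^{-i\mu}$, since a sign error anywhere propagates into $\alpha_{12}$ and $\alpha_{21}$ and corrupts the phase shifts in Theorem \ref{large-time theorem-HS}.
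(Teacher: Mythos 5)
Your proposal follows essentially the same route as the paper: conjugate the model problem (which is trivial in the third row and column) so that the jumps become constant, apply a Liouville argument to $\bigl(\partial_z\Psi+\tfrac{iz}{2}\sigma_3\Psi\bigr)\Psi^{-1}$ to obtain a first-order ODE with constant off-diagonal coefficient matrix, reduce to Weber's equation, and read off $\alpha_{12},\alpha_{21}$ from the asymptotics and Wronskians of the parabolic cylinder functions $D_{\pm i\mu}$. The only differences are cosmetic (the paper keeps the $3\times3$ form with $\tilde\sigma_3$ and uses a sectionally defined $\mathcal{H}$ that also absorbs the triangular factors so the jump collapses onto $\mathbb{R}$), plus a minor bookkeeping slip on your part: the product of the off-diagonal entries of the constant matrix is $\mu$, not $i\mu$ (it is the parameter $a=i\tilde\alpha_{12}\tilde\alpha_{21}$ in the Weber equation that equals $i\mu$).
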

	
	\begin{proof}
		Denote
		\begin{equation*}
			z^{i\mu\tilde{\sigma_3}}=\begin{pmatrix}
				z^{i\mu} & 0 & 0\\
				0 & z^{-i\mu} & 0\\
				0 & 0 & 1
			\end{pmatrix},\quad
			e^{\frac{iz^{2}}{4}\tilde{\sigma_{3}}}=\begin{pmatrix}
				e^{\frac{iz^{2}}{4}} & 0 & 0\\
				0 & e^{-\frac{iz^{2}}{4}} & 0\\
				0 & 0& 1
			\end{pmatrix},
		\end{equation*}
and introduce the transformation
		\begin{equation*}
			P(p,z)=\mathcal{M}^{L}(p,z)\mathcal{H}(p,z),
		\end{equation*}
		where the matrix $\mathcal{H}(p,z)$ is sectionally analytic function of the form
		\begin{equation*}
			\mathcal{H}(p, z)= \begin{cases}
				\begin{pmatrix}
					1 & 0 & 0\\
					-\bar{p}z^{-2i\mu(p)}e^{\frac{iz^{2}}{2}} & 1 & 0\\
					0 & 0 & 1
				\end{pmatrix}z^{i\mu\tilde\sigma_3} , & z \in \Omega_1, \\
				
				z^{i\mu\tilde\sigma_3}, & z \in \Omega_2, \\
				
				\begin{pmatrix}
					1 &  \frac{p}{1-|p|^2}z^{2i\mu(p)}e^{-\frac{iz^{2}}{2}} & 0\\
					0 & 1 & 0\\
					0 & 0 & 1
				\end{pmatrix}z^{i\nu\tilde\sigma_3}, & z \in \Omega_3, \\
				
				\begin{pmatrix}
					1 & 0 & 0\\
					\frac{\bar{p}}{1-|p|^2}z^{-2i\mu(p)}e^{\frac{iz^{2}}{2}} & 1 & 0\\
					0 & 0 & 1
				\end{pmatrix}z^{i\mu\tilde\sigma_3}, & z \in \Omega_4, \\
				z^{i\mu\tilde\sigma_3}, & z\in \Omega_5,\\
				
				\begin{pmatrix}
					1 & -pz^{2i\mu(p)}e^{-\frac{iz^{2}}{2}} & 0\\
					0 & 1 & 0\\
					0 & 0 & 1
				\end{pmatrix}z^{i\mu\tilde\sigma_3}, & z\in \Omega_6.\\
				
			\end{cases}
		\end{equation*}
		\par		
		
		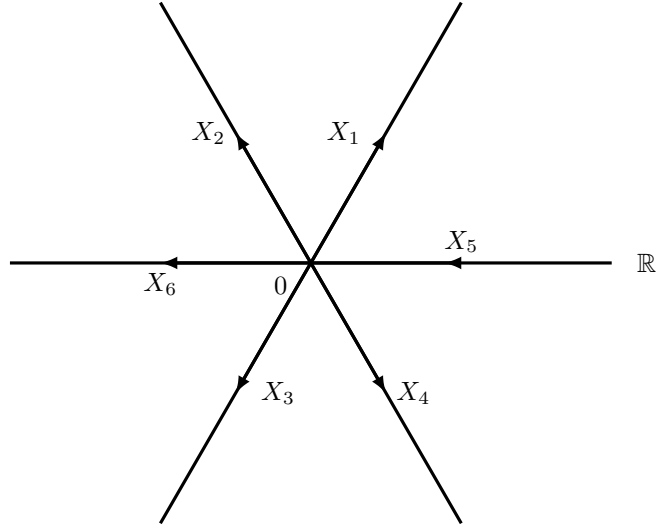
\begin{figure}[htp]
		\centering
		\begin{tikzpicture}[>=latex]
			\draw[<-<,very thick] (-2,0)node[below]{${X_6}$} to (2,0)node[above]{${X_5}$};
			\draw[very thick] (-4,0) to (4,0)node[right=2mm]{$\mathbb{R}$};
			\draw[<->,very thick] (-1,-1.732)node[right=2mm]{${X_3}$} to (1,1.732)node[left=2mm]{${X_1}$};
			\draw[very thick] (-2,-1.732*2) to (2,1.732*2);
			\draw[<->,very thick] (-1,1.732)node[left]{${X_2}$} to (1,-1.732)node[right]{${X_4}$};
			\draw[very thick] (-2,1.732*2) to (2,-1.732*2);
			\node at (-.4,-.3){$0$};
		\end{tikzpicture}
		\caption{The new jump contour $X=X_{1}\cup X_{2}\cup \cdots \cup X_{6}$ for the function $P(p,z)$.}
		\label{Figure-new-jump}
	\end{figure}

Direct computations indicate that
		\begin{equation*}
			P_+(p,z)=P_-(p,z)v^{X}_j(p,z),\quad j=1,2,3,4,
		\end{equation*}
		where the jump matrices $v^{X}_j(p,z)=(\mathcal{H}_-(p,z))^{-1}\mathsf{V}^{L}_j(z,p)\mathcal{H}_+(p,z)~(j=1,2,3,4)$ are
		\begin{equation*}
			v^{X}_j(p,z)=\begin{cases}
				I,\quad & z\in X_1\cup X_2 \cup X_3 \cup X_4,\\
				e^{-\frac{iz^2}{4}\operatorname{ad} \tilde\sigma_3}\left(\begin{array}{ccc}
					1& -p & 0\\
					\bar p & 1-|p|^2 &0\\
					0 & 0 & 1
				\end{array}
				\right),\quad & z\in X_5,\\
				e^{-\frac{iz^2}{4}\operatorname{ad} \tilde\sigma_3}\left(\begin{array}{ccc}
					1& -p & 0\\
					\bar p & 1-|p|^2 &0\\
					0 & 0 & 1
				\end{array}
				\right),\quad & z\in X_6,
			\end{cases}
		\end{equation*}
where the new contour $X=X_{1}\cup X_{2}\cup \cdots \cup X_{6}$ for the function $P(p,z)$ is shown in Figure \ref{Figure-jump}. In fact, the function $P(p,z)$ only has jump  on $\R$. In particular, the RH problem for $P(p,z)$ can be written as
		\begin{equation*}
			\begin{cases}
				P_+(p,z)=P_-(p,z)e^{-\frac{iz^2}{4}\operatorname{ad} \tilde\sigma_3}V(p),&z\in\R,\\
				P\to z^{i\nu\tilde\sigma_3}, &\text{as}\quad z \to \infty.
			\end{cases}
		\end{equation*}
		where
		\begin{equation*}
			V=\left(\begin{array}{ccc}
				1 & -p  & 0 \\
				\bar p  &  1-|p|^2 & 0 \\
				0 & 0 & 1
			\end{array}\right).
		\end{equation*}
		Since $\mathcal H=\left(I+O\left(\frac{1}{z}\right)\right)z^{i\mu\tilde \sigma_3}$, by the procedure in \cite{Deift-Zhou}, it follows
		\begin{equation*}		\Psi(p,z)=P(p,z)e^{-\frac{iz^2\tilde\sigma_3}{4}}=\hat{\Psi}(p,z)z^{i\mu\tilde\sigma_3}e^{-\frac{iz^2\tilde\sigma_3}{4}},
		\end{equation*}
		with the asymptotics
		\begin{equation}\label{Pz-Behavior}
			\begin{aligned}			&P(p,z)=\left(I+\frac{\mathcal{M}_1^L(p)}{z}+O\left(\frac{1}{z^2}\right)\right)z^{i\mu\tilde\sigma_3},\\
			&\hat{\Psi}(p,z)=I+\frac{\mathcal{M}_1^L(p)}{z}+O\left(\frac{1}{z^2}\right),
		\end{aligned}
		\end{equation}
		as $z\to\infty$.
		After some calculations, we have
		\begin{equation*}
			\Psi_+(z)=\Psi_-(z)V(p),\qquad z\in\R,
		\end{equation*}
		and
		\begin{equation*}
			\left(\partial_z\Psi+\frac{iz\tilde \sigma_3}{2}\Psi\right)_+=\left(\partial_z\Psi+\frac{iz\tilde \sigma_3}{2}\Psi\right)_-V(p).
		\end{equation*}
It is claimed that $\left(\partial_z\Psi+\frac{iz\tilde \sigma_3}{2}\Psi\right)\Psi^{-1}$ has no jump along the $\R$. Therefore, it is an entire function on the $\C$. Then one has
		\begin{equation*}
			\begin{aligned}
				\left(\partial_z\Psi+\frac{iz\tilde \sigma_3}{2}\Psi\right)\Psi^{-1}&=(\partial_z\hat\Psi)\hat\Psi^{-1}+\hat\Psi(i\nu\tilde\sigma_3z^{-1})\hat\Psi^{-1}+\hat\Psi\left(-\frac{iz}{2}\tilde\sigma_3\right)\hat\Psi^{-1}+\left(\frac{iz}{2}\tilde\sigma_3\hat\Psi\right)\hat\Psi^{-1}\\
				&=\frac{iz}{2}\left[\tilde\sigma_3,\hat\Psi\right]\hat\Psi^{-1}+O\left(\frac{1}{z}\right)\\
				&=\frac{i}{2}\left[\tilde\sigma_3,\mathcal{M}_1^L\right]+O\left(\frac{1}{z}\right).
			\end{aligned}
		\end{equation*}
		Let
		\begin{equation*}
			\frac{i}{2}\left[\tilde\sigma_3,\mathcal{M}_1^L\right]:=\begin{pmatrix}
				0 &  \tilde\alpha_{12} & 0 \\
				\tilde\alpha_{21} & 0 & 0 \\
				0 & 0 & 0
			\end{pmatrix}
		\end{equation*}
		where $\alpha_{12}=-i\tilde\alpha_{12}$ and $\alpha_{21}=i\tilde\alpha_{21}$.
		Then by Liouville's argument, it yields
		\begin{equation*}
			\partial_z\Psi+\frac{iz\tilde \sigma_3}{2}\Psi=\frac{i}{2}\left[\tilde\sigma_3,\mathcal{M}_1^L\right]\Psi,
		\end{equation*}
which can be rewritten as
		\begin{equation*}
			\frac{\partial \Psi_{11}}{\partial z}+\frac{1}{2} i z \Psi_{11}=\tilde\alpha_{12} \Psi_{21},\\\
			\frac{\partial \Psi_{21}}{\partial z}-\frac{1}{2} i z \Psi_{21}=\tilde\alpha_{21} \Psi_{11}.
		\end{equation*}
		Then we have
		\begin{equation*}
			\frac{\partial^2 \Psi_{11}}{\partial z^2}+\left(\frac{1}{2} i+\frac{1}{4} z^2-\tilde\alpha_{12} \tilde\alpha_{21}\right) \Psi_{11}=0,
		\end{equation*}
		and similarly
		\begin{equation*}
			\frac{\partial^2 \Psi_{22}}{\partial z^2}+\left(-\frac{1}{2} i+\frac{1}{4} z^2-\tilde\alpha_{12} \tilde\alpha_{21}\right) \Psi_{22}=0.
		\end{equation*}
\par
In what follow, let us focus on the upper half plane ${\rm Im}{z}>0$ and denote $z=e^{\frac{3}{4} \pi \mathrm{i}} \xi$, then it follows $\Psi_{11}^{+}(z)=\Psi_{11}^{+}\left(e^{\frac{3}{4} \pi i} \xi\right)=g(\xi)$ with $-\frac{3\pi}{4}<\arg{(\xi)}<\frac{\pi}{4},\frac{\pi}{4}<\arg{(-\xi)}<\frac{5\pi}{4}$. So the Weber equation for variable $g(\xi)$ yields
		\begin{equation}\label{Weber-equation}
			\frac{d^2 g(\xi)}{d \xi^2}+\left(\frac{1}{2}-\frac{1}{4} \xi^2+a\right) g(\xi)=0,
		\end{equation}
where $a=i\tilde\alpha_{12}\tilde\alpha_{21}$. Standard analysis follows the exact solution of the Weber equation
		\begin{equation*}
			g(\xi)=c_1 D_a(\xi)+c_2 D_a(-\xi),
		\end{equation*}
where $D_a(\xi)$ behaves
 		\begin{equation*}
			D_a(\xi)=\left\{\begin{array}{l}
				\xi^a e^{-\frac{1}{4} \xi^2}\left(1+O\left(\frac{1}{\xi^2}\right)\right), \quad|\arg \xi|<\frac{3}{4} \pi, \\
				\xi^a e^{-\frac{1}{4} \xi^2}\left(1+O\left(\frac{1}{\xi^2}\right)\right)-\frac{\sqrt{2 \pi}}{\Gamma(-a)} e^{a \pi \mathrm{i}} \xi^{-a-1} e^{\frac{1}{4} \xi^2}\left(1+O\left(\frac{1}{\xi^2}\right)\right), \quad \frac{\pi}{4}<\arg \xi<\frac{5 \pi}{4}, \\
				\xi^a e^{-\frac{1}{4} \xi^2}\left(1+O\left(\frac{1}{\xi^2}\right)\right)-\frac{\sqrt{2 \pi}}{\Gamma(-a)} e^{-a \pi \mathrm{i}} \xi^{-a-1} e^{\frac{1}{4} \xi^2}\left(1+O\left(\frac{1}{\xi^2}\right)\right), \quad-\frac{5 \pi}{4}<\arg \xi<-\frac{\pi}{4},
			\end{array}\right.
		\end{equation*}
in which $D_a(\xi)$ denotes the parabolic cylinder function. The behavior of $P(p,z)$ in (\ref{Pz-Behavior}) as $z\to\infty$ implies that $g(\xi)\to z^{i\mu}e^{-\frac{iz^2}{4}}$, which indicates that
		\begin{equation*}
			a=\mathrm{i}\mu,\quad c_1=(e^{\frac{3\pi \mathrm{i}}{4}})^{\mathrm{i} \mu}=e^{-\frac{3\pi \mu}{4}},\quad c_2=0
		\end{equation*}
		In other word, one has
		\begin{equation*}
			\Psi_{11}^{+}(z)=e^{-\frac{3}{4} \pi \mu} D_a\left(e^{-\frac{3}{4} \pi \mathrm{i}} z\right), \quad a=\mathrm{i}\mu.
		\end{equation*}
\par
		For the $\mathrm{Im} z<0$, by the same procedure, taking $z=e^{-\frac{1}{4} \pi \mathrm{i}} \xi$, it follows $\Psi_{11}^{-}(z)=\Psi_{11}^{-}\left(e^{-\frac{1}{4} \pi \mathrm{i}} \xi\right)=g(\xi)$ with $-\frac{3\pi}{4}<\arg{(\xi)}<\frac{\pi}{4}$, then we get
		\begin{equation*}
\Psi_{11}^{-}(z)=e^{\frac{1}{4} \pi \mu} D_a\left(e^{\frac{1}{4} \pi \mathrm{i}} z\right),\quad			a=\mathrm{i}\mu,\quad c_2=0,\quad c_1=(e^{-\frac{\pi \mathrm{i}}{4}})^{\mathrm{i} \mu}=e^{\frac{\pi \mu}{4}}.
		\end{equation*}
\par
In conclusion, it is seen that
		\begin{equation*}
			\Psi_{11}(p, z)= \begin{cases}e^{\frac{-3 \pi \mu}{4}} D_{\mathrm{i} \mu}\left(e^{-\frac{3 \pi \mathrm{i}}{4}} z\right), & \mathrm{Im} z>0 ,\\ e^{\frac{ \pi \mu}{4}} D_{\mathrm{i} \mu}\left(e^{\frac{\pi \mathrm{i}}{4}} z\right), & \mathrm{Im} z<0,\end{cases}
		\end{equation*}
which follows that
		\begin{equation*}
			\begin{aligned}
				& \Psi_{21}^{+}(z)=e^{-\frac{3}{4} \pi \mu}\left(\tilde\alpha_{12}\right)^{-1}\left[\partial_z D_a\left(e^{-\frac{3 \pi \mathrm{i}}{4}} z\right)+\frac{\mathrm{i} z}{2} D_a\left(e^{-\frac{3 \pi \mathrm{i}}{4}} z\right)\right], \\
				& \Psi_{21}^{-}(z)=e^{\frac{1}{4} \pi \mu}\left(\tilde\alpha_{12}\right)^{-1}\left[\partial_z D_{a}\left(e^{-\frac{\pi \mathrm{i}}{4}} z\right)+\frac{\mathrm{i} z}{2} D_{a}\left(e^{-\frac{\pi\mathrm{i}}{4}} z\right)\right].
			\end{aligned}
		\end{equation*}
Since $\left(\Psi_-\right)^{-1}\Psi_+=V(p)$ and $\det{\Psi}=1$, we further derive that
		\begin{equation*}
			\bar {q}=\Psi^-_{11}\Psi^+_{21}-\Psi^-_{21}\Psi^+_{11}=\frac{e^{-\frac{\pi\mu}{2}}}{\tilde\alpha_{12}}W\left(D_{\mathrm{i}\mu}(e^{\frac{\pi i}{4}}z),D_{\mathrm{i}\mu}(e^{-\frac{3\pi i}{4}}z)\right)=\frac{\sqrt{2\pi}e^{\frac{\pi i}{4}}e^{-\frac{\pi\mu}{2}}}{\tilde\alpha_{12}\Gamma(-\mathrm{i}\mu)},
		\end{equation*}
which shows that
		\begin{equation*}
			\alpha_{12}=-\mathrm{i}\tilde\alpha_{12}=-i\frac{\sqrt{2\pi}e^{\frac{\pi i}{4}}e^{-\frac{\pi\mu}{2}}}{\bar{p}\Gamma(-i\nu)}=\frac{\sqrt{2\pi}e^{-\frac{\pi \mathrm{i}}{4}}e^{-\frac{\pi\mu}{2}}}{\bar{p}\Gamma(-\mathrm{i}\mu)}.
		\end{equation*}
\par
On the other hand, for the equation
		\begin{equation*}
			\frac{\partial^2 \Psi_{22}}{\partial z^2}+\left(-\frac{1}{2} \mathrm{i}+\frac{1}{4} z^2-\tilde\alpha_{12} \tilde\alpha_{21}\right) \Psi_{22}=0,
		\end{equation*}
when $\operatorname{Im} z>0$, taking $z=e^{\frac{1}{4} \pi i} \xi$ and $\Psi_{22}^{+}(z)=\Psi_{22}^{+}\left(e^{\frac{1}{4} \pi i} \xi\right)=g(\xi)$ for $-\frac{\pi}{4}<\arg{\xi}<\frac{3\pi}{4}$, the other Weber equation is obtained below
		\begin{equation*}
			\frac{d^2 g(\xi)}{d \xi^2}+\left(\frac{1}{2}-\frac{1}{4} \xi^2+a\right) g(\xi)=0,
		\end{equation*}
		where $a=-\mathrm{i}\tilde{\alpha}_{12}\tilde{\alpha}_{21}$ and $g(\xi)\to z^{-\mathrm{i}\mu}e^{\frac{\mathrm{i}z^2}{4}}$ as $z\to \infty$.
		Then it is obvious that the parameters are
		\begin{equation*}
			a=-\mathrm{i}\mu,\quad c_1=(e^{\frac{\pi \mathrm{i}}{4}})^{-\mathrm{i} \mu}=e^{\frac{\pi \mu}{4}},\quad c_2=0.
		\end{equation*}
		When $\operatorname{Im} z<0$, taking $z=e^{-\frac{3}{4} \pi \mathrm{i}} \xi$ and $\Psi_{22}^{-}(z)=\Psi_{22}^{-}\left(e^{-\frac{3}{4} \pi \mathrm{i}} \xi\right)=g(\xi)$, it follows
		\begin{equation*}
			a=-\mathrm{i}\mu,\quad c_1=(e^{-\frac{3\pi \mathrm{i}}{4}})^{-\mathrm{i} \mu}=e^{-\frac{3\pi \mu}{4}},\quad c_2=0.
		\end{equation*}
\par
Finally, it is obtained that
		\begin{equation*}
			\Psi_{22}(p, z)= \begin{cases}e^{\frac{\pi \mu}{4}} D_{-\mathrm{i} \mu}\left(e^{-\frac{\pi \mathrm{i}}{4}} z\right), & \mathrm{Im} z>0 ,\\ e^{-\frac{3 \mu}{4}} D_{-\mathrm{i} \mu}\left(e^{\frac{3 \pi \mathrm{i}}{4}} z\right), & \mathrm{Im} z<0,\end{cases}
		\end{equation*}
which indicates that
\begin{equation*}
			\frac{\partial \Psi_{22}}{\partial z}-\frac{1}{2} \mathrm{i} z \Psi_{22}=\tilde\alpha_{21} \Psi_{12}, \quad \quad
			\frac{\partial \Psi_{12}}{\partial z}+\frac{1}{2} \mathrm{i} z \Psi_{12}=\tilde\alpha_{12} \Psi_{22},
		\end{equation*}
		and
		\begin{equation*}
			\begin{aligned}
				& \Psi_{12}^{+}(z)=e^{\frac{1}{4} \pi \mu}\left(\tilde\alpha_{21}\right)^{-1}\left[\partial_z D_{-a}\left(e^{-\frac{ \pi \mathrm{i}}{4}} z\right)-\frac{\mathrm{i} z}{2} D_{-a}\left(e^{-\frac{ \pi \mathrm{i}}{4}} z\right)\right], \\
				& \Psi_{12}^{-}(z)=e^{-\frac{3}{4} \pi \mu}\left(\tilde\alpha_{21}\right)^{-1}\left[\partial_z D_{-a}\left(e^{\frac{3\pi \mathrm{i}}{4}} k\right)-\frac{\mathrm{i} z}{2} D_{-a}\left(e^{\frac{3\pi\mathrm{i}}{4}} z\right)\right] .
			\end{aligned}
		\end{equation*}
Since $\left(\Psi_-\right)^{-1}\Psi_+=V(p)$ and $\det{\Psi}=1$, it is seen that
		\begin{equation*}		-q=\Psi^-_{22}\Psi^+_{12}-\Psi^-_{12}\Psi^+_{22}=\frac{e^{-\frac{\pi\mu}{2}}}{\tilde\alpha_{21}}W\left(D_{-\mathrm{i}\mu}(e^{\frac{3\pi \mathrm{i}}{4}}z),D_{-\mathrm{i}\mu}(e^{-\frac{\pi \mathrm{i}}{4}}z)\right)=\frac{\sqrt{2\pi}e^{\frac{3\pi \mathrm{i}}{4}}e^{-\frac{\pi\mu}{2}}}{\tilde\alpha_{21}\Gamma(\mathrm{i}\mu)},
		\end{equation*}
which denotes that
		\begin{equation*}
			\alpha_{21}=\mathrm{i}\tilde\alpha_{21}=-\mathrm{i}\frac{\sqrt{2\pi}e^{\frac{3\pi \mathrm{i}}{4}}e^{-\frac{\pi\mu}{2}}}{p\Gamma(\mathrm{i}\mu)}=\frac{\sqrt{2\pi}e^{\frac{\pi \mathrm{i}}{4}}e^{-\frac{\pi\mu}{2}}}{p\Gamma(\mathrm{i}\mu)}.
		\end{equation*}
	\end{proof}

	\begin{remark} In the proof of Theorem \ref{solution to ML}, we have chosen $e^{-\frac{3}{4} \pi \mathrm{i}}$ and $e^{-\frac{\pi \mathrm{i}}{4}}$ since the branch cut from $-\pi$ to $\pi$.
	\end{remark}

	The analysis above suggests that the jump condition for the function $\mathcal{M}^{(3)}(x,t,k)$ tends to that for the function $\widehat{\mathcal{M}}(x,t,k)T^{-1}(\eta,t)$ for $k$ near the critical point $k_0$ as $t\to\infty$. Therefore, introduce the transformation
	\begin{equation}\label{definition ofM k0}
		\mathcal{M}^{k_{0}}(x,t,k)=T(\eta,t)\mathcal{M}^{L}(p(\eta),z(\eta,k))T(\eta,t)^{-1},\quad k\in B_{\epsilon}(k_{0}),
	\end{equation}
	in which the function $T(\eta,t)$ has the estimation below.
	\begin{lemma}
		There is a positive constant $C$ such that
		\begin{equation*}
			\sup_{\eta\in\mathcal{K}}\sup_{t\ge2}|\partial_{x}^{i}T(\eta,t)^{\pm1}|<C,\qquad i=0,1.
		\end{equation*}
		Furthermore, $d_{0}(\eta,t)$ and $d_{1}(\eta,k)$ in matrix $T(\eta,t)$ satisfy
		\begin{equation*}
			\begin{aligned}
				&|d_{0}(\eta,t)|=e^{2\pi\mu},\qquad \eta\in\mathcal{K},\quad t\ge 2,\\
				&|\partial_{x}d_{0}(\eta,t)|\le C \frac{\ln{t}}{t},\qquad \eta\in\mathcal{K},\quad t\ge 2,
			\end{aligned}
		\end{equation*}
		and
		\begin{equation*}
			\begin{aligned}
				&|d_{1}(\eta,k)-1|\le C|k-k_{0}|(1+|\ln{|k-k_{0}|}|), &\eta\in\mathcal{K},k\in\mathbf{L}^{\epsilon},\\
				&|\partial_{x}d_{1}(\eta,k)|\le C \frac{|\ln{|k-k_{0}|}|}{t}, &\eta\in\mathcal{K},k\in\mathbf{L}^{\epsilon}.\\
			\end{aligned}
		\end{equation*}
	\end{lemma}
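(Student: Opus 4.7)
The plan is to verify the estimates on $T$, $d_0$, $d_1$ separately, by combining the explicit definitions with the structural properties of $\mathcal{D}_4$ from Lemma~\ref{property of HD} together with the conjugation symmetry $\mathcal{D}_4(\eta,k)=\overline{\mathcal{D}_4(\eta,\bar{k})}^{-1}$. The most delicate claim is the exact modulus $|d_0|=e^{2\pi\mu}$; everything else will follow by routine differentiation and Taylor expansion once this is settled.

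For $|d_0(\eta,t)|$, first observe that $(2\sqrt{3}t)^{-i\mu}$ has modulus $1$ since $\mu$ is real. Using the Plemelj representation (\ref{HD}) I would check that $\kappa_4(\eta,k_0)$ is purely imaginary, because $\ln_\pi(k_0-s)\in\R$ for $s\in(-\infty,k_0)$, so that $|e^{2\kappa_4(\eta,k_0)}|=1$ as well. The remaining factor is $|\mathcal{D}_2(\eta,k_0)\mathcal{D}_6(\eta,k_0)|$; since $\omega^2 k_0=\overline{\omega k_0}$ for real $k_0$, the conjugation symmetry gives $\mathcal{D}_6(\eta,k_0)=\overline{\mathcal{D}_2(\eta,k_0)}^{-1}$, and a careful tracking of the branch contributions of $\ln_\pi(\cdot-k_0)$ at the end-point of $\Gamma_4^{(2)}$ produces the factor $e^{2\pi\mu}$ coming from the Szeg\H{o}-type jump at $k=k_0$.

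For $d_1(\eta,k)=e^{2[\kappa_4(\eta,k)-\kappa_4(\eta,k_0)]}\mathcal{D}_2(\eta,k)\mathcal{D}_6(\eta,k)/[\mathcal{D}_2(\eta,k_0)\mathcal{D}_6(\eta,k_0)]$, both the exponential prefactor and the ratio equal $1$ at $k=k_0$. Because $\omega k_0,\omega^2 k_0\notin\Gamma_4^{(2)}$, both $\mathcal{D}_2$ and $\mathcal{D}_6$ are analytic near $k_0$, so the ratio is $1+O(|k-k_0|)$ with smooth $x$-dependence. The logarithmic modification in the bound on $|d_1-1|$ comes from applying Lemma~\ref{property of HD}(4), which supplies $|\kappa_4(\eta,k)-\kappa_4(\eta,k_0)|\le C|k-k_0|(1+|\ln|k-k_0||)$. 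The estimate on $|\partial_x d_1|$ then follows from the second inequality of Lemma~\ref{property of HD}(4) and the chain rule through $k_0=-x/(2t)$, which inserts the $1/t$ prefactor.

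For $\partial_x d_0$, the dominant term arises from differentiating $(2\sqrt{3}t)^{-i\mu(\eta)}$: this contributes $-i\mu_x\ln(2\sqrt{3}t)\cdot(2\sqrt{3}t)^{-i\mu}$, and since $\mu$ depends on $x$ only through $k_0$ and $\partial_x k_0=-1/(2t)$, one has $\mu_x=O(1/t)$, giving $O(\ln t/t)$. The remaining terms involving $\partial_x\kappa_4(\eta,k_0)$, $\partial_x\mathcal{D}_2(\eta,k_0)$, $\partial_x\mathcal{D}_6(\eta,k_0)$ are all $O(1/t)$ by the derivative bounds in Lemma~\ref{property of HD}(4). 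Finally, for $T$ I would note that $T$ is diagonal with entries involving $d_0^{\pm 1/2}$ and $e^{\mp t\theta_{21}(\eta,k_0)/2}$; since $\theta_{21}(\eta,k_0)=-\sqrt{3}ik_0^2$ is purely imaginary, the exponentials have unit modulus, and $|d_0^{\pm 1/2}|=e^{\pm\pi\mu}$ is uniformly bounded on the compact set $\mathcal{K}$. The bound on $\partial_x T^{\pm 1}$ combines the estimate on $\partial_x d_0$ above with $\partial_x(t\theta_{21}(\eta,k_0))=\sqrt{3}ik_0$, which is bounded on $\mathcal{K}$. The main obstacle is the careful branch analysis underlying the identity $|d_0|=e^{2\pi\mu}$; every other assertion reduces to Lemma~\ref{property of HD}(4) combined with the chain rule.
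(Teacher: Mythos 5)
The paper states this lemma without proof, so there is no in-paper argument to compare against; I can only judge your proposal on its own terms. The routine parts are handled correctly: the uniform bounds on $T^{\pm1}$ and $\partial_xT^{\pm1}$ (unimodularity of $e^{\mp t\theta_{21}(\eta,k_0)/2}$ since $\theta_{21}(\eta,k_0)=-\sqrt{3}\mathrm{i}k_0^2$, boundedness of $\partial_x(t\theta_{21}(\eta,k_0))=\sqrt{3}\mathrm{i}k_0$ on $\mathcal{K}$), the $O(\ln t/t)$ bound on $\partial_xd_0$ coming from $\partial_x\mu=O(1/t)$ multiplying $\ln(2\sqrt{3}t)$, and both estimates on $d_1$ via Lemma \ref{property of HD}(4) and the chain rule through $k_0=-x/(2t)$ are all correctly reduced to the stated properties of $\mathcal{D}_4$.

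The gap sits exactly where you locate "the most delicate claim." Your own intermediate steps are correct --- $|(2\sqrt{3}t)^{-i\mu}|=1$; $\kappa_4(\eta,k_0)$ is purely imaginary because $\ln_\pi(k_0-s)\in\mathbb{R}$ for $s<k_0$; and the symmetry $\mathcal{D}_4(\eta,k)=\overline{\mathcal{D}_4(\eta,\bar k)}^{-1}$ together with $\overline{\omega k_0}=\omega^2k_0$ gives $\mathcal{D}_6(\eta,k_0)=\overline{\mathcal{D}_2(\eta,k_0)}^{-1}$ --- but these three facts already force $|d_0(\eta,t)|=\bigl|\mathcal{D}_2(\eta,k_0)/\overline{\mathcal{D}_2(\eta,k_0)}\bigr|=1$, not $e^{2\pi\mu}$. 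The promised "careful tracking of branch contributions at the endpoint" cannot produce the missing factor: using (\ref{the expansion of D}), one has $\mathcal{D}_2(\eta,k_0)\mathcal{D}_6(\eta,k_0)=e^{-i\mu[\ln_\pi(\omega k_0-k_0)+\ln_\pi(\omega^2k_0-k_0)]}e^{-\kappa_4(\eta,\omega k_0)-\kappa_4(\eta,\omega^2k_0)}$, and since $\arg_\pi(\omega k_0-k_0)=-\pi/6$ while $\arg_\pi(\omega^2k_0-k_0)=+\pi/6$, the two real contributions $e^{\pm\pi\mu/6}$ cancel exactly, while $\kappa_4(\eta,\omega^2k_0)=-\overline{\kappa_4(\eta,\omega k_0)}$ makes the remaining exponent purely imaginary. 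So no factor $e^{2\pi\mu}$ arises from any branch bookkeeping. Either the identity in the lemma is a misprint and the correct value is $|d_0|=1$ --- which is in fact what the amplitude $\sqrt{2\mu}$ in Theorem \ref{large-time theorem-HS} requires, since $|\alpha_{21}|=\sqrt{\mu}$ and the leading coefficient there is proportional to $|\alpha_{21}||d_0|^{-1}$ --- or the factor must come from somewhere you have not identified. As written, your proof asserts the target value while your own (correct) computation establishes something incompatible with it; this step needs to be resolved, not papered over.
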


	\begin{lemma}
		The function $\mathcal{M}^{k_{0}}(x,t,k)$ defined in (\ref{definition ofM k0}) is analytic and bounded for  $k\in B_{\epsilon}(k_{0})\setminus \mathbf{L}^{\epsilon}$ and every $(x,t)$. For $k\in\mathbf{L}^{\epsilon}$, $\mathcal{M}^{k_{0}}(x,t,k)$ follows the jump condition $\mathcal{M}^{k_{0}}_{+}(x,t,k)=\mathcal{M}^{k_{0}}_{-}(x,t,k)\mathsf{V}^{k_0}(x,t,k)$, where the jump matrix $\mathsf{V}^{k_0}$ meets the estimations
		\begin{equation}
			\begin{cases}
				||\partial_{x}^{i}(\mathsf{V}^{(3)}-\mathsf{V}^{k_0})||_{L^{1}(\mathbf{L}^{\epsilon})}\le C t^{-1}\ln{t},\\
				||\partial_{x}^{i}(\mathsf{V}^{(3)}-\mathsf{V}^{k_0})||_{L^{\infty}(\mathbf{L}^{\epsilon})}\le C t^{-\frac{1}{2}}\ln{t},
			\end{cases}
			\qquad \eta\in\mathcal{K},\quad t\ge2,\quad i=0,1.
		\end{equation}
		Moreover, for large $t$, one has
		\begin{equation}
			||\partial_{x}^{i}(\mathcal{M}^{k_{0}}(x,t,\cdot)^{-1}-I)||_{L^{\infty}(\partial B_{\epsilon}(k_0))}=O(t^{-1/2}),\qquad i=0,1,
		\end{equation}
		\begin{equation}\label{equation of mk0}
			\frac{1}{2\pi {\rm i}}\int_{\partial B_{\epsilon}(k_0)}(\mathcal{M}^{k_{0}}(x,t,k)^{-1}-I)dk=-\frac{T(\eta,t)\mathcal{M}_{1}^{L}(p(\eta))T(\eta,t)^{-1}}{3^{1/4}\sqrt{2t}}+O(t^{-1}),
		\end{equation}
		uniformly for $\eta\in\mathcal{K}$.
	\end{lemma}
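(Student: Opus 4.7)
The plan is to establish the four claims --- analyticity, the jump condition, the comparison estimate for $\mathsf{V}^{(3)}-\mathsf{V}^{k_0}$, and the residue-type identity --- by systematically exploiting the change of variables $z=3^{1/4}\sqrt{2t}(k-k_0)$ and the structure of the model problem of Theorem \ref{solution to ML}. The analyticity and boundedness of $\mathcal{M}^{k_0}$ in $B_\epsilon(k_0)\setminus\mathbf{L}^\epsilon$ are immediate: $T(\eta,t)$ is independent of $k$, the map $k\mapsto z$ is an affine bijection sending $\mathbf{L}^\epsilon$ into the contour $\mathbf{L}$ of RH problem~\ref{RH problem for ML}, and $\mathcal{M}^L(p,z)$ is analytic and bounded on $\C\setminus\mathbf{L}$. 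The jump relation on $\mathbf{L}^\epsilon$ follows by direct conjugation, giving $\mathsf{V}^{k_0}(x,t,k):=T(\eta,t)\mathsf{V}^L(p(\eta),z(\eta,k))T(\eta,t)^{-1}$.

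For the comparison $\mathsf{V}^{(3)}-\mathsf{V}^{k_0}$ on each branch $\mathbf{L}_j^\epsilon$, $j=1,2,3,4$, the key observation is that the matrices $\widehat{\mathsf{V}}_j$ introduced just before \eqref{Jump-for-ML} coincide with $T^{-1}\mathsf{V}^{(3)}T$ on $\mathbf{L}_j^\epsilon$ and were designed so that $\widehat{\mathsf{V}}_j\to \mathsf{V}_j^L$ as $t\to\infty$ with $z$ fixed. The residue decomposes into three basic mechanisms: the gap $\rho_{1,a}(k)-p$ (and its companion $\hat\rho_{1,a}^{*}(k)-\bar p/(1-|p|^2)$), which is $O(|k-k_0|)$ by the analytic/remainder decomposition together with its $\partial_x$-derivative bound; the gap $d_1(\eta,k)-1$, which is $O(|k-k_0|(1+|\ln|k-k_0||))$ by Lemma~\ref{property of HD}(4); and the off-support entries of $\widehat{\mathsf{V}}_j$ that carry $\rho_{2,a}$ against an exponential $e^{-\frac{t}{2}\theta_{21}(\eta,k_0)}e^{\pm t\theta_{3j}}$, which is exponentially small in $t$ on $\mathbf{L}_j^\epsilon$ because the sign of $\mathrm{Re}(\theta_{3j}-\tfrac{1}{2}\theta_{21})$ is favourable along these rays. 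Combining these with $|k-k_0|\le\epsilon$ and changing variables $dk=dz/(3^{1/4}\sqrt{2t})$ yields the $L^1$ estimate $O(t^{-1}\ln t)$ and the pointwise $L^\infty$ estimate $O(t^{-1/2}\ln t)$, the $\ln t$ factor arising from the logarithmic term in the $d_1-1$ bound; the $\partial_x$ version uses the analogous $\partial_x$-bounds already collected for $\rho_{j,a}$, $d_0$, and $d_1$.

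For the estimate on $\partial B_\epsilon(k_0)$, note that $|z|=3^{1/4}\sqrt{2t}\,\epsilon$ grows like $\sqrt{t}$, so Theorem \ref{solution to ML} gives
\begin{equation*}
\mathcal{M}^L(p,z)=I+\frac{\mathcal{M}_1^L(p)}{z}+O(z^{-2})=I+O(t^{-1/2}),
\end{equation*}
uniformly in $k\in\partial B_\epsilon(k_0)$ and $\eta\in\mathcal{K}$. Conjugation by the bounded matrix $T(\eta,t)$ preserves this bound, and differentiating in $x$ the bounds on $\partial_x T^{\pm1}$ and on $\partial_x\mathcal{M}^L$ (obtained by $x$-differentiating the Weber-function formulas in the proof of Theorem \ref{solution to ML}) give the same order $O(t^{-1/2})$ for $\partial_x(\mathcal{M}^{k_0}(x,t,\cdot)^{-1}-I)$. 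Plugging the two-term expansion of $\mathcal{M}^L$ into the definition of $\mathcal{M}^{k_0}$ yields
\begin{equation*}
\mathcal{M}^{k_0}(x,t,k)^{-1}-I=-\frac{T(\eta,t)\mathcal{M}_1^L(p(\eta))T(\eta,t)^{-1}}{3^{1/4}\sqrt{2t}\,(k-k_0)}+O(t^{-1}),
\end{equation*}
on $\partial B_\epsilon(k_0)$, and evaluating the contour integral by the residue theorem produces \eqref{equation of mk0}, since only the simple-pole term at $k=k_0$ contributes to leading order (the $O(t^{-1})$ remainder integrates to $O(t^{-1})$ over a circle of fixed length). The main technical obstacle I anticipate is the case-by-case bookkeeping in the comparison step: each of the four branches $\mathbf{L}_j^\epsilon$ mixes the three error mechanisms above, and one must verify that the exponential suppression genuinely dominates the polynomial prefactors uniformly along rays approaching $k_0$, without spoiling either the $L^1$/$L^\infty$ split or the $\partial_x$-regularity needed later in the Deift--Zhou analysis.
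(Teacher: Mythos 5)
Your argument is correct and is exactly the standard Deift--Zhou local-parametrix matching that the paper tacitly relies on (the lemma is stated without proof there): conjugation of the model solution $\mathcal{M}^{L}$ by the $k$-independent matrix $T$, the three error mechanisms ($\rho_{1,a}-p$, $d_{1}-1$, and the exponentially suppressed third-row/column entries) for the $L^{1}/L^{\infty}$ bounds, and the residue computation on $\partial B_{\epsilon}(k_{0})$ using $|z|\sim\sqrt{t}$. No gaps of substance; the case-by-case sign checks you flag are routine and work out exactly because the lens factorizations of $\mathsf{V}_{4}^{(2)}$ and $\mathsf{V}_{8}^{(2)}$ were chosen to make them do so.
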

	
	\begin{figure}[htp]
		\centering
		\begin{tikzpicture}[>=latex]
			\draw[very thick] (-4,0) to (4,0) node[right]{$\widehat{\Gamma}$};
			\draw[very thick] (-2,-1.732*2) to (2,1.732*2);
			\draw[very thick] (-2,1.732*2) to (2,-1.732*2);
			\draw[<->,very thick] (-3.5,0)node[below]{$8$} to (2.5,0);
			\draw[<-<,very thick] (-2.6,0)--(-1.4,0);
			\draw[->,very thick] (0,0) to (-.8,0) node[below]{$7$};
			\draw[->,very thick] (0,0) to (1,0) ;
			\draw[<->,very thick] (-1.5,-1.732*1.5)node[right=1mm]{$5$} to (1.35,1.732*1.35);
			\draw[>->,very thick]  (.3,1.732*0.3) to (.75,1.732*0.75);
			\draw[<->,very thick] (1.8,1.732*1.8) to (-.5,-1.732*0.5)node[right]{$6$};
			\draw[<->,very thick] (-1.5,1.732*1.5) to (1.35,-1.732*1.35);
			\draw[<->,very thick] (-.5,1.732*0.5) to (.35,-1.732*0.35) ;
			\draw[>->,very thick] (.7,-1.732*0.7) to (1.8,-1.732*1.8) ;
			
			\draw[very thick] (-3.2,-1.732*0.8)--(-.8,1.732*0.8);
			\draw[<->,very thick] (-2.6,-1.732*0.4)node[right=1mm]{$3$}--(-1.4,1.732*0.4)node[right=1mm]{$1$};
			\draw[<->,very thick] (-2.99,-1.732*0.66)--(-1.01,1.732*0.66);
			\draw[very thick] (-3.2,1.732*0.8)--(-.8,-1.732*0.8);
			\draw[<->,very thick] (-2.6,1.732*0.4)node[right=1mm]{$2$}--(-1.4,-1.732*0.4)node[right=1mm]{$4$};
			\draw[<->,very thick] (-2.99,1.732*0.66)--(-1.01,-1.732*0.66);
			\draw[very thick] (-.8,-1.732*0.8)--(2.8,-1.2*1.732);
			\draw[<->,very thick] (.1,-1.732*0.9)--(1.9,-1.1*1.732);
			\draw[<->,very thick] (-.5,-1.732*2.5/3)--(2.5,-3.5*1.732/3);
			\draw[very thick] (.4,1.732*2)--(1.6,0);
			\draw[<->,very thick] (.7,-1.732*1.5)--(1.3,-0.5*1.732);
			\draw[<->,very thick] (.5,-1.732*11/6)--(1.5,-1/6*1.732);
			\draw[very thick] (-.8,1.732*0.8)--(2.8,1.2*1.732);
			\draw[<->,very thick] (.1,1.732*0.9)--(1.9,1.1*1.732);
			\draw[<->,very thick] (-.5,1.732*2.5/3)--(2.5,3.5*1.732/3);
			\draw[<->,very thick] (.7,1.732*1.5)--(1.3,0.5*1.732);
			\draw[very thick] (.4,-1.732*2)--(1.6,0);
			\draw[<->,very thick] (.5,1.732*11/6)--(1.5,1/6*1.732);
			\filldraw[black] (-2,0) node[below=1mm]{$k_{0}$} circle (1.5pt);
			\filldraw[black] (1,1.732)  circle (1.5pt);
			\filldraw[black] (1,-1.732)  circle (1.5pt);
			\node at (1.5,1.6){$\omega^{2}k_{0}$};
			\node at (1.5,-1.6){$\omega k_{0}$};
			\draw[->,very thick] (-2,1) arc (90:459:1);
			\draw[->,very thick] (0,-1.732) arc (180:580:1);
			\draw[->,very thick] (1,2.732) arc (90:460:1);
		\end{tikzpicture}
		\caption{The contour $\widetilde{\Gamma}$ of RH problem for $\widetilde{\mathcal{M}}$.}
		\label{Gamma hat figure}
	\end{figure}
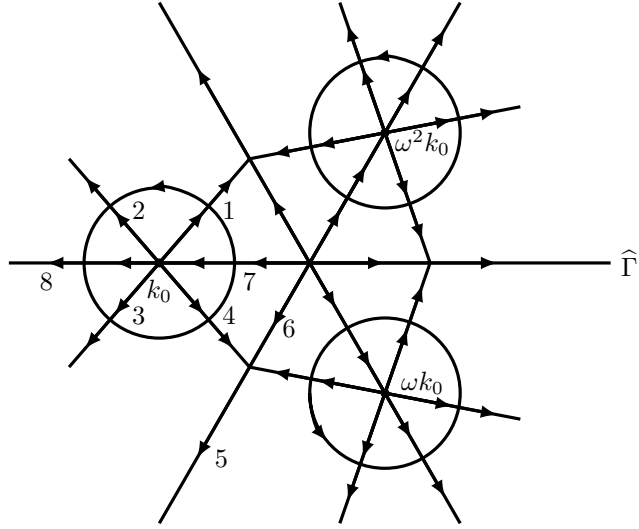

	\subsection{The small-norm RH problem}
	Reminding the symmetry
	\begin{equation*}
		\mathcal{M}^{k_0}(x,t,k)=\mathcal{A}\mathcal{M}^{k_0}(x,t,\omega k)\mathcal{A}^{-1},
	\end{equation*}
	extend the region of definition for the function $\mathcal{M}^{k_0}$ from disk $B_{\epsilon}(k_0)$ to $\mathbb{B}$ by defining
	\begin{equation*}
		\widetilde{\mathcal{M}}=\begin{cases}
			\mathcal{M}^{(3)}(\mathcal{M}^{k_{0}})^{-1},& k\in\mathbb{B}\\
			\mathcal{M}^{(3)}, & {\rm elsewhere}.
		\end{cases}
	\end{equation*}
	Let $\widetilde{\Gamma}=\Gamma^{(3)}\cup\partial\mathbb{B}$ as shown in Figure \ref{Gamma hat figure} and the corresponding jump matrix is given by
	\begin{equation*}
		\widetilde{\mathsf{V}}=\begin{cases}
			\mathsf{V}^{(3)}, & k\in\widetilde{\Gamma}\setminus\bar{\mathbb{B}},\\
			(\mathcal{M}^{k_{0}})^{-1}, & k\in\partial\mathbb{B},\\
			\mathcal{M}_{-}^{k_{0}}\mathsf{V}^{(3)}(\mathcal{M}_{+}^{k_{0}})^{-1}, & k\in \widetilde{\Gamma}\cap\mathbb{B}.
		\end{cases}
	\end{equation*}
	
	\begin{RH problem}[RH problem for $\widetilde{\mathcal{M}}$]\label{RH problem for hatm}
		The function $\widetilde{\mathcal{M}}(x,t,k)$ is a three-order matrix which satisfies $\widetilde{\mathcal{M}}\in I +E^{3}(\mathbb{C}\setminus \widehat{\Gamma})$ and $\widetilde{\mathcal{M}}_{+}(x,t,k)=\widetilde{\mathcal{M}}_{-}(x,t,k)\widetilde{\mathsf{V}}(x,t,k)$ for a.e. $k\in\widetilde{\Gamma}$.
	\end{RH problem}
	
	Let $\widetilde{\mathbf{L}}^{\epsilon}=\mathbf{L}^{\epsilon}\cup \omega \mathbf{L}^{\epsilon}\cup\omega^2 \mathbf{L}^{\epsilon}$. Define the counter
	\begin{equation*}
		\Gamma^{\prime}=\widetilde{\Gamma}\setminus(\Gamma\cup\widetilde{\mathbf{L}}^{\epsilon}\cup\partial\mathbb{B}).
	\end{equation*}
	\begin{lemma}\label{estimate 1}
		Set $\widetilde{\mathsf{W}}=\widetilde{\mathsf{V}}-I$, then for $t\ge2$ and $\eta\in\mathcal{K}$,  the estimates below hold:
		\begin{subequations}
			\begin{align*}
				&||(1+|\cdot|)\partial_{x}^{i}\widetilde{\mathsf{W}}||_{(L^{1}\cap L^{\infty})(\Gamma)}\le\frac{C}{k_{0}t},\\
				&||(1+|\cdot|)\partial_{x}^{i}\widetilde{\mathsf{W}}||_{(L^{1}\cap L^{\infty})(\Gamma^{\prime})}\le Ce^{-ct},\\
				&||\partial_{x}^{i}\widetilde{\mathsf{W}}||_{(L^{1}\cap L^{\infty})(\partial \mathbb{B})}\le C t^{-1/2},\\
				&||\partial_{x}^{i}\widetilde{\mathsf{W}}||_{L^{1}(\widetilde{\mathbf{L}}^{\epsilon})}\le C t^{-1}\ln{t},\\
				&||\partial_{x}^{i}\widetilde{\mathsf{W}}||_{L^{\infty}(\widetilde{\mathbf{L}}^{\epsilon})}\le C t^{-1/2}\ln{t},\\
			\end{align*}
		\end{subequations}
		for $i=0,1$.
	\end{lemma}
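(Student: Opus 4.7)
The plan is to handle the four estimates separately, according to which piece of $\widetilde{\Gamma}$ we are on, exploiting the explicit decompositions already established for $\widetilde{\mathsf{V}}$.

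First, on the original contour $\Gamma$, we have $\widetilde{\mathsf{W}} = \mathsf{V}^{(3)} - I$ and the nonzero entries are products of one of the residual reflection coefficients $\rho_{1,r}, \rho_{2,r}, \hat{\rho}_{1,r}$ (or the composite functions $f$, $g$, $\beta_r$) against an exponential factor $e^{\pm t\theta_{ij}}$ modulated by bounded ratios of the $\mathcal{D}_j$. The strategy is to invoke the previous decomposition lemma, whose estimates show that the residual pieces decay faster than any polynomial in $t$ away from the critical points and satisfy $|\partial_x^i \rho_r|\le C(1+|k|)^{-N} e^{\frac{t}{4}|{\rm Re}\,\theta_{21}|}$, while the prefactor $e^{\pm t\theta_{ij}}$ brings a factor $e^{-\frac{3t}{4}|{\rm Re}\,\theta_{ij}|}$. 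Since $|{\rm Re}\,\theta_{ij}(\eta,k)|\ge c|k-k_0|^2$ on $\Gamma$ with the dominant contribution arising from the immediate vicinity of $k_0,\omega k_0,\omega^2 k_0$, a change of variables $z=3^{1/4}\sqrt{2t}(k-k_0)$ yields both the $L^1$ factor $\frac{1}{k_0 t}$ (one power from the Jacobian, one from the linear vanishing of the reflection remainders at the critical point) and the $L^\infty$ bound.

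For $\Gamma'$ the contour sits at a fixed positive distance from every critical point, so uniformly on $\Gamma'$ one has ${\rm Re}(t\theta_{ij})\le -ct$ for the exponentials appearing in $\mathsf{V}^{(3)}$; combining this with the boundedness of $\mathcal{D}_j^{\pm 1}$ in Lemma \ref{property of HD} and the uniform boundedness of $\rho_{1,a},\rho_{2,a}$, $\hat{\rho}_{1,a}$, $f$, $g$, $\beta_a$ (and their $x$-derivatives, which only bring factors of $t|k-k_0|$ that are absorbed into the exponential), we get exponential decay in $t$ in both $L^1$ and $L^\infty$. On $\partial\mathbb{B}$ we have $\widetilde{\mathsf{W}} = (\mathcal{M}^{k_0})^{-1} - I$, so both $L^1$ and $L^\infty$ estimates follow immediately from the uniform bound $\|\partial_x^i((\mathcal{M}^{k_0})^{-1}-I)\|_{L^\infty(\partial\mathbb{B})} = O(t^{-1/2})$ proved earlier (the $L^1$ bound is the same since $|\partial\mathbb{B}|$ is a constant independent of $t$).

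On $\widetilde{\mathbf{L}}^\epsilon$ we write $\widetilde{\mathsf{W}} = \mathcal{M}^{k_0}_-(\mathsf{V}^{(3)} - \mathsf{V}^{k_0})(\mathcal{M}^{k_0}_+)^{-1} + \mathcal{M}^{k_0}_-\mathsf{V}^{k_0}(\mathcal{M}^{k_0}_+)^{-1} - I$, and notice that the second piece vanishes identically because $\mathsf{V}^{k_0}$ is the jump of $\mathcal{M}^{k_0}$. Hence $\widetilde{\mathsf{W}}$ equals the first piece, which is controlled by the preceding lemma's estimates $\|\partial_x^i(\mathsf{V}^{(3)} - \mathsf{V}^{k_0})\|_{L^1(\mathbf{L}^\epsilon)}\le Ct^{-1}\ln t$ and $\|\partial_x^i(\mathsf{V}^{(3)} - \mathsf{V}^{k_0})\|_{L^\infty(\mathbf{L}^\epsilon)}\le Ct^{-1/2}\ln t$, together with the uniform boundedness of $\mathcal{M}^{k_0}_\pm$ on the lens contour (since $\mathcal{M}^L$ is a bounded solution of its model RH problem and $T(\eta,t)^{\pm 1}$ is uniformly bounded). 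The three copies of the local contour obtained from the $\mathbb{Z}_3$ symmetry contribute identical bounds, and we differentiate under the same estimates after noting that $\partial_x T(\eta,t)^{\pm 1}$ is bounded by Lemma on $T$.

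The main technical obstacle is the bookkeeping on $\Gamma$: one has to check that after the changes of variable centered at each of the three critical points, the factor $\frac{1}{k_0}$ appearing in the $L^1$ bound emerges correctly (from the distance between neighboring critical points being $O(k_0)$, so the Gaussian bell around each $\omega^j k_0$ effectively lives on a band of $k$-width $O(t^{-1/2})$ well separated from the others), and that the cross terms involving $f$ and $g$ on $\Gamma_6$ inherit the same vanishing at the critical point as the individual reflection coefficients — this follows from Lemma \ref{property of f,g}.
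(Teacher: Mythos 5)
Your overall strategy --- splitting $\widetilde{\Gamma}$ into the four pieces $\Gamma$, $\Gamma'$, $\partial\mathbb{B}$, $\widetilde{\mathbf{L}}^{\epsilon}$ and treating each by the corresponding earlier estimate --- is the standard one and is surely what the paper intends (the paper states this lemma without proof). Your arguments for $\Gamma'$, $\partial\mathbb{B}$ and $\widetilde{\mathbf{L}}^{\epsilon}$ are correct: in particular the identity $\mathcal{M}^{k_0}_{-}\mathsf{V}^{k_0}(\mathcal{M}^{k_0}_{+})^{-1}=I$ on the cross, the reduction to the bounds $\|\partial_x^i(\mathsf{V}^{(3)}-\mathsf{V}^{k_0})\|_{L^1(\mathbf{L}^{\epsilon})}\le Ct^{-1}\ln t$, $\|\cdot\|_{L^{\infty}(\mathbf{L}^{\epsilon})}\le Ct^{-1/2}\ln t$, and the use of the uniform boundedness of $T^{\pm1}$, $\partial_xT^{\pm1}$ and $\mathcal{M}^{L}$ are exactly right.

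The genuine problem is your justification of the first estimate, on $\Gamma$. You assert that $|\mathrm{Re}\,\theta_{ij}(\eta,k)|\ge c|k-k_0|^2$ on $\Gamma$ and then extract the bound from a Gaussian localization around the critical points. But on the undeformed rays the phase is purely oscillatory: e.g.\ $\theta_{21}=\mathrm{i}\sqrt{3}\,k(\eta+k)$, so $\mathrm{Re}\,\theta_{21}\equiv 0$ for $k\in\mathbb{R}$ --- this is precisely why the lens opening is needed in the first place, and your stationary-phase mechanism cannot produce any decay there. The correct source of the $\Gamma$-estimate is entirely different: after the first and third transformations the jump on $\Gamma$ contains only the \emph{residual} parts $\rho_{1,r},\rho_{2,r},\hat{\rho}_{1,r},\beta_r$ (and, on $\Gamma_6^{(3)}$, the functions $f,g$ of Lemma \ref{property of f,g} multiplied by genuinely decaying exponentials), and these residuals are uniformly small in $t$ on the whole line; this is exactly what the two lemmas preceding Lemma \ref{estimate 1} record, giving $Ct^{-3/2}$ on $\Gamma_5^{(3)},\Gamma_7^{(3)},\Gamma_8^{(3)}$ and $Ct^{-1}$ in $L^{\infty}(\Gamma_6^{(3)})$, which together yield $C/(k_0t)$. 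You mention the decomposition lemma but then attribute to the residuals the growth factor $e^{\frac{t}{4}|\mathrm{Re}\,\theta_{21}|}$ that belongs to the analytic parts, and you base the actual estimate on the false phase inequality. A second, smaller omission: on $\Gamma\cap\mathbb{B}\setminus\widetilde{\mathbf{L}}^{\epsilon}$ the jump is $\mathcal{M}^{k_0}_{-}\mathsf{V}^{(3)}(\mathcal{M}^{k_0}_{+})^{-1}$ rather than $\mathsf{V}^{(3)}$ itself; since $\mathcal{M}^{k_0}$ has no jump off the cross, this reduces to $\mathcal{M}^{k_0}(\mathsf{V}^{(3)}-I)(\mathcal{M}^{k_0})^{-1}$ and the estimate survives by boundedness of $\mathcal{M}^{k_0}$, but this step should be stated.
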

	Supposing $f$ is a complex-valued function defined on $\widetilde{\Gamma}$, the Cauchy operator $\widehat{\mathcal{C}}f$ is defined by
	\begin{equation*}
		(\widehat{\mathcal{C}}f)(z)=\frac{1}{2\pi {\rm i}}\int_{\widetilde{\Gamma}}\frac{f(s)ds}{s-z},\qquad z\in\mathbb{C}\setminus\widetilde{\Gamma}.
	\end{equation*}

	Further, define $\tilde{\nu}$ by
	\begin{equation*}		\tilde{\nu}=I+(I-\widehat{\mathcal{C}}_{\widetilde{\mathsf{W}}})^{-1}\widehat{\mathcal{C}}_{\widetilde{\mathsf{W}}}I\in I+\dot{L}^{3}(\widetilde{\Gamma}).
	\end{equation*}
	for $k\in\widetilde{\Gamma}$, $t\ge T$ and $\eta=\frac{x}{t}\in\mathcal{K}$.
	
	\begin{lemma}
		For $t\ge T$ and $\eta\in\mathcal{K}$, there is unique solution $\widetilde{\mathcal{M}}\in I+\dot{E}^{3}(\mathbb{C}\setminus\widetilde{\Gamma})$ of RH problem \ref{RH problem for hatm}, which is given by
		\begin{equation*}			\widetilde{\mathcal{M}}(x,t,k)=I+\widehat{\mathcal{C}}(\tilde{\nu}\widetilde{\mathsf{W}})=I+\frac{1}{2\pi\mathrm{i}}\int_{\widetilde{\Gamma}}\tilde{\nu}(x,t,s)\widetilde{\mathsf{W}}(x,t,s)\frac{ds}{s-k}.
		\end{equation*}
	\end{lemma}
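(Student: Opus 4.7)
The approach is the standard small-norm Riemann-Hilbert machinery of Beals-Coifman type, applied to the factorization $\widetilde{\mathsf{V}}=I+\widetilde{\mathsf{W}}$. Summing the bounds of Lemma~\ref{estimate 1} over the four component families of $\widetilde{\Gamma}=\Gamma\cup\Gamma'\cup\widetilde{\mathbf{L}}^{\epsilon}\cup\partial\mathbb{B}$ yields
$$\|\widetilde{\mathsf{W}}\|_{L^\infty(\widetilde{\Gamma})}=O(t^{-1/2}\ln t),\qquad \|\widetilde{\mathsf{W}}\|_{L^1(\widetilde{\Gamma})}=O(t^{-1/2}),$$
uniformly for $\eta\in\mathcal{K}$ as $t\to\infty$; interpolation then places $\widetilde{\mathsf{W}}$ in every $L^p(\widetilde{\Gamma})$ with norm $O(t^{-1/2}\ln t)$.

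Since $\widetilde{\Gamma}$ is a finite union of smooth arcs meeting transversally at a finite set of nodes whose locations depend smoothly on $k_0=-\eta/2$, the David-Journ\'e theory bounds the Cauchy projections $\widehat{\mathcal{C}}_{\pm}$ on $\dot{L}^3(\widetilde{\Gamma})$ by a constant that is uniform in $\eta\in\mathcal{K}$ by compactness. It follows that
$$\|\widehat{\mathcal{C}}_{\widetilde{\mathsf{W}}}\|_{\mathcal{B}(\dot{L}^3(\widetilde{\Gamma}))}\le C\|\widetilde{\mathsf{W}}\|_{L^\infty(\widetilde{\Gamma})}=O(t^{-1/2}\ln t),$$
so for $T$ sufficiently large the operator norm is below $1/2$ for all $t\ge T$ and $\eta\in\mathcal{K}$. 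A Neumann series then inverts $I-\widehat{\mathcal{C}}_{\widetilde{\mathsf{W}}}$ on $\dot{L}^3(\widetilde{\Gamma})$, producing a unique $\tilde\nu-I\in\dot{L}^3(\widetilde{\Gamma})$ solving $(I-\widehat{\mathcal{C}}_{\widetilde{\mathsf{W}}})(\tilde\nu-I)=\widehat{\mathcal{C}}_{\widetilde{\mathsf{W}}}I$.

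With $\tilde\nu$ in hand I would define $\widetilde{\mathcal{M}}=I+\widehat{\mathcal{C}}(\tilde\nu\widetilde{\mathsf{W}})$. The integrand $\tilde\nu\widetilde{\mathsf{W}}$ lies in $L^1(\widetilde{\Gamma})$ by H\"older (since $\tilde\nu-I\in\dot{L}^3$ and $\widetilde{\mathsf{W}}\in L^{3/2}\cap L^\infty$), so $\widetilde{\mathcal{M}}$ belongs to $I+\dot{E}^3(\mathbb{C}\setminus\widetilde{\Gamma})$ with $\widetilde{\mathcal{M}}\to I$ at infinity. The Plemelj formulas combined with the integral equation for $\tilde\nu$ yield $\widetilde{\mathcal{M}}_+=\widetilde{\mathcal{M}}_-\widetilde{\mathsf{V}}$ on $\widetilde{\Gamma}$, as in the standard Beals-Coifman verification. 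Uniqueness follows by a Liouville argument: two solutions in $I+\dot{E}^3(\mathbb{C}\setminus\widetilde{\Gamma})$ satisfy the same jump condition, so their ratio extends across $\widetilde{\Gamma}$ to an entire function tending to $I$ at infinity and thus equals $I$.

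The main technical point, rather than a genuine obstacle, will be verifying the uniform $\dot{L}^3$-boundedness of the Cauchy projections as $\eta$ ranges over $\mathcal{K}$. This reduces to checking that the arcs of $\widetilde{\Gamma}$ have Lipschitz constants and self-intersection angles bounded away from degeneracy as $k_0=-\eta/2$ varies, which is immediate from the explicit description of the contour together with compactness of $\mathcal{K}\subset(0,\infty)$; beyond this, the proof is a direct application of the framework of \cite{Deift-Zhou}.
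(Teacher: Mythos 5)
Your proposal is correct and follows exactly the standard Beals--Coifman small-norm scheme that the paper itself invokes here: the paper states this lemma without proof, relying implicitly on the machinery of \cite{Deift-Zhou}, and your bookkeeping of the $L^{1}$ and $L^{\infty}$ bounds from Lemma~\ref{estimate 1}, the uniform $\dot{L}^{3}$-boundedness of the Cauchy projections for $\eta$ in the compact set $\mathcal{K}$, and the Neumann-series inversion of $I-\widehat{\mathcal{C}}_{\widetilde{\mathsf{W}}}$ are precisely what is needed. The only point to make explicit in a final write-up is that the Liouville uniqueness step requires $\det\widetilde{\mathsf{V}}=1$, hence $\det\widetilde{\mathcal{M}}\equiv 1$, so that the ``ratio'' of two solutions is well defined and lies in a class to which the vanishing argument applies.
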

	
	\begin{lemma}\label{estimate 2}
		Assuming $1<p<\infty$, there is an estimate for $\tilde{\nu}$ in the form
		\begin{equation*}
			||\partial_{x}^{i}(\tilde{\nu}-I)||_{L^{p}(\widetilde{\Gamma})}\le C t^{-\frac{1}{2}}(\ln{t})^{\frac{p-1}{p}}, \quad i=0,1,\quad \eta\in\mathcal{K},
		\end{equation*}
		for any sufficiently large $t$.
	\end{lemma}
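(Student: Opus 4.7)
The plan is to apply the standard small-norm Riemann-Hilbert machinery. Starting from the identity $\tilde{\nu}-I=(I-\widehat{\mathcal{C}}_{\widetilde{\mathsf{W}}})^{-1}\widehat{\mathcal{C}}_{\widetilde{\mathsf{W}}}I$ recalled just above, I would first verify that $(I-\widehat{\mathcal{C}}_{\widetilde{\mathsf{W}}})^{-1}$ exists and is uniformly bounded on $L^{p}(\widetilde{\Gamma})$ for all sufficiently large $t$. Since the Cauchy projections $\widehat{\mathcal{C}}^{\pm}$ are bounded on $L^{p}$ of the Lipschitz contour $\widetilde{\Gamma}$ and $\|\widehat{\mathcal{C}}_{\widetilde{\mathsf{W}}}\|_{L^{p}\to L^{p}}\le C\|\widetilde{\mathsf{W}}\|_{L^{\infty}(\widetilde{\Gamma})}$, Lemma \ref{estimate 1} yields $\|\widehat{\mathcal{C}}_{\widetilde{\mathsf{W}}}\|_{L^{p}\to L^{p}}\le Ct^{-1/2}\ln t\to 0$ uniformly in $\eta\in\mathcal{K}$, and a Neumann series then gives $\|(I-\widehat{\mathcal{C}}_{\widetilde{\mathsf{W}}})^{-1}\|_{L^{p}\to L^{p}}\le 2$ for $t\ge T$.

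It then suffices to bound $\|\widehat{\mathcal{C}}_{\widetilde{\mathsf{W}}}I\|_{L^{p}(\widetilde{\Gamma})}=\|\widehat{\mathcal{C}}^{-}\widetilde{\mathsf{W}}\|_{L^{p}(\widetilde{\Gamma})}\le C\|\widetilde{\mathsf{W}}\|_{L^{p}(\widetilde{\Gamma})}$. I would split $\widetilde{\Gamma}$ into the four pieces $\Gamma$, $\Gamma'$, $\partial\mathbb{B}$ and $\widetilde{\mathbf{L}}^{\epsilon}$ of Lemma \ref{estimate 1}. On $\Gamma$ and $\partial\mathbb{B}$ the $L^{1}\cap L^{\infty}$ bounds of order $t^{-1/2}$ give $\|\widetilde{\mathsf{W}}\|_{L^{p}}\le Ct^{-1/2}$ directly, and on $\Gamma'$ the bound is exponentially small. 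The only piece producing a logarithm is $\widetilde{\mathbf{L}}^{\epsilon}$, on which I would interpolate
\begin{equation*}
\|\widetilde{\mathsf{W}}\|_{L^{p}(\widetilde{\mathbf{L}}^{\epsilon})}\le \|\widetilde{\mathsf{W}}\|_{L^{1}(\widetilde{\mathbf{L}}^{\epsilon})}^{1/p}\|\widetilde{\mathsf{W}}\|_{L^{\infty}(\widetilde{\mathbf{L}}^{\epsilon})}^{(p-1)/p}\le (Ct^{-1}\ln t)^{1/p}(Ct^{-1/2}\ln t)^{(p-1)/p}=Ct^{-(p+1)/(2p)}\ln t,
\end{equation*}
which is bounded by $Ct^{-1/2}(\ln t)^{(p-1)/p}$ for $t\ge 2$ since $(\ln t)^{1/p}/t^{1/(2p)}$ is bounded. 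Combining the four pieces yields the estimate for $i=0$.

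For $i=1$, differentiating $(I-\widehat{\mathcal{C}}_{\widetilde{\mathsf{W}}})\tilde{\nu}=I$ in $x$ gives $\partial_{x}\tilde{\nu}=(I-\widehat{\mathcal{C}}_{\widetilde{\mathsf{W}}})^{-1}\widehat{\mathcal{C}}_{\partial_{x}\widetilde{\mathsf{W}}}\tilde{\nu}$. Writing $\tilde{\nu}=I+(\tilde{\nu}-I)$, the dominant piece on the right is $\widehat{\mathcal{C}}^{-}(\partial_{x}\widetilde{\mathsf{W}})$, which obeys the same $L^{p}$ bound as in the previous step because Lemma \ref{estimate 1} provides identical estimates for $\partial_{x}\widetilde{\mathsf{W}}$; the remainder $\widehat{\mathcal{C}}^{-}((\tilde{\nu}-I)\partial_{x}\widetilde{\mathsf{W}})$ is controlled in $L^{p}$ by $\|\tilde{\nu}-I\|_{L^{p}}\|\partial_{x}\widetilde{\mathsf{W}}\|_{L^{\infty}}$, which is of order $t^{-1}(\ln t)^{1+(p-1)/p}$ and hence absorbed. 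The main technical obstacle is the interpolation step on $\widetilde{\mathbf{L}}^{\epsilon}$, where the $L^{1}$ and $L^{\infty}$ estimates must be balanced to produce precisely the exponent $(p-1)/p$ of $\ln t$; uniformity in $\eta\in\mathcal{K}$ then follows automatically from the uniformity already built into Lemma \ref{estimate 1}.
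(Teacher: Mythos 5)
Your proposal is correct and is exactly the standard small-norm argument that the paper intends (it states Lemma \ref{estimate 2} without proof, relying on the Deift--Zhou machinery): Neumann-series invertibility of $I-\widehat{\mathcal{C}}_{\widetilde{\mathsf{W}}}$ from the uniform smallness of $\|\widetilde{\mathsf{W}}\|_{L^{\infty}(\widetilde{\Gamma})}$, reduction to $\|\widetilde{\mathsf{W}}\|_{L^{p}}$ piece by piece, and the $L^{1}$--$L^{\infty}$ interpolation on $\widetilde{\mathbf{L}}^{\epsilon}$ giving $Ct^{-(p+1)/(2p)}\ln t\le Ct^{-1/2}(\ln t)^{(p-1)/p}$. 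The treatment of $i=1$ via $\partial_{x}\tilde{\nu}=(I-\widehat{\mathcal{C}}_{\widetilde{\mathsf{W}}})^{-1}\widehat{\mathcal{C}}_{\partial_{x}\widetilde{\mathsf{W}}}\tilde{\nu}$ is likewise the standard route and closes correctly.
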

	
	As $k\to\infty$, there exists the nontangential limit:
	\begin{equation*}
		K(x,t):=\lim\limits_{k\to\infty}^{\angle}k(\widetilde{\mathcal{M}}-I)
		=-\frac{1}{2\pi\mathrm{i}}\int_{\widetilde{\Gamma}}\tilde{\nu}(x,t,k)\widetilde{\mathsf{W}}(x,t,k)dk,
	\end{equation*}
	which has the following asymptotic behavior for large $t$.
	\begin{lemma}
		As $t\to\infty$, the function $K(x,t)$ behaves as
		\begin{equation}\label{K as t to infty}
			K(x,t)=-\frac{1}{2\pi\mathrm{i}}\int_{\partial \mathbb{B}}\widetilde{\mathsf{W}}(x,t,k)dk+O(t^{-1}\ln{t}).
		\end{equation}
	\end{lemma}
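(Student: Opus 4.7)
The plan is to start from the explicit representation
\begin{equation*}
K(x,t)=-\frac{1}{2\pi\mathrm{i}}\int_{\widetilde{\Gamma}}\tilde{\nu}(x,t,k)\widetilde{\mathsf{W}}(x,t,k)\,dk
\end{equation*}
and split it as
\begin{equation*}
K(x,t)=-\frac{1}{2\pi\mathrm{i}}\int_{\widetilde{\Gamma}}\widetilde{\mathsf{W}}\,dk \;-\; \frac{1}{2\pi\mathrm{i}}\int_{\widetilde{\Gamma}}(\tilde{\nu}-I)\widetilde{\mathsf{W}}\,dk.
\end{equation*}
The objective is then to show that, of these two pieces, only the restriction of the first integral to $\partial\mathbb{B}$ contributes to leading order, while everything else is controlled by $O(t^{-1}\ln t)$.

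For the first integral I would decompose the contour as $\widetilde{\Gamma}=\Gamma\cup\Gamma'\cup\widetilde{\mathbf{L}}^{\epsilon}\cup\partial\mathbb{B}$ and apply the $L^{1}$ estimates of Lemma \ref{estimate 1} piece by piece: the contribution from $\Gamma$ is $O((k_{0}t)^{-1})$, from $\Gamma'$ is exponentially small $O(e^{-ct})$, and from $\widetilde{\mathbf{L}}^{\epsilon}$ is $O(t^{-1}\ln t)$. Combining these three pieces yields
\begin{equation*}
-\frac{1}{2\pi\mathrm{i}}\int_{\widetilde{\Gamma}\setminus\partial\mathbb{B}}\widetilde{\mathsf{W}}\,dk = O(t^{-1}\ln t),
\end{equation*}
leaving precisely the $\partial\mathbb{B}$ integral as the leading term in (\ref{K as t to infty}).

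For the second integral, I would apply the Cauchy--Schwarz inequality and use the operator bound from Lemma \ref{estimate 2} in the form $\|\tilde{\nu}-I\|_{L^{2}(\widetilde{\Gamma})}\le Ct^{-1/2}(\ln t)^{1/2}$. To estimate $\|\widetilde{\mathsf{W}}\|_{L^{2}(\widetilde{\Gamma})}$, I would use the interpolation $\|\widetilde{\mathsf{W}}\|_{L^{2}}^{2}\le\|\widetilde{\mathsf{W}}\|_{L^{1}}\|\widetilde{\mathsf{W}}\|_{L^{\infty}}$ on each sub-contour with the bounds from Lemma \ref{estimate 1}. The dominant piece comes from $\partial\mathbb{B}$, giving $\|\widetilde{\mathsf{W}}\|_{L^{2}(\widetilde{\Gamma})}\le Ct^{-1/2}$, whereas the $\widetilde{\mathbf{L}}^{\epsilon}$ contribution is of order $t^{-3/4}\ln t$ and the $\Gamma$, $\Gamma'$ contributions are even smaller. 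Hence
\begin{equation*}
\left|\int_{\widetilde{\Gamma}}(\tilde{\nu}-I)\widetilde{\mathsf{W}}\,dk\right|\le\|\tilde{\nu}-I\|_{L^{2}}\|\widetilde{\mathsf{W}}\|_{L^{2}}\le Ct^{-1}(\ln t)^{1/2}=O(t^{-1}\ln t).
\end{equation*}

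The one delicate point, and what I would treat as the main obstacle, is making sure the $L^{p}$ estimates on $\tilde{\nu}-I$ from Lemma \ref{estimate 2} combine cleanly with the sub-contour-dependent $L^{1}$/$L^{\infty}$ estimates on $\widetilde{\mathsf{W}}$; in particular, the $\partial\mathbb{B}$ piece only gives $t^{-1/2}$ rather than a genuinely small bound, so the $(\tilde{\nu}-I)$-factor has to absorb an extra $t^{-1/2}$ through the Cauchy--Schwarz pairing. Once this balance is verified by careful bookkeeping on each of $\Gamma$, $\Gamma'$, $\widetilde{\mathbf{L}}^{\epsilon}$ and $\partial\mathbb{B}$, summing all the error contributions yields the claimed asymptotic (\ref{K as t to infty}).
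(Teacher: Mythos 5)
Your proposal is correct and follows essentially the same route as the paper: the paper decomposes $K$ into the $\partial\mathbb{B}$ integral plus $K_{1}=-\frac{1}{2\pi\mathrm{i}}\int_{\widetilde{\Gamma}\setminus\partial\mathbb{B}}\widetilde{\mathsf{W}}\,dk$ and $K_{2}=-\frac{1}{2\pi\mathrm{i}}\int_{\widetilde{\Gamma}}(\tilde{\nu}-I)\widetilde{\mathsf{W}}\,dk$, then invokes Lemmas \ref{estimate 1} and \ref{estimate 2} exactly as you do. Your contour-by-contour bookkeeping for $K_{1}$ and the Cauchy--Schwarz pairing for $K_{2}$ (yielding $O(t^{-1}(\ln t)^{1/2})$) simply makes explicit the details the paper leaves as ``obvious.''
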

	\begin{proof}
		Decompose $K(x,t)$ into
		\begin{equation*}
			K(x,t)=-\frac{1}{2\pi\mathrm{i}}\int_{\partial \mathbb{B}}\widetilde{\mathsf{W}}(x,t,k)dk+K_{1}(x,t)+K_{2}(x,t),
		\end{equation*}
		where
		\begin{equation*}
			K_{1}(x,t)=-\frac{1}{2\pi\mathrm{i}}\int_{\widetilde{\Gamma}\setminus\partial \mathbb{B}}\widetilde{\mathsf{W}}(x,t,k)dk, \quad K_{2}(x,t)=-\frac{1}{2\pi\mathrm{i}}\int_{ \widetilde{\Gamma}}(\tilde{\nu}(x,t,k)-I)\widetilde{\mathsf{W}}(x,t,k)dk.
		\end{equation*}
		Recalling Lemmas \ref{estimate 1} and \ref{estimate 2}, the	asymptotic behavior for	$K(x,t)$ in (\ref{K as t to infty}) is obvious.
	\end{proof}
	\par	
	Define function $F(\eta,t)$ as
	\begin{equation*}
		F(\eta,t)=-\frac{1}{2\pi\mathrm{i}}\int_{\partial B_{\epsilon}(k_{0})}\widetilde{\mathsf{W}}(x,t,k)dk=-\frac{1}{2\pi\mathrm{i}}\int_{\partial B_{\epsilon}(k_{0})}((\mathcal{M}^{k_{0}})^{-1}-I)dk,
	\end{equation*}
	then the equation (\ref{equation of mk0}) implies	
	\begin{equation*}		F(\eta,t)=\frac{T(\eta,t)\mathcal{M}_{1}^{L}(p(\eta))T(\eta,t)^{-1}}{3^{1/4}\sqrt{2t}}+O(t^{-1}\ln{t}),
	\end{equation*}
	as $t\to\infty$. Moreover, the symmetry of $\widetilde{\mathsf{V}}$ indicates that the function $\widetilde{\mathcal{M}}(x,t,k)$ satisfies the symmetry
	\begin{equation*}
		\widetilde{\mathcal{M}}(x,t,k)=\mathcal{A}\widetilde{\mathcal{M}}(x,t,\omega k)\mathcal{A}^{-1}, \qquad k\in\mathbb{C}\setminus \widetilde{\Gamma},
	\end{equation*}
	so do the functions $\tilde{\nu}$ and $\widetilde{\mathsf{W}}$. Taking all these symmetries into the equation (\ref{K as t to infty}), it is seen that
	\begin{equation*}
		\begin{aligned}
			-\frac{1}{2\pi\mathrm{i}}\int_{\partial \mathbb{B}}\widetilde{\mathsf{W}}(x,t,k)dk&=-\frac{1}{2\pi \mathrm{i}}(\int_{\partial\mathcal{B}_{\epsilon}(k_{0})}+\int_{\omega\partial\mathcal{B}_{\epsilon}(k_{0})}+\int_{\omega^2\partial\mathcal{B}_{\epsilon}(k_{0})})\widetilde{\mathsf{W}}(x,t,k)dk\\	&=F(\eta,t)+\omega\mathcal{A}^{-1}F(\eta,t)\mathcal{A}+\omega^2\mathcal{A}^{-2}F(\eta,t)\mathcal{A}^{2}.
		\end{aligned}.
	\end{equation*}
	\par
	Finally, we can infer from the equation (\ref{K as t to infty}) that
	\begin{equation}\label{M as t to infty}
		\partial_{x}^{i}\lim_{k\to\infty}k(\widetilde{\mathcal{M}}(x,t,k)-I)=\partial_{x}^{i}(\frac{\sum_{j=0}^{2}\omega^{j}\mathcal{A}^{-j}T(\eta,t)\mathcal{M}_{1}^{L}(p(\eta))T(\eta,t)^{-1}\mathcal{A}^{j}}{3^{1/4}\sqrt{2t}})+O(\frac{\ln{t}}{t}),
	\end{equation}
	uniformly for $\eta\in\mathcal{K}$ and $i=0,1$ as $t\to\infty$.

	\subsection{Asympototics of $u(x,t)$ and $v(x,t)$ for Hirota-Satsuma equation (\ref{HS eq})}
	Taking the transformations (\ref{H(x,t,k)}),(\ref{M2}) and (\ref{G(x,t,k)}) into consideration,
	the function $\mathcal{M}(x,t,k)$ for RH problem \ref{RH problem for M} can be rewritten as
	\begin{equation}\label{MGDH}
		\mathcal{M}=\widetilde{\mathcal{M}}G^{-1}\mathcal{D}^{-1}H^{-1}
	\end{equation}
	for any $k\in\mathbb{C}\setminus\bar{\mathbb{B}}$. Now we are prepared to prove Theorem \ref{large-time theorem-HS}.

	\begin{proof}  According to the equation (\ref{MGDH}), the reconstruction formula (\ref{fanjie u v}) in Section \ref{Section-2} can be rewritten as
		\begin{equation}\label{reconstruction-new-new}
			\begin{cases}
				u(x,t)&=3\lim\limits_{k\to\infty}^{\angle} k\big[\widetilde{\mathcal{N}}_{3}(x,t,k)-\widetilde{\mathcal{Q}}_{3}(x,t,k)\big]+O(\frac{\ln{t}}{t}),   \\
				v(x,t)&=3\frac{\partial}{\partial x}\lim\limits_{k\to\infty}^{\angle}k\big[\widetilde{\mathcal{N}}_{3}(x,t,k)-1\big]+O(\frac{\ln{t}}{t}),
			\end{cases}
		\end{equation}
		where $\widetilde{\mathcal{N}}=(\omega,\omega^2,1)\widetilde{\mathcal{M}}$ and
		$\widetilde{\mathcal{Q}}=(1,1,1)\widetilde{\mathcal{M}}$.
		\par
		With the equation (\ref{M as t to infty}) and equality $\overline{\Gamma(\mathrm{i}\mu)}=\Gamma(-\mathrm{i}\mu)$ in mind, the reconstruction formula (\ref{reconstruction-new-new}) for $\eta>0$ becomes
		\begin{equation*}
			\begin{aligned}			u(x,t)&=3\lim\limits_{k\to\infty}^{\angle} k\big[\widetilde{\mathcal{N}}_{3}(x,t,k)-\widetilde{\mathcal{Q}}_{3}(x,t,k)\big]+O(\ln{(t)}/t)\\				&=3\big((\omega,\omega^2,1)\frac{\sum_{j=0}^{2}\omega^{j}\mathcal{A}^{-j}T(\eta,t)\mathcal{M}_{1}^{L}(p(\eta))T(\eta,t)^{-1}\mathcal{A}^{j}}{3^{1/4}\sqrt{2t}}\\
				&\quad-(1,1,1)\frac{\sum_{j=0}^{2}\omega^{j}\mathcal{A}^{-j}Y(\eta,t)\mathcal{M}_{1}^{L}
(p(\eta))Y(\eta,t)^{-1}\mathcal{A}^{j}}{3^{1/4}\sqrt{2t}}\big)_{3}+O(\ln{(t)}/t)\\
				&=3\frac{(\omega^2-\omega)\alpha_{21}d_0^{-1}e^{t\Phi_{21}(\eta,k_0)}+(\omega-\omega^2)
\alpha_{12}d_0e^{-t\theta_{21}(\eta,k_0)}}{3^{1/4}\sqrt{2t}}+O(\ln{(t)}/t)\\
				&=\frac{3^{3/4}\times 2}{\sqrt{2t}}\mathrm{Re}[{(\omega^2-\omega)\alpha_{21}d_0^{-1}e^{t\theta_{21}(\eta,k_0)}}]+O(\ln{(t)}/t)\\
				&=-\frac{3^{5/4}\times 2}{\sqrt{2t}}\mathrm{Re}[{e^{\frac{\pi i}{2}}\alpha_{21}d_0^{-1}e^{t\theta_{21}(\eta,k_0)}}]+O(\ln{(t)}/t),\\
			\end{aligned}
		\end{equation*}
		and
		\begin{equation*}
			\begin{aligned}
				v(x,t)&=3\frac{\partial}{\partial x}\lim\limits_{k\to\infty}^{\angle}k\big[\widetilde{\mathcal{N}}_{3}(x,t,k)-1\big]+O(\ln{(t)}/t)\\
				&=3\frac{\partial}{\partial x}\big((\omega,\omega^2,1)\frac{\sum_{j=0}^{2}\omega^{j}\mathcal{A}^{-j}T(\eta,t)
\mathcal{M}_{1}^{L}(p(\eta))T(\eta,t)^{-1}\mathcal{A}^{j}}{3^{1/4}\sqrt{2t}}\big)_{3}+O(\ln{(t)}/t)\\
				&=3\frac{\partial}{\partial x}\frac{\omega^2 d_{0}^{-1}e^{t\theta_{21}(\eta,k_{0})}\alpha_{21}+\omega d_{0}e^{-t\theta_{21}(\eta,k_0)}\alpha_{12}}{3^{1/4}\sqrt{2t}}+O(\ln{(t)}/t)\\
				&=\frac{3^{3/4}\times 2}{\sqrt{2t}}\frac{\partial}{\partial x}\big[\mathrm{Re}(\omega^2 d_{0}^{-1}e^{t\theta_{21}(\eta,k_{0})}\alpha_{21})\big]+O(\ln{(t)}/t),\\
			\end{aligned}
		\end{equation*}
		as $t\to\infty$.
		Utilizing the facts
		\begin{equation*}		|\Gamma(\mathrm{i}\mu)|=\frac{\sqrt{2\pi}}{\sqrt{\mu}\sqrt{e^{\pi\mu}-e^{-\pi\mu}}}=\frac{\sqrt{2\pi}}{\sqrt{\mu}e^{\frac{\pi\mu}{2}}|p|},
		\end{equation*}
		and
		\begin{equation*}
			\mathcal{D}_{2}^{-1}(\eta,k_{0})\mathcal{D}_{6}^{-1}(\eta,k_{0})=\exp{\big[\mathrm{i}\mu\log{(3k_{0}^{2})}+\frac{1}{\pi \mathrm{i}}\int_{k_{0}}^{-\infty}\log{|\omega k_{0}-s|d\ln{(1-|\rho_{1}(s)|^{2})}}}\big],
		\end{equation*}
		the asymptotic solution (\ref{asympotics for k more than 0}) in Theorem \ref{large-time theorem-HS} is derived.
		\par
		Obviously, the Hirota-Satsuma equation (\ref{HS eq}) has symmetry $x\to -x, t\to t, u \to -u, v\to v$.
		Denote $f(x,t)=-u(-x,t),g(x,t)=v(-x,t)$, then the initial condition becomes $f_{0}(x)=-u_{0}(-x),g_{0}(x)=v_{0}(-x)$. So direct calculations imply that the symmetry relations below satisfy
		\begin{equation*}
			Y_{1}(-x,-k;f_{0},g_{0})=Y_{2}(x,k;u_{0},v_{0}),  \qquad Y_{1}^{A}(-x,-k;f_{0},g_{0})=Y_{2}^{A}(x,k;u_{0},v_{0}),
		\end{equation*}
		which results in
		\begin{equation}\label{relation between s(k) and sA}
			s(-k;f_{0},g_{0})=(s^{A}(k;u_{0},v_{0}))^{T}, \qquad s^{A}(-k;f_{0},g_{0})=(s(k;u_{0},v_{0}))^{T}.
		\end{equation}
		The symmetries of $s(k)$ and $s^{A}(k)$	in Section \ref{Section-2} and the equalities in equation (\ref{relation between s(k) and sA}) indicate that
		\begin{equation*}
			|\rho_{2}(k)|=|\rho_{1}(-k)|, \quad k\in (0,\infty).
		\end{equation*}
		\par	
		Combining all the symmetries above along with the large-time asymptotics of the Hirota-Satsuma equation (\ref{HS eq}) for $\eta>0$ in (\ref{asympotics for k more than 0}), the asymptotics (\ref{asympotics for k less than 0}) of the Hirota-Satsuma equation (\ref{HS eq}) for $\eta<0$ is obtained quickly.

	\end{proof}

	\subsection{Remark: asympototics of $w(x,t)$ for the good Boussinesq equation (\ref{good boussinesq})}	Recalling the Miura transformation (\ref{HS-to-GB}) between the Hirota-Satsuma equation (\ref{HS eq}) and the good Boussinesq equation (\ref{good boussinesq}), one can write down the asymptotic solution of equation (\ref{good boussinesq}) for large $t$. To do so, for $\eta>0$, combining the asymptotic expression (\ref{asympotics for k more than 0}) of the Hirota-Satsuma equation (\ref{HS eq}) with Miura transformation $w=-\frac{1}{6}u^2-\frac{1}{2}v$, yields the large-time asymptotics of the good Boussinesq equation (\ref{good boussinesq}) for $\eta=x/t>0$ below
$$
		\begin{aligned}
			w(x,t)&=-\frac{3^{5/4}\times \sqrt{\mu}k_0}{\sqrt{2t}}\sin\big(\mu\ln(6\sqrt{3}tk_{0}^{2})+\frac{1}{\pi}
			\int^{-\infty}_{k_0}\ln{\frac{|s-k_{0}|}{|s-\omega k_{0}|}}d\ln(1-|\rho_1|^2)\\
			&-\sqrt{3}tk_0^{2}+\frac{19\pi}{12}-\arg{p}-\arg{\Gamma(i\mu)}\big)+O(\ln{(t)}/t).\\
		\end{aligned}
$$
This completes the proof of Theorem \ref{large-time theorem-GB}.
The similar asymptotic expression has been derived by Charlier, Lenells and Wang \cite{Charlier-Lenells-Wang-2021}.
\par
Moreover, it is remarked that the large-time asymptotic solution of the modified Boussinesq equation (\ref{mb eq}) in \cite{Wang-Zhu-2022} and the Miura transformation (\ref{mGB-to-HS}) can also result in the large-time asymptotic solution of the Hirota-Satsuma equation (\ref{HS eq}) in Theorem \ref{large-time theorem-HS}.\\
	\par
	\par
	{\bf Acknowledgements}
	\par
	This work was supported by National Natural Science Foundation of China through grant No. 12371247.\\
	\par
	{\bf Conflict of Interest}
	\par
	The authors have no conflicts to disclose.\\
	\par
	\par
	{\bf Data Availability}
	\par
	Data sharing is not applicable to this article as no new data were created or analyzed in this study.

	\bibliographystyle{amsplain}

\end{document}